\documentclass[10pt,a4paper,reqno]{amsart}
\usepackage{amsmath,amsthm,amsfonts,amssymb,euscript,bbm,color,slashed, xcolor, stackrel, booktabs}

\usepackage{marvosym,accents, url}
\usepackage{subcaption}
\usepackage[hidelinks]{hyperref}

\newtheorem{theorem}{Theorem}[section]
\newtheorem{proposition}[theorem]{Proposition}
\newtheorem{lemma}[theorem]{Lemma}
\newtheorem{remark}[theorem]{Remark}
\newtheorem{definition}[theorem]{Definition}
\newtheorem{corollary}[theorem]{Corollary}

\usepackage[margin=1in,dvips,marginpar=.6in]{geometry}

\newcommand{\abs}[1]{\left\vert#1\right\vert}
\newcommand{\jb}[1]{\left\langle#1\right\rangle}

\numberwithin{equation}{section}
\numberwithin{theorem}{section}

\usepackage{bbm}
\usepackage{graphicx}
\usepackage{enumerate}

\def\th {\theta}

\def\Hb {\underline{H}}
\def\chib {\underline{\chi}}
\def\chih {\hat{\chi}}
\def\chibh {\hat{\underline{\chi}}}
\def\omegab {\underline{\omega}}
\def\etab {\underline{\eta}}
\def\xib {\underline{\xi}}
\def\betab {\underline{\beta}}
\def\alphab {\underline{\alpha}}

\def\hot{\widehat{\otimes}}

\def\alp {\alpha}

\def\bt {\beta}

\def\nab {\slashed{\nabla}}

\def\ep {\epsilon}

\def\omb {\underline{\omega}}

\def\f {\frac}

\def\rd {\partial}
\def\ls {\lesssim}
\def\de {\delta}
\def\i {\infty}

\newcommand{\ud}{\mathrm{d}}

\def\xib {\underline{\xi}}
\def\R{\mathrm{Ric}}
\def\sR{\slashed{\mathrm{Ric}}}

\makeatletter
\newcommand{\brk}{\@ifstar{\brkb}{\brki}}
\newcommand{\brki}[1]{\langle{#1}\rangle}
\newcommand{\brkb}[1]{\left\langle{#1}\right\rangle}
\makeatother

\newcommand{\pfstep}[1]{\vspace{.5em} {\it \noindent #1.}}

\newcommand{\bea}{\begin{eqnarray}}
\newcommand{\eea}{\end{eqnarray}}
\def\beaa{\begin{eqnarray*}}
\def\eeaa{\end{eqnarray*}}

\renewcommand{\div}{\slashed{\mathrm{div}}}
\newcommand{\curl}{\slashed{\mathrm{curl}}\,}
\newcommand{\trchb}{\slashed{\mathrm{tr}}\chib}
\def\trch{\slashed{\mathrm{tr}}\chi}
\newcommand{\tr}{\slashed{\mathrm{tr}}}

\makeatletter
\newcommand{\nrm}{\@ifstar{\nrmb}{\nrmi}}			
\newcommand{\nrmi}[1]{\Vert{#1}\Vert}
\newcommand{\nrmb}[1]{\left\Vert{#1}\right\Vert}
\newcommand{\Nrm}{\@ifstar{\Nrmb}{\Nrmi}}			
\newcommand{\Nrmi}[1]{\vert\kern-0.25ex\vert\kern-0.25ex\vert{#1}\vert\kern-0.25ex\vert\kern-0.25ex\vert}
\newcommand{\Nrmb}[1]{\left\vert\kern-0.25ex\left\vert\kern-0.25ex\left\vert {#1}\right\vert\kern-0.25ex\right\vert\kern-0.25ex\right\vert}
\makeatother

\newcommand{\aleq}{\lesssim}

\title[Stability of Minkowski]{Stability of the Minkowski spacetime in Newman--Unti gauge}

\author{Jonathan Luk}
\address{Department of Mathematics, 450 Jane Stanford Way, Stanford University, CA 94305, USA}
\email{jluk@stanford.edu}

\author{Sung-Jin Oh}
\address{Department of Mathematics, UC Berkeley, Berkeley, CA 94720, USA and KIAS, Seoul, Korea 02455}
\email{sjoh@math.berkeley.edu}

\author{Claude Warnick}
\address{DAMTP and DPMMS, Centre for Mathematical Sciences, Wilberforce Road, Cambridge CB3 0WA, UK}
\email{c.m.warnick@maths.cam.ac.uk}

\begin{document}

\begin{abstract}
We prove small-data global stability of the Minkowski solution to Einstein's equations in a  centre-normalised outgoing null-geodesic gauge. Our scheme involves first using the $r^p$-estimates of Dafermos--Rodnianski to control certain components of the Weyl tensor which satisfy a decoupled tensorial wave equation. Having established this control, all remaining geometric quantities are controlled by transport equations, taking initial conditions at a regular central axis. This method establishes global stability for initial data which decay only weakly to flat space and can establish additional asymptotic control when the data are assumed to have more structure.
\end{abstract}




\maketitle

\section{Introduction}
In this paper, we investigate the global nonlinear asymptotic stability of the Minkowski spacetime as a solution to the Einstein vacuum equations,
\begin{equation} \label{eq:EVE}
	\mathrm{Ric}(g) = 0.
\end{equation}
This problem possesses a rich history. The first proof of stability was provided by Friedrich \cite{Friedrich1986OnStructure} for asymptotically hyperboloidal data. For the more general class of asymptotically flat data, stability was first established in the monumental work of Christodoulou--Klainerman \cite{CK94}. Subsequent developments have further refined our understanding, including works by Bieri \cite{Bieri2010AnRelativity} and Shen \cite{Shen2023GlobalDecay, Shen2024StabilityRegions} (weakening initial data decay assumptions), Lindblad--Rodnianski \cite{Lindblad2005GlobalCoordinates} (utilizing wave coordinates), Hintz--Vasy \cite{Hintz2017StabilityMetric} (establishing full asymptotic expansions at infinity in wave coordinates), and Ionescu--Pausader \cite{Ionescu2022TheAMS-213} (developing a normal form analysis framework in wave coordinates). For a more comprehensive discussion of these and other prior works, we refer the reader to \S\ref{sec:lit}.

The main result of this paper is a new proof of stability of the Minkowski spacetime in coordinates $(u, r, \theta^{1}, \theta^{2})$, with respect to which the metric takes the following form:
\begin{equation*}
	g = - (\ud u \otimes \ud r+\ud r\otimes \ud u) - f \ud u \otimes \ud u +\gamma_{AB}(\ud\th^A-b^A \ud u)\otimes (\ud\th^B - b^B \ud u).
\end{equation*}
This choice, and the hierarchy of equations we shall use to control the solution, traces back to the work of Newman--Unti \cite{Newman1962BehaviorSpaces} in 1962; for this reason, we refer to it as the \emph{Newman--Unti gauge}. Its defining features are that i)~$u$ is null, and ii) the coordinate vector field $\rd_{r}$ is null-geodesic. This gauge has a long history in the physics literature (particularly, in the study of gravitational radiation), and it has also appeared in several recent mathematical works, including \cite{sAntC2022, xtCsK2026, KS.Kerr}.

What distinguishes our work from the above is a simple normalisation we adopt to fix the remaining coordinate ambiguity, which we term \emph{centre-normalisation}: starting with a fixed timelike geodesic $\Gamma$ (the centre), each level-$u$ hypersurface $H_{u}$ is the null cone generated by all outgoing null geodesics emanating from the point on $\Gamma$ with proper time $u$; for fixed $u$ and $\theta$, the curve $r \mapsto (u, r, \theta)$ represents each such null generator with affine parametrisation; in particular, $\rd_{r}$ is null geodesic. (For the detailed construction, see \S\ref{sec:coords}.) In Minkowski spacetime, if we consider the standard polar coordinates $(t, r, \theta)$, then $\{ r = 0\}$ is the centre geodesic $\Gamma$, $u = t- r$ is the standard retarded time, and the metric takes the form $f = 1$, $b^{A} = 0$, and $\gamma = \mathring{\gamma}$, where $\mathring{\gamma}$ is the standard metric on the round sphere of radius $r$. 

Given the existing literature, one might ask: Why another proof of the stability of the Minkowski spacetime? As already noted in \cite{Newman1962BehaviorSpaces}, the analytic structure of \eqref{eq:EVE} in the Newman--Unti gauge is particularly elegant: the entire dynamics of \eqref{eq:EVE} is represented by the curvature components $\alp_{AB} = R(\rd_{r}, \rd_{\theta^{A}}, \rd_{r}, \rd_{\theta^{B}})$, which, to linear order around Minkowski, solves a decoupled wave equation (Teukolsky equation). Under our centre-normalisation, all metric components are then recovered by integrating the fundamental equations of geometry (i.e., Bianchi and null structure equations) along the generators of the null cones $H_{u}$ starting from the centre $\Gamma$. 

Building upon this simple hierarchical structure, our primary objective is to \emph{develop a streamlined framework for proving stability}. Our proof is concise and encompasses the entire\footnote{The only result in the literature whose decay assumption we do not handle seems to be that of Shen \cite{Shen2024StabilityRegions}, which handles the endpoint case in the setting of exterior stability (i.e., in the domain of dependence of the exterior of a compact set).} spectrum of initial decay assumptions of interest, ranging from the optimal (up to the endpoint) weak decay assumption of Shen \cite{Shen2023GlobalDecay}, to the strongly decaying data originally considered by Christodoulou--Klainerman, to new classes of further structured data that lead to the almost sharp Bondi--Sachs peeling. Furthermore, our framework readily extends to settings such as exterior stability \cite{Klainerman2003TheNicolo.} and the spacelike-characteristic initial value problem \cite{Graf2020GlobalData} (see \S\ref{sec:appendix} for further discussion).

In addition, \emph{the centre-normalised Newman--Unti gauge turns out to be exceptionally well-suited for studying global asymptotics}. This utility is perhaps surprising, given that our gauge is normalised at the centre $\Gamma$ rather than future infinity (see also the discussion of divergent behaviour of metric coefficients below). Nonetheless, in this gauge, all geometric quantities we consider exhibit simple asymptotic behaviour as $u \to +\infty$ (referred to as late-time tails). Our stability proof shows that for solutions with weakly decaying initial data, all curvature components obey the global upper bound $O(\epsilon^{2} r^{-1^{-}} \jb{u}^{-1^{+}})$, where $\epsilon^{2}$ is the initial data size and $\jb{\cdot} := (1+(\cdot)^{2})^{\frac{1}{2}}$. On the other extreme, for solutions exhibiting almost sharp Bondi--Sachs peeling, all curvature components have a decay rate of at least $O(\epsilon^{2} r^{-1^{-}} \jb{u_{+}}^{-4})$, where $(\cdot)_{+} := \max \{(\cdot), 0\}$; this is a new result in this generality. Consistent with the hierarchical structure of the equations, the Ricci (or connection) coefficients and metric components exhibit the same upper bound, but multiplied by $r$ and $r^{2}$, respectively.

Furthermore, combined with the methods of Luk--Oh \cite{LO}, the centre-normalised Newman--Unti gauge allows for the proof of \emph{sharp Bondi--Sachs peeling} for general strongly asymptotically flat initial data (whose precise formulation is an appropriate extension of Definition~\ref{def:AF.data}). One may additionally conjecture that in this gauge, the generic local-in-space decay rate is $u^{-6}$ for the curvature components and $r^{2} u^{-6}$ for the metric components, consistent with the modified Price's law conjecture of Luk--Oh \cite{LO}. This would be markedly different from the case of wave coordinates \cite{hL2017, ycM2026}.

Despite these advantages, the usefulness of the centre-normalised Newman--Unti gauge for proving stability may seem doubtful at first glance. First, there is an apparent loss of derivatives in this gauge \cite{xtCsK2026} so that different connection coefficients appear to have different regularity. More seriously, integrating the equations from the centre reveals that, no matter how fast $\alpha_{AB}$ is assumed to decay in $r$, the metric components diverge(!) from the Minkowskian values as $r \to \infty$:
\begin{equation} \label{eq:metric-diverge}
	f -1 = O(r), \quad \abs{b} = O(r), \quad \abs{\gamma - \mathring{\gamma}} = O(1) \quad \hbox{ on each fixed } H_{u}.
\end{equation}
Our key new insight is that \emph{neither the loss of derivatives nor the divergence of the metric components obstruct sharp wave equation estimates for $\alpha_{AB}$ on this background}. Indeed, \eqref{eq:metric-diverge} notwithstanding, the estimates are accompanied by enough $u$-decay and smallness, which we justify via a suitable bootstrap argument, so that they can still be treated perturbatively. This observation represents a new ramification of the remarkable null structure of the Einstein equations, a classical theme that underlies the proof of Christodoulou--Klainerman \cite{CK94} and all subsequent works. We refer to \S\ref{subsec:ideas} below for a more detailed discussion of the main ideas of the proof.

\begin{remark}\label{rem:BMS}
The Newman--Unti gauge is closely related to the Bondi--Metzner--Sachs (BMS) gauge often used in the physics literature to describe the precise asymptotics of asymptotically flat spacetimes \cite{hB1960, hBmgjvdBawkM1962, rkS1962}. Indeed, the definitions of the two gauges differ only by the choice of the `radial' coordinate $r$: it is an affine parameter on generators of $H_{u}$ in the Newman--Unti gauge, and it is the areal coordinate in the BMS gauge. However, we point out that both gauges used in this context are typically normalised at future infinity, which is different from our centre-normalisation; in particular, the metric behaviour in \eqref{eq:metric-diverge} is precluded. 
\end{remark}

\subsection{First statement of the main theorem}

\begin{definition}
    For $s\in \mathbb Z_{\geqslant 0}$ and $\nu \in \mathbb R$, define $H^{s,\nu} = H^{s,\nu}(\mathbb R^3)$ to be the completion of $C^\infty_c(\mathbb{R}^3;\mathbb{R})$ with respect to the norm
    $$\|f\|_{H^{s,\nu}} = \sum_{|\alp| \leqslant s} \| (1+|x|)^{|\alp|+\nu} \rd^\alp f\|_{L^2}.$$
    We extend this definition to subsets of $\mathbb{R}^3$ and to tensors by working componentwise in Cartesian coordinates in the obvious way.
\end{definition}

\begin{definition}[Asymptotic flat initial data sets in centre-of-mass frame]\label{def:AF.data}
Let $(\Sigma, h, k)$ be a smooth set of Cauchy data for \eqref{eq:EVE}, i.e., $h$ is a Riemannian metric and $k$ is a symmetric $2$-tensor on a $3$-manifold $\Sigma$ that satisfy the \emph{Einstein vacuum constraint equations}:
\begin{equation} \label{eq:EVE-constraint}
    \begin{aligned}
        \mathrm{R}[h] + (\mathrm{tr}_{h} k)^{2} - |k|_{h}^{2} &= 0, \\
        \mathrm{div}_{h} k - \ud \mathrm{tr}_{h} k &= 0.
    \end{aligned}
\end{equation}
Assume that $\Sigma$ is diffeomorphic to $\mathbb{R}^{3}$ and thus equipped with coordinates $x^i$, $i=1, 2, 3$. We say $(\Sigma, h, k)$ is \emph{asymptotically flat} in centre-of-mass frame with decay exponent $\nu \in (0, 3)$ and regularity exponent $s \in \mathbb{Z}_{\geqslant 0}$ if the following holds:
\begin{enumerate}[i)]
    \item If $\nu \in (0, 1)$ and we have
    \begin{equation*}
        (h_{ij}, k_{ij}) \in (\delta_{ij} + H^{s, \nu - \frac{3}{2}}, H^{s-1, (\nu + 1) - \frac{3}{2}}).
    \end{equation*}
    \item If $\nu \in [1, 2)$ and, for some $M \geqslant 0$, we have
    \begin{equation*}
        (h_{ij}, k_{ij})|_{\{|x|>1\}} \in ((1 + \tfrac{2M}{|x|}) \delta_{ij} + H^{s, \nu - \frac{3}{2}}, H^{s-1, (\nu + 1) - \frac{3}{2}}).
    \end{equation*}
    \item If $\nu \in [2, 3)$ and, for some $M \geqslant 0$ and $J \in \mathbb{R}^{3}$, we have
    \begin{equation}\label{eq:AF.data.strong}
        (h_{ij}, k_{ij})|_{\{|x|>1\}} \in ((1 + \tfrac{2M}{|x|} + \tfrac{3 M^{2}}{2 |x|^{2}}) \delta_{ij}
        + H^{s, \nu - \frac{3}{2}},  3 (\in_{i k \ell} J^{\ell} \tfrac{x_{j} x^{k}}{|x|^{5}} + \in_{j k \ell} J^{\ell} \tfrac{x_{i} x^{k}}{|x|^{5}}) + H^{s-1, (\nu + 1) - \frac{3}{2}}).
    \end{equation}
\end{enumerate}
\end{definition}

\begin{remark} \label{rem:af-data}
\begin{enumerate}[i)]
    \item The slowly decaying spatial tails, such as $\frac{2M}{r} \delta_{ij}$, are necessitated for large values of $\nu$ by the Einstein vacuum constraint equations \eqref{eq:EVE-constraint}. Indeed, the parameter $M$ in Definition~\ref{def:AF.data}.ii)--iii) is the \emph{ADM mass}, which is strictly positive for any non-Minkowskian data by the Positive Mass Theorem \cite{Schoen1981ProofII, Witten1981ATheorem}.
    \item While structurally necessary, the precise form of these slowly decaying tails is not unique; our definition is based on physical and mathematical considerations. Specifically, Definition~\ref{def:AF.data}.ii) corresponds to the \emph{strongly asymptotically flat} condition originally implemented by Christodoulou--Klainerman \cite{ChrKl90} and utilised in ensuing works. The terms in Definition~\ref{def:AF.data}.iii) correspond to the asymptotic form of the metric far from a localised gravitating system, as found in Misner--Thorne--Wheeler \cite[(19.13)]{mtw}; alternatively, they match the asymptotics of the Cauchy data for the Kerr spacetime on constant Boyer--Lindquist time slices expressed in suitable spatial coordinates (see the proof of Corollary~\ref{cor:idest}). In this context, the parameter $J$ denotes the \emph{ADM angular momentum}. 
    \item In all cases we consider, the ADM linear momentum $P$ and the ADM centre-of-mass $C$, whenever they are well-defined, are fixed to be zero. This property signifies that the Cauchy data are posed in the \emph{centre-of-mass frame}.
    \item The existence of a large class of Cauchy data in Definition~\ref{def:AF.data}.ii)--iii) follows from the recent works of Fang--Szeftel--Touati \cite{ajFjSaT2025} and Chen--Klainerman \cite{xtCsK2025}. In fact, the Cauchy data in Definition~\ref{def:AF.data} is \emph{general} in the small-data regime in the following sense: for every $0 < \nu < 3$, any small Cauchy data of the aforementioned form can be parameterised -- up to a Poincar\'e transformation to pass to the centre-of-mass frame -- by the linear space of all solutions to the linearised constraint equations around the trivial data $(\delta, 0)$ in the weighted spaces $H^{s, \nu-\frac{3}{2}} \times H^{s-1, \nu-\frac{1}{2}}$. For a precise implementation of this parameterisation, see the upcoming work \cite{MOT2026}. 
\end{enumerate}
\end{remark}

\begin{theorem} [Main theorem, first formulation] \label{thm:main-1}
Suppose $(\mathbb{R}^{3}, h, k)$ is an asymptotically flat initial data set in centre-of-mass frame with any positive decay exponent $\nu > 0$ and regularity exponent\footnote{This choice of $s$ is not optimal: here and below we make no attempt to obtain sharp results with respect to regularity.\label{fn:regularity}} $s\geqslant 30$. There exists $\mathfrak{e}=\mathfrak{e}(s, \delta)>0$ such that if the following small data condition is satisfied
\begin{equation*}
	\mathfrak{n} := \| h_{ij} - \delta_{ij} \|_{H^{s, \delta-\frac{3}{2}}} + \| k_{ij} \|_{H^{s-1, \delta-\frac{1}{2}}} < \mathfrak{e},
\end{equation*}
for some $\delta < \min\{ \nu, \frac{1}{20} \}$, then the following conclusions hold.
    \begin{enumerate}[i)]
        \item The maximal future Cauchy development $(M, g)$ is globally smooth and future complete and may be covered by a single centre-normalised Newman--Unti coordinate chart, which is regular away from the centre geodesic (for the precise definition, see \S\ref{sec:coords}).
        \item $(M, g)$ is an asymptotically flat spacetime, in the following sense. Define the curvature components\footnote{In the Newman--Penrose formalism using the null tetrad with $l = e_{4}$ and $n = e_{3}$, $\alpha$, $\beta$, $(\rho, \sigma)$, $\betab$, and $\alphab$ coincide (up to normalisation) with the complex scalars $\Psi_{0}$, $\Psi_{1}$, $\Psi_{2}$, $\Psi_{3}$, and $\Psi_{4}$, respectively.} (see \cite{CK94} and Definition~\ref{def:weyl})
	\begin{align*}
\alpha_{AB}&= R(e_{A}, e_{4}, e_{B}, e_{4}), &\beta_A&=\frac{1}{2} R(e_{A}, e_{4}, e_{3}, e_{4}), &\rho &= \frac{1}{4} R(e_{4}, e_{3}, e_{4}, e_{3}), \\ \alphab_{AB}&=R(e_{A}, e_{3}, e_{B}, e_{3}),
& \betab_A&=\frac{1}{2}R(e_{A}, e_{3}, e_{3}, e_{4}), &\sigma&=\frac{1}{2} \in^{AB} R(e_{A}, e_{B}, e_{3}, e_{4}),
	\end{align*}
	where $e_{4} = \frac{\rd}{\rd r}$ (which is null and geodesic), $e_{A} = \frac{\rd}{\rd \theta^{A}}$ ($A = 1, 2$; they are orthogonal to $e_{4}$), $e_{3}$ is the unique future-directed null vector characterised by $g(e_{3}, e_{A}) = 0$, $g(e_{3}, e_{4}) = -2$, and $\in_{AB}$ is the volume form on each $S_{u, r}$ (whose indices are raised via $\gamma^{-1}$). We have\footnote{Here, and below, we write $A\ls B$ if $A \leq CB$ for some $C = C(s,\de)>0$.}
	\begin{equation} \label{eq:main-1:curv}
    \begin{aligned}
	|\alpha| + |\beta| + |(\rho, \sigma)| &\aleq \mathfrak{n} (\jb{u}+r)^{-2-\delta} (\tfrac{r}{\jb{u}+r})^{-1+\delta} , \\
	|\betab| &\aleq\mathfrak{n} (\jb{u}+r)^{-2} \jb{u}^{-\delta} (\tfrac{r}{\jb{u}+r})^{-1+\delta}, \\
	|\alphab| &\aleq \mathfrak{n}(\jb{u}+r)^{-1} \jb{u}^{-1-\delta} (\tfrac{r}{\jb{u}+r})^{-1+\delta},
	\end{aligned}
    \end{equation}
	where $| \cdot |$ is defined using the induced metric $\gamma$. In particular, all such curvature components decay to zero along any future-directed causal curve. Moreover, the metric components satisfy the bounds 
        \begin{equation} \label{eq:main-1:metric}
		|f - 1| + |b| \aleq \mathfrak{n} r \jb{u}^{-1-\delta} (\tfrac{r}{\jb{u}+r})^{\delta} , \quad
		|\gamma - \mathring{\gamma}| \aleq \mathfrak{n} \jb{u}^{-\delta} (\tfrac{r}{\jb{u}+r})^{1+\delta} .
	\end{equation}
        \item \label{item:main.thm.3} If, moreover, $\nu>3-\delta$ and there exist $R_0>0, M>0, J\in \mathbb{R}^3$ such that 
     \begin{align*}
    \mathfrak{n}'&:= \left\|h_{ij} - (1 + \tfrac{2M}{|x|} + \tfrac{3 M^{2}}{2 |x|^{2}}) \delta_{ij}\right\|_{H^{s'+1, \nu-\frac{3}{2}}(\{|x|>R_0\})}\\&\qquad + \left\|k_{ij}-3 (\in_{i k \ell} J^{\ell} \tfrac{x_{j} x^{k}}{|x|^{5}} + \in_{j k \ell} J^{\ell} \tfrac{x_{i} x^{k}}{|x|^{5}})\right\|_{H^{s', \nu-\frac{1}{2}}(\{|x|>R_0\})}  <\infty,
    \end{align*}
    then there exists $\mathfrak{e}'=\mathfrak{e}'(s,\de,\nu, R_0)>0$ such that if
    \[
    \mathfrak{n}+\mathfrak{n}'+M+|J|=: \mathfrak{m}<\mathfrak{e}'
    \]
   we have: 
	\begin{equation} \label{eq:main-1:imp}
    \begin{aligned}
	|\alpha| &\aleq \mathfrak{m}(\jb{u}+r)^{-(5-\delta)} (\tfrac{r}{\jb{u}+r})^{-1+\delta}, 
	& |\beta| &\aleq \mathfrak{m}(\jb{u}+r)^{-4} \brk{u_{+}}^{-(1-\delta)} (\tfrac{r}{\jb{u}+r})^{-1+\delta},  \\
	|(\rho, \sigma)| &\aleq\mathfrak{m} (\jb{u}+r)^{-3} \brk{u_{+}}^{-(2-\delta)} (\tfrac{r}{\jb{u}+r})^{-1+\delta}, 
	& |\betab| &\aleq \mathfrak{m} (\jb{u}+r)^{-2} \brk{u_{+}}^{-(3-\delta)} (\tfrac{r}{\jb{u}+r})^{-1+\delta}, \\
	|\alphab|&\aleq\mathfrak{m} (\jb{u}+r)^{-1} \brk{u_{+}}^{-(4-\delta)} (\tfrac{r}{\jb{u}+r})^{-1+\delta}. & &
	\end{aligned}
    \end{equation}
    In particular, for every $u$ as $r \to \infty$ (i.e., towards future null infinity), we have \emph{almost sharp Bondi--Sachs peeling}.
    \end{enumerate}
\end{theorem}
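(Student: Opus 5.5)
The overall strategy is a continuity (bootstrap) argument in the centre-normalised Newman--Unti gauge, organised around the hierarchy described in the introduction.

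\emph{Step 0: reduction to characteristic-type initial data.} By Cauchy stability and the local theory, the small data condition $\mathfrak{n}<\mathfrak{e}$ gives a solution on a neighbourhood of $\Sigma$. On the last cone $H_{u_0}$ emanating from the point of the centre geodesic $\Gamma$ near $\Sigma$ we then have weighted bounds for $\alpha$ and its derivatives, with $r$-decay dictated by $\delta$. In case iii) the sharper rates are dictated by $\nu$ together with the Kerr-type tails in $M,J$. This is the content I would take from the paper's corollary on initial data estimates.

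\emph{Step 1: bootstrap assumptions.} On a region $\{u_0\le u\le U\}$ I assume the bounds \eqref{eq:main-1:curv} and \eqref{eq:main-1:metric}, with constant $C\mathfrak{n}$, for sufficiently many derivatives. The derivatives allowed are $\partial_u$, $r\nab$ and angular commutators, with derivative loss built into the hierarchy: fewer derivatives are assumed on connection coefficients than on $\alpha$. This loss explains the generous $s\ge30$.

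\emph{Step 2: estimates for $\alpha$.} I write the Teukolsky-type tensorial wave equation for $\alpha$ on the background $g$. All deviations from the Minkowskian operator, namely $f-1$, $b$, $\gamma-\mathring\gamma$ and the Ricci coefficients, are moved to the right-hand side. I then apply the Dafermos--Rodnianski $r^p$ hierarchy, with $p$ ranging up to $2-\delta$, together with integrated local energy decay, and iterate in $u$ through the pigeonhole argument to obtain $\jb{u}^{-1-\delta}$-type decay. Commuting with angular momentum operators and $\partial_u$, and using Sobolev embedding on $S_{u,r}$, yields the pointwise bound for $\alpha$.

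The main obstacle is this step. The error terms contain metric components that grow in $r$, as in \eqref{eq:metric-diverge}, and derivatives that lose regularity. I would first verify that every such error arrives with a "good" null structure. Growing terms like $(f-1)\partial_r^2\alpha$ come paired with $r\jb{u}^{-1-\delta}$ decay. After using the equation to trade $\partial_r$ for $r^{-1}$-weighted good derivatives, their contributions are integrable in $u$ and small, so they can be absorbed via Cauchy--Schwarz against the $r^p$ bulk terms. Loss of derivatives would be handled by closing the $\alpha$ energy at top order while using the bootstrap for metric quantities only at lower order.

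\emph{Step 3: transport from the centre.} With $\alpha$ controlled, I integrate the null structure and Bianchi equations in $e_4=\partial_r$ along the generators of $H_u$, starting from regularity at $\Gamma$. The order is $\trch,\chih$ from the Raychaudhuri equation, then $\beta$, $\zeta$ or $\eta$, $\rho$, $\sigma$, then the $e_3$-quantities $\trchb,\chibh,\betab,\alphab$, and finally the metric components $\gamma$, $b$, $f$. Each integration $\int_0^r$ gains at most a power of $r$. This produces the improved bounds with constant $\mathfrak{n}$, and indeed the metric growth $O(r)$. The $u$-weights propagate because the sources carry them, and $\betab,\alphab$ gain the extra $\jb{u}^{-\delta}$ and $\jb{u}^{-1-\delta}$ by integrating Bianchi equations whose sources decay in $u$. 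Closing the bootstrap and applying local existence gives i) and ii); future completeness follows from the metric bounds.

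\emph{Step 4: part iii).} I rerun Steps 2--3 with a higher $r^p$ hierarchy, which is possible because the data decay faster. After subtracting the Kerr-type contributions fixed by $M$ and $J$ (in centre-of-mass frame), this gives $\alpha\sim r^{-5+\delta}$. Propagating through the Bianchi transport equations, where each step trades one power of $r$ for one power of $\jb{u_+}$, yields the peeling hierarchy \eqref{eq:main-1:imp}.
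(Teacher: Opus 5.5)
Step~0 fails, and with it the scope of your bootstrap. The cone $H_{u_0}$ from a point of $\Gamma$ near $\Sigma$ runs out to null infinity along $t\approx r$, so no neighbourhood of $\Sigma$ supplied by Cauchy stability or local existence contains it. The region between $\Sigma$ and $H_{u_0}$ is the whole exterior $\{u<u_0\}$, and constructing the solution there with decay is itself a global (exterior-stability) problem. Your region $\{u_0\leqslant u\leqslant U\}$ therefore never covers $u<u_0$, yet the theorem asserts the bounds for all $u$, with $\jb{u}$-weights as $u\to-\infty$.

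The paper instead poses data on the spacelike hypersurface $\tilde{\Sigma}_1=\{u+r=1\}$, which coincides with $\Sigma$ away from the axis. Appendix~\ref{sec:appendix2} derives the required bounds there by local existence on annuli plus rescaling, and by comparison with Kerr data for iii). It then bootstraps on $S_T$, bounded by $\tilde{\Sigma}_1$ and a level set of the time function $\tau$ of \S\ref{sec:tau}; these regions reach ever more negative $u$ as $T\to\infty$. That $\tau$ needs care: since $f-1\sim\epsilon r\jb{u}^{-1-\delta}$ grows in $r$, even $\{u+r=\mathrm{const}\}$ need not be spacelike. The condition $2\partial_u\tau>f\partial_r\tau$ is what makes the future-boundary flux in the $r^p$ identity nonnegative.

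Similarly, ``closing the bootstrap and applying local existence'' skips the axis, where the Newman--Unti chart is singular and your bounds degenerate. The paper proves $H^{\frac{5}{2}^+}\times H^{\frac{3}{2}^+}$ smallness of the data induced on geodesic hypersurfaces normal to $\Gamma$ (\S\ref{sec:centre}). It then invokes low-regularity local well-posedness with persistence of regularity.

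The second gap is at top order. Keeping metric quantities at lower order does not fix it, because the loss occurs at the level of curvature. To bound the Teukolsky source $\mathcal{E}$ with $s_0$ derivatives, which is what controlling $s_0+1$ derivatives of $r^2\alpha$ requires, you need $s_0+1$ derivatives of $\beta$ and $s_0$ of $\rho,\sigma,K$. The responsible terms include:
\begin{itemize}
\item $r\nab\hot((2-r\trch)r\beta)$;
\item $r^3\chih\,\div\beta$, which comes from $r\nab_4(\chih\rho)$;
\item $(1-Kr^2)r\alpha$;
\item the $r\sigma$ term in $[\nab_3,r\nab_4]$.
\end{itemize}
Integrating $e_4$-transport equations from $\alpha$ gives only $s_0$ and $s_0-1$ derivatives respectively, and the missing derivative is angular.

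The missing idea is to read \eqref{eq:3alpha} as an elliptic equation $\nab\hot\beta=\nab_3\alpha+\cdots$ on $S_{u,r}$. This requires top-order spacetime control of $\nab_3(r^2\alpha)$, which the paper gets from a second multiplier $\tilde{w}\,r\nab_3\phi$ (Lemma~\ref{lem:nab3est}). It then recovers $\nab\beta$ by Lemma~\ref{lem:elliptic}, obtains $\rho,\sigma$ by transport, and gets $K$ from the Gauss equation \eqref{eq:constrho}. Commuting with $\partial_u$ cannot substitute, since its commutators involve $\chib,\omb,\xib,\betab$, which are controlled only at lower order.

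Relatedly, moving $f-1$, $b$, $\gamma-\mathring{\gamma}$ to the right-hand side loses a derivative in this quasilinear problem. The paper keeps them in the operator and works covariantly on $S_{u,r}$, so $b$ and $\gamma-\mathring{\gamma}$ never appear explicitly. The errors that grow in $r$ (from $f-1$ and $r\omb$) have size $\epsilon\jb{u}^{-1-\delta}\abs{r\nab_4\phi+3\phi}^2 w$. They are absorbed into $\sup_u$ of the $H_u$-flux using $u$-integrability, not into the bulk as your Step~2 says, since they outgrow the bulk as $r\to\infty$.

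As a side remark, with the weight $(r+a\jb{u})^{2p_0+2p_\infty}r^{-3-2p_0}$ the bulk is positive up to the axis and yields $u$-decay directly. So no Morawetz estimate or pigeonhole step is needed.
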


\begin{remark} \label{rem:main-1}
\begin{enumerate}[i)]
\item We say that the spacetime metric satisfies Bondi--Sachs peeling if for every fixed $u \in \mathbb R$, it holds that 
\begin{equation}\label{eq:Bondi.Sachs.def}
    |\alp| \ls r^{-5},\quad |\bt| \ls r^{-4},\quad |(\rho,\sigma)|\ls r^{-3},\quad |\betab|\ls r^{-2},\quad |\alphab| \ls r^{-1}.    
\end{equation}
In \eqref{eq:main-1:imp}, these bounds hold for all curvature components other than $\alp$, for which a slightly weaker estimate holds. We thus call this `almost sharp Bondi--Sachs peeling.'
\item The ideas in this paper are already sufficient for propagating sharp Bondi--Sachs peeling estimates that already hold for spacelike-characteristic initial data; see Remarks~\ref{rem:strong-peeling-rp}, \ref{rem:strong-peeling-geom} and \ref{rem:strong-peeling-spnull}. 
\item In the intermediate regime where $0<\nu<3-\de$ a structurally similar proof will give partial peeling results, where some but not all curvature components exhibit sharp fall-off. For simplicity we only include the endpoints here, but there is no obstruction to repeating the proof for different $\nu$.
\item The exponent $\delta$ above $\tfrac{r}{\jb{u}+r}$ capturing the degeneration at $r=0$ does not need to coincide with the decay rate of the initial data; we have chosen to do so merely for simplicity.
\end{enumerate}

\end{remark}

\subsection{Main ideas of the proof}  \label{subsec:ideas}
We employ the following notation in this subsection.
As was introduced in Theorem~\ref{thm:main-1}, let $e_{4} = \frac{\rd}{\rd r}$, $e_{A} = \frac{\rd}{\rd \theta^{A}}$, and $e_{3} = 2(\frac{\rd}{\rd u} + b^{A} \frac{\rd}{\rd \theta^{A}}) - f \frac{\rd}{\rd r}$ (note that, indeed, $g(e_{3}, e_{3}) = g(e_{3}, e_{A}) = 0$ and $g(e_{4}, e_{3}) = -2$). We say that a tensor field $\phi$ is \emph{$S-$tangent} if it coincides with its projection (defined using $g$-orthogonality) to $S_{u, r}$; important examples of $S-$tangent fields are the curvature components $\alpha$, $\beta$, etc. Let $\nab_3$, $\nab_4$, $\nab_A$ be respectively the projections (defined using orthogonality via $g$) to $S_{u,r}$ of the covariant derivatives $D_3= D_{e_3}$, $D_4 = D_{e_4}$, and $D_A = D_{\rd_{\theta^{A}}}$. In what follows, we shall often omit the subscript $A$ and write $\nab$ for the $S-$tangential gradient. 

\subsubsection{Transport equations along $e_{4}$} \label{subsubsec:transport}
As observed originally by Newman--Unti \cite{Newman1962BehaviorSpaces}, given the curvature components $\alp$ it is possible to recover all coefficients describing the geometry of $(M, g)$ by integrating transport equations along $e_{4}$. For instance, the following chain of transport equations -- arising from the second Bianchi identity and structure equations -- may be used to recover the metric coefficient $f$:
\begin{align*}
\nab_{4} \beta + \frac{4}{r} \beta = \div \alpha + \cdots, \quad
\nab_{4} \rho + \frac{3}{r} \rho = - \div \beta + \cdots, \quad
\nab_{4} \omegab = \rho + \cdots, \quad
\nab_{4} f = 2 \omegab,
 \end{align*}
where we have suppressed the nonlinear terms, which play no important role in this discussion. Moreover, the scalar quantity $\omegab$ is one of the Ricci coefficients, whose precise definition is not needed here (see \S\ref{subsec:ricci} below). Integrating these equations from the trivial initial conditions on the centre geodesic $\{ r = 0 \}$, using the heuristic that every angular derivative (such as $\div$) improves the pointwise decay by $r^{-1}$, and ignoring the nonlinearity, higher order derivatives, and the behaviour near the centre for now (see \S\S\ref{subsubsec:comm}, \ref{subsubsec:centre} for a further discussion of these points), we see that
\begin{align*}
	\alpha = O(\epsilon^{2} (r + \jb{u})^{-2-\delta}) \: &\Rightarrow  \: \beta = O(\epsilon^{2} (r + \jb{u})^{-2-\delta}) \: \Rightarrow
    \rho = O(\epsilon^{2} (r + \jb{u})^{-2-\delta}) \\
    &\Rightarrow  \:  \omegab = O(\epsilon^{2} \jb{u}^{-1-\delta}) \: \Rightarrow  \:  f - 1 = O(\epsilon^{2} r \jb{u}^{-1-\delta}).
\end{align*}
Here, $1$ is the boundary value of $f$ on $\{r = 0\}$. As discussed earlier, $f-1$ exhibits a growth in $r$, but it is accompanied by smallness (i.e., $\epsilon^{2}$) and a decay in $u$ with an integrable rate. 
Similar computations show the rest of the metric component bounds in \eqref{eq:main-1:metric}, as well as the almost sharp Bondi--Sachs peeling bounds (in the latter case, one starts with $\alpha = O(\epsilon^{2} (r + \jb{u})^{-(5-\delta)})$).

The general strategy of combining wave estimates for $\alp$ and transport estimates after choosing a suitable gauge is reminiscent of an essential step in the proof of black hole stability \cite{DHR, DHRT, GKS, KS.Kerr}. Notice, however, in all those works wave estimates were needed for both the curvature components $\alp$ and $\alphab$, and, more importantly, the gauge is normalised to the future at infinity, in constrast to the centre normalisation that we use. 

\subsubsection{Wave estimates for the Teukolsky equation} \label{subsubsec:wave}
As is well-known, the second Bianchi identity and the Einstein vacuum equation imply second-order tensorial wave equations for curvature components. In particular, for the components $\alpha_{AB}$, we have the equation
\begin{equation} \label{eq:ideas:teukolsky}
    \nab_3(r \nab_4(r^2\alpha)) + 2 \nab_4 (r^2 \alpha) + 3 \nab_3 (r^2 \alpha) + \frac{4}{r} (r^2 \alpha) -  r \slashed\Delta (r^2 \alpha) = \cdots,
\end{equation}
which we refer to as the \emph{Teukolsky equation} in reference to black hole perturbation theory. The omitted terms are nonlinear and vanish for linearised gravity around the Minkowski spacetime.

To derive estimates for $\alpha$, the key question we need to address is: \emph{How does one analyse \eqref{eq:ideas:teukolsky} under the metric assumption \eqref{eq:main-1:metric}}, which allows divergence from the Minkowski values? Our basic observation is that (a suitable variant of) the $r^{p}$ multiplier of \cite{Dafermos2009ASpacetimes} is viable, thanks to the smallness and $u$-decay of $f-1$.

To wit, we contract (using the induced metric $\gamma$) the equation against $w (r \nab_{4} (r^{2} \alpha) + 3 r^{2} \alpha)$, where $w = w(u, r)$ is a positive function to be chosen below, and integrate over $S_{u, r}$. Then we obtain
\begin{align*}
    & \frac{\partial}{\partial r} \int_{S_{u, r}}\left[ \frac{1}{2} \abs{r\nab(r^{2} \alpha)}^2 + \abs{r^{2} \alpha}^2 - \frac{f}{2}\abs{r \nab_4 (r^{2} \alpha) + 3 r^{2} \alpha}^2  \right] w\ud A_{\gamma} + \frac{\partial}{\partial u} \int_{S_{u, r}}\abs{r \nab_4 (r^{2} \alpha) + 3 r^{2} \alpha}^2  w\ud A_{\gamma}\nonumber \\ 
    & \qquad + \int_{S_{u,r}} \bigg[ \frac{1}{r}\left(1 + \frac{1}{2}\frac{r \partial_r w}{w} - \frac{r \partial_u w}{w} + \frac{f-1}{2}\frac{r \partial_r w}{w} \right)\abs{r \nab_4 (r^{2} \alpha) + 3 r^{2} \alpha}^2 + 2 r\abs{\nab_4 (r^{2} \alpha)}^2 \\
    &\qquad 
    + \frac{8}{r} (r^{2} \alpha) \cdot(r \nab_4 (r^{2} \alpha))
    + \frac{1}{2 r}\left( 4 -\frac{r \partial_r w}{w}\right) \abs{r\nab(r^{2} \alpha)}^2 + \frac{1}{r}\left(10 -\frac{r \partial_r w}{w} \right) \abs{r^{2} \alpha}^2\bigg] w\ud A_{\gamma} = \cdots,
\end{align*}
where we omitted all nonlinear terms that do not contain metric component deviations $f-1, \gamma - \mathring{\gamma}, b$. While we shall not go through the details of this computation here (the interested reader may refer to \S\ref{TeukEsts}), we note the following main points:
\begin{enumerate}[-]
    \item The metric component $f$ appeared through the coordinate expression $e_{3} = 2 (\rd_{u} + b^{A} \rd_{\theta^{A}}) - f \rd_{r}$ embedded in $\nab_{3}$.
    \item However, \emph{none of the other metric component differences $\gamma - \mathring{\gamma}$ and $b$ appear explicitly in this computation}, thanks to working covariantly on the spheres $S_{u, r}$. Specifically, $\gamma$ is implicitly present in contractions and integrals over the spheres $S_{u, r}$, and $b$ disappears (or more precisely, it is replaced by more favorable Ricci coefficients) by integration by parts on the spheres.
\end{enumerate}
Following \cite{Dafermos2009ASpacetimes} (but with a different normalisation), consider the choice $w = r^{2 p - 3}$ for some $p$ to be fixed below, so that $\frac{r \rd_{r} w}{w} = 2 p - 3$ and $\frac{r \rd_{u} w}{w} = 0$.  Integrating over spacetime regions bounded by the initial spacelike hypersurface $\tilde{\Sigma}_{1}$ and level-$u$ hypersurfaces to its future (denoted by $H_{u}$), we obtain
\begin{align*}
    & \sup_{u} \int_{H_{u}} \abs{r \nab_4 (r^{2} \alpha) + 3 r^{2} \alpha}^2  w \ud A_{\gamma} \ud r 
    + \int \int_{H_{u'}} \bigg[ \left(1 + \frac{2p-3}{2}  \right)\abs{r \nab_4 (r^{2} \alpha) + 3 r^{2} \alpha}^2 + 2 \abs{r \nab_4 (r^{2} \alpha)}^2 \\
    &\qquad + 8 (r^{2} \alpha) \cdot(r \nab_4 (r^{2} \alpha)) 
    + \frac{1}{2}\left( 4 - (2p-3) \right) \abs{r\nab(r^{2} \alpha)}^2 + \left(10 - (2p - 3) \right) \abs{r^{2} \alpha}^2\bigg] \frac{w}{r} \ud A_{\gamma} \ud r \ud u\\
    &\qquad 
    + (2p-3)\int \int_{H_{u'}} \frac{1}{r} \frac{f-1}{2} \abs{r \nab_4 (r^{2} \alpha) + 3 r^{2} \alpha}^2 w \ud A_{\gamma} \ud r \ud u 
    = \hbox{(Data on $\tilde{\Sigma}_{1}$)}
    + \cdots,
\end{align*}
where for simplicity we have ignored the issues of convergence of the integrals and vanishing of a boundary term as $r \to +\infty$ (we remark that, since constant-$(u+r)$ hypersurfaces are not necessarily spacelike under our assumptions, for a proper treatment of these issues, we perform a bootstrap argument using a meticulously chosen time function $\tau$; see \S\ref{sec:tau} for its definition). At this point, observe that \emph{all terms on the left-hand side, except for the last term} (involving $f-1$), \emph{are positive for a suitable range of $p$}. More precisely, if
\begin{equation} \label{eq:ideas:p}
    -\frac{\sqrt{17}}{2}+1 < p < \frac{\sqrt{17}}{2}+1,
\end{equation}
then the quadratic form inside $[\cdots]$ in the variables $(r \nab_{4}(r^2 \alpha), r^2 \alpha)$ is positive definite, and the coefficient $(4 - (2p -3))$ in front of the remaining term $|r \nab(r^2 \alpha)|^{2}$ is also positive.
Observe furthermore that, thanks to the integrability in $u$ and the smallness (for $\epsilon$ sufficiently small) of $\frac{1}{r}(f-1)$, \emph{the last term on the left-hand side can be absorbed into the first term}. In conclusion, we arrive at the estimate
\begin{equation} \label{eq:ideas:wave-rp}
\begin{aligned}
    &\sup_{u} \int_{H_u} \abs{(r \nab_4  + 3)(r^{2} \alpha)}^2 w \ud A_{\gamma} \ud r \\
        &\qquad + \int \int_{H_u} \left[ \abs{r\nab_4(r^{2} \alpha)}^2 + \abs{r\nab(r^{2} \alpha)}^2 +\abs{r^{2} \alpha}^2\right]\frac{w}{r} \ud A_{\gamma} \ud r \ud u \aleq \hbox{(Data)} + \cdots.    
\end{aligned}
\end{equation}

Starting from the simple observations that we just made, we obtain progressively stronger estimates on $\alpha$ by making the following modifications:
\begin{enumerate}[-]
    \item We turn the weight $r^{2p-3}$ into a suitable weight of $r + \jb{u}$ away from the centre and $r^{-1}$ near the centre by choosing the weight
    \begin{equation*}
        w(u, r) = \frac{(r+a\jb{u})^{2p_{0}+2p_{\infty}}}{r^{3+2p_{0}}},
    \end{equation*}
    for some exponents $p_{\infty}$, $p_{0}$ and a small positive parameter $a > 0$ to be determined. Observe that this weight coincides with our previous choice with $p = p_\infty$ if $a = 0$; the new exponent $p_{0}$ encodes the strength of the weight $w$ near the centre, whose discussion we postpone until \S\ref{subsubsec:centre}. Essentially the same computation shows that, for $p_{\infty}$ in the range \eqref{eq:ideas:p}, $p_{0}$ in another suitable range (see \eqref{eq:p0bound} below), and $a > 0$ suitably small, then we still have the estimate \eqref{eq:ideas:wave-rp}. 

    We remark that turning the $r$-weight into a corresponding $(r+\jb{u})$-weight is analogous to the dyadic pigeonhole argument in \cite{Dafermos2009ASpacetimes}. However, in our case, we use a simple alternative with a weight that directly grows in $u$ (with a smallness parameter $a > 0$), which works well with centre normalisation.
    
    \item To control other derivatives of $\alpha$ (in particular, $\nab_{3} \alpha$), we use a new weighted multiplier $\tilde{w} r \nab_{3} (r^{2} \alpha)$, where $\tilde{w}$ is comparable to $w$ multiplied by $\frac{\brk{u}}{r + \brk{u}}$ (at least when $p_{\infty} < 3$, which is the case for the proof of Theorem~\ref{thm:main-1}). We obtain:
    \begin{equation} \label{eq:ideas:wave-rp-e3}
\begin{aligned}
    &\sup_{u} \int_{H_u} \left[   \abs{r \nab (r^{2} \alpha)}^2 + \abs{r^{2} \alpha}^2  \right] \frac{\brk{u}}{r + \brk{u}} w \ud A_{\gamma} \ud r \\
        &\qquad + \int \int_{H_u} \abs{r\nab_3 (r^{2} \alpha)}^2 \frac{\brk{u}}{r + \brk{u}} \frac{w}{r} \ud A_{\gamma} \ud r \ud u \aleq \hbox{(Data)} + \cdots.    
\end{aligned}
\end{equation}
    Again, we note that this multiplier is directly effective thanks to centre normalisation.
\end{enumerate}

To prove Theorem~\ref{thm:main-1}, we utilise estimates \eqref{eq:ideas:wave-rp} and \eqref{eq:ideas:wave-rp-e3} with $p_{\infty} = \delta$. Indeed, one may already observe that integrating the transport operator in the first term of \eqref{eq:ideas:wave-rp} leads to an $L^{2}$-estimate for $\alpha$ on each sphere $S_{u, r}$ that would lead to $\alpha = O(\epsilon^{2} (r + \jb{u})^{-2-\delta})$ (assumed in \S\ref{subsubsec:transport}) after we commute with $S-$tangential gradients and apply Sobolev embeddings on $S_{u, r}$ (see \S\S\ref{subsubsec:comm}, \ref{subsubsec:norm-equiv}). For almost sharp Bondi--Sachs peeling, we use $p_{\infty} = 3 - \delta$. We defer the discussion of the choice $p_{0}$ until \S\ref{subsubsec:centre}.

We end the discussion on the estimates for the Teukolsky equation with a few more remarks:
\begin{remark} \label{rem:teukolsky}
\begin{enumerate}[i)]
    \item In contrast to the classical vector field method of Christodoulou--Klainerman \cite{ChrKl90}, there is no need for the construction of any time-like approximately (conformal) Killing vector fields to be used as multipliers. See also Remark~\ref{rem:comm}.ii) for a related comment in the context of commuting vector fields. 
    \item The greater range of $p$ in \eqref{eq:ideas:p} than one expects for the scalar wave equation (see \cite{Dafermos2009ASpacetimes}) reflects the more favourable properties of the Teukolsky equation from the point of view of $r^p$ estimates.
    \item In contrast to the original \cite{Dafermos2009ASpacetimes}, our $r^p$ estimate closes without relying on a separate integrated local energy decay (or Morawetz) estimate. We note that a conceptually similar strategy was recently implemented by Gautam \cite{oG2026} in the context of two-dimensional wave equations, where the standard Morawetz estimate is known to fail due to poor low-frequency behaviour.
\end{enumerate}
\end{remark}

\subsubsection{Higher order derivatives and closing the transport-wave estimates} \label{subsubsec:comm} Next, we turn to the issue of deriving estimates for higher order derivatives of $\alpha$, and discuss how one avoids possible derivative losses when finally closing the $e_{4}$-transport and wave estimates.

Our main commuting operators are $r \nab$ and $r \nab_{4}$, which are weighted derivatives tangent to each $H_{u}$. Let $s_{0}$ be the maximum number of such operators we commute the Teukolsky equation in our argument, and let $\phi$ represent $(r \nab_{4})^{i} (r \nab)^{j} \alpha$ for any $i+j \leqslant s_{0}$. Using an observation going back to \cite{Dafermos2009ASpacetimes} and expanded in \cite{vS2013, gM2016}, estimates analogous to \eqref{eq:ideas:wave-rp} and \eqref{eq:ideas:wave-rp-e3} can be also proved for $(r \nab_{4})^{i} (r \nab)^{j} \alpha$, provided that we have an adequate control of $(r \nab_{4})^{i'} (r \nab)^{j'}$ applied to the right-hand side of \eqref{eq:ideas:teukolsky} for all $i' + j' \leqslant s_{0}$.

In particular, in order to close all the estimates, we need to control appropriate orders of derivatives of the curvature components that arise on the right-hand side of \eqref{eq:ideas:teukolsky}; specifically, we need $s_{0}+1$ many derivatives of $\beta$, and $s_{0}$ many derivatives of $\sigma$ and $K$, the Gauss curvature of the spheres $S_{u, r}$ (see \S\ref{sec:teukolsky} for details). From \eqref{eq:ideas:wave-rp} and \eqref{eq:ideas:wave-rp-e3} for the derivatives of $\alp$ and simply using $e_{4}$-transport equations as in \S\ref{subsubsec:transport}, we control exactly $s_{0}$ derivatives for $\bt$ and $s_{0}-1$ derivatives of $\sigma$ and $K$, which are insufficient. To overcome such possible derivative losses for curvature components, we utilise the following observations:
\begin{enumerate}[-]
    \item The second Bianchi identity relates $\nab_{4} \beta$ and $\nab \hot \beta$ (the symmetric traceless part of $\nab \beta$) with $\div \alpha$ and $\nab_{3} \alpha$, respectively, which are controlled with up to $s_{0}$ many derivatives in the weighted spacetime norms in \eqref{eq:ideas:wave-rp} and \eqref{eq:ideas:wave-rp-e3} (see \eqref{eq:4beta} and \eqref{eq:3alpha}, respectively). Combined with an elliptic estimate on spheres to control $\nab \beta$ via $\nab \hot \beta$, we may control $s_{0}+1$ many derivatives of $\beta$. Observe that at the top order, as for wave equations in general, not all derivatives can be controlled on the null $H_u$ hypersurfaces. As a result, we also use integrated spacetime norms to control certain top derivatives of $\bt$, as well as other quantities that depend on it.
    \item Once we estimate $s_{0}+1$ derivatives of $\bt$, we also control $s_{0}$ derivatives for $\sigma$ and $K$. For $\sigma$, this is immediate from the transport equation; for   $K$, we use the Gauss equation (see \eqref{eq:constrho}) to relate it directly with the null curvature component $\rho$, which is controlled in exactly the same manner as $\sigma$.
\end{enumerate}
Observe from the above discussion that not all curvature components are controlled at the same level of regularity, in contrast to more standard approaches which view the Bianchi equations as a hyperbolic system and treat all curvature components at the same level. Importantly, however, this is consistent with what is need to close the estimates for the Teukolsky equation.

One must also consider the possibility of derivative losses arising from Ricci coefficients, whose discussion we have suppressed so far. Using the transport equations for the Ricci coefficients, one is naturally led to estimates with different orders of derivatives for different Ricci coefficients. Specifically (see definitions in \S\ref{subsec:ricci}), after using the improvements for the curvature components discussed above, we control $s_{0}+1$, $s_{0}$ and $s_{0}-1$ derivatives of $(\chi,\eta)$, $(\omegab,\chib)$ and $\xib$, respectively. Remarkably, this hierarchy of regularity is still consistent with the derivative count needed for the Teukolsky estimates. 

The estimates derived so far allow for further commutation of the Teukolsky equation with $\jb{u} \nab_{u}$, where $\nab_{u} := \frac{1}{2}(\nab_{3} + f \nab_{4})$, but at a lower order. In effect, we may `trade' two copies of $r \nab_{4}, r \nab$ with one $\brk{u} \nab_{u}$. We remark that the commutation with $\jb{u} \nab_{u}$ is \emph{not} needed to close the a-priori estimates above for the $e_{4}$-transport and Teukolsky equations, but it is used to ensure that the spacetime can be smoothly continued (in a different coordinate chart) near the centre; see \S\ref{subsubsec:centre}.

We end this discussion with two more remarks:
\begin{remark} \label{rem:comm}
\begin{enumerate}[i)]
    \item The omitted nonlinear terms in the transport equations and the Teukolsky equation \eqref{eq:ideas:teukolsky} possess several important properties. As mentioned above, the Ricci coefficients appear in a manner that incur no derivative losses (after using the elliptic estimates discussed above). Moreover, the nonlinear terms omit the worst (from the point of view of decay) potential nonlinear interactions. This specific \emph{null structure}, combined with the smallness and $u$-decay of the metric deviations, ensures that the nonlinearities can be treated as error terms in our scheme. For a detailed verification, we refer the reader to \S\ref{sec:teukolsky} and \S\ref{sec:other-geom} below for the Teukolsky and transport equations, respectively.
    \item We point out that $r \nab_{A}$ and $\jb{u} \nab_{u}$ serve as good commuting operators within our scheme, even though the vector fields $r e_{A}$ and $\frac{1}{2}(e_{3} + f e_{4})$ may not even be asymptotically Killing as $r \to +\infty$ (indeed, the $33$-component of the deformation tensor of both may grow as $O(\epsilon r)$).
\end{enumerate}
\end{remark}
\subsubsection{Behaviour near the centre} \label{subsubsec:centre}
So far, we have suppressed the discussion of what happens near the centre. Centre-normalisation necessarily leads to non-smooth estimates near the centre; indeed, $\alpha$ itself cannot be smooth near $\Gamma$ since $e_{4}$ is not regular. Our approach is to show that \emph{the spacetime is nevertheless smooth near $\Gamma$} (in a different coordinate chart than Newman--Unti) by establishing a control on $g$ that is sufficient for the application of the classical (i.e., relying only on energy estimates) low-regularity local well-posedness result, which implies the desired smoothness via persistence of regularity. 

Specifically, we observe that we may take $p_{0} = 1+\delta$ in \S\ref{subsubsec:wave}, which leads to the weight $(\frac{r}{\jb{u}+r})^{-1+\delta}$ in \eqref{eq:main-1:curv}. After transport estimates and commutations, these turn out to imply the following bounds for the induced data $(h, k)$ on each spacelike hypersurface $N$ consisting of geodesics normal to $\Gamma$ (which is smooth):
\begin{equation*}
    |x|^{i} |\nabla^{i} (h - \delta)(x)| + |x|^{i+1} |\nabla^{i} k(x)| \lesssim o_{\epsilon \to 0}(1) |x|^{1+\delta} \quad \hbox{ for $|x|$ small,}
\end{equation*}
for some local coordinates $x$ on $N$ with $x = 0$ at the centre. While it does not directly imply smoothness, this bound nevertheless shows the smallness of $(h - \delta, k)$ in $H^{s} \times H^{s-1}$ near $\Gamma$ with $s > \frac{5}{2}$, which is sufficient for the application of the classical local well-posedness result and persistence of regularity (see, e.g., \cite{Hughes1977Well-posedRelativity}). For details, we refer to \S\ref{sec:centre}.

\subsubsection{Equivalence of norms} \label{subsubsec:norm-equiv}
With our estimates, we are able to always establish the equivalence between the transport/wave estimates discussed above -- which come naturally with norms and derivatives associated with the metric $g$ -- with norms and derivatives defined using the background Minkowski metric; see \S\ref{sec:equiv.norm}. In this fashion, we avoid the need to prove geometric versions of standard analytic inequalities, such as Sobolev embeddings and interpolation.

\subsection{Related works}

\subsubsection{Literature on the stability of the Minkowski spacetime} \label{sec:lit}

%









%

\begin{table}[t]
    \centering
    \begin{tabular}{|c|p{0.25\linewidth}|p{0.16\linewidth}|c|p{0.28\linewidth}|}
    \hline
        Paper & Initial surface & Gauge / scheme & $\nu$ & Dynamical estimates \\
        \hline
        \cite{Friedrich1986OnStructure} & Hyperboloidal  & Conformal method& 3 & Local-in-time energy estimates for conformal eqns\\
        \cite{CK94} & Spacelike & Maximal/null foliation& $\frac{3}{2}$ & Classical vf estimates for Bianchi system \\
        \cite{Lindblad2005GlobalCoordinates} & Spacelike & Wave/harmonic coordinates & $1^+$ & Classical vf estimates for metric components \\
        \cite{Bieri2010AnRelativity} & Spacelike & Maximal/null foliation& $\frac{1}{2}$ & Classical vf estimates for Bianchi system\\
        \cite{Hintz2017StabilityMetric}& Spacelike & Wave-map gauge & $1^+$ &  Classical vf estimates for metric components \\
        \cite{Keir} & Spacelike & Wave/harmonic coordinates & $1^-$ & $r^p$ estimates for metric components \\
        \cite{Graf2020GlobalData} & Spacelike-Characteristic & Maximal/null foliation&   $\frac{3}{2}$ & Classical vf estimates for Bianchi system\\
        \cite{Ionescu2022TheAMS-213} & Spacelike & wave/harmonic coordinates & $1^-$ & Normal form estimates for metric components \\
        \cite{Shen2023GlobalDecay} & Spacelike & Maximal/null foliation& $0^+$ &$r^p$ estimates for Bianchi system \\
        \hline
        This paper& Spacelike & Null foliation& $0^+$ & $r^p$ estimates for Teukolsky wave equation \\
         \hline
    \end{tabular}
    \caption{Aspects of previous proofs.}
    \label{tab:prevwks}
\end{table}

In order to highlight where our approach differs from previous works, we very crudely summarise aspects of past results on stability of Minkowski as a solution of the Einstein-vacuum equations in Table \ref{tab:prevwks}.

Taking the main points of comparison in turn, we first note the character of the surface on which the initial data is prescribed. Our result assumes data on a spacelike surface, however since the gauge is very well adapted to outgoing null cones the results can be adapted with minimal changes to either the initial-characteristic problem (with data for $\alpha$ prescribed on an outgoing cone) or the problem exterior to a sphere, see Appendix \ref{sec:appendix}. Note that in the second case the result will necessarily be semi-global as $\mathcal{I}^+$ will not be complete.

We next record the broad approach to gauge fixing taken within each work. The approach of \cite{Friedrich1986OnStructure} is somewhat sui generis, relying on conformal techniques to render the problem local-in-time. The remaining works divide roughly into two camps: the more geometric approach pioneered by Christodoulou and Klainerman in which the primary estimates are performed at the level of curvature; and the approach based on studying the wave equations satisfied by the metric components in harmonic coordinates due to Lindblad--Rodnianski. The advantage of the first is that the null structure of Einstein's equations is more directly evident.
The second approach connects more directly to the theory of nonlinear wave equations, and can draw on that literature. Our gauge is geometric, but unlike previous works which require more than one foliation, we gain in efficiency by only considering a single null foliation.

Next we note the decay rates assumed in each work. Since different norms are used in the various works to describe the asymptotic decay precisely, the parameter $\nu$ recorded here should be understood in a loose sense. Where the initial data surface does not extend to $i^0$, the value of $\nu$ is inferred from the decay of $\alpha$ near $\mathcal{I}^+$. Our approach enables us to both establish the optimal decay of \cite{Shen2023GlobalDecay}, but also to show stronger peeling results as in \cite{Klainerman2003PeelingEquations}.

Finally we list the mechanism by which the main dynamical estimates are proved in each case. By `classical vf estimates' we refer to results obtained within the general paradigm of the vector field method, \cite{ChrKl90}, making use of the conformal symmetries of Minkowski. By $r^p$ estimates we refer to results in the framework of the `new method' of Dafermos--Rodnianski \cite{Dafermos2009ASpacetimes}. Our use of the $r^p$-method is well adapted to the null foliation as it avoids the need to construct approximate conformal Killing fields.

\subsubsection{Stability of Minkowski space in the presence of matter fields}
There have also been many works on the stability of Minkowski as a solution of the Einstein equations coupled to various matter models. We mention here \cite{LeFloch2016TheFields, Wang2020AnEquations, Ionescu2022TheAMS-213} (Klein--Gordon); \cite{Zipser2000TheEquations, Loizelet2009} (Maxwell); \cite{Speck2010TheCoordinates} (nonlinear electromagnetism); \cite{Kauffman2023GlobalCoordinates} (Maxwell--Klein--Gordon); \cite{Lindblad2020GlobalGauge, Fajman2021TheSystem, Wang2022GlobalSystem} (massive Vlasov); \cite{Taylor2017TheSystem, Bigorgne2021AsymptoticMatter} (massless Vlasov); \cite{huneau2018stability, Huneau2023TheSpacetime, andersson2023global} (Kaluza--Klein).

\subsubsection{Other related works}

(Exterior stability) There are other setups for which the stability of Minkowski is studied. One setting is the stability in the domain of dependence of the exterior of a compact set in the Cauchy data, first proven in \cite{Klainerman2003TheNicolo.}. See also \cite{Hintz.exterior, Shen2023GlobalDecay, dwShen2024}. 

(Scattering problem) Another natural setting for stability is the full scattering problem, where initial data are posed on past null infinity. Near spacelike infinity, this has been recently resolved in \cite{KaKe2025}.

(Regularity of future null infinity) A related problem concerns the regularity of future null infinity, and in particular whether peeling occurs; see \cite{dC2002, hF2018, Klainerman2003PeelingEquations, lmaK2022, Nutzi} for discussions.

(Spherically symmetric problems) Prior to the works \cite{Friedrich1986OnStructure, CK94}, the stability of Minkowski spacetime was first proven in spherical symmetry for the Einstein--scalar field system \cite{dC1986}. In that setting, Christodoulou also established a scale-invariant result \cite{dC1993}; an alternative setting containing some settings beyond scale-invariant data can be found in \cite{LOY1}. Both of these results could be compared to the $\nu=0^+$ case in Table~\ref{tab:prevwks}. See also \cite{gRadR1992, mD2006} for the Einstein--Vlasov system.

(Stability of dispersive spacetimes) Beyond Minkowski spacetime itself, large-data solutions which disperse to Minkowski spacetime are also known to be stable \cite{LO.dispersive}.

Finally, we refer the reader to the surveys \cite{dwShen2026, jS2025} for other related works and further details.


\subsection{Organisation of the paper} The remainder of the paper will be organised as follows. The next few sections contain preliminaries for our setup. In \textbf{\S\ref{sec:geometric.prelim}}, we will discuss the geometry associated to the Newman--Unti gauge and write down the Einstein equations in this gauge. In \textbf{\S\ref{sec:geometry}}, we then introduce the norms that we use in this paper. In \textbf{\S\ref{sec:coords}} we describe how to construct Newman--Unti coordinates starting from a Cauchy surface. In \textbf{\S\ref{sec:assumptions}}, we give the assumptions on the initial data and bootstrap assumptions, and precisely state the main theorem. Our proof is based on a bootstrap argument, and we begin the proof in \textbf{\S\ref{sec:BA.consequences}}, \textbf{\S\ref{sec:transport}} by establishing some immediate consequences of the bootstrap assumptions and establishing bounds on solutions to transport equations. In \textbf{\S\ref{sec:teukolsky}} we analyse the Teukolsky equation, establishing the necessary linear and nonlinear estimates. The estimates for the other geometric quantities are then proven in \textbf{\S\ref{sec:other-geom}}. Finally, we define the bootstrap region in \textbf{\S\ref{sec:tau}}, prove regularity near the centre and future completeness in \textbf{\S\ref{sec:geodesics}}, and conclude the proof in \textbf{\S\ref{sec:proof}}. In Appendix \textbf{\ref{sec:appendix}} we show how to adapt our proof to the case of exterior stability and the spacelike-characteristic initial value problem. In Appendix \textbf{\ref{sec:appendix2}} we connect the initial data assumptions in Definition~\ref{def:AF.data} and Theorem~\ref{thm:main} to those used in the main bootstrap argument formulated in \textbf{\S\ref{sec:assumptions}}.

\subsection*{Acknowledgments} 
JL and CW gratefully acknowledge the hospitality of the Erwin Schr\"odinger Institute, Vienna, where this project was initiated during the programme `Spectral Theory and Mathematical Relativity' in the summer of 2023. We thank Gustav Holzegel and Jacques Smulevici for helpful discussions concerning the Newman--Unti gauge in the adS setting.

JL was partially supported by a National Science Foundation Grant under DMS2304445. SJO was partially supported by a National Science Foundation
CAREER Grant under NSF-DMS-1945615, a National Science Foundation Grant under DMS-2452760, and the Miller Research Professorship. CW acknowledges support from the Simons Collaboration on Black Holes and Strong Gravity.

\section{Geometric preliminaries}\label{sec:geometric.prelim}

\subsection{The Newman--Unti gauge}

\subsubsection{Form of the metric}
Let $\mathcal{W} \subset \{(u, r) \in \mathbb{R}^2 : r>0\}$ be open, $\mathcal{U} = \mathcal{W} \times S^2$, and denote by $S_{u,r}\subset \mathcal{U}$ the spheres of constant $u, r$. Given a smooth family $\gamma: \mathcal{W} \to \Gamma (T^*S^2\times T^*S^2)$ of Riemannian metrics on $S^2$, a smooth family of vector fields $b: \mathcal{W} \to \Gamma T S^2$ and a smooth function $f: \mathcal{U} \to \mathbb R$, we define a Lorentzian metric by:
\begin{equation}
\label{formofthemetric}
g = - (\ud u \otimes \ud r+\ud r\otimes \ud u) - f \ud u \otimes \ud u +\gamma_{AB}(\ud\th^A-b^A \ud u)\otimes (\ud\th^B - b^B \ud u).
\end{equation}
Introduce a time-orientation by stipulating that $\partial_r$ is future directed.
With this choice, $(\mathcal{U}, g)$ is a smooth $4$-dimensional spacetime. We note that the inverse metric is given by
\begin{equation}\label{eq:g.inverse}
\begin{split}
g^{-1} = f \f{\rd}{\rd r}\otimes \f{\rd}{\rd r} - \left (\f{\rd}{\rd u} + b^A\f{\rd}{\rd\th^A}\right )\otimes \f{\rd}{\rd r}- \f{\rd}{\rd r} \otimes \left(\f{\rd}{\rd u} + b^A\f{\rd}{\rd\th^A}\right)+\gamma^{-1},
\end{split}
\end{equation}
where $\gamma^{-1}$ is the inverse of the Riemannian metric $\gamma$. We denote by $D$ the Levi-Civita connection associated to the metric.

\subsubsection{Frames and metric components}\label{someframes}
Let $\f{\rd}{\rd u}$, $\f{\rd}{\rd r}$, $\f{\rd}{\rd\th^A}$ be coordinate vector fields
defined with respect to the $(u, r ,\th^1,\th^2)$ coordinate system that we introduced above. 

\begin{definition}
Define 
\begin{equation}\label{e3.e4.def}
e_4:= \f{\rd}{\rd r}, \quad e_3:= 2 \left (\f{\rd}{\rd u}  +b^A\f{\rd}{\rd\th^A}\right ) - f \f{\rd}{\rd r}, \quad e_A:= \f{\rd}{\rd \theta^A}.
\end{equation}
\end{definition}

The following lemma clarifies the geometric significance of the vector fields $e_3$, $e_4$.
\begin{lemma} \label{Lem:e3e4}
$e_3$, $e_4$ are null and future-directed vector fields. Moreover,
\begin{enumerate}[i)]
\item $e_4= - (D u)^{\sharp}$.
\item $e_4$ is a geodesic vector field, i.e., it satisfies the geodesic equation
\begin{equation}\label{e4.geodesic}
D_{e_4} e_4 = 0.
\end{equation}
\item $e_4$ is orthogonal to $\f{\rd}{\rd \th^A}$, i.e.,
\begin{equation}\label{e4.ortho.th}
g(e_4,e_A) = 0.
\end{equation}
\item $e_3$ is the unique vector field satisfying
\begin{equation}\label{e4.def.metric}
g(e_3,e_3)= g(e_3,e_A) = 0,\quad g(e_3,e_4)=-2.
\end{equation}
\end{enumerate} 
\end{lemma}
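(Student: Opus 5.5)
The plan is to verify each claim directly from the explicit coordinate form \eqref{formofthemetric} of $g$ and its inverse \eqref{eq:g.inverse}.

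\emph{Nullity and orthogonality.} First I compute the inner products by reading off components of \eqref{formofthemetric}. Since $g_{rr}=0$, the vector $e_4=\rd_r$ is null. Since $g_{rA}=0$, we get \eqref{e4.ortho.th}. For $e_3$, set $X=\rd_u+b^A\rd_{\th^A}$. The one-forms $\ud\th^A-b^A\ud u$ annihilate $X$ and $\rd_r$, while $\ud u(X)=1$ and $\ud r(X)=0$. Hence
\[
g(X,X)=-f,\quad g(X,\rd_r)=-1,\quad g(X,\rd_{\th^A})=0,
\]
where the last uses $g_{uA}=-\gamma_{AB}b^B$ and $g_{AB}=\gamma_{AB}$. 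This gives $g(e_3,e_3)=4(-f)-4f\,g(X,\rd_r)+f^2g_{rr}=-4f+4f=0$. We also get $g(e_3,e_A)=0$ and $g(e_3,e_4)=2g(X,\rd_r)=-2$.

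\emph{Uniqueness in iv).} Write a general vector as $V=v^u\rd_u+v^r\rd_r+v^A\rd_{\th^A}$. Then $g(V,e_4)=-v^u$, so the condition $g(V,e_4)=-2$ forces $v^u=2$. Writing $V=2X+c\,\rd_r+Y^A\rd_{\th^A}$, the condition $g(V,e_A)=\gamma_{AB}Y^B=0$ forces $Y=0$. Then $g(V,V)=-4f-4c=0$ forces $c=-f$, so $V=e_3$.

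\emph{Time orientation.} By convention $\rd_r$ is future directed. Two null vectors with $g(e_3,e_4)=-2<0$ lie in the same time cone, so $e_3$ is also future directed.

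\emph{Claim i).} By \eqref{eq:g.inverse}, $(Du)^\sharp=g^{-1}(\ud u,\cdot)$. The only terms in $g^{-1}$ whose first slot pairs nontrivially with $\ud u$ are $-\rd_u\otimes\rd_r$ and $-\rd_r\otimes\rd_u$ (the latter paired in its second slot). This yields $g^{ur}=-1$, $g^{uu}=0$ and $g^{uA}=0$, so $(Du)^\sharp=-\rd_r=-e_4$.

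\emph{Claim ii).} This is standard for gradients of null functions. For any $Z$, torsion-freeness and the symmetry of the Hessian give
\[
g(D_{e_4}e_4,Z)=-D^2u(e_4,Z)=-D^2u(Z,e_4)=-g(D_Z e_4,e_4)=-\tfrac12 Z\big(g(e_4,e_4)\big)=0,
\]
since $g(e_4,e_4)=0$ identically. Nondegeneracy of $g$ then gives \eqref{e4.geodesic}. Alternatively, one can check $\Gamma^\mu_{rr}=\tfrac12 g^{\mu\nu}(2\rd_r g_{r\nu}-\rd_\nu g_{rr})=0$, because $g_{r\nu}$ is constant: it equals $-1$ for $\nu=u$ and $0$ otherwise.

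I expect no real obstacle. The only step needing care is the bookkeeping in computing $g(X,\cdot)$, namely checking that the $b$-terms cancel.
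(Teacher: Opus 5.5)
Your proposal is correct and follows essentially the same direct coordinate computation as the paper: reading off $g_{rr}=g_{rA}=0$, using $g^{ur}=-1$ for i), noting $\Gamma^\mu_{rr}=0$ because the $g_{r\nu}$ are constant, and expanding $g(e_3,\cdot)$. Beyond the paper, you also supply the uniqueness in iv) and the future-directedness of $e_3$, which the paper leaves implicit; and in your Hessian alternative for ii) there is a harmless sign slip, since $e_4=-(Du)^\sharp$ gives $-D^2u(Z,e_4)=+g(D_Ze_4,e_4)$, which still equals $\tfrac12 Z(g(e_4,e_4))=0$.
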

\begin{proof}

That $e_4$ is null and future-directed is immediate from \eqref{formofthemetric}.
Similarly, \eqref{e4.ortho.th} follows from \eqref{formofthemetric}. Next, we compute using \eqref{eq:g.inverse} that
$$-(Du)^\sharp = - (g^{-1})^{\mu\nu} (\rd_\mu u) \rd_\nu = -(g^{-1})^{ur} \rd_r = \rd_r = e_4.$$ To see that $e_4$ is geodesic, we compute
$$D_{e_4} e_4 = D_{\rd_r} \rd_ r = \f 12 (g^{-1})^{\mu\nu} (2 g_{r\nu, r} - g_{rr,\nu}) \rd_\nu = 0,$$
since the components $g_{r\nu}$ are all constant. For $e_3$, we compute using \eqref{formofthemetric} that
\begin{equation*}
\begin{split}
g(e_3,e_3) = &\: g \left(2 (\f{\rd}{\rd u}  +b^A\f{\rd}{\rd\th^A}) - f  \f{\rd}{\rd r}, \,2 (\f{\rd}{\rd u}  +b^A\f{\rd}{\rd\th^A}) - f \f{\rd}{\rd r} \right) \\
=&\: 4 g_{uu} +8 b^A g_{uA} + 4 b^A b^B \gamma_{AB}  -4 f g_{ur} = -4 f+ 4|b|_\gamma^2 -8 |b|_\gamma^2 +4|b|_\gamma^2 +4f  = 0,
\end{split}
\end{equation*}
\begin{equation*}
g(e_3, \f{\rd}{\rd\th^A}) = 0,\quad g(e_3, e_4) = 2g_{ur} = -2. \qedhere
\end{equation*}
\end{proof}

\subsection{Ricci coefficients} \label{subsec:ricci}
\subsubsection{Definition of the Ricci coefficients}

Denote $D_{\mu}=D_{e_{\mu}}$, with $\mu\in \{1,2,3,4\}$ and $D_A=D_{e_A}$, with $A\in \{1,2\}$. Define
\begin{equation}\label{eq:Ricci.coeff}
\begin{aligned}
\chi_{AB}=&\:g(D_Ae_4,e_B), \quad \quad \chib_{AB}=g(D_Ae_3,e_B),\\
\omega=&\:\frac{1}{4} g(D_4 e_4,e_3),\quad \quad  \omegab=\frac{1}{4} g(D_3 e_3,e_4),\\
\eta_A=&\:\frac{1}{2}g(D_3e_4,e_A),  \quad \quad \etab_A=\frac{1}{2}g(D_4e_3,e_A),\\
\xi_A=&\:\frac{1}{2}g(D_4e_4,e_A), \quad \quad  \xib_A=\frac{1}{2}g(D_3e_3,e_A), \\
\zeta_A=&\: \frac{1}{2}g(D_Ae_4,e_3).
\end{aligned}
\end{equation}

To separate the trace and traceless parts of $\chi$, we introduce the following notation, where traces are taken with the metric $\gamma$:
\begin{align*}
\chih_{AB}:=&\:\chi_{AB}-\frac{1}{2}\gamma_{AB} \tr \chi,\\
\chibh_{AB}:=&\:\chib_{AB}-\frac{1}{2}\gamma_{AB} \tr \chib.
\end{align*}

From now on, we will raise and lower indices for quantities in \eqref{eq:Ricci.coeff} with respect to $\gamma$.

\subsubsection{Relations for the metric and the Ricci coefficients in the null geodesic gauge}

\begin{proposition}\label{prop:gauge.con}
The following relations between the metric coefficients and the Ricci coefficients hold:
\begin{align}
\omegab&=\f 12 \f{\rd f }{\rd r},\qquad   \etab^A-\eta^A = \f{\rd b^A}{\rd r},\label{gauge.con.metric} 
\qquad \omega= 0, \qquad \xi_A = 0,\\ \zeta_A&=\eta_A=-\etab_A,\qquad \xib_A = \f{ \rd f}{\rd\th^A},\label{gauge.con} \qquad 
2\chi_{AB} = \frac{\partial \gamma_{AB}}{\partial r},\\  2\chib_{AB} &= \left(2\frac{\partial }{\partial u} + 2 b^A\frac{\partial }{\partial \theta^A} - f\frac{\partial }{\partial r}\right)\gamma_{AB} + 2 \frac{\partial b^C}{\partial \theta^A}\gamma_{CB} + 2 \frac{\partial b^C}{\partial \theta^B}\gamma_{AC} .  \label{chib.in.terms.of.gamma} 
\end{align}
\end{proposition}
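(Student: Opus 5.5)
The plan is a direct computation with the metric \eqref{formofthemetric}, the frame \eqref{e3.e4.def}, and Lemma~\ref{Lem:e3e4}. Two tools will do almost all the work. The first is the Koszul formula, or equivalently the coordinate identity $g(D_X Y, Z)$ for coordinate fields $X,Y,Z$. The second is the set of constant inner products $g(e_4,e_4)=0$, $g(e_4,e_A)=0$, $g(e_3,e_4)=-2$, $g(e_3,e_3)=g(e_3,e_A)=0$. Since these are constant, differentiating them gives antisymmetry relations among the Ricci coefficients. Throughout I will use that $e_4=\partial_r$ and $e_A=\partial_{\theta^A}$ are coordinate fields, so $[e_4,e_A]=0$.

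\textbf{Step 1.} $D_4e_4=0$ by \eqref{e4.geodesic}, so $\omega=\xi_A=0$ immediately.

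\textbf{Step 2.} Since $[e_4,e_A]=0$, we have $D_Ae_4=D_4e_A$. Then
\[
2\chi_{AB}=g(D_4e_A,e_B)+g(e_A,D_4e_B)=\partial_r\gamma_{AB}.
\]
For $\zeta_A$, write $\zeta_A=\tfrac12 g(D_4e_A,e_3)$ and use $g(e_A,e_3)=0$ to get $\zeta_A=-\tfrac12 g(e_A,D_4e_3)=-\etab_A$.

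\textbf{Step 3.} Compute $\eta_A=\tfrac12 g(D_3e_4,e_A)$. Expand $e_3=2\partial_u+2b^C\partial_C-f\partial_r$ and note $D_{\partial_r}\partial_r=0$. Using $D_{\partial_u}\partial_r=D_{\partial_r}\partial_u$ and $D_{\partial_C}\partial_r=D_{\partial_r}\partial_C$, this gives
\[
2\eta_A=g(D_{\partial_r}(2\partial_u+2b^C\partial_C),e_A).
\]
Writing $2\partial_u+2b^C\partial_C=e_3+f e_4$ yields $D_r(e_3+fe_4)=D_4e_3+(\partial_rf)e_4+2(\partial_r b^C)\partial_C$. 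Pairing with $e_A$ gives $2\eta_A=2\etab_A+2\gamma_{AC}\partial_rb^C$. Combined with $\zeta=-\etab$ this would give $\eta-\etab=\partial_r b$. The stated identities are $\zeta=\eta=-\etab$ and $\etab-\eta=\partial_r b$, which together say $\eta_A=\zeta_A=-\etab_A$ and $\partial_r b^A=-2\eta^A$. So I will need to recheck the sign conventions carefully. The likely fix is to redo $\eta$ via $g(D_3e_4,e_A)=-g(e_4,D_3e_A)$ and a commutator $[e_3,e_A]=-2(\partial_Ab^C)\partial_C+(\partial_Af)\partial_r$, and to get $\eta=\zeta$ separately from torsion-freeness: $g(D_3e_4,e_A)-g(D_Ae_4,e_3)$ is expressible through $g(e_4,[e_3,e_A])=-(\partial_Af)g(e_4,e_4)\ldots$ type terms. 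This sign bookkeeping is the main obstacle, and I would settle it by working entirely in coordinates with Christoffel symbols of \eqref{formofthemetric}.

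\textbf{Step 4.} For $\omegab$ and $\xib_A$, use $g(e_3,e_4)=-2$ to write $\omegab=-\tfrac14 g(e_3,D_3e_4)$. Compute $D_3e_4$ as in Step 3, where the $(\partial_rf)e_4$ term paired with $e_3$ gives $-2\partial_rf$. Hence $\omegab=\tfrac12\partial_rf$, provided $g(e_3,D_4e_3)=\tfrac12 e_4(g(e_3,e_3))=0$. For $\xib_A=\tfrac12 g(D_3e_3,e_A)=-\tfrac12 g(e_3,D_3e_A)$, apply torsion-freeness $D_3e_A=D_Ae_3+[e_3,e_A]$ together with $g(e_3,D_Ae_3)=0$. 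The commutator term $(\partial_Af)\partial_r$ pairs to $-2\partial_Af$, giving $\xib_A=\partial_Af$.

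\textbf{Step 5.} For $\chib$, write $2\chib_{(AB)}=g(D_Ae_3,e_B)+g(D_Be_3,e_A)$, and note $\chib$ is symmetric by torsion-freeness since $[e_A,e_B]=0$ and $e_3\perp e_A$. Convert $D_Ae_3=D_3e_A-[e_3,e_A]$, so that
\[
g(D_3e_A,e_B)+g(e_A,D_3e_B)=e_3(\gamma_{AB}).
\]
The commutator contributes $+2\partial_Ab^C\gamma_{CB}$ and symmetric counterparts, which gives \eqref{chib.in.terms.of.gamma}.
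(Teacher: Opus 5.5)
Your approach is the same as the paper's: direct computation from $D_4e_4=0$, $[e_4,e_A]=0$, the commutators $[e_3,e_4]$ and $[e_3,e_A]$, and the constant inner products of the frame. Steps~1, 2, 4 and~5 are correct. In Step~2 you tacitly use $\chi_{AB}=\chi_{BA}$, which follows exactly as you argue for $\chib$ in Step~5.

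The gap is Step~3. As written it proves neither $\eta_A=\zeta_A$, which you only announce as a ``likely fix'', nor $\etab^A-\eta^A=\partial_r b^A$. Instead you arrive at $\eta-\etab=\partial_r b$, the opposite sign, and leave the contradiction open. The cause is an algebra slip, not a convention issue. Since $D_{\partial_r}\partial_r=0$ and coordinate fields commute,
\[
D_3e_4=2D_{\partial_r}\partial_u+2b^C D_{\partial_r}\partial_{\theta^C},
\]
with $b^C$ not differentiated. On the other hand, $D_{\partial_r}(e_3+fe_4)=D_4e_3+(\partial_rf)e_4$ exactly, so your extra $+2(\partial_rb^C)\partial_{\theta^C}$ does not belong there. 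This same quantity also equals $2D_{\partial_r}\partial_u+2(\partial_rb^C)\partial_{\theta^C}+2b^CD_{\partial_r}\partial_{\theta^C}$. Hence
\[
D_3e_4=D_4e_3+(\partial_rf)e_4-2(\partial_rb^C)\partial_{\theta^C},
\]
which is just the commutator $[e_3,e_4]=(\partial_rf)\partial_r-2(\partial_rb^C)\partial_{\theta^C}$ used in the paper. Pairing with $e_A$ gives $2\eta_A=2\etab_A-2\gamma_{AC}\partial_rb^C$, i.e.\ $\etab^A-\eta^A=\partial_rb^A$, as stated. Your value of $\omegab$ in Step~4 is unaffected, because the misplaced term is orthogonal to $e_3$.

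Separately, $\eta_A=\zeta_A$ still has to be proved. It does not involve $\partial_r b$ at all, so it cannot be the source of the sign discrepancy. The route you sketch works. By metric compatibility and torsion-freeness,
\[
g(D_3e_4,e_A)=-g(e_4,D_3e_A)=-g(e_4,D_Ae_3)-g(e_4,[e_3,e_A])=g(D_Ae_4,e_3).
\]
The last equality uses two facts. First, $[e_3,e_A]=-2(\partial_{\theta^A}b^C)\partial_{\theta^C}+(\partial_{\theta^A}f)\partial_r$ is orthogonal to $e_4$, because $g(e_4,e_C)=g(e_4,e_4)=0$. Second, $-g(e_4,D_Ae_3)=g(D_Ae_4,e_3)$, because $g(e_3,e_4)$ is constant. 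With these two repairs your argument coincides with the paper's proof, and no Christoffel-symbol computation is needed.
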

\begin{proof}
For the first equation in \eqref{gauge.con.metric}, we compute 
$[e_3, e_4] = -2\f{\rd b^A}{\rd r} \f{\rd}{\rd\th^A} + \f{\rd f }{\rd r}\f{\rd}{\rd r}$.
Therefore, 
$$g([e_3,e_4], e_3) = \f{\rd f }{\rd r} g(\f{\rd}{\rd r}, e_3) = \f{\rd f }{\rd r} g(e_3,e_4) = -2\f{\rd f }{\rd r},$$
which implies
$$\omb =-\f 14 g(D_3 e_4, e_3)=-\f 18 e_4 [g( e_3, e_3)] - \f 14 g([e_3,e_4], e_3) = \f 12 \f{\rd f }{\rd r}.$$

The second equation in \eqref{gauge.con.metric} can also be obtained using the commutation formula for $[e_3,e_4]$ above. Namely, by the commutation formula, we have
$$g([e_3,e_4], e_A) = -2\f{\rd b^B}{\rd r}\gamma_{AB},$$
which implies
$$-2\f{\rd b^B}{\rd r}\gamma_{AB} = g([e_3,e_4],e_A)=g(D_3 e_4, e_A)-g(D_4 e_3, e_A)=2\eta_A-2\etab_A.$$

The last two equations in \eqref{gauge.con.metric} follow from the fact that $e_4$ is geodesic. For the first equation in \eqref{gauge.con}, since $e_4=\f{\rd}{\rd r}$ and $e_A=\f{\rd}{\rd\th^A}$, we have $g([e_4,e_A],e_3)=0$. Therefore,
$$\etab_A=-\f 12 g(D_4 e_A, e_3)=-\f 12 g(D_A e_4, e_3)=-\zeta_A.$$
To obtain the second part of this equation, since $e_3 = 2 (\f{\rd}{\rd u}  +b^A\f{\rd}{\rd\th^A}) - f \f{\rd}{\rd r}$ and $e_A=\f{\rd}{\rd\th^A}$, we have 
$$g([e_3, e_A],e_4)=0.$$
As a consequence, this implies
$$\eta_A=-\f 12 g(D_3 e_A,e_4)=-\f 12 g(D_A e_3,e_4) = \f 12 g(e_3, D_A e_4)=\zeta_A.$$

For the second equation in \eqref{gauge.con}, we note that $[e_3,e_A] = \f{\rd f }{\rd\th^A}e_4 - 2\f{\rd b^B}{\rd\th^A} \f{\rd}{\rd\th^B}$. Hence,
$$\xib_A = \f 12 g(D_3 e_3,e_A) = -\f 12 g(e_3,D_3 e_A)=-\f 12 g(e_3,[e_3,e_A]) =  -\f 12 \f{\rd f}{\rd\th^A} g(e_3,e_4) = \f{\rd f}{\rd\th^A}.$$

Next, note that since $g(e_4, e_B)=0$, we have $\chi_{AB} = -g(e_4, D_Ae_B)=\chi_{BA}$ by the metric compatibility and symmetry of the Levi-Civita connection and similarly $\chib_{AB}=\chib_{BA}$. Observing that $[e_4, e_A]=0$ we have
\[
\frac{\partial \gamma_{AB}}{\partial r} = e_4(g(e_A, e_B)) = g(D_4e_A, e_B) + g(e_A D_4, e_B) = g(D_Ae_4, e_B) + g(e_A, D_B e_4) = 2\chi_{AB}
\]
which gives the last equation of \eqref{gauge.con}. Finally, we compute $g([e_3,e_A], e_B) = -2 \partial_A b^C \gamma_{BC}$, and hence
\begin{align*}
    &\left(2\frac{\partial }{\partial u} + 2 b^A\frac{\partial }{\partial \theta^A} - f\frac{\partial }{\partial r}\right)\gamma_{AB} = e_3(g(e_A, e_B)) = g(D_3 e_A, e_B) + g(e_A, D_3 e_B) \\
    &\qquad =g(D_A e_3, e_B) + g(e_A, D_B e_3) + g([e_3, e_A], e_B) + g(e_A, [e_3, e_B])\\
    &\qquad = 2\chib_{AB} - 2 \frac{\partial b^C}{\partial \theta^A}\gamma_{CB} - 2 \frac{\partial b^C}{\partial \theta^B}\gamma_{AC}
\end{align*}
which gives \eqref{chib.in.terms.of.gamma} on rearranging.
\qedhere
\end{proof}

\subsection{Integration by parts formulae}
The following formulae for integration by parts follow from Proposition \ref{prop:gauge.con}, together with the Jacobi formula for the derivative of a determinant.
\begin{lemma}\label{lem:int.fomula} Let $\ud A_\gamma$ be the Riemannian measure on $S_{u, r}$ induced by the metric $\gamma_{AB}$. For a sufficiently regular function $h$ we have
\begin{enumerate}
\item \begin{equation}\label{eq:first.variation}
\f{\rd}{\rd r} \int_{S_{u,r}} h \, \mathrm{dA}_{\gamma}  = \int_{S_{u,r}} (e_4 h + \trch h) \, \mathrm{dA}_{\gamma}. 
\end{equation}
\item \begin{equation}\label{eq:first.variation.3}
2\f{\rd}{\rd u} \int_{S_{u,r}} h \, \mathrm{dA}_{\gamma} - \f{\rd}{\rd r}\int_{S_{u,r}} f h \, \mathrm{dA}_{\gamma}  = \int_{S_{u,r}} (e_3 h + \trchb h -2\omb h) \, \mathrm{dA}_{\gamma}. 
\end{equation}
\end{enumerate}
\end{lemma}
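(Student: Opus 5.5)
Both identities reduce to differentiating the area element $\mathrm{dA}_\gamma = \sqrt{\det\gamma}\,\ud\theta^1\ud\theta^2$ in fixed coordinates $(\theta^1,\theta^2)$ on $S^2$. I then use the Jacobi formula $\partial(\sqrt{\det\gamma}) = \tfrac12\sqrt{\det\gamma}\,\gamma^{AB}\partial\gamma_{AB}$ together with the expressions for $\partial_r\gamma$ and $e_3\gamma$ from Proposition~\ref{prop:gauge.con}. Since the spheres $S_{u,r}$ are all parametrised by the same coordinate domain $S^2$, derivatives in $u$ and $r$ may be passed under the integral sign, acting on $h\sqrt{\det\gamma}$.

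For \eqref{eq:first.variation}, the relation $2\chi_{AB}=\partial_r\gamma_{AB}$ and the Jacobi formula give $\partial_r\sqrt{\det\gamma} = \tfrac12\gamma^{AB}\partial_r\gamma_{AB}\sqrt{\det\gamma} = \trch\,\sqrt{\det\gamma}$. Hence $\partial_r\int h\sqrt{\det\gamma}\,\ud\theta = \int (\partial_r h + \trch\,h)\sqrt{\det\gamma}\,\ud\theta$, and $\partial_r h = e_4 h$.

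For \eqref{eq:first.variation.3}, I expand the left-hand side as
\[
\int \bigl(2\partial_u(h\sqrt{\det\gamma}) - \partial_r(fh\sqrt{\det\gamma})\bigr)\ud\theta .
\]
I add and subtract $\int 2\partial_A(b^A h\sqrt{\det\gamma})\,\ud\theta$, which vanishes because it is the integral of a divergence on the closed manifold $S^2$. The integrand then becomes
\[
(2\partial_u + 2b^A\partial_A - f\partial_r)(h\sqrt{\det\gamma}) + 2(\partial_A b^A)h\sqrt{\det\gamma} - (\partial_r f)h\sqrt{\det\gamma}.
\]
The first operator is exactly $e_3$. Its action on $h$ gives $e_3 h$. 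For its action on $\sqrt{\det\gamma}$, I use Jacobi and \eqref{chib.in.terms.of.gamma}:
\[
\tfrac12\gamma^{AB}e_3\gamma_{AB} = \tfrac12\gamma^{AB}\bigl(2\chib_{AB} - 2\partial_A b^C\gamma_{CB} - 2\partial_B b^C\gamma_{AC}\bigr) = \trchb - 2\partial_A b^A .
\]
The term $-2\partial_A b^A$ cancels the added $+2\partial_A b^A$. By $\omegab = \tfrac12\partial_r f$, the term $-\partial_r f$ equals $-2\omegab$. This gives $\int(e_3 h + \trchb\,h - 2\omegab\,h)\,\mathrm{dA}_\gamma$, as claimed.

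The only step needing care is the bookkeeping in the second identity: the $b$-dependent divergence term arising from the coordinate expression of $e_3$ must exactly cancel the $\partial_A b^A$ contribution hidden in $\gamma^{AB}e_3\gamma_{AB}$ via \eqref{chib.in.terms.of.gamma}. Equivalently, $\trchb$ is the divergence of $e_3$ restricted tangentially, and the shift $b$ only contributes a boundary-free divergence on the sphere. Everything else is routine, assuming $h$ is smooth enough to differentiate under the integral.
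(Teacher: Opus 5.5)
Your proposal is correct and follows the paper's route: the paper only asserts that both formulae follow from Proposition~\ref{prop:gauge.con} together with the Jacobi formula for the derivative of a determinant. Your computation fills in exactly those details, including the cancellation of the $\partial_A b^A$ term coming from \eqref{chib.in.terms.of.gamma} against the integral of a divergence on the closed sphere, and the identification $\partial_r f = 2\omegab$.
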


\subsection{Curvature components}
Let $S$ be the scalar curvature of the spacetime metric $g$, and let $K$ be the Gaussian curvature of the Riemannian metric $\gamma$. Define the Weyl curvature tensor by
$$W_{\alp\bt\mu\nu} := R_{\alp\bt\mu\nu} - (g_{\alp[\mu} \mathrm{Ric}_{\nu]\bt} - g_{\bt[\mu} \mathrm{Ric}_{\nu]\alp}) + \f 13 S g_{\alp[\mu} g_{\nu]\bt}.$$

Let $(*W)_{\alpha \beta \mu \nu} = \frac{1}{2} \in_{\alpha \beta \kappa \tau}W^{\kappa \tau}{}_{\nu \nu}$ be the dual Weyl tensor, where $ \in_{\alpha \beta \kappa \tau}$ is the spacetime volume form.

Introduce the following notations for the components of the Weyl curvature.
\begin{definition}[Weyl curvature components]\label{def:weyl}
\begin{align*}
\alpha_{AB}&= W_{A4B4}, &\beta_A&=\frac{1}{2}W_{A434}, &\rho &= \frac{1}{4}W_{4343},\\ \alphab_{AB}&=W_{A3B3}. 
& \betab_A&=\frac{1}{2}W_{A334}, &\sigma&=\frac{1}{4}(^*W)_{4343}.
\end{align*}
\end{definition}

Define also the following notations for the components of the Ricci curvature.
\begin{definition}[Ricci curvature]
Define
$$\slashed{\mathrm{Ric}}_{4A} := \mathrm{Ric}_{4A}, \quad \slashed{\mathrm{Ric}}_{3A} := \mathrm{Ric}_{3A}, \quad \slashed{\mathrm{Ric}}_{AB} := \mathrm{Ric}_{AB},$$
where $\slashed{\mathrm{Ric}}_{4\cdot}$, $\slashed{\mathrm{Ric}}_{3\cdot}$ are now viewed as $S$-tangent $1$-forms, while $\slashed{\mathrm{Ric}}$ is viewed as an $S$-tangent symmetric $2$-tensor.

Introduce also the notation for the trace and trace-free parts of $\slashed{\mathrm{Ric}}$:
$$\tr\slashed{\mathrm{Ric}} = (\gamma^{-1})^{AB} \slashed{\mathrm{Ric}}_{AB},\quad  \widehat{\slashed{\mathrm{Ric}}}:= \slashed{\mathrm{Ric}} - \f 12 \gamma \tr\slashed{\mathrm{Ric}}.$$
\end{definition}

\subsection{Example: spaces of constant sectional curvature}

Let $\mathring{\gamma}$ be the usual round metric on the sphere of radius $r$ with coordinates $\theta, \varphi$. In the $(u,r,\theta,\varphi)$ coordinate system, the metric
\begin{equation}\label{eq:AdS.u.r}
g = -\left (1-\frac{\Lambda}{3} r^2\right ) \, \ud u^2 -2  \, \ud u \, \ud r + \mathring{\gamma}.
\end{equation}
is a solution of Einstein's equation with cosmological constant $\Lambda$. Moreover, $g$ has constant sectional curvature, hence is locally isometric to de Sitter / Minkowski / anti-de Sitter depending on whether $\Lambda >0$ / $\Lambda = 0$ / $\Lambda <0$. The null pair $(e_3, e_4)$ are given by 
$$e_4 := \rd_r, \quad e_3 := 2 \rd_u - \left (1-\frac{\Lambda}{3} r^2\right ) \rd_r.$$
It is easy to check that
$$g(e_3, e_3) = g(e_4, e_4) = 0,\quad g(e_3, e_4) = -2.$$

The non-zero Ricci coefficients are given by 
$$\chi = \f 1 r \mathring{\gamma},\quad \chib = -\f{1-\Lambda r^2/3}r \mathring{\gamma},\quad \omb = -\frac{\Lambda r}{3}.$$
 All components of the Weyl tensor vanish.

\subsection{Differential operators}

\begin{definition}
We say that a tensor field defined on $\mathcal{U}$ is $S-$tangent if it coincides with its $g$-orthogonal projection to $S_{u, r}$. 
\end{definition}
For example a 2-tensor field $T$ defined on $\mathcal{U}$  is $S-$tangent if
\[
T_{3A}=T_{A3} =T_{4A}=T_{A4} = T_{34}=0
\]
for $A=1, 2$. We can identify an $S-$tangent tensor field with its components in the usual way - in the example above we would identify $T$ with the components $T_{AB}$, which are functions of $u, r, \theta^A$. We can raise and lower $S-$tangent indices with $\gamma_{AB}$ and $\gamma^{AB}:=(\gamma^{-1})^{AB}$.

\begin{definition}[Covariant derivatives for $S$-tangent tensor fields]\label{def:nabla}
Let $\nab_3$, $\nab_4$, $\nab_A$ be respectively the projections to $S_{u,r}$ of the covariant derivatives $D_3= D_{e_3}$, $D_4 = D_{e_4}$ and $D_A = D_{e_A}$. Finally, let $\nab_u = \frac{1}{2}(\nab_3 + f \nab_4)$.
\end{definition}

\begin{lemma}\label{lem.Dtonab}
We can express
\begin{align*}
D_Ae_3=&\:\underline{\chi}_{A}{}^{B}e_B+\eta_Ae_3,&
D_Ae_4=&\:\chi_{A}{}^{B}e_B-\eta_Ae_4,\\
D_3e_3=&\: 2\underline{\xi}^Ae_A-2\underline{\omega}e_3,&
D_3e_4=&\: 2\eta^Ae_A+2\underline{\omega}e_4,\\
D_4e_3=&\:-2{\eta}^Ae_A,&
D_4e_4=&\:0,\\
D_Ae_B=&\: \nab_Ae_B+\frac{1}{2}\chi_{AB}e_3+\frac{1}{2}\underline{\chi}_{AB}e_4,&
D_3e_A=&\:\nab_3e_A+\eta_Ae_3+\underline{\xi}_Ae_4,\\
D_4e_A=&\:\nab_4 e_A-{\eta}_Ae_4,
\end{align*}
\end{lemma}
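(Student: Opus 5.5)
The plan is to expand each $D_XY$ in the frame $\{e_3,e_4,e_A\}$ and read off coefficients by pairing with the frame. The dual relations are $g(e_3,e_4)=-2$, $g(e_3,e_3)=g(e_4,e_4)=g(e_3,e_A)=g(e_4,e_A)=0$ and $g(e_A,e_B)=\gamma_{AB}$ (Lemma~\ref{Lem:e3e4}). So for any vector $V$,
\[
V=\gamma^{AB}g(V,e_B)e_A-\tfrac12 g(V,e_4)e_3-\tfrac12 g(V,e_3)e_4 .
\]
For each formula I will compute the three pairings using the definitions \eqref{eq:Ricci.coeff} and then simplify with the gauge relations of Proposition~\ref{prop:gauge.con}: $\omega=0$, $\xi=0$, $\zeta=\eta=-\etab$.

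The first three lines (derivatives of $e_3,e_4$) go as follows.
\begin{itemize}
\item For $D_Ae_4$: $g(D_Ae_4,e_B)=\chi_{AB}$ and $g(D_Ae_4,e_4)=0$. Also $g(D_Ae_4,e_3)=2\zeta_A=2\eta_A$, which yields the coefficient $-\eta_A$ on $e_4$.
\item For $D_Ae_3$: $g(D_Ae_3,e_B)=\chib_{AB}$ and $g(D_Ae_3,e_3)=0$. Also $g(D_Ae_3,e_4)=-g(e_3,D_Ae_4)=-2\eta_A$, giving $+\eta_Ae_3$.
\item For $D_3e_3$: the $e_A$-pairing is $2\xib_A$, the $e_4$-pairing is $4\omegab$ (giving $-2\omegab e_3$), and the $e_3$-pairing vanishes.
\item For $D_3e_4$: the $e_A$-pairing is $2\eta_A$, and the $e_3$-pairing is $-g(e_4,D_3e_3)=-4\omegab$, giving $+2\omegab e_4$.
\item For $D_4e_3$: the $e_A$-pairing is $2\etab_A=-2\eta_A$. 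The $e_4$-pairing is $-g(e_3,D_4e_4)=0$ by geodesy \eqref{e4.geodesic}.
\item $D_4e_4=0$ is exactly \eqref{e4.geodesic}.
\end{itemize}

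The last three lines use the definition of $\nab$ as the $S$-projection of $D$. So $D_Ae_B-\nab_Ae_B$ is the normal part, $-\tfrac12 g(D_Ae_B,e_4)e_3-\tfrac12 g(D_Ae_B,e_3)e_4$.
\begin{itemize}
\item For $D_Ae_B$: $g(D_Ae_B,e_4)=-\chi_{AB}$ and $g(D_Ae_B,e_3)=-\chib_{AB}$, giving $\tfrac12\chi_{AB}e_3+\tfrac12\chib_{AB}e_4$.
\item For $D_3e_A$: $g(D_3e_A,e_4)=-2\eta_A$ gives $\eta_Ae_3$, and $g(D_3e_A,e_3)=-2\xib_A$ gives $\xib_Ae_4$.
\item For $D_4e_A$: $g(D_4e_A,e_4)=0$, and $g(D_4e_A,e_3)=-2\etab_A=2\eta_A$ gives $-\eta_Ae_4$.
\end{itemize}

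The computations are routine. The only place requiring care is sign bookkeeping with $g(e_3,e_4)=-2$ and consistent use of $\etab=-\eta$. Nothing beyond Proposition~\ref{prop:gauge.con} is needed.
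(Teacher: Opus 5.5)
Your proposal is correct and follows the paper's own one-line argument. In both, you expand each covariant derivative in the null frame, read off the coefficients by pairing with $e_3$, $e_4$, $e_A$ using the definitions \eqref{eq:Ricci.coeff}, and simplify with $\omega=\xi=0$ and $\zeta=\eta=-\etab$ from Proposition~\ref{prop:gauge.con}. Your individual pairings and signs, including the $-\tfrac12$ factors coming from $g(e_3,e_4)=-2$, all reproduce the stated formulas.
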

\begin{proof}
These follow directly from the definitions of the Ricci coefficients on taking inner products with the basis vectors.
\end{proof}

The following identities hold for the connections $\nab_3$, $\nab_4$ and $\nab$.
\begin{proposition}\label{diff.formula}
For every covariant tensor field $\phi$ of rank $k$ tangential to the spheres $S_{u,r}$, we have
\begin{equation}\label{nab3.def}
\begin{split}
(\nab_3 \phi)_{A_1 A_2 ... A_k}
=&\: \left [2 \left(\f{\rd}{\rd u}  +b^A\f{\rd}{\rd\th^A}\right) - f \f{\rd}{\rd r}\right] \phi_{A_1 A_2 ... A_k}-\sum_{i=1}^k\left(\chib^B{ }_{A_i}-2\frac{\partial b^B}{\partial\th^{A_i}}\right)\phi_{A_1\dots\hat{A_i}B\dots A_k},
\end{split}
\end{equation}
where $\hat{A_i}$ means that $A_i$ has been omitted. Similarly, we have
\begin{equation}\label{nab4.def}
\begin{split}
(\nab_4 \phi)_{A_1 A_2 ... A_r}
=&\: \frac{\partial}{\partial r} \phi_{A_1 A_2 ... A_r}-\sum_{i=1}^r \chi^B{ }_{A_i}\phi_{A_1\dots\hat{A_i}B\dots A_r}.
\end{split}
\end{equation}
Finally, $\nab$ is given by the Levi-Civita connection associated to the metric $\gamma$, i.e.,
\begin{equation}\label{nab.def}
\nab_B \phi_{A_1 A_2 \dots A_r}=\f{\rd}{\rd \th^B}\phi_{A_1 A_2 \dots A_r}-\sum_{i=1}^r \slashed{\Gamma}_{BA_i}^C \phi_{A_1 A_2\dots \hat{A_i} C\dots A_r},\end{equation}
where $\slashed{\Gamma}$ is given by 
\begin{equation}\label{Gamma.def}
\slashed{\Gamma}_{BA}^C=\f 12(\gamma^{-1})^{CD}\left(\f{\rd}{\rd\th^B}\gamma_{AD}+\f{\rd}{\rd\th^A}\gamma_{BD}-\f{\rd}{\rd\th^D}\gamma_{AB}\right).
\end{equation}

\end{proposition}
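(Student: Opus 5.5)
The plan is to compute $\nab_3\phi$, $\nab_4\phi$, $\nab\phi$ directly from the definition of the projected covariant derivative, using the Leibniz rule and the frame relations in Lemma~\ref{lem.Dtonab}. For an $S$-tangent covariant $k$-tensor $\phi$ and $X\in\{e_3,e_4,e_B\}$, we have
\[
(\nab_X\phi)_{A_1\dots A_k}=(D_X\phi)(e_{A_1},\dots,e_{A_k}) = X(\phi_{A_1\dots A_k})-\sum_i \phi(e_{A_1},\dots,D_Xe_{A_i},\dots,e_{A_k}).
\]
Because $\phi$ vanishes whenever an argument is $e_3$ or $e_4$, only the $S$-tangential part of $D_Xe_{A_i}$ contributes, namely $\nab_X e_{A_i}$ in the notation of Lemma~\ref{lem.Dtonab}. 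The whole proof therefore reduces to identifying the $e_C$-component of $D_Xe_A$, which equals $\gamma^{CD}g(D_Xe_A,e_D)$, in each of the three cases.

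\textbf{Case $X=e_4$.} Since $[e_4,e_A]=[\partial_r,\partial_{\theta^A}]=0$, we have $g(D_4e_A,e_D)=g(D_Ae_4,e_D)=\chi_{AD}$. Hence the tangential part of $D_4e_A$ is $\chi^C{}_A e_C$. Combined with $e_4=\partial_r$, this yields \eqref{nab4.def}.

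\textbf{Case $X=e_3$.} Torsion-freeness gives $D_3e_A=D_Ae_3+[e_3,e_A]$. From the proof of Proposition~\ref{prop:gauge.con},
\[
[e_3,e_A]=\partial_Af\,e_4-2\,\partial_Ab^B\,\partial_{\theta^B}.
\]
Taking the inner product with $e_D$ gives $g(D_3e_A,e_D)=\chib_{AD}-2\partial_Ab^C\gamma_{CD}$, so the tangential coefficient is $\chib^C{}_A-2\partial_Ab^C$. Inserting this into the Leibniz formula, together with the coordinate expression \eqref{e3.e4.def} for $e_3$ acting on components, yields \eqref{nab3.def}. The only point requiring care is the index placement and sign in the $b$-term; it must match the relation $g([e_3,e_A],e_B)=-2\partial_Ab^C\gamma_{BC}$ already used in the proof of Proposition~\ref{prop:gauge.con}.

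\textbf{Case $X=e_B$.} The tangential part of $D_Be_A$ is the Levi-Civita connection of the induced metric $\gamma$ on $S_{u,r}$. This follows either from the standard fact that the projection of the ambient connection onto a nondegenerate submanifold is the induced Levi-Civita connection, or directly from the Koszul formula restricted to the coordinate fields $\partial_{\theta^A}$, whose pairwise inner products are $\gamma_{AB}$. The Koszul formula gives $g(D_Be_A,e_D)=\tfrac12(\partial_B\gamma_{AD}+\partial_A\gamma_{BD}-\partial_D\gamma_{AB})$. Raising the index with $\gamma^{-1}$ gives \eqref{Gamma.def} and hence \eqref{nab.def}.

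There is no real obstacle here; the computation is routine. The only step needing attention is the bookkeeping of the $\partial b$ term in the $e_3$ case.
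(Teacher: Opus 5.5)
Your proof is correct and follows the paper's argument: the paper likewise reads off the tangential parts $\nab_3 e_A = (\chib_A{}^B - 2\partial_{\theta^A} b^B)e_B$ and $\nab_4 e_A = \chi_A{}^B e_B$ from the commutators $[e_A,e_3]$ and $[e_A,e_4]$ together with Lemma~\ref{lem.Dtonab}, and then applies the Leibniz rule. The only difference is that the paper simply calls the formula for $\nab_A$ standard, whereas you verify it explicitly with the Koszul formula for the commuting coordinate fields $\partial_{\theta^A}$.
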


\begin{proof}
By Lemma \ref{lem.Dtonab} we have
\begin{align*}
[e_A, e_3] &= 2\frac{\partial b^B}{\partial \theta^A} e_B - \frac{\partial f}{\partial \theta^A}e_4
= \underline{\chi}_{A}{}^{B}e_B- \nab_3e_A-\underline{\xi}_A e_4.
\end{align*}
Rearranging and using Lemma \ref{prop:gauge.con} gives
\[
\nab_3e_A = \underline{\chi}_{A}{}^{B}e_B- 2\frac{\partial b^B}{\partial \theta^A} e_B.
\]
Similarly, considering $[e_A, e_3] $ gives
\[
\nab_4 e_A = \chi_A{}^B e_B.
\]
These two results, together with the Leibniz property and metric compatibility of $D$ give the results for $\nab_3$, $\nab_4$. The result for $\nab_A$ is standard.
\end{proof}

Observe that by Proposition \ref{prop:gauge.con}, $\nab_3, \nab_4, \nab_A, \nab_u$ all annihilate $\gamma_{AB}$ and so commute with raising and lowering $S-$tangent indices. Further $\nab_Ar= \nab_ur=0$.

\subsubsection{Tensor product notation}
Let $\phi, \phi'$ be two $S-$tangent one-forms with components $\phi_A,\phi_A' \in C^{\infty}(\mathcal{U})$. Let $\in_{AB}$ be the induced volume form on the spheres $S_{u, r}$, i.e., $\in_{AB} =\in_{[AB]} $ and $\in_{12} = \sqrt{\det \gamma}$. Then define
\begin{align*}
\div \phi:=&\: \nab_B \phi^A,\\
\curl \phi:=&\:\in^{AB} (\nab_B \phi)_A,\\
(\nabla \hot \phi)_{AB}:=&\:(\nab_A\phi)_B+(\nab_B\phi)_A-\gamma_{AB} \div \phi,\\
\phi\cdot \phi':=&\: \gamma^{AB}\phi_A\phi'_B,\\
(^*\phi)_A:=&\: \in_A{}^{B}\phi_B,\\
\phi\wedge \phi':=&\: \in^{AB}\phi_A\phi'_B=\phi \cdot ^*\phi,\\
(\phi \hot \phi')_{AB}:=&\: \phi_A\phi'_B+\phi_B\phi_A'-\gamma_{AB} \phi \cdot\phi'.\\
\end{align*}

Suppose $\phi, \phi'$ are two $S-$tangent 2-tensors with components $\phi_{AB},\phi'_{AB} \in C^{\infty}(\mathcal{U})$. Assume $\phi$ and $\phi'$ are symmetric and traceless. Then:
\begin{align*}
\phi\cdot \phi'&:=\phi^{AB}\phi'_{AB},\\
(^*\phi)_{AB}:=&\:  \in_B{}^{C}\phi_{AC},\\
(\phi\hat{\odot} \phi')_{AB} :=& \phi_{AC} \phi'_{B}{}^C+\phi_{BC} \phi'_{A}{}^C - \gamma_{AB}\phi\cdot \phi'  \ \\
\phi \wedge \phi':=&\: \in^{AB} \phi_{AC}\phi'_{B}{}^{C}=\phi \cdot ^*\phi'.
\end{align*}

Suppose $\phi$ is an $S-$tangent one-form and $\phi'$ is a symmetric traceless $S-$tangent 2-tensor with components $\phi_A,\phi'_{AB} \in C^{\infty}(\mathcal{U})$. Then:
\begin{equation*}
(\phi \cdot\phi')_A:=\gamma^{BC} \phi_B \phi'_{AC}.
\end{equation*}

\subsection{Commutation formulae}

The following commutation formulae follow from \cite[Lemma~7.3.3]{CK94} after substituting in relations from Proposition~\ref{prop:gauge.con} (which hold for all metrics of the form (\ref{formofthemetric}), without requiring the Einstein equations). 
\begin{lemma}[Commutation formulae] \label{lem:CK.733} For $\phi_{A_1\dots A_r}$ a sufficiently smooth $S$-tangent tensor we have:
\begin{equation}\label{eq:CK.nab4.nab}
\begin{split}
&[ \nab_4,\nab_B] \phi_{A_1\dots A_r}  = \:   -  \chi_{B}{}^{C}\nab_C \phi_{A_1\dots A_r} \\
&\qquad\: +\sum_{i=1}^r (- \chi_{A_i B} \eta^C+ \chi_{B}{}^{C} \eta_{A_i}+ \in_{A_i}{ }^C { }^*(\beta_B-\frac{1}{2}\R_{4B}) )\phi_{A_1\dots \hat{A_i} C\dots A_r},
\end{split}
\end{equation}
\begin{equation}\label{eq:CK.nab3.nab}
 \begin{split}
&[\slashed{\nabla}_3,\slashed{\nabla}_B]\phi_{A_1...A_r}=\:\xib_B \nab_4 \phi_{A_1...A_r} - \chib_{B}{}^{C}\slashed{\nabla}_C\phi_{A_1...A_r} \\
&\qquad +\sum_{i=1}^r (\chib_{A_iB}\eta^{C}-\chib_{B}{}^{C}\eta_{A_i}+\chi_{A_iB}\xib^{C}-\chi_{B}{}^{C}\xib_{A_i}-\in_{A_i}{ }^C{ }^*(\betab_B+\frac{1}{2}\R_{3B}))\phi_{A_1...\hat{A_i}C...A_r},
 \end{split}
\end{equation}
\begin{equation}\label{eq:CK.nab3.nab4}
\begin{split}
[\nab_3,\nab_4]\phi_{A_1...A_r}=&\: 2\omb \nab_4\phi_{A_1...A_r} + 4\eta^C \nab_C\phi_{A_1...A_r} +2 \sum_{i=1}^r  \in_{A_i}{}^{C}\sigma \phi_{A_1...\hat{A}_iC...A_r},
\end{split}
\end{equation}
and
\begin{equation}\label{eq:CK.nabB.nabC}
\begin{split}
[\nab_B,\nab_C]\phi_{A_1...A_r}=&\: K\sum_{i=1}^r  \left(\gamma_{A_i B} \phi_{A_1...\hat{A}_iC...A_r}-\gamma_{A_i C} \phi_{A_1...\hat{A}_iB...A_r} \right),
\end{split}
\end{equation}
where  $\hat{A_i}$ means that $A_i$ has been omitted.

\end{lemma}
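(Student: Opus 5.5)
The plan is to reduce the statement to \cite[Lemma~7.3.3]{CK94}, which gives these four commutators for an arbitrary null frame $(e_3,e_4,e_A)$ adapted to a double foliation, with no Einstein equations assumed. The coefficients there involve the full set of Ricci coefficients $\chi,\chib,\eta,\etab,\zeta,\omega,\omegab,\xi,\xib$, together with the null components of the full Riemann tensor. For this reason I will work with the Riemann tensor rather than the Weyl tensor: then $\beta-\frac12\R_{4B}$ and $\betab+\frac12\R_{3B}$ appear exactly as the components $\frac12 R_{B434}$ and $\frac12 R_{B334}$ respectively.

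The first step is to substitute the gauge relations of Proposition~\ref{prop:gauge.con}, namely $\omega=0$, $\xi=0$, $\zeta=\eta=-\etab$ and $\xib_A=\partial_A f$, into each of the general formulas, and check that the terms collapse as claimed. The concrete checks are as follows.

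\textbf{$[\nab_4,\nab_B]$.} Since $D_4 e_4=0$ and $D_4 e_A=\nab_4 e_A-\eta_A e_4$ by Lemma~\ref{lem.Dtonab}, every $\xi$- and $\omega$-term drops out. The $\zeta$/$\etab$ combination becomes $-\chi_{A_iB}\eta^C+\chi_B{}^C\eta_{A_i}$. There is no $\nab_4\phi$ term, because $\zeta+\etab=0$ kills the coefficient of $\nab_4$ in $D_BD_4-D_4D_B$.

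\textbf{$[\nab_3,\nab_B]$.} The term $\xib_B\nab_4$ survives. The $\eta$-terms appear with the opposite sign relative to the $\chib$ structure, and the $\xib$-terms pair with $\chi$.

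\textbf{$[\nab_3,\nab_4]$.} In general this commutator carries $-2\omega\nab_3+2\omegab\nab_4+2(\eta-\etab)^C\nab_C$. With $\omega=0$ and $\etab=-\eta$ this becomes $2\omegab\nab_4+4\eta^C\nab_C$. The curvature term is the curvature of the normal bundle restricted to the sphere, which equals $\sigma$ via ${}^*R$; the Ricci parts vanish there by symmetry.

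\textbf{$[\nab_B,\nab_C]$.} This is the intrinsic Riemann tensor of $(S_{u,r},\gamma)$. In two dimensions it equals $K(\gamma\gamma-\gamma\gamma)$, so no spacetime curvature enters.

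Because the sign and normalisation conventions in \cite{CK94} differ slightly (they use a general lapse and a different normalisation of $e_3$), a cleaner self-contained route is to derive each formula directly on a one-form $\phi_A$ and then extend to rank $r$ by the Leibniz rule. For example, for $[\nab_4,\nab_B]$:
\begin{itemize}
\item write $(\nab_4\nab\phi)_{BA}=e_4((\nab\phi)_{BA})-\chi_B{}^C(\nab\phi)_{CA}-\chi_A{}^C(\nab\phi)_{BC}$ using \eqref{nab4.def};
\item expand $(\nab\phi)_{BA}=D_B\phi(e_A)$, extended to the spacetime covariant derivative of the projected form;
\item use $[e_4,e_B]=0$ and Lemma~\ref{lem.Dtonab} to convert $D_4D_B-D_BD_4$ into $R(e_4,e_B)$ acting on $\phi$, plus connection terms;
\item collect the terms. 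The $-\chi_B{}^C\nab_C\phi$ term comes from the second bullet, the $\eta$-terms come from $D_Be_A$ having an $e_4$-component $\frac12\chib_{BA}$ and an $e_3$-component $\frac12\chi_{BA}$ paired against $D_4 e_3=-2\eta^Ce_C$, and the curvature term $R_{A C 4 B}\phi^C$ is rewritten as $\in_A{}^C\,{}^*(\beta_B-\frac12\R_{4B})$ via the standard 2D identity for antisymmetric pieces.
\end{itemize}
The same template applies to $e_3$: there $[e_3,e_B]=-2\partial_Bb^Ce_C+\partial_Bf\,e_4$ by the computation in Proposition~\ref{diff.formula}, which supplies the $\xib_B\nab_4$ term, and the $b$-derivatives cancel against the $2\partial b$ correction in \eqref{nab3.def}.

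The main obstacle is the bookkeeping of the curvature terms. One must identify the antisymmetric-in-$(A,C)$ part of $R_{AC4B}$ on a 2-sphere with $\in_{AC}$ times a scalar. One must also express that scalar as ${}^*(\beta-\frac12\R_4)_B$, with the correct sign and factor given the definition of $\beta$ via $W$ and the Ricci-trace corrections. I would verify these normalisations by checking against the Minkowski case (all curvature vanishes) and a linearised example.
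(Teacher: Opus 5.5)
Your main route is exactly the paper's argument: cite \cite[Lemma~7.3.3]{CK94} and substitute $\omega=\xi=0$, $\zeta=\eta=-\etab$, $\xib_A=\partial_A f$ from Proposition~\ref{prop:gauge.con}. The paper gives nothing beyond that citation. Your self-contained route via $D^2_{X,Y}-D^2_{Y,X}=R(X,Y)$ applied to one-forms also reproduces all four formulas, including the cancellation of the $\eta_B\nab_3\phi$ terms in $[\nab_3,\nab_B]$ and of the $\eta\eta$ zeroth-order terms in $[\nab_3,\nab_4]$.

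Two small corrections.

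\emph{Curvature components.} The components entering $[\nab_4,\nab_B]$ and $[\nab_3,\nab_B]$ are $R_{AC4B}$ and $R_{AC3B}$. Using the trace-free property of $W$ and the Ricci part of the paper's decomposition, these equal $\in_{AC}{}^*(\beta_B-\frac12\R_{4B})$ and $-\in_{AC}{}^*(\betab_B+\frac12\R_{3B})$. With the same conventions, however, $\frac12 R_{B434}=\beta_B+\frac12\R_{4B}$ and $\frac12 R_{B334}=\betab_B-\frac12\R_{3B}$. So your opening identification has the Ricci signs reversed, although your later bullet correctly uses $R_{AC4B}$.

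\emph{Source of the $\eta$-terms in $[\nab_4,\nab_B]$.} The term $-\chi_{AB}\eta^C\phi_C$ comes from the $e_3$-component $\frac12\chi_{BA}$ of $D_Be_A$ paired with $(D\phi)(e_4,e_3)=2\eta^C\phi_C$. The term $+\chi_B{}^C\eta_A\phi_C$ comes from the $e_4$-component $-\eta_A$ of $D_4e_A$ paired with $(D\phi)(e_B,e_4)=-\chi_B{}^C\phi_C$. The $\frac12\chib_{BA}e_4$ component contributes nothing, since $(D\phi)(e_4,e_4)=0$.
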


We observe that for a symmetric, traceless $S-$tangent 2-tensor we have the identity
\begin{equation}
    \nab \hot \div \phi = \slashed{\Delta}\phi - 2K\phi \label{eq:CK2.2.2a}
\end{equation}
which follows from the $[\nab_A, \nab_B]$ commutation relation together with the definition of $\nab \hot \div.$

\subsection{Null structure equations and null Bianchi equations}\label{sec:equations}

The Ricci coefficients satisfy the following propagation and constraint equations. Notice that we have imposed the conditions in Proposition~\ref{prop:gauge.con} which are satisfied in our gauge. On the other hand, these equations hold for any metric of the form \eqref{formofthemetric}; in particular, we have not yet imposed the Einstein equations. However, from the next section onwards we shall set $\R=0$. We do not include derivations for these formulae, but we note that they have been verified with a computer algebra package.

\subsubsection{Propagation equations}
\begin{align}
\label{eq:3trchib}
\nab_3 \tr \chib +\frac{1}{2}(\tr \chib)^2+ 2\omegab \tr \chib=&\: -|\chibh|^2 +  2\div \xib-4\xib\cdot \eta -\mathrm{Ric}_{33},\\\
\label{eq:4trchi}
\nab_4 \tr \chi +\frac{1}{2}(\tr \chi)^2 =&\: -|\chih|^2 -\mathrm{Ric}_{44},
\end{align}

\begin{align}
\label{eq:4trchib}
\nab_4 \tr \chib +\frac{1}{2}\tr \chi \tr \chib =&\: -2\div \eta - \chih\cdot \chibh +2 |\eta|^2 + 2 \rho - \f 23 S -  \mathrm{Ric}_{34} + \tr\slashed{\mathrm{Ric}},\\\
\label{eq:3trchi}
\nab_3 \tr \chi +\frac{1}{2}\tr \chib \tr \chi -2\omegab \tr \chi=&\: 2 \div  \eta-\chibh \cdot \chih+ 2|\eta|^2 + 2\rho - \f 23 S -  \mathrm{Ric}_{34} +  \tr\slashed{\mathrm{Ric}},
\end{align}


\begin{align}
\label{eq:3chih}
\nab_3 \chih+\frac{1}{2} \tr \chib \chih -2\omegab \chih = &\: -\frac{1}{2}\tr \chi \chibh+\nab \hot \eta+\eta \hot \eta + \widehat{\slashed{\mathrm{Ric}}} ,\\
\label{eq:4chibh}
\nab_4 \chibh+\frac{1}{2} \tr \chi \chibh =&\: -\frac{1}{2}\tr \chib \chih - \nab \hot \eta + \eta \hot  \eta + \widehat{\slashed{\mathrm{Ric}}},
\end{align}

\begin{align}
\label{eq:4chih}
\nab_4 \chih+ \tr \chi \chih=&\: - \alpha,\\
\label{eq:3chibh}
\nab_3 \chibh+\tr \chib \, \chibh +2\omegab \chibh =&\: \nab \hot \xib-2\xib \hot \eta - \alphab,
\end{align}

\begin{align}
\label{eq:3etab}
\nab_3 \eta+{\nab_4 \xib}+ \tr \chib \eta=&\: -2 \chibh \cdot \eta   -\betab +\frac{1}{2}\sR_{3\cdot}, \\
\label{eq:4zeta}
\nab_4\eta+\tr \chi \eta =&\: - 2 \chih\cdot\eta-\beta{-\frac{1}{2} \sR_{4\cdot}},
\end{align}

\begin{equation}
\label{eq:omegab}
\nab_4 \omegab = 3|\eta|^2+\rho + \f{1}{6}S + \f 12 \mathrm{Ric}_{34}.
\end{equation}

\subsubsection{Constraint equations}
\begin{align}
\label{eq:constchih}
\div \chih+\eta \cdot \chih=&\:\frac{1}{2}\nab \tr \chi+\frac{1}{2} \tr \chi \eta-\beta +\f 12 \sR_{4\cdot},\\
\label{eq:constchibh}
\div \chibh-\eta \cdot \chibh=&\: \frac{1}{2}\nab \tr \chib-\frac{1}{2} \tr \chib \eta+\betab +\f 12 \sR_{3\cdot},\\
\label{eq:constzeta}
\curl \eta=&\: -\frac{1}{2}\chih\wedge \chibh+\sigma, \\
\label{eq:constrho}
K -\f 12 \chih\cdot \chibh + \f 14 \trch\trchb 
= &\: -\rho - \f 16 S + \f 12 \tr\slashed{\mathrm{Ric}}.
\end{align}

\subsubsection{Null Bianchi equations} \label{sec:Bianchi}

It will also be convenient to have propagation equations for the Weyl components. These can either be derived from the equations above making use of the commutation relations of Lemma \ref{lem:CK.733} or else by taking components of the Bianchi identity.
\begin{align}
\label{eq:3alpha}
\nab_3\alpha+\frac{1}{2}\tr \chib \alpha{-4\omegab \alpha}-\nab \hot \beta=&\: 5\eta\hot \beta-3(\rho \chih+\sigma ^*\chih)+\mathcal{R}^{\alpha, 3},\\
\label{eq:4alphab}
\nab_4\alphab+\frac{1}{2}\tr \chi \alphab+\nab \hot \betab=&\: 5\eta \hot \betab-3(\rho \chibh-\sigma ^*\chibh)+\mathcal{R}^{\alphab, 4},\\
\label{eq:4beta}
\nab_4\beta+2\tr \chi \beta-\div \alpha=&\: \eta \cdot \alpha+\mathcal{R}^{\beta, 4},\\
\label{eq:3betab}
\nab_3\betab+2\tr \chib \betab{+2\omegab \betab}+\div \alphab=&\: \eta\cdot \alphab-{3(\rho \xib-\sigma ^*\xib)}+\mathcal{R}^{\betab, 3},
\end{align}
\begin{align}
\label{eq:3beta}
\nab_3\beta+\tr \chib \beta {-2\omegab \beta}-\nab\rho-^*\nab \sigma=&\: 2\betab \cdot \chih+3(\rho \eta+\sigma ^*\eta)+{\alpha\cdot \xib}+\mathcal{R}^{\beta, 3},\\
\label{eq:4rho}
\nab_4(-\rho,\sigma)+\frac{3}{2}\tr \chi (-\rho,\sigma)+(\div \beta,-\curl \beta)=&\: \eta\cdot (\beta,^*\beta)+\frac{1}{2}\chibh\cdot (\alpha,^*\alpha)+(-\mathcal{R}^{\rho, 4}, \mathcal{R}^{\sigma, 4}),\\
\label{eq:4betab}
\nab_4\betab+\tr \chi \betab +\nab\rho-^*\nab \sigma=&\: 2\beta \cdot \chibh+3(\rho \eta-\sigma ^*\eta)+\mathcal{R}^{\betab, 4},\\
\label{eq:3rho}
\nab_3(\rho,\sigma)+\frac{3}{2}\tr \chib (\rho,\sigma)+(\div \betab,-\curl \betab)=&\: - \eta\cdot (\betab,^*\betab)+{2\xib \cdot (\beta,-^*\beta)}+\frac{1}{2}\chih\cdot (-\alphab,{}^*\alphab)+(\mathcal{R}^{\rho, 3}, \mathcal{R}^{\sigma, 3}).
\end{align}

Here the terms involving the Ricci tensor (which vanish in vacuum) are given by
\begin{align*}
    \mathcal{R}^{\alpha, 3}&= \frac{1}{2}\nab\hot\slashed{\R}_{4\cdot} - \nab_4 \widehat{\slashed{\R}} - \frac{1}{2}\eta \hot \slashed{\R}_{4\cdot} - \frac{1}{2} \chih \left(\tr\slashed{\R} +\R_{34}\right) - \widehat{\slashed{\R}}\trch - \frac{1}{2}\R_{44} \trchb, \\
    \mathcal{R}^{\alphab, 4}&= \frac{1}{2}\nab\hot\slashed{\R}_{3\cdot} - \nab_3 \widehat{\slashed{\R}} + \frac{1}{2}\eta \hot \slashed{\R}_{3\cdot}+\xib \hot \slashed{\R}_{4\cdot} - \frac{1}{2} \chibh \left(\tr\slashed{\R} +\R_{34}\right) - \widehat{\slashed{\R}}\trchb - \frac{1}{2}\R_{33} \trchb, \\
    \mathcal{R}^{\beta, 4}&= \frac{1}{2}\nab_4 \slashed{\R}_{4\cdot} - \frac{1}{2}\nab \slashed{\R}_{44} - \frac{1}{2}\eta \slashed{\R}_{44}+\chih\cdot \slashed{\R}_{4\cdot} + \frac{1}{2} \trch \slashed{\R}_{4\cdot},\\
    \mathcal{R}^{\betab, 3} &= \frac{1}{2} (\nab - \eta)\slashed{\R}_{33} - \frac{1}{2}(\nab_3 + 2\omegab - \trchb)\slashed{\R}_{3\cdot} - \chibh\cdot \slashed{\R}_{3\cdot} + \widehat{\slashed{\R}}\cdot \xib +\frac{1}{2}\left(\tr\slashed{\R} + \R_{34}\right)\xib +\widehat{\slashed{\R}}\cdot \xib,\\
    \mathcal{R}^{\beta, 3} &=\frac{1}{2} (\nab - \eta)\left(\R_{34}+\frac{S}{3} \right)- \frac{1}{2}(\nab_4 + \frac{1}{2}\trch)\slashed{\R}_{3\cdot} -\widehat{\slashed{\R}}\cdot \eta - \frac{1}{2}\left(\tr\slashed{\R} - \frac{S}{3} \right)\eta\\&\qquad  - \frac{1}{2}\chibh\cdot\slashed{\R}_{4\cdot}- \frac{1}{2}\chih\cdot\slashed{\R}_{3\cdot} - \frac{1}{2}\trchb \slashed{\R}_{4\cdot}, \\
    \mathcal{R}^{\betab, 4} &=\frac{1}{2}\left(\nab_3 - 2\omegab + \frac{1}{2}\trchb\right)\slashed{\R}_{4\cdot} - \frac{1}{2} (\nab + \eta)\left(R_{34}+\frac{S}{3} \right)-\widehat{\slashed{\R}}\cdot \eta - \frac{1}{2}\left(\tr\slashed{\R} - \frac{S}{3} \right)\eta\\&\qquad  + \frac{1}{2}\chibh\cdot\slashed{\R}_{4\cdot}+ \frac{1}{2}\chih\cdot\slashed{\R}_{3\cdot}+\frac{1}{4}\trch \slashed{\R}_{3\cdot} - \frac{1}{2}\R_{44}\xib,
\end{align*}
\begin{align*}
    \mathcal{R}^{\rho, 4}&= \frac{1}{4} \nab_4\left(R_{34}+\frac{S}{3} \right) - \frac{1}{4}(\nab_3 - 4\omegab){\R}_{44} + \frac{3}{2}\eta \cdot \slashed{\R}_{4\cdot}, \\
    \mathcal{R}^{\sigma, 4}&= -\frac{1}{2}\curl\slashed{\R}_{4\cdot} + \frac{1}{2}\eta \wedge \slashed{\R}_{4\cdot} - \frac{1}{2}\chih \wedge \widehat{\slashed{\R}}, \\
    \mathcal{R}^{\rho, 3}&= \frac{1}{4} \nab_3\left(R_{34}+\frac{S}{3} \right) - \frac{1}{4}\nab_4 \R_{33} - \frac{3}{2} \eta \cdot \slashed{\R}_{3\cdot}- \frac{1}{2} \xib \cdot \slashed{\R}_{4\cdot},\\
    \mathcal{R}^{\sigma, 3}&=\frac{1}{2}\curl\slashed{\R}_{3\cdot}+ \frac{1}{2}\eta \wedge \slashed{\R}_{3\cdot}+\frac{1}{2}\chibh \wedge \widehat{\slashed{\R}}.
\end{align*}

From this point onwards we shall assume the vacuum Einstein equations hold, so that $\R = 0$.

\section{Regions, surfaces and norms} \label{sec:geometry}

In order to treat the behaviour of the metric at the axis of the coordinate system, we shall work on $\mathbb{R}^4$ with an axis `blown-up'. On the manifold with boundary $\mathbb{R}_u\times [0, \infty)_r\times S^2_\theta$ with its canonical differentiable structure we define the function $t=u+r$ and let $S$ be the region
\[
S = \{(u, r, \theta)\in \mathbb{R}_u\times [0, \infty)_r \times S^2_\theta : t\geqslant 1\}
\]
Let $\tau = \tau(u,r)$ be the function constructed in \S\ref{sec:tau}.
For  $T>1$ we define the following (sub-)manifolds:
\begin{definition}[Subsets of the spacetime]
\begin{align}
S_T &:= \{(u, r, \theta) \in S :\tau <T \}, \\
\Sigma_{\tau'} &:= \{(u, r, \theta)  \in S_T : \tau = \tau'\}, \\
\tilde{\Sigma}_{t'} &:= \{(u, r, \theta)  \in S_T : t = t'\},\\
S_{u',r'} &:= \{(u, r, \theta)  \in S_T :  u = u',\, r=r'\}, \\
H_{u'} &:= \{(u, r, \theta)  \in S_T :  u = u'\}.
\end{align}
\end{definition}
The initial data surface will be $\tilde{\Sigma}_{1}$. We note that for $1<\tau'<T$ the surface $\Sigma_{\tau'}$ meets $\tilde{\Sigma}_{1}$. Under our bootstrap assumptions, away from $r=0$, $\tilde{\Sigma}_{1}$ will be a smooth spacelike surface, and $\Sigma_{\tau'}$ a $C^{1,1}$, piecewise smooth, spacelike surface (see Figure~\ref{fig:placeholder}, p\pageref{fig:placeholder}). Let $u_T$ be the unique solution to $\tau(u_T,1-u_T) = T$ so that $u=u_T$ on $\Sigma_T \cap \tilde{\Sigma}_1$.

We will work with norms defined relative to a background Minkowski metric for convenience. We fix $\mathring{\gamma}_{AB}$ to be the standard round metric on the sphere of radius $r$ in the coordinates $\theta^A$ and $\mathring{\slashed\nabla}_A$ to be the corresponding connection. For a $S-$tangent tensor, define the background derivatives
\begin{equation}\label{nab40.def}
\begin{split}
(\mathring{\nab}_4 \phi)_{A_1 A_2 ... A_k}{}^{B_1 B_2 ... B_{k'}}
=&\: \left(\frac{\partial}{\partial r} - \frac{k-k'}{r} \right) \phi_{A_1 A_2 ... A_k}{}^{B_1 B_2 ... B_{k'}}, \\
(\mathring{\nab}_u \phi)_{A_1 A_2 ... A_k}{}^{B_1 B_2 ... B_{k'}}
=&\: \frac{\partial}{\partial u} \phi_{A_1 A_2 ... A_k}{}^{B_1 B_2 ... B_{k'}}.
\end{split}
\end{equation}
Note that $r\mathring{\nab}_A, \mathring{\nab}_4, \mathring{\nab}_u$ commute with one another, and annihilate $\mathring{\gamma}$.

We define for $(u,r) \in S_T$ the norms\footnote{We will use $|\cdot|_*$ to denote norms which are pointwise in $(u,r)$ and $\|\cdot\|_*$ to denote norms integrated over subsets of $S$.}
\begin{align*}
|\phi|_0(u,r) &= \left(\int_{S_{u,r}} \left|\phi \right|^2_{\mathring{\gamma}} \ud A_{\mathring{\gamma}} \right)^{\frac{1}{2}}, \\ |\phi|_{1}(u,r) &= \left(|r\mathring{\nab}_4 \phi|^2_0(u,r) +|r\mathring{\nab} \phi|^2_{0}(u,r)+|\phi|_0^2(u,r) \right)^{\frac{1}{2}},
\end{align*}
where $\ud A_{\mathring{\gamma}}$ is the Riemannian volume form on the round sphere of radius $r$, and the pointwise norm is taken with respect to the metric $\mathring{\gamma}$. For $s \in \mathbb Z_{\geqslant 1}$ we recursively define 
\begin{equation}
   |\phi|_{s+1}(u,r) = \left(|r\mathring{\nab}_4 \phi|^2_s(u,r) +|r\mathring{\nab} \phi|^2_{s}(u,r)+|\phi|_s(u,r)^2+|\brk{u} \mathring{\nab}_u\phi|_{s-1}^2(u,r) \right)^{\frac{1}{2}}, \label{norm1}
\end{equation}
where $\jb{u} = \sqrt{1+u^2}$ is the Japanese bracket. Note that as $r\mathring{\nab}_4, r\mathring{\nab},  \brk{u}\mathring{\nab}_u$ commute, we have:
\[
|\phi|_s^{2}(u,r) \sim \sum_{i+j+2k\leqslant s} \abs{(r\mathring{\nab}_4)^i(r \mathring{\nab})^j (\brk{u} \mathring{\nab}_u)^k \phi}^2_0(u,r).
\]
We also observe that if $s$ is \emph{even}, then
\begin{equation}
    |\phi|_{s+1}^{2}(u,r) \sim |r\mathring{\nab}_4 \phi|^2_s(u,r) +|r\mathring{\nab} \phi|^2_{s}(u,r) (u,r)+|\phi|_s^{2}(u,r). \label{est:evennorm}
\end{equation}


The main norms that we shall use are designed to capture $r$-decay at the origin, and $r$ and $u$ decay near infinity. More concretely, define
\begin{align*}
        &|\phi|_{s, p_0, p_\infty}(R) = \sup_{(u,r)\in R} \frac{(\jb{u}+r)^{p_\infty+p_0}}{r^{p_0+1}} |{\phi}|_s(u,r)
\end{align*}
for $R\subset S_T$. Clearly
\[
|\phi|_{s, p_0, p_\infty}(R) \leqslant |\phi|_{s, q_0, q_\infty}(R)
\]
whenever $p_0\leqslant q_0$ and  $p_\infty \leqslant q_\infty$. It will also be useful to note that for $\nu \in \mathbb R$ and $\mu \geqslant 0$,
\begin{align}
    |r^{\nu} \phi|_{s, p_0, p_\i}(R) &= | \phi|_{s, p_0-\nu, p_\i+\nu}(R), \label{eq:norm.r.shift} \\
    |\brk{u}^{\mu} \phi|_{s, p_0, p_\i}(R) & \leqslant | \phi|_{s, p_0, p_\i+\mu}(R). \label{eq:norm.u.shift}
\end{align}
We also require integrated norms:
\begin{align*}
        &  \|\phi\|_{s, p_0, p_\infty}(R) = \left(\int_{R} \left|\phi \right|^2_{s}(u,r)  \frac{(\jb{u}+r)^{2 p_\infty + 2 p_0}}{r^{3+2 p_0}} \ud u \ud r \right)^{\frac{1}{2}},\\
        & \|\phi\|_{s, p_0, p_\infty}(H_u) = \left(\int_{H_u} \left|\phi \right|^2_{s}(u,r)  \frac{(\jb{u}+r)^{2 p_\infty + 2 p_0}}{r^{3+2 p_0}} \ud r \right)^{\frac{1}{2}}, \\
        &  \|\phi\|_{s, p_0, p_\infty}(\tilde{\Sigma}^{r'}_{t'}) = \left(\int_{0}^{r'} \left|\phi \right|^2_{s}(t'-r,r)  \frac{(\langle t'-r\rangle+r)^{2 p_\infty + 2 p_0}}{r^{3+2 p_0}} \ud r \right)^{\frac{1}{2}},
\end{align*}
where $R\subset S_T$ is measurable. We take the convention in all norms that if the domain is not stated then it is $S_T$, i.e.,  $\abs{\phi}_{s, p_0, p_\infty} := \abs{\phi}_{s, p_0, p_\infty}(S_T)$ and $\| \phi\|_{s, p_0, p_\infty} := \| \phi \|_{s, p_0, p_\infty}(S_T)$.

The following Lemma is a straightforward consequence of standard Sobolev estimates for the round sphere.
\begin{lemma}\label{lem:products}
    Suppose $2+i+j+2k\leqslant s$. We have 
    \[
    \left|(r\mathring{\nab}_4)^i(r \mathring{\nab})^j (\brk{u} \mathring{\nab}_u)^k \phi \right|_{\mathring{\gamma}}(u,r, \theta^A) \lesssim \frac{1}{r} |\phi|_s(u,r) \leqslant  \frac{r^{p_0}}{(\jb{u}+r)^{p_0+p_\infty}} |\phi|_{s, p_0, p_\infty}.
    \]
    If $s\geqslant 2$ and $s'\leqslant s$ then we have the Sobolev product formulae (here $\cdot$ denotes any contraction) 
    \begin{align}
    & |\phi \cdot \psi|_{s'} \lesssim \frac{1}{r} |\phi|_s |\psi|_{s'} \lesssim |\phi|_{s,0,0} |\psi|_{s'} \notag \\
    &|\phi \cdot \psi|_{s', p_0+q_0, p_\infty+q_\infty} \lesssim |\phi|_{s', p_0, p_\infty} |\psi|_{s, q_0, q_\infty}. \label{eq:product.for.pointwise}
        \end{align}
        If $s\geqslant 6$ then for exponents $p_{0}, p_{0}', q_{0}, q_{0}', p_{\infty}, p_{\infty}', q_{\infty}, q_{\infty}', p_{u}, p_{u}', q_{u}, q_{u}'$ such that $p_{0} + q_{0} = p_{0}' + q_{0}'$, $p_{\infty} + q_{\infty} = p_{\infty}' + q_{\infty}'$, and $p_{u} + q_{u} = p_{u}'+q_{u}'$, we have				    
        \begin{equation}\label{eq:product.for.integrated}
        \begin{aligned}
    \|\brk{u}^{p_{u}+q_{u}} \phi \cdot \psi\|_{s, p_0+q_0, p_\infty+q_\infty} &\lesssim |\brk{u}^{p_{u}} \phi|_{s-3, p_0, p_\infty} \|\brk{u}^{q_{u}} \psi\|_{s, q_0, q_\infty} \\
    &\qquad + \|\brk{u}^{p_{u}'} \phi\|_{s, p_{0}', p_{\infty}'} |\brk{u}^{q_{u}'} \psi|_{s-3, q_{0}', q_{\infty}'}
    \end{aligned}
    \end{equation}
    where all the implicit constants depend only on $s$, the ranks of the tensors involved and number of contractions.
\end{lemma}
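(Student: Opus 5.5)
The plan is to reduce every part of Lemma~\ref{lem:products} to Sobolev inequalities on the \emph{unit} round sphere, after rescaling. For fixed $(u,r)$ I pull back $S_{u,r}$ to the unit sphere $S^2$. Under this identification, $|\psi|^2_{\mathring{\gamma}}$ for an $S$-tangent tensor of rank $k$ picks up $r^{-2k}$, and $\ud A_{\mathring{\gamma}} = r^2 \ud A_{S^2}$. Also, $r\mathring{\nab}$ becomes the unit-sphere connection acting on the rescaled tensor $r^{-k}\psi$. The point is that all weights cancel in the combination
\[
|\psi|_{\mathring\gamma}(\theta) = r^{-1}\,|\tilde\psi|(\theta), \qquad |\psi|_0 = |\tilde\psi|_{L^2(S^2)}, \qquad \tilde\psi := \hbox{rescaled } \psi .
\]
Here $|\psi|_0$ already carries the factor $r$ from the area element, so $|\psi|_0 = r\,\|\psi\|_{L^2(S_{u,r})}/r$ in unit-normalised form.

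\textbf{Pointwise bound.} I apply $W^{2,2}(S^2)\hookrightarrow L^\infty(S^2)$ to $\Phi=(r\mathring{\nab}_4)^i(r\mathring{\nab})^j(\brk{u}\mathring{\nab}_u)^k\phi$. This costs two extra angular derivatives $r\mathring{\nab}$. The quantity $\sum_{j'\le 2}|(r\mathring\nab)^{j'}\Phi|_0$ is controlled by $|\phi|_s$ precisely when $i+j+2+2k\le s$, using the equivalence $|\phi|_s^2\sim\sum_{i+j+2k\le s}|\cdots|_0^2$ and the fact that the background operators commute. The factor $r^{-1}$ comes from the scaling above. The second inequality is just the definition of $|\phi|_{s,p_0,p_\infty}$ as a supremum, since $r^{-1}\cdot r^{p_0+1}(\jb{u}+r)^{-p_0-p_\infty}$ is the reciprocal weight.

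\textbf{Product estimates.} I expand $D^{m}(\phi\cdot\psi)$ with $D\in\{r\mathring\nab_4,r\mathring\nab,\brk{u}\mathring\nab_u\}$ by the Leibniz rule; the background operators are derivations commuting with contractions. In each term I put the factor carrying at most $s-2$ (weighted) derivatives in $L^\infty$, using the pointwise bound, and the other factor in $L^2$. When $s'\le s$, every term either has at most $s'$ derivatives on $\psi$ with $\phi$ in $L^\infty$ (needing $s'+2\le s$ in the worst case), or it has many derivatives on $\phi$ and few on $\psi$. The second kind is the step requiring care. I handle it by using the $L^4\times L^4$ Sobolev embedding $W^{1,2}(S^2)\hookrightarrow L^4$ for intermediate splittings, together with $s\ge2$ and Gagliardo--Nirenberg on $S^2$. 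Both factors then fit inside $|\phi|_s$ and $|\psi|_{s'}$, each costing at most one extra derivative.

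Multiplying through by the weights gives the first line, with $r^{-1}|\phi|_s\le|\phi|_{s,0,0}$. The weighted version \eqref{eq:product.for.pointwise} follows because the weights factor multiplicatively: $\frac{(\jb u+r)^{p_\infty+q_\infty+p_0+q_0}}{r^{p_0+q_0+1}}=\frac{(\jb u+r)^{p_\infty+p_0}}{r^{p_0+1}}\cdot\frac{(\jb u+r)^{q_\infty+q_0}}{r^{q_0}}$, and the extra $r^{-1}$ is absorbed by the first line.

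\textbf{Integrated estimate} \eqref{eq:product.for.integrated}. I split the Leibniz sum according to which factor receives at most $\lfloor s/2\rfloor\le s-5$ derivatives (using $s\ge6$). That factor goes in $L^\infty$ at cost $2$ further derivatives, bounded by its $|\cdot|_{s-3}$ sup-norm, while the other factor keeps its full $s$ derivatives in $L^2$ on spheres. The cost $2$ plus up to one more from the $u$-commutator bookkeeping in $|\cdot|_s$ stays within $s-3$, since $\lfloor s/2\rfloor+2\le s-3$ for $s\ge 6$ at even orders; I would check odd $s$ separately via \eqref{est:evennorm}.

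Then I square, multiply by $\frac{(\jb u+r)^{2(p_\infty+q_\infty)+2(p_0+q_0)}}{r^{3+2(p_0+q_0)}}\brk{u}^{2(p_u+q_u)}$, and integrate in $(u,r)$. I distribute the weight so that the sup-factor takes $\frac{(\jb u+r)^{p_\infty+p_0}}{r^{p_0+1}}\brk u^{p_u}$ and the integrated factor takes the rest. The exponent $3$ splits as $1\cdot 2 + 1$, and the extra $r^{-1}$ from the pointwise bound exactly matches. Pulling out the supremum gives the first term. The symmetric case, with primed exponents, gives the second term; this is allowed because only the sums of exponents are constrained.

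I expect the main obstacle to be purely bookkeeping: the derivative count in the mixed norm, where $\brk u\mathring\nab_u$ counts as two. This requires that each Leibniz term obeys the $s-3$ threshold, which is where $s\ge6$ is used.
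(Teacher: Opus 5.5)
Your route (rescale $S_{u,r}$ to the unit sphere, use Sobolev embeddings there, and apply the Leibniz rule for the commuting background operators) is exactly the ``standard Sobolev estimates for the round sphere'' that the paper cites without further detail. The pointwise bound and the two pointwise product formulae are fine at this level. One small correction: for the top term with all $s'=s$ derivatives on $\phi$, you need $L^2\times L^\infty$ with two derivatives on $\psi$, not $L^4\times L^4$, which would require $|\phi|_{s+1}$. This is covered because then $s'=s\geqslant 2$.

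The gap is in the integrated estimate \eqref{eq:product.for.integrated}. You tie two choices together: the factor with at most $\lfloor s/2\rfloor$ derivatives is placed \emph{both} in $L^\infty(S_{u,r})$ \emph{and} in the $(u,r)$-supremum norm $|\cdot|_{s-3}$. That requires $\lfloor s/2\rfloor+2\leqslant s-3$, which holds only for $s\geqslant 9$. Your assertion that $\lfloor s/2\rfloor\leqslant s-5$ for $s\geqslant 6$ is false; at $s=6$ it reads $3\leqslant 1$. For instance, at $s=6$ the Leibniz term $(r\mathring{\nab})^3\phi\cdot(r\mathring{\nab})^3\psi$ would need $|\phi|_5$ or $|\psi|_5$ pointwise in $(u,r)$ (or $|\cdot|_4$ with $L^4\times L^4$), whereas only $|\cdot|_3$ is available there. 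Separately, the extra ``$+1$ from $u$-commutator bookkeeping'' is spurious: the background operators commute and Leibniz preserves the weighted order exactly, so no parity argument via \eqref{est:evennorm} is needed.

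The fix is to decouple the two choices. Let $a,b$ be the weighted orders on $\phi,\psi$ in a Leibniz term, so $a+b\leqslant s$.
\begin{itemize}
\item If $a\leqslant s-3$ and $b\leqslant s-2$: put the $\phi$-factor, measured in the supremum norm, in $L^2(S_{u,r})$. Put the $\psi$-factor, measured in the integrated norm and so having $s\geqslant b+2$ derivatives to spare, in $L^\infty(S_{u,r})$. This gives $|\cdot|_0\lesssim r^{-1}|\phi|_{s-3}|\psi|_s$.
\item If $b\leqslant s-3$ and $a\leqslant s-2$: argue symmetrically.
\item The only remaining terms are $a\geqslant s-1$, $b\leqslant 1$ (or the reverse). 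There the low-order factor goes both in $L^\infty(S_{u,r})$ and in the supremum norm, at cost $b+2\leqslant 3\leqslant s-3$. This, not the middle terms, is exactly where $s\geqslant 6$ is used.
\end{itemize}
Every term then obeys $r^{-1}|\phi|_{s-3}|\psi|_s$ or its mirror, so your subsequent distribution of the weights goes through unchanged. As written, your argument is valid only for $s\geqslant 9$. That suffices for the paper's applications with $s_0\geqslant 20$, but not for the lemma as stated.
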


In order to efficiently handle the higher order Teukolsky estimates, it is convenient to introduce a dual norm.
\begin{definition}[Dual norm]
    For $\psi$ an $S$-tangent $k$ tensor on $S_T$ we define
    \[
    \|\psi\|_{0,p_0, p_\infty,*} = \sup_{\phi \in B} \abs{\int_{S_T} \psi \cdot \phi \frac{(r+\jb{u})^{2p_\infty+2p_0}}{r^{3+2p_0}} \ud A_{\gamma} \ud r \ud u}
    \]
    where $B$ is the set of smooth $S$-tangent $k$ tensors on $S_T$ satisfying
    \[
 \sup_{u}\left \|\phi \right\|_{0, p_0, p_{\infty}}^2(H_u) + \|\phi\|_{0, p_0+\frac{1}{2}, p_{\infty}-\frac{1}{2}}^2  \leqslant 1.
    \]
    and $\cdot$ denotes total contraction. Define
    \[
    \|\psi\|_{1,p_0, p_\infty,*} = \|r\mathring{\nab}_4 \psi\|_{0,p_0, p_\infty,*}+\|r\mathring{\nab} \psi\|_{0,p_0, p_\infty,*} + \|\psi\|_{0,p_0, p_\infty,*}
    \]
    and similarly to \eqref{norm1}, for $s \in \mathbb Z_{\geqslant 1}$ we define 
    \[
    \|\psi\|_{s+1,p_0, p_\infty,*} = \|r\mathring{\nab}_4 \psi\|_{s,p_0, p_\infty,*}+\|r\mathring{\nab} \psi\|_{s,p_0, p_\infty,*} + \|\psi\|_{s,p_0, p_\infty,*}+ \|\brk{u} \mathring{\nab}_u\psi\|_{s-1,p_0, p_\infty,*}.
    \]
\end{definition}

\begin{lemma}  \label{lem:dualnormest}
For $s \geqslant 0$ and any $\delta' > 0$, we have
\begin{align} 
	\| \psi \|_{s, p_{0}, p_{\infty},*} &\lesssim_{\delta'} \| \brk{u}^{\frac{1}{2}+\delta'} \psi \|_{s, p_{0}-\frac{1}{2}, p_{\infty}},\label{eq:dualnormest0} \\
	\| \psi \|_{s, p_{0}, p_{\infty},*} &\lesssim \| \psi \|_{s, p_{0}-\frac{1}{2}, p_{\infty}+\frac{1}{2}} ,\label{eq:dualnormest1}
\end{align}
    where all the implicit constants depend only on $s$ and the ranks of the tensors involved.
\end{lemma}
\begin{proof}
First, consider $s = 0$. We start with \eqref{eq:dualnormest1}, which is easier. The Cauchy--Schwarz inequality gives
    \[
    \abs{\int_{S_T} \psi \cdot \phi \frac{(r+\jb{u})^{2p_\infty+2p_0}}{r^{3+2p_0}} \ud A_{\gamma} \ud r \ud u} \leqslant \|\psi\|_{0, p_0-\frac{1}{2}, p_{\infty}+\frac{1}{2}}\|\phi\|_{0, p_0+\frac{1}{2}, p_{\infty}-\frac{1}{2}}
    \]
    so taking a supremum over $\phi \in B$ we have \eqref{eq:dualnormest1} for $s=0$.
    
To prove \eqref{eq:dualnormest0} for $s = 0$, we start with $\phi \in B$ and proceed similarly, but decompose the integration domain as follows:
\begin{align*}
	&\abs{\int_{S_T} \psi \cdot \phi \frac{(r+\jb{u})^{2p_\infty+2p_0}}{r^{3+2p_0}} \ud A_{\gamma} \ud r \ud u} \\
	&\qquad \leqslant 	\int_{S_T \cap \{r > \brk{u}\} } \psi \cdot \phi \frac{(r+\jb{u})^{2p_\infty+2p_0}}{r^{3+2p_0}} \ud A_{\gamma} \ud r \ud u + \int_{S_T \cap \{r \leqslant \brk{u}\} } \psi \cdot \phi \frac{(r+\jb{u})^{2p_\infty+2p_0}}{r^{3+2p_0}} \ud A_{\gamma} \ud r \ud u \\
	&\qquad \leqslant 	\| \brk{u}^{\frac{1}{2}+\de'} \psi \|_{0, p_{0}, p_{\infty}}(S_{T} \cap \{ r > \brk{u} \}) \cdot \sup_{u} \| \phi \|_{0, p_{0}, p_{\infty}}(H_{u} \cap \{ r > \brk{u} \}) \cdot \left(\int_{-\infty}^{\infty} \frac{1}{\brk{u}^{1+2 \delta'}} \, \ud u \right)^{\frac{1}{2}} \\
	&\qquad \qquad + \nrm{\psi}_{0, p_{0}-\frac{1}{2}, p_{\infty}+\frac{1}{2}}(S_{T} \cap \{ r \leqslant \brk{u} \}) \cdot \| \phi \|_{0, p_{0}+\frac{1}{2}, p_{\infty}-\frac{1}{2}}(S_{T} \cap \{ r \leqslant \brk{u} \}) \\
	&\qquad \lesssim \| \brk{u}^{\frac{1}{2}+\de'} \psi \|_{0, p_{0}, p_{\infty}}(S_{T} \cap \{ r > \brk{u} \}) +  \nrm{\psi}_{0, p_{0}-\frac{1}{2}, p_{\infty}+\frac{1}{2}}(S_{T} \cap \{ r \leqslant \brk{u} \}) \\
	& \qquad \lesssim \| \brk{u}^{\frac{1}{2}+\de'} \psi \|_{0, p_{0}-\frac{1}{2}, p_{\infty}}(S_{T}).
\end{align*}
In the last inequality, we used the relative sizes of $r$ and $\brk{u}$ in each region to bound both norms for $\psi$ by the last term. 

Finally, to conclude the proof, observe that the case $s > 0$ follows readily by the inductive definition of the norms involved in the estimates.  \qedhere 
\end{proof}

\section{Construction of coordinates} \label{sec:coords}
Suppose that $(\Sigma, h, k)$ is a smooth set of Cauchy data for the vacuum Einstein equations, with $\Sigma \simeq \mathbb{R}^3$. In this section, we carry out the construction of centre-normalised Newman--Unti gauge.
\subsection{The centre geodesic and centre-normalisation}\label{sec:coordscentre}
 Let $O\in \Sigma$ be a marked point. By assumption we can cover $\Sigma$ by a single coordinate chart $\Phi:\Sigma \to \mathbb{R}^3$ such that without loss of generality $O$ corresponds to the origin and $h_{ij}(0) = \delta_{ij}$. Let $\overline{\theta} = (\overline{\theta}^A)$ be a local\footnote{To cover $S^2$ requires multiple coordinate patches, but we shall suppress this in practice and assume it is handled in the standard way.} choice of coordinates on the unit sphere $S^2 \subset \mathbb{R}^3 $ and $\hat{x} : \overline{\theta} \to S^2$ the corresponding chart. We define $\overline{r} = \sqrt{(x^1)^2+(x^2)^2+(x^3)^2}$, so that $(\overline{r},\overline{\theta}) \mapsto \Phi^{-1}(\overline{r}\hat{x}(\overline{\theta}))$ is a coordinate chart on $\Sigma \setminus O$. Identifying $\overline{\theta}$ with the tangent vector to the curve $\lambda \mapsto \Phi^{-1}(\lambda \hat{x}(\overline{\theta}))$ at $\lambda=0$ we have that $\overline{\theta}$ induces a canonical choice of coordinates $\overline{\theta} \mapsto R(\overline{\theta}) \in S_O\Sigma\subset T_O\Sigma$.

Now we let $(M, g, \iota)$ be the maximal smooth development of the data $(\Sigma, h, k)$, i.e., a $4-$manifold $M$ equipped with a Ricci flat, Lorentzian metric $g$ together with an embedding $\iota:\Sigma \to M$ such that $h, k$ are respectively the corresponding first and second fundamental form and all objects are smooth. We wish to construct Newman--Unti coordinates in a neighbourhood of $\iota(\Sigma)$. Let $N$ be the future directed unit normal to $\iota(\Sigma)$. There exists some $\delta u>0$ such that the geodesic $\Gamma: (1-\delta u, 1+\delta u) \to M$ satisfying $\Gamma(1) = \iota(O)$, $\dot{\Gamma}(1)=N$ exists; this is the \emph{centre geodesic}, or the spacetime axis. Given $\theta^A$ we define the null vector $V(\theta) \in T_{\iota(O)}M$ by $V(1, \theta) = N+\iota_*R(\theta)$ and extend this to a vector field $V(u, \theta)$ along  $\Gamma$ by parallel transport.

We may think of $(u, \theta) \mapsto V(u, \theta)$ as a smooth embedding $V : (1-\delta u, 1+\delta u)\times S^2 \to TM$. From each point $V(u,\theta)$ we follow the geodesic flow on $TM$ a parameter distance $r$, and label this point $\Psi(u, r, \theta)$. By construction, possibly after shrinking $\delta u$, we have that $\Psi : (1-\delta u, 1+\delta u)\times [0, \delta r)\times S^2 \to TM$ is well-defined and smooth for some $0<\delta r <\delta u$. 

To establish that $\Psi$ is an embedding (possibly after again shrinking the domain) it suffices to show that $\Psi_*(\partial_u)$,$\Psi_*(\partial_r)$,$\Psi_*(\partial_{\theta^A})$ span a $4-$dimensional subspace in $T_{\Psi(u, r, \theta)} TM$ for each $(u,r,\theta)$. To see this, let $\gamma_{u, \theta}:[0, r_0) \to M$ be the geodesic with $\gamma_{u, \theta}(0) = \Gamma(u)$, $\dot{\gamma}_{u, \theta}(0) = V(u, \theta)$. If $\Pi:TM \to M$ is the canonical projection, then $\gamma_{u, \theta}(r) = \Pi\circ\Psi(u, r, \theta)$. We have that
\begin{align*}
    \Psi_*\left(\left. \frac{\partial}{\partial u}\right|_{(u, r, \theta)} \right) &= \left(J_{u}(r),  \frac{D J_{u}}{\ud r} (r)\right) \\
    \Psi_*\left(\left. \frac{\partial}{\partial r}\right|_{(u, r, \theta)} \right) &= \left(J_{r}(r),  \frac{D J_{r}}{\ud r} (r)\right) \\
    \Psi_*\left(\left. \frac{\partial}{\partial \theta^A}\right|_{(r, r, \theta)} \right) &= \left(J_{\theta^A}(r),  \frac{D J_{\theta^A}}{\ud r} (r)\right)
\end{align*}
where each $J_*\in \{J_u, J_r, J_{\theta^A}\}$ is a Jacobi field along $\gamma_{u, \theta}$ satisfying the initial conditions:
\begin{align}
    \left(J_{u}(0),  \frac{D J_{u}}{\ud r} (0)\right) &= \left(\dot{\Gamma}(u), 0 \right) \\
    \left(J_{r}(0),  \frac{D J_{r}}{\ud r} (0)\right) &= \left(V(r, \theta), 0 \right) \\
    \left(J_{\theta^A}(0),  \frac{D J_{\theta^A}}{\ud r} (0)\right) &= \left(0, \frac{D}{d\theta^A} V(u, \theta) \right) 
\end{align}
Since $V(u, \theta)$ parameterises a sphere in $T_{\Gamma(u)} M$ to which both $V$ and $\dot{\Gamma}$ are transverse, we deduce that (possibly after shrinking the domain) the map $\Psi : (1-\delta u, 1+\delta u)\times [0, \delta r)\times S^2 \to TM$ is a smooth embedding. Furthermore, letting $\psi = \Pi\circ\Psi$ we note that for $r>0$ and $\delta r$ sufficiently small, the vectors $\psi_*(\partial_u)$,$\psi_*(\partial_r)$,$\psi_*(\partial_{\theta^A})$ span $T_{\psi(u, r, \theta)}M$ and so $\psi: (1-\delta u, 1+\delta u)\times (0, \delta r)\times S^2 \to M$ is a smooth coordinate chart on an open neighbourhood of $\iota(O)$ in $M \setminus \Gamma$.

Now, since $\Pi\circ\Psi$ is smooth we can pull-back the metric tensor $g$ to a smooth, symmetric $2-$tensor defined on $(1-\delta u, 1+\delta u)\times [0, \delta r)\times S^2$. For $r>0$ this tensor will be non-degenerate, but at $r=0$ $g$ will become degenerate. The metric functions 
\[
g_{uu}(u, r, \theta) = \left. (\Pi\circ\Psi)^*g\left(\partial_u, \partial_u \right) \right|_{(u, r, \theta)},\qquad g_{uA}(u, r, \theta) = \left. (\Pi\circ\Psi)^*g\left(\partial_u, \partial_{\theta^A} \right) \right|_{(u, r, \theta)},\quad \ldots
\]
are smooth functions on $(1-\delta u, 1+\delta u)\times [0, \delta r)\times S^2$. Similarly, the volume form of $M$ can be pulled-back to a smooth $4-$form, which will again degenerate at $r=0$.

\begin{remark} The construction above, which `blows-up' the axis $\Gamma$ to the cylinder $\{r=0\}$, enables us to cleanly discuss regularity at $\Gamma$ in terms of smooth objects on a manifold with boundary. This is mainly for convenience to avoid having to express regularity in terms of limits as $r\to 0$.
\end{remark}

\subsection{Form of the metric}\label{sec:coords.form}It follows from our construction above that
\[
(\Pi \circ \Psi)_*\left(\left. \frac{\partial}{\partial r}\right|_{(u, r, \theta)} \right) = \dot{\gamma}_{u, \theta}(r) = J_r(r)
\]
We further observe that for each of the Jacobi fields $J_*$ above we have
\[
g\left(\frac{D J_*}{\ud r}(0), V(u, \theta)\right)  = 0,
\]
and hence $g(J_*(r), \dot{\gamma}_{u, \theta}(r))$ is constant along $\gamma_{u, \theta}$, since the Jacobi equation implies $\frac{\ud^2}{\ud r^2}g(J_*(r), \dot{\gamma}_{u, \theta}(r)) = 0$. We deduce that
\begin{equation}
    g_{rr} = g_{rA} = 0, \qquad g_{ru}=g_{ur}=1. \label{metcomps}
\end{equation}

Next, we consider $g_{uu} = g(J_u(r), J_u(r))$. We have
\[
g_{uu}|_{r=0} = g(J_u(0), J_u(0)) = 1, \qquad \partial_r g_{uu}|_{r=0} = 2g\left(J_u(0), \frac{D J_u}{\ud r}(0)\right) = 0
\]
and thus $g_{uu} = 1 + O(r^2)$, where we can naively differentiate the error term as a smooth function. In principle, using the Jacobi equation we can develop the series in $r$ in terms of components of the Riemann curvature and its derivatives. Next we note that the Jacobi equation implies that $\frac{D^2 J_{\theta^A}}{\ud r^2}(0) = 0$. We compute
\begin{align*}
    g_{AB}|_{r=0} &= g(J_{\theta^A}(0), J_{\theta^B}(0)) = 0, \\
    \partial_r g_{AB}|_{r=0} &= g\left(\frac{D J_{\theta^A}}{\ud r}(0), J_{\theta^B}(0)\right)+ g\left(J_{\theta^A}(0), \frac{D J_{\theta^B}}{\ud r}(0)\right) = 0 \\
    \partial_r^2 g_{AB}|_{r=0} &= g\left(\frac{D^2 J_{\theta^A}}{\ud r^2}(0), J_{\theta^B}(0)\right)+ 2 g\left(\frac{D J_{\theta^A}}{\ud r}(0), \frac{D J_{\theta^B}}{\ud r}(0)\right)+ g\left(J_{\theta^A}(0), \frac{D^2 J_{\theta^B}}{\ud r^2}(0)\right) = 2\mathring{\gamma}_{AB}(1) \\
    \partial_r^3 g_{AB}|_{r=0} &= g\left(\frac{D^3 J_{\theta^A}}{\ud r^3}(0), J_{\theta^B}(0)\right)+ 3 g\left(\frac{D^2 J_{\theta^A}}{\ud r^2}(0), \frac{D J_{\theta^B}}{\ud r}(0)\right)\\
    &\qquad+ 3 g\left(\frac{D J_{\theta^A}}{\ud r}(0), \frac{D^2 J_{\theta^B}}{\ud r^2}(0)\right) + g\left(J_{\theta^A}(0), \frac{D^3 J_{\theta^B}}{\ud r^3}(0)\right) = 0
\end{align*}
where we have used the fact that $V(u, \theta)$ is a parameterisation of a round unit sphere in $T_{\Gamma(u)}M$. We deduce that $g_{AB} = \mathring{\gamma}_{AB} + O(r^4)$. Finally we consider $g_{uA}$:
\begin{align*}
    g_{uA}|_{r=0} &= g(J_{u}(0), J_{\theta^A}(0)) = 0 \\
    \partial_r g_{uA}|_{r=0} &= g\left(\frac{D J_{u}}{\ud r}(0), J_{\theta^a}(0)\right)+ g\left(J_{u}(0), \frac{D J_{\theta^A}}{\ud r}(0)\right) = 0 \\
    \partial_r^2 g_{uB}|_{r=0} &= g\left(\frac{D^2 J_{u}}{\ud r^2}(0), J_{\theta^A}(0)\right)+ 2 g\left(\frac{D J_{u}}{\ud r}(0), \frac{D J_{\theta^A}}{\ud r}(0)\right)+ g\left(J_{u}(0), \frac{D^2 J_{\theta^A}}{\ud r^2}(0)\right) = 0
\end{align*}
so that $g_{uA} = O(r^3)$.

\subsection{Extending away from the axis} We now wish to extend the coordinate system to cover a neighbourhood of $\iota(\Sigma)$. Recall from above that $(\overline{r}, \overline{\theta}) \mapsto \Phi^{-1}(\overline{r} \hat{x}(\overline{\theta}))$ gives a coordinate chart on $\Sigma \setminus \{O\}$. We define $\tilde{\iota}$ by $\tilde{\iota}(\overline{r}, \overline{\theta}) =\iota\circ \Phi^{-1}(\overline{r} \hat{x}(\overline{\theta}))$, so that $\tilde{\iota}:(0, \infty)\times S^2 \to \iota(\Sigma \setminus \{O\})$ is a diffeomorphism. We can take $\delta \overline{r}$ to be sufficiently small so that  $\hat{\iota}:=\Psi \circ \psi^{-1}\circ \tilde{\iota}:(0, \delta\overline{r}) \times S^2 \to TM$ is well defined. This map extends to a smooth map on $[0, \delta\overline{r}) \times S^2$ by setting $\hat{\iota}(0, \overline{\theta}) = V(1,\overline{\theta})$. By construction, $\hat{\iota}: [0,\delta \overline{r})\times S^2$ parameterises a smooth surface in $TM$, and we can write
\[
(u_0(\overline{r}, \overline{\theta}), r_0(\overline{r}, \overline{\theta}), \theta_0(\overline{r}, \overline{\theta}))    = \Psi^{-1}\circ \hat{\iota}(\overline{r}, \overline{\theta})
\]
for smooth $(u_0, r_0, \theta_0): [0,\delta \overline{r})\times S^2 \to (1-\delta u, 1+\delta u)\times [0, \delta r)\times S^2$. These functions satisfy
\[
u_0(0, \overline{\theta}) = 1, \quad r_0(0, \overline{\theta}) = 0, \quad {\theta}^A_0(0, \overline{\theta}) = \overline{\theta}^A, \quad \partial_{\overline{r}}r_0(0, \overline{\theta}) = 1 = -\partial_{\overline{r}}u_0(0, \overline{\theta})
\]
Let $\chi : [0, \infty) \to [0,1]$ be a smooth monotonic function satisfying 
$\chi(\overline{r}) = 1$ for $\overline{r}<\frac{1}{4}\delta \overline{r}$ and 
$\chi(\overline{r}) = 0$ for $\overline{r}>\frac{3}{4}\delta \overline{r}$. We extend $(u_0, r_0, \theta_0)$ to the domain $[0, \infty) \times S^2$ by setting:

\begin{align*}
u_1(\overline{r}, \overline{\theta}) &= \chi(\overline{r}) u_0(\overline{r}, \overline{\theta}) + (1-\chi(\overline{r})) (1-\overline{r}) \\
r_1(\overline{r}, \overline{\theta}) &= \chi(\overline{r}) r_0(\overline{r}, \overline{\theta}) + (1-\chi(\overline{r})) \overline{r} \\
\theta^A_1(\overline{r}, \overline{\theta}) &= \chi(\overline{r}) \theta^A_0(\overline{r}, \overline{\theta}) + (1-\chi(\overline{r})) \overline{\theta}^A 
\end{align*}
Taking $\delta \overline{r}$ sufficiently small, the map $F:(\overline{r}, \overline{\theta})\mapsto (u_{1}, \theta_{1})$ is a smooth diffeomorphism of $[0, \infty) \times S^2$ onto $(-\infty, 1] \times S^2$. Now, suppose we are given a smooth future directed null vector-field, $\tilde{W}:\iota(\Sigma \setminus \{O\}) \to TM$ such that $\Pi \circ \tilde{W} = I_{\iota(\Sigma \setminus \{O\})}$, let $W = \tilde{W}\circ F^{-1}$. Given $(u, \theta) \in (-\infty, 1)\times S^2$ we flow along the geodesic vector field starting at $W(u,r)$ a distance $r-r_1\circ F^{-1}(u,\theta)$. For $r$ sufficiently close to $r_1\circ F^{-1}(u,r)$ this will be well defined and we call this point $\tilde{\Psi}(u, r, \theta)$. In the neighbourhood of a point $(u_1(\overline{r}, \overline{\theta}), r_1(\overline{r}, \overline{\theta}), \theta_1(\overline{r}, \overline{\theta}))$ the map $\tilde{\Psi}$ is smooth. Further we have
\begin{align*}
    \tilde{\Psi}_*\left(\left. \frac{\partial}{\partial u}\right|_{(u, r, \theta)} \right) &= \left(J_{u}(r-r_1),  \frac{D J_{u}}{\ud r} (r-r_1)\right) \\
    \tilde{\Psi}_*\left(\left. \frac{\partial}{\partial r}\right|_{(u, r, \theta)} \right) &= \left(J_{r}(r-r_1),  \frac{D J_{r}}{\ud r} (r-r_1)\right) \\
    \tilde{\Psi}_*\left(\left. \frac{\partial}{\partial \theta^A}\right|_{(u, r, \theta)} \right) &= \left(J_{\theta^A}(r-r_1),  \frac{D J_{\theta^A}}{\ud r} (r-r_1)\right)
\end{align*}
where $r_1 = r_1\circ F^{-1}(u,\theta)$ and each $J_*\in \{J_u, J_r, J_{\theta^A}\}$ is a Jacobi field along the geodesic $\tilde{\gamma}_{u, \theta}(\lambda) = \Pi \circ \tilde{\Psi}(u, \lambda+r_1, \theta)$ satisfying the initial conditions:
\begin{align}
    \left(J_{u}(0),  \frac{D J_{u}}{\ud r} (0)\right) &= \left((\tilde{\iota}\circ F^{-1})_*\left(\frac{\partial}{\partial u} \right)-W\frac{\partial}{\partial u}r_1\circ F^{-1} , \frac{DW}{\ud u} \right) \label{eq:Judef}\\
    \left(J_{r}(0),  \frac{D J_{r}}{\ud r} (0)\right) &= \left(W(u, \theta), 0 \right) \\
    \left(J_{\theta^A}(0),  \frac{D J_{\theta^A}}{\ud r} (0)\right) &= \left((\tilde{\iota}\circ F^{-1})_*\left(\frac{\partial}{\partial \theta^A} \right)-W\frac{\partial}{\partial \theta^A}r_1\circ F^{-1} , \frac{DW}{\ud \theta^A} \right)\label{eq:Jthetadef}
\end{align}
Since $F$ is a diffeomorphism, $(\tilde{\iota}\circ F^{-1})_*\partial_u, (\tilde{\iota}\circ F^{-1})_*\partial_{\theta^A}$ span $T\iota(\Sigma)\subset TM$, and as $\iota(\Sigma)$ is spacelike, $W$ must be transverse to this subspace. As a result we see that the Jacobi fields are initially linearly independent and span $TM$ and so $\tilde{\Psi}$ is locally an embedding and $\Pi \circ \tilde{\Psi}$ a local diffeomorphism. We further observe that there is a unique choice of null $W$ such that
\[
g(J_u(0), J_r(0)) = -1,\qquad  g(J_r(0), J_r(0)) = 0, \qquad g(J_{\theta^A}(0), J_r(0)) = 0. 
\]
Making this choice of $W$ we see that near $\{r=0\}$ our two embedding maps $\tilde{\Psi}$ and ${\Psi}$ agree, so we can use $\tilde{\Psi}$ to smoothly extend ${\Psi}$. Further, our previous argument establishing \eqref{metcomps} can be extended to this region.

On $\iota(\Sigma \cap \{ \overline{r} > \frac{3}{4} \delta \overline{r} \})$, where $(u, r, \theta) = (1 - \overline{r}, \overline{r}, \overline{\theta})$, the null pair $e_{3}, e_{4}$ admits a simple expression as follows. Since $e_{4} = W$, by the above characterisation of $W$, we have
\begin{equation}\label{eq:e4initdef}
    e_{4} = W = \ell(N + \iota_{\ast} \overline{N}),
\end{equation}
where $\overline{N}$ is the outward unit normal of the spheres of constant $\overline{r}$ in $\Sigma$ and $\ell^{-1} = h(\rd_{\overline{r}}, \overline{N})$. From the characterisation \eqref{e4.def.metric}, it follows that $e_{3} = \ell^{-1} (N - \iota_{\ast} \overline{N})$ on the same subset of $\iota(\Sigma)$.  

We summarise this discussion in the following theorem

\begin{theorem}\label{thm:coords}
    Given $(\Sigma, h, k)$ a smooth set of Cauchy data for the vacuum Einstein equations, with $\Sigma \simeq \mathbb{R}^3$ and $O\in \Sigma$ a marked point, let $(M, g, \iota)$ be its maximal Cauchy development, and $\Gamma$ the future directed geodesic normal to $\iota(\Sigma)$ at $O$, parameterised by proper time. Given $\delta\overline{r}>0$ sufficiently small there exists $U$, open in $S$, a neighbourhood of $\tilde{\Sigma}_1$ and a smooth map $\Psi: U \to TM$ such that
    \begin{enumerate}[i)]
    \item $\Pi\circ \Psi: U \setminus \{r=0\} \to M \setminus \Gamma$ is a smooth diffeomorphism onto its image
    \item $\Psi^*g$ takes the form
    \[
    \Psi^*g = - (\ud u \otimes \ud r+\ud r\otimes \ud u) - f \ud u \otimes \ud u +\gamma_{AB}(\ud\th^A-b^A \ud u)\otimes (\ud\th^B - b^B \ud u)
    \]
    where $f, b^A, \gamma_{AB}$ are smooth functions on $U$ satisfying
    \[
    f = 1 + O(r^2), \qquad b^A = O(r), \qquad \gamma_{AB} = \mathring{\gamma}_{AB} + O(r^4)
    \]
    \item $\Pi\circ \Psi(\tilde{\Sigma}_1)$ and $\iota(\Sigma)$ are coincident outside the spacetime region $\Pi\circ \Psi((1-2\delta\overline{r}, 1+2\delta\overline{r})\times [0, \delta \overline{r})\times S^2$). Moreover, we have
    \begin{equation} \label{eq:e4-ids}
       e_{3} = \ell^{-1} (N - \iota_{\ast} \overline{N}), \qquad
       e_{4} = \frac{\rd}{\rd r} = \ell(N + \iota_{\ast} \overline{N}) \quad \hbox{ on } \iota(\Sigma \cap \{\overline{r} \geq \tfrac{3}{4} \delta r \}),
    \end{equation}
    where $\overline{N}$ is the outward unit normal of the spheres of constant $\overline{r}$ in $\Sigma$ and $\ell^{-1} = h(\rd_{\overline{r}}, \overline{N})$.
    
    In particular, any bounds assumed on the metric and a finite (but arbitrary) number of derivatives may be transferred by continuity to hold on $\tilde{\Sigma}_1$ with arbitrarily small loss.
    \end{enumerate}
\end{theorem}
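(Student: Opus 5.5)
The plan is to assemble Theorem~\ref{thm:coords} from the constructions of \S\ref{sec:coordscentre}--\S\ref{sec:coords}. The first step is to define $U$ as the union of two domains. The first is the axis domain $(1-\delta u,1+\delta u)\times[0,\delta r)\times S^2$ of the map $\Psi$ built from the parallel-transported null vectors $V(u,\theta)$ along $\Gamma$. The second is a neighbourhood of $\{(u_1,r_1,\theta_1)\}$ on which the extension $\tilde\Psi$ is defined, using the normalised null field $W$.

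\textbf{Smoothness and part i).} Smoothness of $\Psi$ as a map into $TM$ follows from smooth dependence of the geodesic flow on initial data. The local diffeomorphism property of $\Pi\circ\Psi$ for $r>0$ follows from the Jacobi field argument: at $r=0$ the initial data $(J_\ast(0),DJ_\ast/\ud r(0))$ are linearly independent in $T(TM)$. For small $r>0$ the fields $J_{\theta^A}(r)\approx r\,\frac{D V}{\ud\theta^A}$ together with $J_u,J_r$ span $TM$. Away from the axis, the transversality of $W$ to $T\iota(\Sigma)$ gives the same conclusion. The two charts agree on the overlap by uniqueness of the null $W$ satisfying the normalisation. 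Since $W$ coincides with $V$ transported out to the data surface, $\Psi$ and $\tilde\Psi$ patch smoothly.

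\textbf{Injectivity.} Local diffeomorphism upgrades to diffeomorphism onto the image after shrinking $U$ to a thin neighbourhood of $\tilde\Sigma_1$. This is where I expect the main, though mild, obstacle. On $\tilde\Sigma_1$ itself, $\Pi\circ\Psi$ is injective, because $F$ is a diffeomorphism and the $r$-coordinate is matched with $r_1$. A standard compactness/tubular-neighbourhood argument then gives injectivity nearby. The subtle points are two: noncompactness in $\overline r$, which I would handle by choosing the thickness of the neighbourhood to depend continuously on the base point, and the degenerate axis, which I would handle by using the embedding of $\Psi$ into $TM$ rather than of $\Pi\circ\Psi$.

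\textbf{Part ii).} The metric form follows from \eqref{metcomps}. Indeed, $g(J_\ast,\dot\gamma)$ is affine in $r$ with zero derivative, because $g(DJ_\ast/\ud r(0),V)=0$. Its initial values give $g_{rr}=g_{rA}=0$ and $g_{ur}=-1$ (matching \eqref{formofthemetric}, up to the sign convention in \eqref{metcomps}). Any symmetric tensor with these components can be written in the stated form by defining
\[
\gamma_{AB}:=g_{AB},\qquad b^A:=-\gamma^{AB}g_{uB},\qquad f:=-g_{uu}+\gamma_{AB}b^Ab^B.
\]
The expansions computed in \S\ref{sec:coords.form} then give $g_{uu}=1+O(r^2)$, $g_{uA}=O(r^3)$ and $\gamma=\mathring\gamma+O(r^4)$, with the sign of $g_{uu}$ fixed by the metric form. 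Since $\gamma^{-1}=O(r^{-2})$, it follows that $b=O(r)$ and $f=1+O(r^2)$.

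\textbf{Part iii).} Coincidence of $\Pi\circ\Psi(\tilde\Sigma_1)$ with $\iota(\Sigma)$ holds where $\chi=0$, i.e.\ for $\overline r>\tfrac34\delta\overline r$. There $(u,r,\theta)=(1-\overline r,\overline r,\overline\theta)$ lies on $t=1$ and the flow distance is zero. The frame identities are \eqref{eq:e4initdef} together with the characterisation \eqref{e4.def.metric}. For the final transfer-of-bounds statement, observe that the modification is confined to an arbitrarily small neighbourhood of $O$. On that neighbourhood all quantities depend continuously on $\delta\overline r$ in $C^k$, and they converge to the values given by the axis expansions as $\delta\overline r\to0$.
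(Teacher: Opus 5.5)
Your proposal is essentially the paper's own argument in \S\ref{sec:coords}: the parallel-transported null field $V$ and Jacobi-field independence near the axis; the cutoff extension with the normalised null field $W$, which agrees with the cone generators where $\chi=1$; constancy of $g(J_\ast,\dot\gamma)$ together with the Taylor expansions at $r=0$ for ii), where $g_{ur}=-1$ and $g_{uu}|_{r=0}=g(\dot\Gamma,\dot\Gamma)=-1$, so your formula $f=-g_{uu}+\gamma_{AB}b^Ab^B$ indeed gives $f=1+O(r^2)$; and the characterisation of $W$ together with \eqref{e4.def.metric} for iii). The only place you go beyond the paper, which simply asserts the chart property, is global injectivity, and there your axis mechanism does not work as stated: injectivity of $\Psi$ into $TM$ does not prevent distinct null generators from meeting in $M$ with different tangents, and near $O$ the surface $\tilde\Sigma_1$ is in general not mapped into $\iota(\Sigma)$, so $F$ alone does not give injectivity on it; what is needed is that in a convex normal neighbourhood of $\Gamma(1)$ each point lies on the future null cone of exactly one $\Gamma(u)$, together with injectivity of $\exp_{\Gamma(u)}$ there.
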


\begin{corollary}\label{cor:loc.exist.bounds}
    For any $s$, we have
    \[
    \sup_{(u,r)\in S_T} r^{-3}\left[ \abs{f-1}_s + \abs{b}_s +\abs{\gamma - \mathring{\gamma}}_{s} \right] <\infty
    \]
    The following quantities are well defined and smooth on $S_T$
    \begin{align*}
        \chih, (\trch - 2r^{-1}),\chibh, (\trchb + 2r^{-1}), \omegab, \xib, \eta, \alpha, \beta, \betab, \rho, \sigma 
    \end{align*}
    and satisfy
    \begin{align*}
        \sup_{(u,r)\in S_T} &r^{-1}\Bigg[r^{-1}\abs{\chih}_{s}+r^{-1}\abs{\trch - 2r^{-1}}_{s} +r^{-1}\abs{\chibh}_{s}+r^{-1}\abs{\trchb + 2r^{-1}}_{s}  \\
        &\qquad + r^{-1}\abs{\omegab}_s +r^{-1}\abs{\xib}_s + r^{-1}\abs{\eta}_s +\abs{\alpha}_s+\abs{\beta}_s  +\abs{\betab}_s +\abs{\rho}_s +\abs{\sigma}_s\Bigg](u,r) <\infty
    \end{align*}
    for any $s$.
\end{corollary}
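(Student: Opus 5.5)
The plan is to deduce everything from Theorem~\ref{thm:coords} together with the smoothness of $\Pi\circ\Psi$ up to $\{r=0\}$ on the blown-up manifold. The key point is that on $S_T$ (a region where the coordinates are assumed to exist, in particular relatively compact in $u$ for each bounded $\tau$) all metric components are smooth functions of $(u,r,\theta)$ up to $r=0$. The estimates therefore reduce to Taylor expansion in $r$ at the axis combined with compactness of the relevant $(u,r)$-range. Away from the axis, every quantity is smooth and hence locally bounded, so the real content lies near $r=0$. For large $r$ one also needs finiteness; there I would use the fact that $S_T$ is contained in a set on which $r$ is bounded. This follows because $\tau<T$ and, on $\tilde\Sigma_1$, $\tau\to\infty$ as $r\to\infty$. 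Absent this, one would instead interpret the supremum over compact subsets.

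\textbf{Step 1 (metric).} I would extend the expansions of Theorem~\ref{thm:coords}.ii) to all derivatives. Since $g_{uu}$, $g_{uA}$ and $g_{AB}$ are smooth in $(u,r,\theta)$ up to $r=0$, Taylor's theorem with smooth remainder shows that $f-1=r^2F$, $b=rB$ (as coordinate components) and $\gamma-\mathring\gamma=r^4G$, with $F,B,G$ smooth. For the norms $|\cdot|_s$, one measures with $\mathring\gamma$, and the operators $r\mathring\nab_4$, $r\mathring\nab$ and $\brk{u}\mathring\nab_u$ preserve the class $r^k\cdot(\text{smooth})$. Moreover, each $\mathring\gamma$-contraction of a covariant $S$-tensor index introduces a factor $r^{-1}$, while the area element introduces $r^2$. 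Bookkeeping the tensor weights, one finds that $b$, viewed as a vector, has pointwise $\mathring\gamma$-size $r\cdot r=r^2$, and that $|\gamma-\mathring\gamma|_{\mathring\gamma}\sim r^4/r^2=r^2$. Multiplying by the area factor $r$ in $|\cdot|_0$ then gives $|b|_s$ and $|\gamma-\mathring\gamma|_s$ of order $r^3$, and likewise $|f-1|_s\lesssim r^3$. I expect this weight bookkeeping to require care; the precise statement $r^{-3}\times$ is exactly what it produces.

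\textbf{Step 2 (Ricci and curvature).} I would use Proposition~\ref{prop:gauge.con} to express the Ricci coefficients in terms of the metric. For example, $\chi=\tfrac12\partial_r\gamma$ yields $\trch-2r^{-1}$ and $\chih$ of the form $\gamma^{-1}\partial_r(r^4G)$. Similarly, $\omegab=\tfrac12\partial_r f=O(r)$, $\xib=\nab f$, and $\eta$ is obtained from $\partial_rb$ combined with \eqref{eq:4zeta} integrated from the axis. For $\chib$ one uses \eqref{chib.in.terms.of.gamma}. Each such quantity is (smooth)$\times r^{k}$, and the same weight count yields the $r^{2}$ bound required in the displayed supremum. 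Curvature components are smooth tensors of $g$ in the smooth frame $e_3,e_4,e_A$, where $e_A=\partial_\theta$ has $g$-length $\sim r$. Alternatively, and more cleanly, one can read them off from the null structure equations: for instance $\alpha=-\nab_4\chih-\trch\chih$ from \eqref{eq:4chih}, $\beta$ from \eqref{eq:constchih}, $\rho$ from \eqref{eq:constrho} via $K$, $\sigma$ from \eqref{eq:constzeta}, $\betab$ from \eqref{eq:constchibh}, and $\alphab$ from \eqref{eq:3chibh}. This exhibits them as smooth on $S_T$ with $|\cdot|_s\lesssim r$, since the components $W(\partial_A,\cdot,\partial_B,\cdot)$ are smooth and carry $r^2$ cancelling the tensor weights.

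\textbf{Main obstacle.} The delicate point is verifying that the $1/r$ singular pieces cancel exactly at the required order. Specifically, $\trch-2/r$ and $\trchb+2/r$ must be smooth and $O(r)$ pointwise, and $\eta$ must have no singular part. This relies on the refined expansions $\gamma=\mathring\gamma+O(r^4)$ and $f=1+O(r^2)$, together with the parallel-transport normalisation of $V$ through $\partial_u\mathring\gamma=0$, which kills the leading term of $\chib$. I would check these cancellations directly from the Jacobi-field expansions of \S\ref{sec:coords.form}.
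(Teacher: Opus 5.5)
Your proposal is correct and is essentially the argument the paper intends: the corollary is stated without a separate proof as a direct consequence of Theorem~\ref{thm:coords}, namely the smoothness of $f,b,\gamma$ up to $r=0$ with the stated vanishing orders on the relatively compact region $S_T$. The Ricci coefficients are then read off from Proposition~\ref{prop:gauge.con}, and the curvature components from the frame or the structure equations. The cancellations you single out as the main obstacle are already automatic from those expansions (e.g.\ $\eta^A=-\tfrac12\partial_r b^A$ directly, and $\partial_u\gamma_{AB}=O(r^4)$ because $\mathring\gamma$ is $u$-independent), so no further Jacobi-field analysis is required.
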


\section{Bootstrap assumptions and statement of the main theorem}\label{sec:assumptions}

We fix $s_0 \geqslant 20$ to be \emph{even} (see footnote \ref{fn:regularity}). Fix some $\delta>0$ which we will later choose sufficiently small ($\delta<1/20$ will suffice). For compactness of notation we set $p^\pm:=p \pm \delta$. We will later take $0< \epsilon_0 \ll 1$ to be sufficiently small depending on $s_0$ and $\de$. From now on, we assume that $0 < \ep \leqslant \ep_0$.

\subsection{Initial data assumptions}

\subsubsection{Weak initial data assumptions}
We assume that $\alp$ initially satisfies
\begin{equation}\label{eq:data.alp.weak}
\begin{aligned}
&    \| (r \nab_{4} \alp, r \nab \alp, \alp) \|_{s_0, (-1)^+, 2^+}(\tilde{\Sigma}_1)
      + \|r^{-2} \nab_3 (r^2\alp)\|_{s_0, (-2)^+, 3^+}(\tilde{\Sigma}_1) 
+\abs{\alp}_{s_0, (-1)^+, 2^+}(\tilde{\Sigma}_1) < \ep^2
\end{aligned}
\end{equation}
and that the other geometric quantities initially satisfy the pointwise(-in-$(u, r)$) bounds
\begin{equation}\label{eq:data.weak}
    \begin{split}
        &|\hat{\chi}, \trch - 2r^{-1}|_{s_0, 0^+, 1^+}(\tilde{\Sigma}_1)+ |\gamma-\mathring{\gamma} |_{s_0, 1^+, 0^{+}}(\tilde{\Sigma}_1) + |\beta|_{s_0-1, (-1)^+, 2^+}(\tilde{\Sigma}_1)\\
    &\qquad + |\eta|_{s_0-1, 0^+, 1^+}(\tilde{\Sigma}_1) + |\rho, \sigma|_{s_0-2, (-1)^+, 2^+}(\tilde{\Sigma}_1) + |\chibh, \trchb + 2r^{-1}|_{s_0-2, 0^+, 1^{+}}(\tilde{\Sigma}_1) \\& \qquad + |\betab|_{s_0-3, (-1)^+, 2^{+}}(\tilde{\Sigma}_1) + |b|_{s_0-1, 1^+, 0^{+}}(\tilde{\Sigma}_1) + \abs{\omegab}_{s_0-2, 0^+, 1^+}(\tilde{\Sigma}_1) + \abs{f- 1}_{s_0-2, 1^+, 0^+}(\tilde{\Sigma}_1) <\epsilon^2 .  
    \end{split}
\end{equation}
and the integrated bounds
\begin{equation}\label{eq:data.weak.int}
    \begin{split}
        &\| \hat{\chi}, \trch - 2r^{-1}\|_{s_0+1, 0^+, 1^+}(\tilde{\Sigma}_1)+ \|\gamma-\mathring{\gamma} \|_{s_0+1, 1^+, 0^{+}}(\tilde{\Sigma}_1) + \| \beta \|_{s_0+1, (-1)^+, 2^+}(\tilde{\Sigma}_1)\\
    &\qquad + \| \eta \|_{s_0+1, 0^+, 1^+}(\tilde{\Sigma}_1) + \| \rho, \sigma \|_{s_0, (-1)^+, 2^+}(\tilde{\Sigma}_1) + \|\chibh, \trchb + 2r^{-1}\|_{s_0, 0^+, 1^{+}}(\tilde{\Sigma}_1) \\& \qquad + \| \betab \|_{s_0-1, (-1)^+, 2^{+}}(\tilde{\Sigma}_1) + \| \omegab \|_{s_0, 0^+, 1^+}(\tilde{\Sigma}_1) + \| f- 1 \|_{s_0, 1^+, 0^+}(\tilde{\Sigma}_1) <\epsilon^2 .  
    \end{split}
\end{equation}

\begin{remark} In this case, the exponents $p_{0}, p_{\infty}$ are determined entirely by the object type: for curvature components, $(p_{0}, p_{\infty}) = ((-1)^{+}, 2^{+})$; for connection coefficients, $(p_{0}, p_{\infty}) = (0^{+}, 1^{+})$; and for metric components, $(p_{0}, p_{\infty}) = (1^{+}, 0^{+})$. The exponents $s$ match with those in our bootstrap assumptions \eqref{eq:BA.weak.pointwise} and \eqref{eq:BA.strong.integrated} below.
\end{remark}

\subsubsection{Strong initial data assumptions}
We also consider a version of the theorem with strong initial data assumptions, in which the $s$ and $p_{0}$ exponents remain the same but the $p_{\infty}$ exponents differ for some quantities. In this case, we assume that $\alp$ initially satisfies
\begin{equation}\label{eq:data.alp.strong}
    \|(r \nab_{4} \alp, r \nab \alp, \alp)\|_{s_0+1, (-1)^+, 5^{-}(\tilde{\Sigma}_1)} + \|r^{-2}\nab_3 (r^2\alp)\|_{s_0, (-2)^+, 6^{-}}(\tilde{\Sigma}_1)  +\abs{\alp}_{s_0, (-1)^+, 5^{-}}(\tilde{\Sigma}_1) < \ep^2
\end{equation}
and that the other geometric quantities initially satisfy the following bounds in addition to \eqref{eq:data.weak}:
\begin{equation}\label{eq:data.strong}
    \begin{split}
        &|\hat{\chi}, \trch - 2r^{-1}|_{s_0, 0^+, 2}(\tilde{\Sigma}_1) 
        + |\beta|_{s_0-1, (-1)^+, 4}(\tilde{\Sigma}_1) + |\eta|_{s_0-1, 0^+, 2}(\tilde{\Sigma}_1) + |\rho, \sigma|_{s_0-2, (-1)^+, 3}(\tilde{\Sigma}_1)  <\epsilon^2 ,  
    \end{split}
\end{equation}
\begin{equation}\label{eq:data.strong.int}
    \begin{split}
        &\| \hat{\chi}, \trch - 2r^{-1} \|_{s_0+1, 0^+, 2^{-}}(\tilde{\Sigma}_1) 
        + \| \beta \|_{s_0+1, (-1)^+, 4^{-}}(\tilde{\Sigma}_1) + \|\eta \|_{s_0+1, 0^+, 2^{-}}(\tilde{\Sigma}_1) +\|\rho, \sigma\|_{s_0, (-1)^+, 3^{-}}(\tilde{\Sigma}_1)  <\epsilon^2 . 
    \end{split}
\end{equation}
To relate these conditions to the assumptions on initial data in the introduction we have the following result, whose proof we defer to Appendix \ref{sec:appendix2}. Note that the loss of derivatives here is likely not sharp.

\begin{theorem}\label{thm:data.lemma}
    Fix $0<\epsilon \ll 1$ and $\nu>\delta$. Suppose that $(\mathbb{R}^3, h, k)$ is a smooth asymptotically flat initial data set in the centre of mass frame with maximal future Cauchy development $(M, g, \iota)$. Then there exists $\mathfrak{e}>0$ such that if 
    \begin{equation}\label{eq:frakedef}
         \mathfrak{n}:=\|h_{ij} - \delta_{ij}\|_{H^{s'+1, \nu-\frac{3}{2}}} + \|k_{ij}\|_{H^{s', \nu-\frac{1}{2}}}< \mathfrak{e}
    \end{equation}
    for some $s'>s_0+6$ we can construct coordinates in a neighbourhood of $\iota(\mathbb{R}^3)$, as in Theorem \ref{thm:coords}, such that \eqref{eq:data.alp.weak}, \eqref{eq:data.weak}, \eqref{eq:data.weak.int} hold with $\epsilon^2\lesssim_{\mathfrak{e}, s', s_0, \delta, \nu} \mathfrak{n}$. 

    If, moreover, $\nu>3-\delta$ and there exist $R_0>0, M>0, J\in \mathbb{R}^3$ such that 
     \begin{align*}
    \mathfrak{n}'&:= \left\|h_{ij} - (1 + \tfrac{2M}{|x|} + \tfrac{3 M^{2}}{2 |x|^{2}}) \delta_{ij}\right\|_{H^{s'+1, \nu-\frac{3}{2}}(\{|x|>R_0\})}\\&\qquad + \left\|k_{ij}-3 (\in_{i k \ell} J^{\ell} \tfrac{x_{j} x^{k}}{|x|^{5}} + \in_{j k \ell} J^{\ell} \tfrac{x_{i} x^{k}}{|x|^{5}})\right\|_{H^{s', \nu-\frac{1}{2}}(\{|x|>R_0\})}  <\infty,
    \end{align*}
    then there exists $\mathfrak{e}'>0$ such that if
    \[
    \mathfrak{n}+\mathfrak{n}'+M+|J| =: \mathfrak m<\mathfrak{e}'
    \]
    then \eqref{eq:data.alp.strong}, \eqref{eq:data.strong}, \eqref{eq:data.strong.int} hold with $\epsilon^2\lesssim_{\mathfrak{e}', s', s_0, \delta, \nu} \mathfrak m$.
\end{theorem}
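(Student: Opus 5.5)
The plan is to reduce the statement to local-in-time control near the initial surface together with explicit asymptotics at spatial infinity, and then to read off the weighted norms on $\tilde{\Sigma}_1$. The first step uses local well-posedness in weighted spaces. Since $s'>s_0+6$ and $\mathfrak{n}$ is small, the maximal development contains a uniform slab around $\iota(\mathbb{R}^3)$. There the spacetime metric, in harmonic or Gaussian normal coordinates, stays close to Minkowski in $H^{s'+1,\nu-\frac32}$-type weighted norms, and the second fundamental form stays close in the corresponding lower-order space. Weighted Sobolev embedding then gives pointwise bounds $|\partial^i(g-\eta)|\lesssim \mathfrak{n}(1+|x|)^{-\nu-i}$ for $i\leqslant s'-1$, with matching bounds on the ambient curvature $R$ and its derivatives of the form $\mathfrak{n}(1+|x|)^{-\nu-2-i}$.

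The second step constructs the Newman--Unti chart of Theorem~\ref{thm:coords} and estimates all quantities on $\tilde{\Sigma}_1$ in two regions.

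\textbf{Near the centre}, for $\overline{r}\lesssim \delta\overline{r}$, we integrate the Jacobi equation along the null generators from $\Gamma$. This shows that $f-1$, $b$ and $\gamma-\mathring{\gamma}$ are $O(\mathfrak{n} r^2)$, $O(\mathfrak{n} r^2)$ and $O(\mathfrak{n} r^4)$ respectively, and that the Ricci coefficients are $O(\mathfrak{n} r)$ and the curvature components $O(\mathfrak{n})$. These rates are better than the required weights $r^{1^+}$, $r^{0^+}$ and $r^{-1^+}$, so the $p_0$ parts of the norms are finite and small. Each angular derivative $r\nab$ and each $r\nab_4$ costs nothing after we pass to the background derivatives $r\mathring{\nab}$, $r\mathring{\nab}_4$.

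\textbf{Away from the centre}, the formulas \eqref{eq:e4-ids} express $e_3$, $e_4$ algebraically in terms of $N$ and $\overline{N}$, where $\overline{N}$ is determined by $h$. Consequently $\alpha,\beta,\rho,\sigma,\betab$ on $\tilde{\Sigma}_1$ are contractions of $W$ with vectors that are $O(\mathfrak{n}(1+|x|)^{-\nu})$-close to their Minkowski values. The Ricci coefficients $\chi$, $\chib$ and $\eta$ restricted to $\Sigma$ are given by $k$ and the second fundamental form of the coordinate spheres in $(\Sigma,h)$, so their deviations from the Minkowski values are bounded by $\mathfrak{n}(1+|x|)^{-\nu-1}$ in the appropriate weighted $L^2$ spaces. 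Next we need $f$, $b$ and $\gamma$ together with their transversal derivatives, namely $\nab_3\alpha$ and the $\brk{u}\mathring{\nab}_u$ derivatives appearing in the norms. To obtain these, we write $\partial_u$ on $\tilde{\Sigma}_1$ in terms of $N$ and tangential derivatives, and use the Einstein equations through the null structure and Bianchi equations of \S\ref{sec:equations} (for example \eqref{eq:3alpha} and \eqref{eq:3chih}) to trade $e_3$-derivatives for tangential ones. Each such trade costs one derivative, which is why $s'>s_0+6$ is required. The metric quantities $f-1$ and $b$ come from integrating \eqref{gauge.con.metric} in $r$ from the centre, and this produces exactly the weight $r^{1^+}(\jb{u}+r)^{0^+}$. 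Because $\delta<\nu$, the gap in exponents absorbs all logarithmic losses and yields $\epsilon^2\lesssim\mathfrak{n}$.

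For the strong statement, the leading tails cannot simply be bounded by $\mathfrak{m}$ at the required rate, because the $M/|x|$ part of $h$ produces $\rho\sim M r^{-3}$, which is exactly the threshold of the norms in \eqref{eq:data.strong}. The plan here is to compare with the Kerr initial data of mass $M$ and angular momentum $J$, chosen in coordinates adapted to the tails \eqref{eq:AF.data.strong}; these are the data alluded to in Remark~\ref{rem:af-data}. We would compute the Newman--Unti quantities of Kerr exactly. For Kerr one has $\alpha=O(Mr^{-5})$ or better with correspondingly peeled components, using that the Kerr data have the same tails. We then treat the difference with the actual data, which lies in $H^{s'+1,\nu-\frac32}$ with $\nu>3-\delta$, as a perturbation exactly as in the weak case. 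This difference satisfies the required $5^-$, $4$ and $3$ decay by the same Sobolev and weighted argument.

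The \textbf{main obstacle} is the far region of this strong case. One must verify that the mass and angular-momentum tails contribute only to components at rates compatible with \eqref{eq:data.strong}; in particular $\alpha$ and $\beta$ of Kerr must decay like $r^{-5}$ and $r^{-4}$, and this cancellation must survive in the non-geodesic-adapted initial frame. I would first attempt to establish this by direct computation in Kerr--Schild form.
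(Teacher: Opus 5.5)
The overall plan is workable, and for the strong statement it is the paper's: compare with Kerr data whose tails are exactly those in \eqref{eq:AF.data.strong}, and check the $\alpha,\beta$ cancellations by explicit computation. You must do that computation on the Boyer--Lindquist $\{t=0\}$ slice in isotropic-type coordinates; a Kerr--Schild time slice has $k=O(M|x|^{-2})$ and does not match those tails.

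\textbf{The step that fails.} You propose to get $f-1$ and $b$ on $\tilde\Sigma_1$ by integrating \eqref{gauge.con.metric} in $r$ from the centre. Outside a small neighbourhood of the axis, the generators of $H_u$ with $u<1$ do not emanate from $\Gamma$. By the construction in \S\ref{sec:coords} they start on $\iota(\Sigma)$ with velocity $W=\ell(N+\iota_*\overline N)$, and $H_u\cap S$ never reaches $\{r=0\}$. The picture $f-1\sim r\jb{u}^{-1-\delta}$ obtained by integrating from the centre belongs to the bulk $u\geqslant 1$. On the data surface there is nothing to integrate from. The decay $|f-1|\lesssim\epsilon^2 r^{-\delta}$ demanded by $|f-1|_{s_0-2,1^+,0^+}(\tilde\Sigma_1)$, and the analogous decay for $b$, has to come from $h$ itself.

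\textbf{The fix.} Away from the axis the metric components on $\tilde\Sigma_1$ are algebraic in $h$. There $J_u(0)=W-\iota_*\partial_{\overline r}$ and $J_{\theta^A}(0)=\iota_*\partial_{\overline\theta^A}$, so
\[
\gamma_{AB}=h(\partial_{\overline\theta^A},\partial_{\overline\theta^B}),\qquad \gamma_{AB}b^B=h(\partial_{\overline r},\partial_{\overline\theta^A}),\qquad f=2-h(\partial_{\overline r},\partial_{\overline r})+\gamma_{AB}b^Ab^B .
\]
These decay like $h-\delta$. Their $\partial_r$ and $\partial_u$ derivatives then follow from \eqref{gauge.con.metric}, \eqref{chib.in.terms.of.gamma} and the structure equations, as you already do for the Ricci coefficients.

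\textbf{A second, smaller gap in the strong case.} You cannot literally rerun the weak argument on ``the difference'' from Kerr. The Newman--Unti quantities are nonlinear in the data. The Kerr tails are $O(M|x|^{-1})$, so they are not even finite in the $\nu>3-\delta$ weighted norm in which your weak argument measures distance from $\eta$. What you actually need is a Lipschitz bound $|Q(\mathfrak S)-Q(\mathfrak S_{\mathrm{Kerr}})|\lesssim\Delta(\mathfrak S,\mathfrak S_{\mathrm{Kerr}})$ with the correct weights. You also need the weak result on $\{|x|<R_0\}$ separately. In your framework the Lipschitz bound is available, because away from the axis every quantity on $\tilde\Sigma_1$ is a polynomial in finitely many jets of $(h,k)$ with homogeneous weights. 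But this has to be stated and checked.

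\textbf{How the paper does it.} The paper gets the Lipschitz bound and the weights in one stroke.
\begin{itemize}
\item It proves a comparison lemma (Lemma~\ref{thm:data.lemma1}) for two data sets on a fixed annulus, using only unweighted local existence.
\item It uses the homothety $\varphi_R(u,r,\theta)=(Ru+1-R,Rr,\theta)$, which preserves $\tilde\Sigma_1$ and maps $A_R$ to $A_1$.
\item On $A_1$ it compares with Minkowski (weak case) or with the uniformly bounded rescaled Kerr data (strong case).
\item It scales back: metric, Ricci and curvature norms pick up factors $R^{p_\infty}$, $R^{p_\infty-1}$ and $R^{p_\infty-2}$. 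It then sums over $R=2^k$ using $\nu>\delta$.
\end{itemize}
Scaling gives for free that $r\mathring{\nab}_4$ and $\brk{u}\mathring{\nab}_u$ cost no decay. Your weighted local well-posedness route must instead track this by hand through the Einstein equations. It is a legitimate alternative, but it requires more bookkeeping.
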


\begin{remark}[Cancellation properties of the leading order asymptotics in Definition~\ref{def:AF.data}] \label{rem:AF.data.cancel}
    The validity of the strong initial data assumptions \eqref{eq:data.alp.strong}, \eqref{eq:data.strong} and \eqref{eq:data.strong.int} (specifically the control on $\alpha$ and $\beta$) implies nontrivial cancellation properties of the leading-order spatial asymptotics prescribed in Definition~\ref{def:AF.data}.iii). Naively, a generic $r^{-1}$ term in the metric tensor would typically induce $r^{-3}$ terms in the curvature tensor, which would be inconsistent with the $p_{\infty}$ exponents for the norms of $\alpha$ and $\beta$ in these assumptions. Indeed, they reflect the fact that the initial data for $\beta$ is entirely devoid of terms below order $r^{-4}$, and also that $\alpha$ does not have terms below order $r^{-5^{-}}$.

    In Appendix~\ref{sec:appendix2}, we verify this cancellation by showing that the leading-order spatial asymptotics in Definition~\ref{def:AF.data} are exactly those of an initial data set for the Kerr metric. The problem then reduces to checking that \eqref{eq:data.alp.strong}, \eqref{eq:data.strong} and \eqref{eq:data.strong.int} hold for the exact Kerr metric when equipped with our Newman--Unti gauge and restricted outside a large ball, which can be done by explicit computation. We further note that if $J = 0$ (corresponding to the Schwarzschild metric), there are more cancellations: both $\beta$ and $\alpha$ completely vanish for the exact Schwarzschild metric equipped with our Newman--Unti gauge (adapted to symmetry spheres).
\end{remark}

\subsection{Bootstrap assumptions}
\subsubsection{Weak bootstrap assumptions}\label{sec:WBS}

 We say that the \emph{weak bootstrap assumptions} are satisfied if the region $S_T$ is equipped with a metric of the form \eqref{formofthemetric}, satisfying the following bounds on geometrical quantities for some $0<\epsilon\leqslant \epsilon_0$: 
\begin{enumerate}
    \item (Pointwise bounds)
    \begin{equation}\label{eq:BA.weak.pointwise}
    \begin{split}
        &|\alpha|_{s_0, (-1)^+, 2^+}+  |\hat{\chi}, \trch - 2r^{-1}|_{s_0, 0^+, 1^+}+ |\brk{u}^{0^{+}}(\gamma-\mathring{\gamma} )|_{s_0, 1^+, 0} + |\beta|_{s_0-1, (-1)^+, 2^+}\\
    &\qquad + |\eta|_{s_0-1, 0^+, 1^+} + |\rho, \sigma|_{s_0-2, (-1)^+, 2^+} + |\brk{u}^{0^{+}}\betab|_{s_0-3, (-1)^+, 2} + |\brk{u}^{0^{+}}(\chibh, \trchb + 2r^{-1})|_{s_0-2, 0^+, 1}\\
    & \qquad  
    	+ |\brk{u}^{0^{+}} (K - r^{-2}) |_{s_0-2, (-1)^+, 2} + |\brk{u}^{1^{+}}  b|_{s_0-1, 1^+, -1} + \abs{\jb{u}^{1^+}\omegab}_{s_0-2, 0^+, 0} 
    \\ &\qquad  
    + \abs{\jb{u}^{1^+}(f- 1)}_{s_0-2, 1^+, -1}
    + \abs{\jb{u}^{0^+}(f-2\omegab r - 1)}_{s_0-2, 1^+, 0}+ \abs{\jb{u}^{1^+}\xib}_{s_0-3, 0^+, 0} <\epsilon .  
    \end{split}
\end{equation}
    \item (Integrated bounds) 
    \begin{equation}\label{eq:BA.weak.integrated}
    \begin{split}
        &\nrm{\alpha}_{s_0+1, (-\frac{1}{2})^+, \frac{3}{2}^+} + \nrm{\brk{u}^{\frac{1}{2}} \beta}_{s_0+1, (-\frac{1}{2})^+, 1^+} \\
        &\qquad + \nrm{\chih, \trch - 2r^{-1}}_{s_0+1, \frac{1}{2}^+, \frac{1}{2}^{+}} + \nrm{\gamma - \mathring{\gamma}}_{s_0+1, \frac{3}{2}^+, (-\frac{1}{2})^{+}} + \nrm{\brk{u}^{\frac{1}{2}} \eta}_{s_0+1, \frac{1}{2}^+, 0^+}\\
    &\qquad + \|\brk{u}^{\frac{1}{2}}(\rho, \sigma)\|_{s_0, (-\frac{1}{2})^+, 1^+} + \|\brk{u}^{\frac{1}{2}} \betab\|_{s_0-1, (-\frac{1}{2})^+, 1^{+}}  + \|\brk{u}^{\frac{1}{2}}(\chibh,\trchb + 2r^{-1})\|_{s_0, \frac{1}{2}^+, 0^{+}} \\
    &\qquad + \|\brk{u}^{\frac{1}{2}} (K-r^{-2})\|_{s_0, (-\frac{1}{2})^+, 1^+}
    + \| \brk{u}^{\frac{1}{2}+\frac{3}{2}\delta} \omegab \|_{s_0,\frac{1}{2}^+ , -\frac{1}{2}\delta} 
    +\|\brk{u}^{\frac{1}{2}+\frac{3}{2}\delta} (f-1)\|_{s_0,\frac{3}{2}^+ , -1-\frac{1}{2}\delta} \\
    &\qquad
    + \|\brk{u}^{\frac{1}{2}}(f-2\omegab r - 1)\|_{s_0,\frac{3}{2}^+ , (-1)^+}
    +  \|\brk{u}^{\frac{1}{2}+\frac{3}{2}\delta}\xib \|_{s_0-1,\frac{1}{2}^+ , -\frac{1}{2}\delta} \\
    &\qquad + \left(\int_{u_T}^T(\jb{u}^{\frac{1}{2}^{+}})^{2}(\abs{\omb}^2_{s_0,\frac{1}{2}^+ , 0}(H_u) + \abs{f-1 }^2_{s_0,\frac{3}{2}^+ , -1}(H_u)) \ud u\right)^{\frac{1}{2}} <\epsilon.
    \end{split}
    \end{equation}
\end{enumerate}

\begin{remark}
In this case, the exponent $p_{0}$ depends only on the object type, matching the exponents used in the initial data assumptions for pointwise bounds and shifted by $\frac{1}{2}$ for integrated bounds. The exponent $p_{\infty}$ may vary between objects of the same type, but the total power of $\brk{u}$ and $(r + \brk{u})$ (represented by $p_{\infty}$) remains fixed for each object type, consistent with our previous assumptions.
\end{remark} 

\subsubsection{Strong bootstrap assumptions}\label{sec:BS}
We say the \emph{strong bootstrap assumptions} hold if the following improved pointwise bounds hold in addition to the weak assumptions:
\begin{equation}\label{eq:BA.strong.pointwise}
\begin{aligned}
    &|\alpha|_{s_0, (-1)^+, 5^{-}}+  |\jb{u_{+}}^{2^{-}}(\hat{\chi},\trch - 2r^{-1})|_{s_0, 0^+, 2} \\
    &\qquad +|\jb{u_{+}}^{1^{-}} \beta|_{s_0-1, (-1)^+, 4}+ |\jb{u_{+}}^{2^{-}}\eta|_{s_0-1, 0^+, 2} + |\jb{u_{+}}^{2^{-}}(\rho, \sigma)|_{s_0-2, (-1)^+, 3}  <\epsilon
\end{aligned}
\end{equation}
together with the following improved integrated bounds
\begin{equation}\label{eq:BA.strong.integrated}
    \begin{split}
        &\|\alpha\|_{s_0+1, (-\frac{1}{2})^+, \frac{9}{2}^{-}} + \|\beta\|_{s_0+1, (-\frac{1}{2})^+, \frac{7}{2}^-}
        +\|\chih,\trch - 2r^{-1}\|_{s_0+1, \frac{1}{2}^+, \frac{3}{2}^-} \\
        &\qquad + \|\eta\|_{s_0+1, \frac{1}{2}^+, \frac{3}{2}^-} 
        + \|\rho, \sigma\|_{s_0, (-\frac{1}{2})^+, \frac{5}{2}^-} + \|K - r^{-2}\|_{s_{0}, (-\frac{1}{2})^{+}, \frac{3}{2}^{+}} \\
    &\qquad 
    + \|\chibh, \trchb + 2r^{-1}\|_{s_0, \frac{1}{2}^+, \frac{1}{2}^{+}} + \|f-2\omegab r - 1\|_{s_0,\frac{3}{2}^+ , (-\frac{1}{2})^{+}}  <\epsilon.
    \end{split}
\end{equation}

\begin{remark} 
    Note that the $s$ and $p_{0}$ exponents are the same as the weak bootstrap case; the only difference lies in the $p_{\infty}$ exponents. All $p_{\infty}$ exponents in \eqref{eq:BA.strong.pointwise} are sharp in view of the structure of the transport equation the geometric quantities satisfy (dictated by $\lambda_{0}$ in Theorem~\ref{thm:transest} below), with the exception of $\alpha$ (for which $p_{\infty} = 5^{-}$ is $\delta$ away from the sharp value $5$). On the other hand, the values of $p_{\infty}$ in \eqref{eq:BA.strong.integrated} are mostly determined by the initial data assumptions, specifically, \eqref{eq:data.strong.int} for $\alpha, \beta, \chih, \trch-2r^{-1}, \eta, \rho, \sigma$ and \eqref{eq:data.weak.int} for $\chibh, \trchb + 2 r^{-1}, f - 2 \omegab r - 1$. For the remaining quantity $K - r^{-2}$, $p_{\infty}$ is determined through the others by the constraint equation \eqref{eq:constrho} (see also \eqref{eq:constK}).
\end{remark}

\subsection{Precise formulation of the main theorem}\label{sec:theorem}

We are now in a position to precisely state the main result of the paper.
\begin{theorem}\label{thm:main}
    Fix $s_0, \delta$ as above. Suppose $(\Sigma, h, k)$ is a smooth set of Cauchy data for the vacuum Einstein equations with $\Sigma \simeq \mathbb{R}^3$, with maximal future Cauchy development $(M, g, \iota)$. Choose Newman--Unti coordinates in a neighbourhood of $\iota(\Sigma)$ by the process described in \S\ref{sec:coords}. There exists $\epsilon_0=\epsilon_0(s_0, \delta)>0$ such that if that the resulting metric satisfies the weak initial data assumptions \eqref{eq:data.alp.weak}, \eqref{eq:data.weak} for $0<\epsilon<\epsilon_0$ the following conclusions hold.
    \begin{enumerate}[i)]
        \item $(M, g)$ is globally smooth and future complete and may be covered by a single Newman--Unti coordinate chart, regular away from the axis.
        \item $(M, g)$ is asymptotically flat in the sense that all curvature components with respect to the frame $(e_A, e_3, e_4)$ decay to zero along any inextendible future-directed causal curve.
        \item The metric components, the Ricci coefficients, and the curvature components satisfy the bounds \eqref{eq:BA.weak.pointwise}, \eqref{eq:BA.weak.integrated} globally, with $\epsilon$ replaced by $C\epsilon^2$ for some $C = C(s_0, \delta) >0$.
    \end{enumerate}
    Furthermore, there exists $\epsilon_0'=\epsilon_0(s_0, \delta)>0$ such that if the metric satisfies the strong initial data assumptions \eqref{eq:data.alp.weak}, \eqref{eq:data.weak}, \eqref{eq:data.alp.strong} and \eqref{eq:data.strong} for $0<\epsilon<\epsilon_0'$ then the bounds \eqref{eq:BA.strong.pointwise}, \eqref{eq:BA.strong.integrated} hold globally, again with $\epsilon$ replaced by $C\epsilon^2$. In particular, the spacetime exhibits the almost sharp Bondi--Sachs peeling property, in the sense that \eqref{eq:main-1:imp} holds.
\end{theorem}
We defer the proof until \S\ref{sec:proof}.

\section{Consequences of the bootstrap assumptions}\label{sec:BA.consequences}
\subsection{Equivalence of norms} \label{sec:equiv.norm}
Here we establish the equivalence of the norms defined in \S\ref{sec:geometry} (defined using background geometry) with their counterparts defined using the actual metric $g$. Let
\begin{align*}
|\phi|_{0, \gamma}(u,r) &= \left(\int_{S_{u,r}} \left| \phi \right|^2_{{\gamma}} \ud A_{\gamma} \right)^{\frac{1}{2}} \\
|\phi|_{1, \gamma}(u,r) &= \left(|r{\nab}_4 \phi|^2_{0,\gamma}(u,r) +|r{\nab} \phi|^2_{0, \gamma}(u,r) (u,r)+|\phi|^2_{0, \gamma}(u,r)  \right)^{\frac{1}{2}}, 
\end{align*}
and for $s \in \mathbb Z_{>1}$
\begin{equation}
   |\phi|_{s+1, \gamma}(u,r) = \left(|r{\nab}_4 \phi|^2_{s,\gamma}(u,r) +|r{\nab} \phi|^2_{s, \gamma}(u,r) (u,r)+|\phi|^2_{s, \gamma}(u,r) +  |\brk{u} \nab_u\phi|^2_{s-1, \gamma}(u,r)  \right)^{\frac{1}{2}}. \label{ norm2}
\end{equation}
Note that the \emph{geometric norm} $\abs{\phi}_{s, \gamma}$ controls $(r\nab_4)^i (r\nab)^j (\brk{u} \nab_u)^{k}$ acting on $\phi$, with the derivatives taken in any order, as long as $i+j+2 k\leqslant s$. For measurable $R \subset S_{T}$, we also introduce the norms 
\begin{align*}
	| \phi |_{s, p_{0}, p_{\infty}, \gamma}(R) &= \sup_{(u, r) \in R} \frac{(\brk{u} + r)^{p_{\infty} + p_{0}}}{r^{p_{0} + 1}}  \abs{\phi}_{s, \gamma}(u, r), \\
	\| \phi \|_{s, p_{0}, p_{\infty}, \gamma}(R) &= \left(\int_{R} \abs{\phi}_{s, \gamma}^{2}(u, r) \frac{(\brk{u} + r)^{2 p_{\infty} + 2 p_{0}}}{r^{3 + 2 p_{0}}} \ud u \ud r \right)^{\frac{1}{2}}.
\end{align*}

For this section we shall raise and lower indices explicitly, to avoid confusion over which metric is being used. Let $\delta\gamma_{AB} = \gamma_{AB} - \mathring\gamma_{AB}$ be the difference between the true metric and the reference metric, and assume the bootstrap assumptions hold. Since $|r^{-1}\mathring{\gamma}|_s = (8\pi)^{\f 12}$, 
this implies that  $|r^{-1}\gamma|_{s} < (8\pi)^{\f 12} + C\epsilon$. 
Let $G_A{}^B: =\delta \gamma_{AC}(\mathring{\gamma}^{-1})^{CB}$. We note that Sobolev embedding gives
\[
\abs{G}_{\ring{\gamma}} = \abs{\delta \gamma}_{\ring{\gamma}} <c_S |\delta \gamma|_{2, 0, 0} < c_S \epsilon_0.
\]
For some universal constant $c_S$. It follows from a Neumann series expansion  that for $\epsilon_0$ sufficiently small $I + G$ is invertible and $(I+G)^{-1} = I + \tilde{G}$ with $\abs{\tilde{G}}_{\ring{\gamma}} \leqslant 2 \abs{G}_{\ring{\gamma}}$. Observing that
\[
\gamma_{AB} =  \left(\delta_A{}^C + G_A{}^C \right)\mathring{\gamma}_{CB}
\]
we deduce that $\gamma_{AB}$ is invertible, with
\[
(\gamma^{-1})^{AB} = (\mathring{\gamma}^{-1})^{AC}\left(\delta_C{}^B + \tilde{G}_C{}^B \right) = (\mathring{\gamma}^{-1})^{AB} +\delta({\gamma}^{-1})^{AB}, 
\]
where $\delta({\gamma}^{-1})^{AB} = (\mathring{\gamma}^{-1})^{AC} \tilde{G}_C{}^B$ satisfies
\[
\abs{\delta({\gamma}^{-1})}_{\mathring{\gamma}} <2 c_S \epsilon_0.
\]
Moreover, we have that
\[
\det \gamma = \det \mathring{\gamma} \det(I + G)
\]
and, since the determinant is locally Lipshitz continuous about $I$, for $\epsilon_0$ sufficiently small there exists $c$ such that
\[
\abs{\det(I + G)-1}<c\abs{G}_{\ring{\gamma}} < c c_S \epsilon_0.
\]

Combining the observations above, we have
\begin{lemma}\label{lem:gammacont}
    For $\epsilon_0$ sufficiently small,
    and any $k, l=0, 1, 2,\ldots$ there exists a constant $c = c( k, l)$ such that 
    \[
   c^{-1} |\phi|_{0, \gamma}(u,r) \leqslant |\phi|_0(u,r) \leqslant c |\phi|_{0, \gamma}(u,r)
    \]
    holds for all $S$-tangent $(k,l)-$tensors $\phi$ and all $(u,r) \in S_T$.
\end{lemma}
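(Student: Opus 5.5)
The plan is to compare the two ingredients of $|\phi|_{0,\gamma}$ and $|\phi|_0$ separately: the pointwise tensor norm and the area element. The preceding discussion already gives both comparisons pointwise. Integrating over $S_{u,r}$ then yields the claim.

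\emph{Area element.} Write $\ud A_\gamma = \sqrt{\det(I+G)}\,\ud A_{\mathring\gamma}$, using $\det\gamma=\det\mathring\gamma\det(I+G)$. The bound $|\det(I+G)-1|<cc_S\epsilon_0$ shows that, for $\epsilon_0$ small, the density $\sqrt{\det(I+G)}$ lies in $[\tfrac12,2]$ uniformly in $(u,r,\theta)$.

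\emph{Pointwise norms.} For an $S$-tangent $(k,l)$-tensor,
\[
|\phi|_\gamma^2=\gamma^{A_1B_1}\cdots\gamma_{C_1D_1}\cdots\phi_{A_1\ldots}{}^{C_1\ldots}\phi_{B_1\ldots}{}^{D_1\ldots}.
\]
Here $\gamma_{AB}=(\delta+G)_A{}^C\mathring\gamma_{CB}$ and $\gamma^{AB}=\mathring\gamma^{AC}(\delta+\tilde G)_C{}^B$. It therefore suffices to show that, as quadratic forms, both $\gamma$ and $\gamma^{-1}$ are bounded above and below by constant multiples of $\mathring\gamma$ and $\mathring\gamma^{-1}$ respectively, with constants depending on nothing.
\begin{itemize}
\item \textbf{Covariant slots.} Let $X$ be a vector with $|X|_{\mathring\gamma}=1$. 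Then $|\gamma(X,X)-1|=|G_A{}^C X^A X_C|\le |G|_{\mathring\gamma}\le c_S\epsilon_0$.
\item \textbf{Contravariant slots.} The same argument with $\tilde G$ gives the bound $|\tilde G|_{\mathring\gamma}\le 2c_S\epsilon_0$.
\end{itemize}
So for $\epsilon_0$ small each metric factor is comparable to the round one with constants in $[\tfrac12,2]$. Taking tensor products of comparable positive definite forms, a standard fact applied slot by slot (e.g.\ diagonalize simultaneously at each point), gives
\[
2^{-(k+l)}|\phi|^2_{\mathring\gamma}\le|\phi|^2_\gamma\le 2^{k+l}|\phi|^2_{\mathring\gamma}.
\]

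\emph{Conclusion and main obstacle.} Combining the two comparisons and integrating over $S_{u,r}$ gives the lemma with $c=c(k,l)$, for example $c=2^{(k+l+1)/2}$.

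The only genuine input is the uniform smallness of $|G|_{\mathring\gamma}$ pointwise. This comes from Sobolev embedding on the round sphere (Lemma~\ref{lem:products}) applied to $|\gamma-\mathring\gamma|_{2,0,0}$, which is controlled by the bootstrap bound on $|\brk{u}^{0^+}(\gamma-\mathring\gamma)|_{s_0,1^+,0}$ since $\frac{(\brk u+r)^{1^+}}{r^{2^+}}\ge r^{-1}$. I expect checking this weight bookkeeping to be the only point needing care, and it is routine.
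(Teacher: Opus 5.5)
Your proposal is correct and is essentially the paper's argument. The paper obtains the lemma by combining the same observations you use: $\gamma=(I+G)\mathring{\gamma}$ and $\gamma^{-1}=\mathring{\gamma}^{-1}(I+\tilde{G})$ with $|G|_{\mathring{\gamma}},|\tilde{G}|_{\mathring{\gamma}}\lesssim c_S\epsilon_0$ (from Sobolev embedding on the round sphere), together with $|\det(I+G)-1|\lesssim\epsilon_0$ for the area elements, while your version just makes explicit the tensor-product comparison of pointwise norms and the weight bookkeeping for $|\gamma-\mathring{\gamma}|_{2,0,0}$.
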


Now we wish to consider the equivalence of the higher order norms associated to $\gamma$ and $\mathring{\gamma}$. A short computation establishes that
\[
\delta\slashed{\Gamma}^C_{BA}:= {\slashed{\Gamma}}^C_{BA}-\mathring{\slashed{\Gamma}}^C_{BA} = \frac{1}{2}(\gamma^{-1})^{CD}\left(\mathring{\nab}_{B}\delta\gamma_{DA} +\mathring{\nab}_{A}\delta\gamma_{DB}- \mathring{\nab}_{D}\delta\gamma_{AB}\right)
\]
and
\begin{align*}
\nab_A \phi^{B_1\ldots B_k}{}_{C_1 \ldots C_l} &= \mathring{\nab}_A \phi^{B_1\ldots B_k}{}_{C_1 \ldots C_l} + \delta\slashed{\Gamma}^{B_1}_{AD}\phi^{DB_2\ldots B_k}{}_{C_1 \ldots C_l} + \ldots+ \delta\slashed{\Gamma}^{B_k}_{AD}\phi^{B_1\ldots B_{k-1}D}{}_{C_1 \ldots C_l} \\ & \qquad - \delta\slashed{\Gamma}^{D}_{AC_1}\phi^{B_1\ldots B_k}{}_{DC_2 \ldots C_l} - \ldots- \delta\slashed{\Gamma}^{D}_{AC_l}\phi^{B_1\ldots B_{k}}{}_{C_1 \ldots C_{l-1}D}.
\end{align*}
Moreover, we have
\begin{align*}
\nab_4 \phi^{B_1\ldots B_k}{}_{C_1 \ldots C_l}& = \mathring{\nab_4}  \phi^{B_1\ldots B_k}{}_{C_1 \ldots C_l} - \chih^D{}_{C_1} \phi^{B_1\ldots B_k}{}_{DC_2 \ldots C_l} -\ldots - \chih^D{}_{C_l} \phi^{B_1\ldots B_k}{}_{C_1 \ldots C_{l-1}D} \\&\qquad + \chih^{B_1}{}_{D} \phi^{DB_2\ldots B_k}{}_{C_1 \ldots C_l} + \cdots+  \chih^{B_k}{}_{D} \phi^{B_1\ldots B_{k-1}D}{}_{C_1 \ldots C_l} + \frac{k-l}{2}\left(\trch - \frac{2}{r}\right) \phi^{B_1\ldots B_k}{}_{C_1 \ldots C_l}
\end{align*}
and
\begin{align*}
2\nab_u \phi^{B_1\ldots B_k}{}_{C_1 \ldots C_l}& = 2\mathring{\nab}_u  \phi^{B_1\ldots B_k}{}_{C_1 \ldots C_l} + 2 b^A \mathring{\nab}_A \phi^{B_1\ldots B_k}{}_{C_1 \ldots C_l}  \\ &\qquad - (\chibh^D{}_{C_1}+f \chih^D{}_{C_1} - 2\mathring{\nab}_{C_1} b^D) \phi^{B_1\ldots B_k}{}_{DC_2 \ldots C_l} -\\&\qquad  \ldots - (\chibh^D{}_{C_l} + f\chih^D{}_{C_l}- 2\mathring{\nab}_{C_l} b^D) \phi^{B_1\ldots B_k}{}_{C_1 \ldots C_{l-1}D}+ \\&\qquad + (\chibh^{B_1}{}_{D} +f\chih^{B_1}{}_{D}- 2\mathring{\nab}_{B} b^{B_1}) \phi^{DB_2\ldots B_k}{}_{C_1 \ldots C_l} +\\&\qquad \cdots+  (\chibh^{B_k}{}_{D}+f\chih^{B_k}{}_{D}- 2\mathring{\nab}_{B} b^{B_k}) \phi^{B_1\ldots B_{k-1}D}{}_{C_1 \ldots C_l}\\&\qquad
+ \frac{k-l}{2}\left(\trch - \frac{2}{r} + \trchb + \frac{2}{r}+ (f-1)\trch\right) \phi^{B_1\ldots B_k}{}_{C_1 \ldots C_l}.
\end{align*}
\begin{lemma} \label{Hs equivalence}
    Suppose that the weak bootstrap assumptions hold and that $\epsilon_0$ is sufficiently small that the hypothesis of Lemma \ref{lem:gammacont} is satisfied. Then, taking $\ep_0$ smaller if necessary, for any $k, l=0, 1, \ldots$ there exists a constant $c = c(s_0, k, l)>0$ such that for $0 \leqslant s \leqslant s_{0}$, we have
    \begin{equation} \label{eq:equivalence-pointwise}
	c^{-1} |\phi|_{s, \gamma}(u,r) \leqslant |\phi|_{s}(u,r) \leqslant c |\phi|_{s, \gamma}(u,r)
\end{equation}
    for all sufficiently regular $S$-tangent $(k,l)-$tensors $\phi$. 

For $s = s_{0}+1$, we have 
\begin{equation} \label{eq:equivalence-integrated}
\begin{aligned}
	& c^{-1} ( \| (r \nab_{4} \phi, r \nab \phi, \phi) \|_{s_{0}, p_{0}, p_{\infty}, \gamma}(R) + \epsilon_{0} | \phi |_{s_{0}-2, p_{0}^{-} - \frac{3}{2}, p_{\infty}^{-} + \frac{1}{2}, \gamma}(R) ) \\ & \qquad \leqslant \| \phi \|_{s_{0}+1, p_{0}, p_{\infty}}(R) + \epsilon_{0} | \phi |_{s_{0}-2, p_{0}^{-} - \frac{3}{2}, p_{\infty}^{-} + \frac{1}{2}} (R)   \\
	& \qquad \qquad  \leqslant c ( \| (r \nab_{4} \phi, r \nab \phi, \phi) \|_{s_{0}, p_{0}, p_{\infty}, \gamma}(R) + \epsilon_{0} | \phi |_{s_{0}-2, p_{0}^{-} - \frac{3}{2}, p_{\infty}^{-} + \frac{1}{2}, \gamma}(R) ).
\end{aligned}
\end{equation}
\end{lemma}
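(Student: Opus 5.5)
We argue by induction on $s$, using the explicit difference formulae displayed just before the statement, which express $\nab_A$, $\nab_4$, $\nab_u$ as $\mathring{\nab}_A$, $\mathring{\nab}_4$, $\mathring{\nab}_u$ plus zeroth-order terms. The zeroth-order coefficients are built from $\delta\slashed{\Gamma}$ (so from $\gamma^{-1}$ and $r\mathring{\nab}\delta\gamma$), $r\chih$, $r(\trch-2r^{-1})$, and, for $\brk{u}\nab_u$, from $\brk{u}b\cdot r\mathring{\nab}$, $\brk{u}(\chibh, \trchb+2r^{-1}, \mathring{\nab}b)$ and $\brk{u}(f-1)\trch$. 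The case $s=0$ is Lemma~\ref{lem:gammacont}. For the inductive step, the weighted operators satisfy
\[
r\nab_4\phi = r\mathring{\nab}_4\phi + \mathcal{E}_4\cdot\phi,\qquad r\nab\phi = r\mathring{\nab}\phi+\mathcal{E}_A\cdot\phi,
\]
\[
\brk{u}\nab_u\phi = \brk{u}\mathring{\nab}_u\phi + \brk{u}b\cdot r^{-1}(r\mathring{\nab}\phi)+\mathcal{E}_u\cdot\phi.
\]
Applying the lower-order equivalence to the left-hand sides and to the error terms reduces the problem to bounding $|\mathcal{E}\cdot\phi|_{s}$ by $\epsilon_0|\phi|_{s}$, or, if one derivative lands on $\phi$, by $\epsilon_0|\phi|_{s+1}$. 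The smallness is then absorbed into the left-hand side.

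\textbf{Verifying smallness of the coefficients.} This follows from Lemma~\ref{lem:products} and the bootstrap assumptions \eqref{eq:BA.weak.pointwise}. For instance, $|r\chih|_{s_0,0,0}\lesssim |\chih|_{s_0,0^+,1^+}\le\epsilon$, since the weight $\frac{(\brk{u}+r)^{1^+ +0^+}}{r^{0^++1}}\gtrsim r^{-1}\cdot r\ldots$ dominates. Likewise $|\brk{u}\chibh|$ is controlled by $|\brk{u}^{0^+}\chibh|_{s_0-2,0^+,1}$, since $p_\infty=1$ supplies the factor $\brk{u}$. Finally, $\brk{u}b/r$ is controlled by $|\brk{u}^{1^+}b|_{s_0-1,1^+,-1}$, and $\brk{u}(f-1)$ similarly. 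The key point is that the coefficient $\brk{u}$ multiplying the $\nab_u$ corrections is compensated exactly by the $u$-decay encoded in the bootstrap, even though $b$ and $f-1$ grow like $r$. We then use the product estimate \eqref{eq:product.for.pointwise}. Placing the coefficient in the $L^\infty$-type slot requires it to lie in $|\cdot|_{s'}$ with $s'\ge$ the order at hand plus two.

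\textbf{Main obstacle: derivative counting.} The difficulty is that the coefficients themselves have limited regularity: $\chibh$, $\omegab$, $f$ only up to $s_0-2$, and $\xib$ only $s_0-3$ pointwise. However, $\nab_u$ enters the norm only at order $s-2$ via $|\brk{u}\nab_u\phi|_{s-2}$, and the needed regularity of the $\nab_u$-coefficients is then $s-2\le s_0-2$. When derivatives fall mostly on the coefficient rather than on $\phi$, we instead put $\phi$ in the $L^\infty$ slot via the symmetric version of Lemma~\ref{lem:products}. For $\delta\slashed\Gamma$ and $\chi$, which have $s_0$ derivatives, $|r\mathring{\nab}\delta\gamma|_{s_0-1}$ suffices up to $s=s_0$. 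This closes \eqref{eq:equivalence-pointwise}.

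\textbf{Top order $s_0+1$.} Here the pointwise bootstrap no longer suffices, because the top derivative can fall on the coefficients ($\chih$, $\gamma-\mathring{\gamma}$, $\chibh$, $b$, \ldots). We use the integrated bounds \eqref{eq:BA.weak.integrated} together with \eqref{eq:product.for.integrated}. The terms where all $s_0+1$ derivatives hit a coefficient are bounded by the coefficient's integrated norm times $|\phi|_{s_0-2}$ pointwise. The weights $p_0^- -\frac32$, $p_\infty^-+\frac12$ are chosen precisely to absorb the shift between the coefficients' integrated weights (e.g.\ $\frac12^+,\frac12^+$ for $\chih$ and $\frac32^+,(-\frac12)^+$ for $\delta\gamma$) and the target weight. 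This explains the extra $\epsilon_0|\phi|_{s_0-2,\ldots}$ term on both sides of \eqref{eq:equivalence-integrated}, whose pointwise parts are equivalent by the first step. All other terms are handled as before.
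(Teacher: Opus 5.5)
Your treatment of the pointwise equivalence \eqref{eq:equivalence-pointwise} is essentially the paper's argument. You induct from Lemma~\ref{lem:gammacont}, use the explicit difference formulae, get smallness of $r\chih$, $r(\trch-2r^{-1})$, $r\,\delta\slashed{\Gamma}$, $\brk{u}b/r$, $\brk{u}\chibh$, $\brk{u}(f-1)/r$ from \eqref{eq:BA.weak.pointwise}, and absorb. The derivative count closes because $\nab_u$ only enters at order $s-2$. (Your remark that the coefficient must carry ``order plus two'' derivatives misreads Lemma~\ref{lem:products}: the coefficient only needs at least the order at hand and at least $2$. Your final count is nevertheless the right one.)

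The top-order step, however, has a genuine gap. You never use that $s_0$ is even, and you propose to control top derivatives falling on $\chibh$ and $b$ through \eqref{eq:BA.weak.integrated}. But \eqref{eq:BA.weak.integrated} contains no bound on $b$, and the pointwise bound on $b$ stops at order $s_0-1$. Comparing $\brk{u}\nab_u\phi$ with $\brk{u}\mathring{\nab}_u\phi$ at order $s_0-1$, which $\|\phi\|_{s_0+1}$ contains by definition, produces the term $\brk{u}(\mathring{\nab} b)\,\phi$. That term requires $s_0$ derivatives of $b$, so the comparison cannot be closed under the stated hypotheses. Relatedly, you never explain how $\|\brk{u}\mathring{\nab}_u\phi\|_{s_0-1}$ is controlled by the geometric side of \eqref{eq:equivalence-integrated}, which has no $\nab_u$ term at all.

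The missing observation is \eqref{est:evennorm}. Since $s_0$ is even, any pure term $(\brk{u}\mathring{\nab}_u)^k\phi$ with $2k\leqslant s_0+1$ already has $2k\leqslant s_0$. All mixed terms are $r\mathring{\nab}_4$ or $r\mathring{\nab}$ applied to something of order at most $s_0$. Hence $|\phi|_{s_0+1}\sim|r\mathring{\nab}_4\phi|_{s_0}+|r\mathring{\nab}\phi|_{s_0}+|\phi|_{s_0}$. At top order you then only need two things: apply the level-$s_0$ pointwise equivalence to $r\nab_4\phi$, $r\nab\phi$, $\phi$; and bound $\|r(\nab_4-\mathring{\nab}_4)\phi\|_{s_0,p_0,p_\infty}$ and $\|r(\nab-\mathring{\nab})\phi\|_{s_0,p_0,p_\infty}$ via \eqref{eq:product.for.integrated}. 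That uses only the integrated bounds on $\chih$, $\trch-2r^{-1}$ and $\gamma-\mathring{\gamma}$ at order $s_0+1$; $\chibh$, $b$ and $f$ never enter at top order. With this change, your weight bookkeeping for $\chih$ and $\delta\gamma$ matches the paper's.
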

\begin{proof}
  We begin with the first equivalence. We work by induction on $s$, and drop the explicit dependence on $(u,r)$ for notational convenience. We make free use of the Cauchy--Schwarz inequality and bounds on the metric already derived, and all estimates below have implicit constants that are permitted to depend on $s, s_0, k, l$. The case $s=0$ is already established in Lemma~\ref{lem:gammacont}. Suppose that the result holds for $s\leqslant s_0-1$. By the Sobolev product formula we have
    \begin{align*}
        |r(\mathring{\nab}_4 - \nab_4)\phi|_{s} \lesssim (|\chih|_{s_0-1, -1, 1} +|\trch - 2r^{-1}|_{s_0-1, -1, 1} ) |\phi|_{s}\lesssim \epsilon_0 |\phi|_{s}
    \end{align*}
    and similarly,
    \begin{align*}
        |r(\mathring{\nab} - \nab)\phi|_{s} \lesssim  |\delta\gamma|_{s_0, 0, 0}  |\phi|_{s} \lesssim \epsilon_0 |\phi|_{s},
    \end{align*}
    which we note is the estimate that necessitates $s \leqslant s_{0}$ for the pointwise norms.
  It immediately follows that for $\epsilon_0$ sufficiently small there exists a $c>0$ such that
    \begin{align*}
    &c^{-1}(|r{\nab}_4\phi|^2_{0, \gamma} + |r{\nab}\phi|^2_{0, \gamma} + |\phi|^2_{0, \gamma}) \\&\qquad \leqslant |r\mathring{\nab}_4\phi|_{0}^2 + |r\mathring{\nab}\phi|_{0}^2 + |\phi|_{0}^2\\&\qquad \qquad \leqslant c(|r{\nab}_4\phi|_{0, \gamma}^2 + |r{\nab}\phi|_{0, \gamma}^2 + |\phi|_{0, \gamma}^2)
    \end{align*}
    which implies the result holds for $s=1$. For $1<s\leqslant s_0-1$ we further observe that
     \begin{align*}
        |\brk{u}(\mathring{\nab}_u - \nab_u)\phi|_{s-1} &\lesssim (|\brk{u} b|_{s_0-1, 1, -1} +|\chih|_{s_{0}-2, 0, 1} +|\brk{u}(f-1)|_{s_{0}-2, 1, -1}+|\chibh|_{s_{0}-2, 0, 1}\\&\qquad +|\trch - 2r^{-1}|_{s_0-2, 0, 1} +|\trchb + 2r^{-1}|_{s_{0}-2, 0, 1} ) |\phi|_{s}\lesssim \epsilon_0 |\phi|_{s}
    \end{align*}
    so that by the induction hypothesis we deduce that for $\epsilon_0$ sufficiently small there exists $c>0$ such that 
    \begin{align*}
    &c^{-1}(|r{\nab}_4\phi|^2_{s, \gamma} + |r{\nab}\phi|^2_{s, \gamma} + |\phi|^2_{s, \gamma}+ |\brk{u} \nab_u\phi|^2_{s-1, \gamma}) \\&\qquad \leqslant |r\mathring{\nab}_4\phi|_{s}^2 + |r\mathring{\nab}\phi|_{s}^2 + |\phi|_{s}^2 + |\brk{u} \mathring{\nab}_u\phi|_{s-1}^2 \\&\qquad \qquad \leqslant c(|r{\nab}_4\phi|^2_{s, \gamma} + |r{\nab}\phi|^2_{s, \gamma} + |\phi|^2_{s, \gamma}+ |\brk{u} \nab_u\phi|^2_{s-1, \gamma}).
    \end{align*}

Next, we turn to the proof of \eqref{eq:equivalence-integrated}. Repeating the previous argument but using the product formula of Lemma~\ref{lem:products}, we obtain
\begin{align*}
	\| r(\mathring{\nab}_4 - \nab_4)\phi \|_{s_{0}, p_{0}, p_{\infty}} &\lesssim |(\chih, \trch - 2r^{-1})|_{s_0-3, -1, 1} \|\phi\|_{s_{0}, p_{0}, p_{\infty}} \\
	&\qquad + \| (\chih, \trch - 2r^{-1}) \|_{s_0, \frac{1}{2}^{+}, \frac{1}{2}^{+}} | \phi |_{s_{0}-3, p_{0}^{-} - \frac{3}{2}, p_{\infty}^{-}+\frac{1}{2}} \\
	&\lesssim \epsilon_0 ( \|\phi\|_{s_{0}, p_{0}, p_{\infty}} + | \phi |_{s_{0}-3, p_{0}^{-} - \frac{3}{2}, p_{\infty}^{-}+\frac{1}{2}} ),
\end{align*}
and similarly
\begin{align*}
	\| r(\mathring{\nab} - \nab) \phi \|_{s_{0}, p_{0}, p_{\infty}} &\lesssim |\delta \gamma|_{s_0-2, 0, 0} \|\phi\|_{s_{0}, p_{0}, p_{\infty}} 
	+ \| \delta \gamma \|_{s_{0}+1, \frac{3}{2}^{+}, (-\frac{1}{2})^{+}} | \phi |_{s_{0}-3, p_{0}^{-} - \frac{3}{2}, p_{\infty}^{-}+\frac{1}{2}} \\
	&\lesssim \epsilon_0 ( \|\phi\|_{s_{0}, p_{0}, p_{\infty}} + | \phi |_{s_{0}-3, p_{0}^{-} - \frac{3}{2}, p_{\infty}^{-}+\frac{1}{2}} ).
\end{align*}
We do not need to estimate $\brk{u} (\mathring{\nab}_{u} - \nab_{u})$, since $s_{0}$ is even. The desired conclusion now follows. \qedhere
\end{proof}

As long as it is applicable (i.e., for $s \leqslant s_{0}$), we shall freely make use of \eqref{eq:equivalence-pointwise} to recast estimates involving norms constructed from $\gamma$ into estimates involving norms constructed from $\mathring{\gamma}$ and vice versa without making explicit notice of the fact. 
\subsection{Commutator estimates}
We combine the bootstrap assumptions with the commutation formulae, Lemma~\ref{lem:CK.733}, and record the results.
\begin{theorem}\label{thm:comest}
    Let $\phi$ be a suitably regular $S$-tangent tensor field. Under the weak pointwise bootstrap assumptions \eqref{eq:BA.weak.pointwise}, we have the following pointwise estimates for $s\leqslant s_0-1$:
    \begin{align}
         \left  |[\nab_4, r\nab_A]\phi \right |_{s}(u,r) &\lesssim \frac{\epsilon }{(\jb{u}+r)^{1^+}} |\phi|_{s+1}(u,r), \label{4Acommest}\\
          \left  | [ \nab_u, r\nab_4]\phi \right |_{s-1}(u,r)  &\lesssim \frac{\epsilon }{(\jb{u}+r)^{1^+}} |\phi|_{s}(u,r), \label{u4commest} \\
       \left  |[ \nab_3, r\nab_4]\phi + \nab_4 \phi\right |_{s-1}(u,r) &\lesssim \frac{\epsilon }{(\jb{u}+r) \jb{u}^{0+}} |\phi|_{s}(u,r), \label{34comest} \\
        \left  | [ r\nab_A, r\nab_B]\phi \right |_{s-1}(u,r) &\lesssim |\phi|_{s-1}(u,r), \label{ABcomest}\\
       \left  | [ \nab_3, r\nab_A]\phi \right |_{s-2}(u,r) &\lesssim \frac{\epsilon}{\brk{u}^{1+}} |\phi|_{s-1}(u,r),  \\
       \left  | [ \nab_u, r\nab_A]\phi \right |_{s-2}(u,r) &\lesssim \frac{\epsilon}{\brk{u}^{1+}} |\phi|_{s-1}(u,r). \label{uAcomest}
    \end{align}
    Under the strong bootstrap assumptions, we have the improved estimates
    \begin{align}
         \left  |[\nab_4, r\nab_A]\phi \right |_{s}(u,r) &\lesssim \frac{\epsilon }{(\jb{u}+r)^{2}} |\phi|_{s+1}(u,r), \label{4AcommestStr}\\
          \left  | [ \nab_u, r\nab_4]\phi \right |_{s-1}(u,r)  &\lesssim \frac{\epsilon }{(\jb{u}+r)^{2}} |\phi|_{s}(u,r). \label{u4commestStr}
    \end{align}
\end{theorem}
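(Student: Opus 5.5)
The plan is to compute each commutator exactly from Lemma~\ref{lem:CK.733}, using $\nab_4 r = 1$, $\nab_A r = \nab_u r = 0$ and $\nab_3 r = -f$. This expresses each commutator as a finite sum of products of Ricci coefficients or curvature components with either $\phi$ or one derivative of $\phi$.

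The resulting identities are:
\begin{itemize}
\item For $[\nab_4, r\nab_A]$:
\[
[\nab_4, r\nab_A]\phi = \nab_A\phi + r[\nab_4,\nab_A]\phi .
\]
By \eqref{eq:CK.nab4.nab} the term $r[\nab_4,\nab_A]\phi$ contains $-r\chi_A{}^C\nab_C\phi = -\tfrac{1}{2} r\trch\,\nab_A\phi - r\chih_A{}^C\nab_C\phi$. The leading part $-\tfrac{1}{2} r\trch\,\nab_A\phi$ cancels $\nab_A\phi$ up to $-\tfrac{1}{2}(\trch - 2r^{-1})\, r\nab_A\phi$. The remainder is $\chih\cdot r\nab\phi$, together with zeroth-order terms $r(\chi\eta + \beta)\phi$.
\item For $[\nab_3, r\nab_4]$: we have $[\nab_3, r\nab_4]\phi = -f\nab_4\phi + r[\nab_3,\nab_4]\phi$. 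Since $-f\nab_4\phi = -\nab_4\phi - (f-1)\nab_4\phi$, the combination $+\nab_4\phi$ in \eqref{34comest} removes the leading term. By \eqref{eq:CK.nab3.nab4} what is left is $(f-1)\nab_4\phi + 2r\omb\nab_4\phi + 4r\eta\cdot\nab\phi + 2r\sigma\phi$.
\item For $[\nab_u, r\nab_4] = \tfrac{1}{2}[\nab_3 + f\nab_4, r\nab_4]$: the term $-\tfrac{1}{2} f\nab_4\phi$ from $\nab_3$ is cancelled by $\tfrac{1}{2} f\nab_4\phi$ from $[f\nab_4, r\nab_4]$. The remaining error is $-\tfrac{1}{2} r(\rd_r f)\nab_4\phi = -r\omb\nab_4\phi$, together with the $\eta$ and $\sigma$ terms.
\item The $[r\nab_A, r\nab_B]$ commutator is given by \eqref{eq:CK.nabB.nabC}, contributing $r^2 K\phi$ with $r^2K = 1 + O(\epsilon)$. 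No smallness is needed there.
\item The $[\nab_3, r\nab_A]$ and $[\nab_u, r\nab_A]$ commutators come from \eqref{eq:CK.nab3.nab}, with the $\nab_3 r$ term absent.
\end{itemize}

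The factors are then bounded in the $s$-norms using the product estimate \eqref{eq:product.for.pointwise}, Lemma~\ref{lem:products}, and norm equivalence (Lemma~\ref{Hs equivalence}). The pointwise sup bounds come from \eqref{eq:BA.weak.pointwise}, with $p_0\ge0$ so that the $r$-weights at the centre are harmless:
\begin{itemize}
\item $\chih$, $\trch - 2r^{-1}$ and $\eta$ are each $\lesssim \epsilon(\jb{u}+r)^{-1^+}$. The $\eta$ bound is only available at order $s_0-1$, which is why $s\le s_0-1$.
\item $r\beta$ and $r\sigma$ are $\lesssim \epsilon r^{-1}\cdot r^{0^+}(\jb{u}+r)^{-1^+}$ times $r/r$.
\item $r\omb$ is $\lesssim \epsilon r\jb{u}^{-1^-}$, which after absorbing the $r$ into $r\nab_4$ gives $\epsilon\jb{u}^{-1^+}\cdot (\jb{u}+r)^{-1}\cdots$.
\end{itemize}
A single derivative $\nab_4\phi$ or $\nab\phi$ is written as $r^{-1}$ times $r\nab_4\phi$ or $r\nab\phi$, and contributes $|\phi|_{s+1}$ or $|\phi|_s$. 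In each case the lost index matches the lower index on the left-hand side.

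The \textbf{main obstacle} is \eqref{34comest} and \eqref{u4commest}, because of the growing metric terms $(f-1)$ and $r\omb$. For \eqref{34comest}, I would bound $r^{-1}(f-1)$ and $\omb$ by $\epsilon\jb{u}^{-1^-}$ via the $|\jb{u}^{1^+}(f-1)|_{s_0-2,1^+,-1}$ and $|\jb{u}^{1^+}\omb|_{s_0-2,0^+,0}$ bounds. This yields $\epsilon r\jb{u}^{-1}\cdot r^{-1}|r\nab_4\phi|$, which must then be shown to be $\lesssim \epsilon(\jb{u}+r)^{-1}\jb{u}^{-0^+}|\phi|_s$. Since $r\jb{u}^{-1}$ is not bounded by $r(\jb{u}+r)^{-1}$ in general, I would first try the combination $f - 2r\omb - 1$, which is bounded by $\epsilon\jb{u}^{-0^+}(\jb{u}+r)^{-1}r$. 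In $[\nab_3, r\nab_4]$ the terms combine exactly as $(f-1-2r\omb)\nab_4\phi$, and the stated rate follows.

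For \eqref{u4commest} the two bad terms have already cancelled, leaving $-r\omb\nab_4\phi$ plus $\eta$ and $\sigma$ terms. I expect the $\omb$ term to be reorganised similarly, or controlled using the one derivative lower in the norm index.

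The strong estimates \eqref{4AcommestStr} and \eqref{u4commestStr} follow by the same bookkeeping, with the improved $p_\infty$ exponents from \eqref{eq:BA.strong.pointwise}.
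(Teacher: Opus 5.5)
Your treatment of \eqref{4Acommest}, \eqref{ABcomest} and \eqref{34comest} is essentially the paper's. For \eqref{34comest} the paper uses exactly your combination, via $[\nab_3,r\nab_4]+\nab_4=2[\nab_u,r\nab_4]-\frac{f-2\omb r-1}{r}\,r\nab_4$; note the sign is $-(f-1-2r\omb)\nab_4\phi$, not $+(f-1)\nab_4\phi+\dots$.

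\textbf{The gap is in \eqref{u4commest}.} You flag it as the main obstacle and leave it unresolved, but there is no leftover $\omb$ term. You dropped the $2\omb\nab_4\phi$ term of \eqref{eq:CK.nab3.nab4} inside $\tfrac r2[\nab_3,\nab_4]$, even though you kept it in your $[\nab_3,r\nab_4]$ bullet. Since $\nab_u r=0$ and $\nab_4 f=2\omb$,
\[
[\nab_u, r\nab_4]\phi=\tfrac{r}{2}[\nab_3,\nab_4]\phi-r\omb\,\nab_4\phi=2r\,\eta^C\nab_C\phi+r\,\sigma\cdot\phi ,
\]
so the two $r\omb\nab_4\phi$ terms cancel exactly. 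The remaining terms are controlled by the $\eta$ and $\sigma$ bootstrap bounds. This gives the rate $(\jb{u}+r)^{-1^+}$ under \eqref{eq:BA.weak.pointwise} and $(\jb{u}+r)^{-2}$ under \eqref{eq:BA.strong.pointwise}.

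Without this cancellation the estimate would be false, and neither of your remedies works:
\begin{itemize}
\item $\omb$ has no $r$-decay. Its bound is $|\omb|\ls\epsilon\jb{u}^{-1^+}$ uniformly in $r$, consistent with $f-1=O(r)$. So $\omb\cdot r\nab_4\phi$ is only $O(\epsilon\jb{u}^{-1^+})|\phi|_s$, which is worse than $\epsilon(\jb{u}+r)^{-1^+}|\phi|_s$ when $r\gg\jb{u}$. Your claim that absorbing $r$ into $r\nab_4$ yields an extra factor $(\jb{u}+r)^{-1}$ is incorrect.
\item The one available derivative is already spent on $r\nab_4$, so lowering the norm index gains nothing.
\item There is no strong bootstrap bound for $\omb$ at all, so \eqref{u4commestStr} would be out of reach.
\end{itemize}

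\textbf{A second, smaller issue.} For $[\nab_3,r\nab_A]$ the $\nab_3 r$ term is \emph{not} absent. We have $[\nab_3,r\nab_A]\phi=-f\nab_A\phi+r[\nab_3,\nab_A]\phi$. The term $-f\nab_A\phi$ is needed to cancel the $O(1)$ piece $-\tfrac r2\trchb\,\nab_A\phi\approx\nab_A\phi$ of $-r\chib_A{}^C\nab_C\phi$ in \eqref{eq:CK.nab3.nab}; without it there is no $\epsilon$ smallness at all. What remains is $-\tfrac{f-1}{r}\,r\nab_A\phi-\tfrac12(\trchb+2r^{-1})\,r\nab_A\phi$, plus the $\chibh$, $\xib\, r\nab_4$, $\eta$, $\xib$ and $\betab$ terms; this is the paper's \eqref{eq:comm-nab3rnab}. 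The rate $\epsilon\jb{u}^{-1^+}$ then follows from the bounds on $\jb{u}^{1^+}(f-1)$, $\jb{u}^{1^+}\xib$ and $\jb{u}^{0^+}(\trchb+2r^{-1})$. The estimate for $[\nab_u,r\nab_A]$ then follows from $[\nab_u,r\nab]=\tfrac12[\nab_3,r\nab]+\tfrac f2[\nab_4,r\nab]-\tfrac12\xib\, r\nab_4$, which is \eqref{eq:comm-naburnab}.
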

\begin{proof}
In this proof, we freely use Lemma~\ref{Hs equivalence} above, whose proof is independent of this theorem, to pass back and forth geometric derivatives $\nab_{4}, \nab_{3}, \nab_{u}, \nab$ and background derivatives (implicitly present in the norms).

\pfstep{Step~1} Using \eqref{eq:CK.nab4.nab}, schematically we have
\begin{equation} \label{eq:comm-nab4rnab}
        \begin{aligned}
         \relax [ \nab_4, r\nab]\phi &\sim (\trch-2r^{-1}) (r \nab \phi) + \chih \cdot (r \nab \phi) +  r  \chih\cdot \eta \cdot \phi\\&\qquad  + \eta \cdot \phi +  r (\trch - 2r^{-1})  \eta \cdot \phi+ r  \beta \cdot \phi,
        \end{aligned}
\end{equation}
        where $\beta \cdot \phi$ means a sum of terms involving contractions of $\beta, \phi$, etc. Estimating each term using the bootstrap assumptions gives \eqref{4Acommest}--\eqref{u4commest} (weak) and \eqref{4AcommestStr}--\eqref{u4commestStr} (strong).
        
\pfstep{Step~2}  For the second estimate, we compute, making use of the fact that $\nab_u r = 0$
         \begin{align}
             [\nab_u, r \nab_4] \phi &= \frac{r}{2}[\nab_3 + f \nab_4 , \nab_4]\phi = \frac{r}{2}[\nab_3, \nab_4]\phi - r\omegab \nab_4 \phi \label{eq:comm-naburnab4}\\
             & \sim \eta \cdot(r\nab \phi) + r\sigma \cdot \phi \notag
         \end{align}
         and we can again estimate.
         The third estimate follows immediately from this, together with the identity
         \begin{equation} \label{eq:comm-nab3rnab4}
         [\nab_3, r\nab_4] +\nab_4 = 2[\nab_u, r\nab_4] - \frac{f - 2\omegab r -1}{r} r \nab_4.
         \end{equation}
\pfstep{Step~3}  The fourth estimate follows from the relation
        \begin{equation} \label{eq:comm-rnabrnab}
        [r\nab_A, r\nab_B]\phi  \sim r^2 K \phi,
        \end{equation}
        and $r^{2} K = 1 + r^{2} (K - r^{-2})$.
        
\pfstep{Step~4}  Next, we have schematically
	 \begin{equation} \label{eq:comm-nab3rnab}
       \begin{aligned}
          \relax  [ \nab_3, r\nab]\phi  &\:  \sim \xib r\nab_4 \phi+ \chibh \cdot r\nab \phi + (\trchb + 2r^{-1}) r \nab \phi + \frac{f-1}{r} r \nab \phi + r\chibh \cdot \eta \cdot \phi+ r\chih \cdot \xib \cdot \phi \\
            &\qquad +r(\trchb + 2r^{-1}) \eta \cdot \phi+r(\trch - 2r^{-1}) \xib \cdot \phi + \eta \cdot \phi+  \xib \cdot \phi +r \betab \cdot \phi,
        \end{aligned}
	\end{equation}
        which we can estimate for the fifth estimate. The final estimate follows by observing
        \begin{equation} \label{eq:comm-naburnab}
	[ \nab_u, r\nab] = \frac{1}{2}[ \nab_3, r\nab] +\frac{1}{2}[\nab_4, r\nab] + \frac{f-1}{2}[\nab_4, r\nab] - \frac{1}{2}\xib r\nab_4.
	\end{equation}
\end{proof}

\begin{theorem}\label{thm:comestST}
    Let $\phi$ be a suitably regular $S$-tangent tensor field. Under the weak bootstrap assumptions \eqref{eq:BA.weak.pointwise}--\eqref{eq:BA.weak.integrated}, we have  the following spacetime estimates for $s\leqslant s_0$:
    \begin{align}
       \left  \|[\nab_4, r\nab]\phi \right \|_{s, p_0, p_\infty} &\lesssim \epsilon(\|\phi\|_{s+1, p_0^{-}, p_\infty^--1}+ |\brk{u}^{-\frac{1}{2}} \phi|_{s-2, p_0^{-}-\frac{1}{2}, p_\infty^-} ),\label{4AcommestST}\\
       \left  \|[ \nab_u, r\nab_4]\phi\right \|_{s, p_0, p_\infty} &\lesssim \epsilon(\|\phi\|_{s+1, p_0^{-}, p_\infty^--1}+ |\brk{u}^{-\frac{1}{2}} \phi|_{s-2, p_0^{-}-\frac{1}{2}, p_\infty^-}), \label{u4commestST} \\
       \left  \|[ \nab_3, r\nab_4]\phi + \nab_4 \phi\right \|_{s, p_0, p_\infty} &\lesssim \epsilon(\|\brk{u}^{0^{-}} \phi\|_{s+1, p_0^{-}, p_\infty-1}+ |\brk{u}^{-\frac{1}{2}} \phi|_{s-2, p_0^{-}-\frac{1}{2}, p_\infty^-}), \label{34commestST} \\
       \left  \|[ r\nab_A, r\nab_B]\phi\right \|_{s, p_0, p_\infty} &\lesssim \|\phi\|_{s, p_0, p_\infty} 
 	+ \epsilon \left( \|\brk{u}^{0^{-}}\phi\|_{s, p_0^{-} - 1, p_\infty} + |\brk{u}^{-\frac{1}{2}} \phi|_{s-3, p_0^{-}-\frac{3}{2}, p_\infty^-+1} \right), \label{ABcomestST} \\
       \left  \|[ \nab_3, r\nab]\phi\right \|_{s-1, p_0, p_\infty} &\lesssim \epsilon(\| \brk{u}^{(-1)^{+}} \phi\|_{s, p_0^{-}, p_\infty}+ |\brk{u}^{-\frac{1}{2}-\frac{3}{2}\delta} \phi|_{s-3, p_0^{-}-\frac{1}{2}, p_\infty+\frac{1}{2}\delta}), \\
       \left  \|[ \nab_u, r\nab]\phi\right \|_{s-1, p_0, p_\infty} &\lesssim \epsilon(\|\brk{u}^{(-1)^{+}}\phi\|_{s, p_0^{-}, p_\infty}+ |\brk{u}^{-\frac{1}{2}-\frac{3}{2}\delta} \phi|_{s-3, p_0^{-}-\frac{1}{2}, p_\infty+\frac{1}{2}\delta}).\label{uAcommestST}
    \end{align}
    If the strong bootstrap assumptions hold we have the improved estimates: 
     \begin{align}
       \left  \|[\nab_4, r\nab]\phi \right \|_{s, p_0, p_\infty} &\lesssim \epsilon(\|\phi\|_{s+1, p_0^{-}, p_\infty-2}+ |\phi|_{s-2, p_0^{-}-\frac{1}{2}, p_\infty^{+}-\frac{3}{2}}), \label{4AcommestSTstr}\\
       \left  \|[ \nab_u, r\nab_4]\phi\right \|_{s, p_0, p_\infty} &\lesssim \epsilon(\|\phi\|_{s+1, p_0^{-}, p_\infty-2}+ |\phi|_{s-2, p_0^{-}-\frac{1}{2}, p_\infty^{+}-\frac{3}{2}}), \label{u4commestSTstr} \\
       \left  \|[ \nab_3, r\nab_4]\phi + \nab_4 \phi\right \|_{s, p_0, p_\infty} &\lesssim \epsilon(\|\brk{u}^{0^{-}} \phi\|_{s+1, p_0^{-}, p_\infty-1}+ |\phi|_{s-2, p_0^{-}-\frac{1}{2}, p_\infty^{-}-\frac{1}{2}}), \label{34commestSTstr} \\
       \left  \|[ r \nab_{A}, r \nab_{B}]\phi \right\|_{s, p_0, p_\infty} &\lesssim \|\phi\|_{s, p_{0}, p_{\infty}} + \epsilon(\|\phi\|_{s, p_0, p_\infty}+ |\phi|_{s-3, p_0^{-}-\frac{3}{2}, p_\infty^{-}+\frac{1}{2}}). \label{ABcomestSTstr}
    \end{align}
\end{theorem}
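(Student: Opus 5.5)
The proof follows the proof of Theorem~\ref{thm:comest} step by step. The same schematic commutator identities \eqref{eq:comm-nab4rnab}, \eqref{eq:comm-naburnab4}, \eqref{eq:comm-nab3rnab4}, \eqref{eq:comm-rnabrnab}, \eqref{eq:comm-nab3rnab} and \eqref{eq:comm-naburnab} are used, but the pointwise product bound is replaced by the integrated product estimate \eqref{eq:product.for.integrated} of Lemma~\ref{lem:products}. Each commutator is a finite sum of terms $\Gamma\cdot \mathcal{D}\phi$, where $\Gamma$ is a Ricci coefficient, curvature component or metric deviation (possibly weighted by $r$), and $\mathcal{D}\phi \in \{\phi, r\nab\phi, r\nab_4\phi\}$. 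Applying \eqref{eq:product.for.integrated} to $\|\Gamma\cdot\mathcal{D}\phi\|_{s,p_0,p_\infty}$ produces two contributions:
\begin{itemize}
\item the low-order pointwise norm of $\Gamma$ (from \eqref{eq:BA.weak.pointwise}, with $s_0-3$ or fewer derivatives, which is why $s\le s_0$ suffices) times the integrated norm of $\phi$ at order $s+1$;
\item the integrated norm of $\Gamma$ (from \eqref{eq:BA.weak.integrated}, at order $s_0$ or $s_0+1$) times a low-order pointwise norm of $\phi$ at order $s-2$ or $s-3$.
\end{itemize}
Lemma~\ref{Hs equivalence} lets us pass between geometric and background derivatives throughout, exactly as before.

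The main work is weight bookkeeping. The weight exponents must split as $p_0+q_0=p_0'+q_0'$, and likewise for $p_\infty$ and for the $\brk{u}$ powers. For \eqref{4AcommestST} and \eqref{u4commestST}, the worst terms are $(\trch-2r^{-1}, \chih)\cdot r\nab\phi$, $r\beta\cdot\phi$ and $\eta\cdot r\nab\phi$, $r\sigma\cdot\phi$. In the first contribution, $|\chih|_{s_0-3,0^+,1^+}$ gives a gain of $(\brk{u}+r)^{-1^-}$ with a loss of $r^{0^+}$ near the centre, which accounts for $\|\phi\|_{s+1,p_0^-,p_\infty^- -1}$. In the second, $\|\chih\|_{s_0+1,\frac12^+,\frac12^+}$ is paired with $\phi$. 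The remaining $\frac12$ in $p_0$ and in $p_\infty$ are absorbed into the pointwise norm $|\brk{u}^{-\frac12}\phi|_{s-2,p_0^--\frac12,p_\infty^-}$. For the $\beta,\eta,\sigma$ terms, the integrated bootstrap norms carry $\brk{u}^{\frac12}$, and we use the $p_u$ freedom in \eqref{eq:product.for.integrated} to move this $\brk{u}^{\frac12}$ onto $\phi$ as $\brk{u}^{-\frac12}$.

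The remaining estimates go the same way:
\begin{itemize}
\item \eqref{34commestST}: additionally uses $\frac{f-2\omegab r-1}{r}$, controlled by its $\brk{u}^{0^+}$ pointwise bound, which yields the $\brk{u}^{0^-}$.
\item \eqref{ABcomestST}: uses $r^2K=1+r^2(K-r^{-2})$. The $1$ gives $\|\phi\|_{s,p_0,p_\infty}$, and $K-r^{-2}$ is split as above.
\item The $[\nab_3,r\nab]$ and $[\nab_u,r\nab]$ bounds: the worst coefficients are $\xib r\nab_4\phi$, $\frac{f-1}{r}r\nab\phi$ and $\omegab$. Their pointwise bounds carry $\brk{u}^{1^+}$, giving $\brk{u}^{(-1)^+}$. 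Their integrated norms carry $\brk{u}^{\frac12+\frac32\delta}$ with $p_\infty=-\frac12\delta$, which explains the final term $|\brk{u}^{-\frac12-\frac32\delta}\phi|_{s-3,p_0^--\frac12,p_\infty+\frac12\delta}$. The loss of one derivative (order $s-1$) reflects the regularity of $\xib$ and $\chibh$ in the hierarchy.
\end{itemize}

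The strong estimates follow by substituting \eqref{eq:BA.strong.pointwise} and \eqref{eq:BA.strong.integrated}. For example, $|\chih|_{s_0,0^+,2}$ and $|\brk{u_+}^{1^-}\beta|_{s_0-1,(-1)^+,4}$ give one extra power of $(\brk{u}+r)^{-1}$. The $\brk{u}$ weights no longer need to be shifted to $\phi$, because the strong integrated norms already have the extra $p_\infty$ decay. This gives $p_\infty-2$ and $p_\infty^+-\frac32$.

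The main obstacle I expect is in the $\nab_3$ and $\nab_u$ commutators. There, coefficients such as $f-1$, $b$ and $\xib$ grow in $r$ and have limited regularity, so one must check that choosing $(p_u,q_u)$ and $(p_\infty,q_\infty)$ in \eqref{eq:product.for.integrated} never requires more than $s_0-3$ pointwise derivatives or more than the available integrated derivatives of these coefficients. The fallback is to lower $s$ by one in the offending terms, which the statement already permits.
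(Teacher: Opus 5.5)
Your proposal is correct and takes the same route as the paper, whose proof simply repeats the argument of Theorem~\ref{thm:comest} with the same schematic commutator identities, replacing the pointwise product bound by the integrated product estimate \eqref{eq:product.for.integrated} and inserting the weak (resp.\ strong) bootstrap assumptions. Your weight bookkeeping checks out, including moving the $\brk{u}^{\frac12}$ and $\brk{u}^{\frac12+\frac32\delta}$ weights of the integrated norms onto $\phi$. The one harmless slip is that $\omegab$ and $b$ do not actually appear in the $[\nab_3,r\nab]$ and $[\nab_u,r\nab]$ commutators.
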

\begin{proof}
    Repeat the argument of the previous Theorem, estimating instead using the product formula of Lemma \ref{lem:products} together with the bootstrap assumptions.
\end{proof}

\begin{remark} Since the commutation relations \eqref{eq:CK.nab4.nab}--\eqref{eq:CK.nabB.nabC} only have derivatives tangential to the cones of constant $u$ appearing on the right hand side, any powers of $\brk{u}$ can be freely commuted with all the commutators. Moreover, in any estimate above which loses a derivative we can replace the norm on the right with the norm of all tangential derivatives at one lower order, e.g.\ in \eqref{4Acommest}
\[
\left  |[\nab_4, r\nab_A]\phi \right |_{s}(u,r) \lesssim \frac{\epsilon }{(\jb{u}+r)^{1^+}} \left(|r\nab_4\phi|_{s}(u,r)+|r\nab\phi|_{s}(u,r)+|\phi|_{s}(u,r) \right)
\]
or in \eqref{4AcommestSTstr}
\begin{align*}
\left  \|[\nab_4, r\nab]\phi \right \|_{s, p_0, p_\infty} &\lesssim \epsilon(\|r\nab_4\phi\|_{s, p_0^{-}, p_\infty-2}+\|r\nab\phi\|_{s, p_0^{-}, p_\infty-2}+\|\phi\|_{s, p_0^{-}, p_\infty-2}\\
&\qquad \qquad + |r\nab_4\phi|_{s-2, p_0^{-}-\frac{1}{2}, p_\infty^--\frac{3}{2}}+ |r\nab\phi|_{s-2, p_0^{-}-\frac{1}{2}, p_\infty^--\frac{3}{2}}+ |\phi|_{s-2, p_0^{-}-\frac{1}{2}, p_\infty^--\frac{3}{2}})
\end{align*}
and so on. 
\end{remark}

\subsection{Elliptic estimates}
For convenience, we extend the definition of $\nab \hot $ to act on a $k+1$ tensor $\phi_{C_1\ldots C_k A}$ as
\[
(\nab \hot \phi)_{C_1\ldots C_k AB} = \nab_A \phi_{C_1\ldots C_k B} + \nab_B \phi_{C_1\ldots C_k A} - \gamma_{AB} \nab^D \phi_{C_1\ldots C_k D}.
\]
\begin{lemma}\label{lem:elliptic}
    Suppose the weak bootstrap assumptions hold for a sufficiently small $\epsilon_0$. If $s \leqslant s_0-1$ and $\phi$ is a sufficiently regular $k+1$ tensor then
\begin{equation} \label{eq:elliptic-pointwise}
    |r\mathring{\nab}\phi|_{s}(u,r) \lesssim_{k} |r\nab \hot \phi|_{s}(u,r) + |r\nab_4\phi|_s(u,r) + |\phi|_s(u,r).
\end{equation}
Moreover, for $s = s_{0}$, we have the integrated bound
\begin{equation} \label{eq:elliptic-integrated}
    \| \phi \|_{s_0+1, p_{0}, p_{\infty}} \lesssim_{k} \| r\nab \hot \phi \|_{s_0, p_{0}, p_{\infty}} + \|r\nab_4\phi \|_{s_0, p_{0}, p_{\infty}} + \| \phi \|_{s_0, p_{0}, p_{\infty}} + \epsilon_{0} | \brk{u}^{-\frac{1}{2}} \phi |_{s_{0}-3, p_{0}^{-}-\frac{1}{2}, p_{\infty}^{-}+1}.
 \end{equation}
 If the strong bootstrap assumptions hold, then after possibly reducing $\epsilon_0$, we have
 \begin{equation} \label{eq:elliptic-integrated-strong}
   \| \phi \|_{s_0+1, p_{0}, p_{\infty}} \lesssim_{k} \| r\nab \hot \phi \|_{s_0, p_{0}, p_{\infty}} + \|r\nab_4\phi \|_{s_0, p_{0}, p_{\infty}} + \| \phi \|_{s_0, p_{0}, p_{\infty}} + \epsilon_{0} | \phi |_{s_{0}-3, p_{0}^{-}-\frac{1}{2}, p_{\infty}^{+}+\frac{1}{2}}.
 \end{equation}
\end{lemma}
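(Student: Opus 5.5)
We first prove the estimate for the round metric on each sphere and then pass to $\gamma$ perturbatively, using the equivalence of norms (Lemma~\ref{Hs equivalence}) and the commutator estimates (Theorems~\ref{thm:comest} and~\ref{thm:comestST}).

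\textbf{Base case $s=0$.} For a $(k+1)$-tensor $\phi$ on $(S_{u,r},\gamma)$, integration by parts gives the Bochner-type identity
\[
\int_{S_{u,r}}|\nab\hot\phi|^2 = 2\int_{S_{u,r}}|\nab\phi|^2 + \hbox{(terms of the form } \textstyle\int|\div\phi|^2,\ \int|\curl\phi|^2) + \int_{S_{u,r}} K\,(\phi\ast\phi).
\]
This identity does not give coercivity by itself, because $\nab\hot$ has a kernel: conformal Killing-type tensors, which are finite-dimensional for $k=0$. That kernel only costs a zeroth-order term, however. On the round sphere of radius $r$, the standard inequality $\int|r\mathring\nab\phi|^2\lesssim \int|r\mathring\nab\hot\phi|^2+\int|\phi|^2$ follows from the identity with $K=r^{-2}$, using the Poincaré-type bound $\int|\div\phi|^2+|\curl\phi|^2\gtrsim$ kernel-free control. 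We do not need a sharp constant, since $|\phi|_0$ is allowed on the right-hand side. We then write $\nab\hot\phi=\mathring\nab\hot\phi+\delta\slashed\Gamma\ast\phi+\delta\gamma\ast\mathring\nab\phi$. By Sobolev on spheres and the bootstrap assumption $|\delta\gamma|_{s_0,0,0}\lesssim\epsilon$, the error satisfies $|r(\nab\hot-\mathring\nab\hot)\phi|_0\lesssim\epsilon|\phi|_1$. Absorbing the $\epsilon\,|r\mathring\nab\phi|_0$ part gives the $s=0$ case of \eqref{eq:elliptic-pointwise}. The contribution of $\epsilon\,|r\mathring\nab_4\phi|_0$ is kept on the right-hand side, which is why $|r\nab_4\phi|_s$ appears in the statement.

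\textbf{Induction for \eqref{eq:elliptic-pointwise}.} For $1\le s\le s_0-1$ we commute. The norm $|r\mathring\nab\phi|_s$ is equivalent to the sum over $i+j+2k'\le s$ of $|(r\mathring\nab_4)^i(r\mathring\nab)^j(\brk u\mathring\nab_u)^{k'}r\mathring\nab\phi|_0$. Applying the base case to $\psi=(r\nab_4)^i(r\nab)^j(\brk u\nab_u)^{k'}\phi$, which is a higher-rank tensor; the extended definition of $\nab\hot$ handles this, with only the last index symmetrised. We need to bound $|r\nab\hot\psi|_0$ by $|r\nab\hot\phi|_s$ plus commutator terms.
\begin{itemize}
\item Angular commutators: $[r\nab,r\nab]$ is controlled by \eqref{ABcomest}, which is zeroth order, so these terms cost only lower-order norms.
\item The commutators $[r\nab_4,r\nab]$ and $[\brk u\nab_u,r\nab]$ are controlled by \eqref{4Acommest} and \eqref{uAcomest}. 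They carry an $\epsilon$, multiply $|\phi|_{s+1}$, and are absorbed.
\end{itemize}
The loss of one derivative in \eqref{uAcomest} is compatible with the norm convention, since $\brk u\nab_u$ counts as two derivatives. Finally, Lemma~\ref{Hs equivalence} converts geometric norms back to background norms.

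\textbf{Integrated bound \eqref{eq:elliptic-integrated}.} At $s=s_0$ the pointwise equivalence fails at top order, so we argue at the integrated level. We apply the pointwise inequality at order $s_0$ in its $|\cdot|_{0}$-summed form, multiply by the weights $(\brk u+r)^{2p_\infty+2p_0}r^{-3-2p_0}$, and integrate in $(u,r)$. The top-order error terms are then products of top-order metric or Ricci norms with low-order pointwise norms of $\phi$. These come from $\delta\gamma$ at order $s_0+1$, and from $\chih$, $\trch-2r^{-1}$, $b$, etc.\ at top order inside the commutators. We estimate them with the product estimate \eqref{eq:product.for.integrated}, putting the integrated bootstrap norms on the metric quantities, namely $\nrm{\gamma-\mathring\gamma}_{s_0+1,\frac32^+,(-\frac12)^+}$ and similar, and the pointwise norm on $\phi$. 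This produces exactly the extra term $\epsilon_0|\brk u^{-1/2}\phi|_{s_0-3,p_0^--\frac12,p_\infty^-+1}$. This is the same mechanism used in \eqref{eq:equivalence-integrated} and Theorem~\ref{thm:comestST}. Under the strong bootstrap assumptions, the improved integrated norms of $\chih$ and $\trch-2r^{-1}$, together with \eqref{4AcommestSTstr} and \eqref{ABcomestSTstr}, remove the $\brk u^{-1/2}$ weight and give \eqref{eq:elliptic-integrated-strong}.

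\textbf{Main obstacle.} The hardest step is the weight bookkeeping in the integrated top-order term. We must check that the exponent shifts produced by the product lemma match exactly $p_0^--\tfrac12$ and $p_\infty^-+1$ in the weak case (and $p_\infty^++\tfrac12$ in the strong case). We expect these to follow by tracking the $p_0$ and $p_\infty$ indices of $\delta\gamma$ and $\chih$ in \eqref{eq:BA.weak.integrated}, splitting the available $\brk u$ power as permitted in \eqref{eq:product.for.integrated}.
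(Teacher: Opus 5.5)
Your proposal is correct and follows essentially the paper's argument. The ingredients match: a Bochner-type identity on each $S_{u,r}$ for $s=0$; commutation with $r\nab_4$, $r\nab$, $\brk{u}\nab_u$ via Theorem~\ref{thm:comest} for the pointwise induction, where the commutator errors are in fact bounded directly by $|\phi|_s$ on the right, so no absorption is needed there; and, at top order, the integrated commutator estimates of Theorem~\ref{thm:comestST} together with \eqref{eq:equivalence-integrated} and \eqref{eq:product.for.integrated}. One step you left implicit is that $s_0$ is even, which by \eqref{est:evennorm} is what lets $\|r\mathring{\nab}\phi\|_{s_0}$ control $\|\phi\|_{s_0+1}$.

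The only real difference is the base case. The paper applies the exact identity $2|r\nab\phi|^2=|r\nab\hot\phi|^2+2\phi^A\cdot[r\nab_B,r\nab_A]\phi^B-2\nab^A(\cdots)$ directly on $(S_{u,r},\gamma)$, bounds the commutator with \eqref{ABcomest}, and then converts norms with Lemma~\ref{Hs equivalence}. In that form no div/curl terms appear, and the kernel/Poincar\'e discussion is unnecessary because the zeroth-order term on the right already handles the kernel. Your round-sphere inequality followed by perturbing $\mathring{\nab}\hot$ to $\nab\hot$ works equally well.
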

\begin{proof}
    We start with \eqref{eq:elliptic-pointwise}, which is pointwise in $(u, r)$. We work by induction on $s$. For the base case, we require the identity
    \[
    2|r\nab \phi|^2 = |r\nab \hot \phi|^2  + 2 \phi^A\cdot [r\nab_B, r\nab_A] \phi^B - 2 \nab^A(r\phi^B \cdot r\nab_B \phi_A - r\phi_A \cdot r\nab^B\phi_B)
    \]
    where $\cdot$ indicates contraction on the indices $C_1, \ldots C_k$. This identity can be verified by expanding out the right hand side. Integrating over $S_{u,r}$ and making use of \eqref{ABcomest} we have
    \[
    |r\nab \phi|^2_{0, \gamma}(u,r) \lesssim |r\nab \hot \phi|^2_{0, \gamma}(u,r)+ |\phi|^2_{0, \gamma}(u,r)
    \]
    which implies our result with $s=0$ by Lemma \ref{Hs equivalence}. 
    
    For the induction step we note
    \begin{align*}
    &r\nab_{C_1} (r \nab \hot \phi)_{C_2\ldots C_kAB} - (r \nab \hot r\nab \phi)_{C_1\ldots C_kAB} \\&= [r\nab_{C_1}, r\nab_A] \phi_{C_2\ldots C_k B} + [r\nab_{C_1}, r\nab_B] \phi_{C_2\ldots C_k A} - \gamma_{AB} [r\nab_{C_1}, r\nab^D] \phi_{C_2\ldots C_k D} 
    \end{align*}
    so that for $s\leqslant s_0-2$
    \[
    |r\nab (r\nab \hot \phi) - r\nab \hot (r \nab \phi)|_s \lesssim |\phi|_s.
    \]
    by \eqref{ABcomest}. Similarly we have for $s\leqslant s_0-2$
    \begin{align*}
    |r\nab_4 (r\nab \hot \phi) - r\nab \hot (r \nab_4 \phi)|_s &\lesssim |\phi|_{s+1} \\
    |\brk{u} \nab_u (r\nab \hot \phi) - r\nab \hot ( \brk{u}  \nab_u \phi)|_{s-1} &\lesssim |\phi|_{s}
    \end{align*}
    by \eqref{4Acommest}, \eqref{uAcomest}.
    
    Now suppose that the result holds for some $s\leqslant s_0-2$. We have
    \begin{align*}
        |r\nab(r\nab \phi)|_{s} &\lesssim |r\nab \hot (r\nab \phi)|_{s} + |r\nab_4 r\nab \phi|_s + |r \nab \phi|_s \\
        & \lesssim |r\nab (r\nab \hot  \phi)|_{s}+  | r\nab r\nab_4 \phi|_s + |r \nab \phi|_s \\
        &\qquad + |r\nab (r\nab \hot \phi) - r\nab \hot (r \nab \phi)|_s + |[r\nab_4, r\nab]\phi|_s \\
        & \lesssim |r\nab \hot  \phi|_{s+1} + |r\nab_4 \phi|_{s+1} + |\phi|_{s+1}
    \end{align*}
    a similar argument gives
    \begin{align*}
        |r\nab_4(r\nab\phi)|_{s} &\lesssim  |r\nab \hot  \phi|_{s+1} + |r\nab_4 \phi|_{s+1} + |\phi|_{s+1}\\
        |\brk{u} \nab_u(r\nab\phi)|_{s-1} &\lesssim  |r\nab \hot  \phi|_{s+1} + |r\nab_4 \phi|_{s+1} + |\phi|_{s+1}
    \end{align*}
    and this closes the induction. 
    
 Next, we turn to the proof of \eqref{eq:elliptic-integrated} and \eqref{eq:elliptic-integrated-strong}. We first assume the weak bootstrap assumptions and consider \eqref{eq:elliptic-integrated}. By integrating \eqref{eq:elliptic-pointwise}, we immediately have
\begin{equation*}
\| r \nab \phi \|_{s_0-1, p_{0}, p_{\infty}} \lesssim \| r\nab \hot \phi \|_{s_0-1, p_{0}, p_{\infty}} + \|r\nab_4\phi \|_{s_0-1, p_{0}, p_{\infty}} + \| \phi \|_{s_0-1, p_{0}, p_{\infty}}.
\end{equation*}
Using \eqref{4AcommestST}, \eqref{ABcomestST}, and \eqref{uAcommestST} instead of \eqref{4Acommest}, \eqref{ABcomest}, and \eqref{uAcomest}, respectively, and repeating the preceding argument, we obtain
\begin{align*}
\| r\nab (r\nab \hot \phi) - r\nab \hot (r \nab \phi) \|_{s_{0}-1, p_{0}, p_{\infty}} &\lesssim \| \phi \|_{s_{0}-1, p_{0}, p_{\infty}} + \epsilon_{0} | \brk{u}^{-\frac{1}{2}} \phi |_{s_{0}-4, p_{0}^{-}-\frac{3}{2}, p_{\infty}^{-}+1}, \\
\| r\nab_4 (r\nab \hot \phi) - r\nab \hot (r \nab_4 \phi) \|_{s_{0}-1, p_{0}, p_{\infty}} &\lesssim \| \phi \|_{s_{0}, p_{0}, p_{\infty}} + \epsilon_{0} | \brk{u}^{-\frac{1}{2}} \phi |_{s_{0}-3, p_{0}^{-}-\frac{3}{2}, p_{\infty}^{-}+1}, \\
\| \brk{u} \nab_u (r\nab \hot \phi) - r\nab \hot (\brk{u} \nab_u \phi) \|_{s_{0}-2, p_{0}, p_{\infty}} &\lesssim \| \phi \|_{s_{0}-1, p_{0}, p_{\infty}} + \epsilon_{0} | \brk{u}^{\frac{1}{2}-\frac{3}{2}\delta} \phi |_{s_{0}-3, p_{0}^{-}-\frac{1}{2}, p_{\infty}+\frac{1}{2}\delta}.
\end{align*}
Note that all the pointwise norms in these estimates are bounded by $| \brk{u}^{-\frac{1}{2}} \phi |_{s_{0}-3, p_{0}^{-}-\frac{1}{2}, p_{\infty}^{-}+1}$. Proceeding as before, we have
\begin{align*}
	& \| r \nab (r \nab \phi) \|_{s_{0}-1, p_{0}, p_{\infty}} + \| r \nab_{4} (r \nab \phi) \|_{s_{0}-1, p_{0}, p_{\infty}} + \| \brk{u} \nab_{u} (r \nab \phi) \|_{s_{0}-2, p_{0}, p_{\infty}}\\
	& \qquad \aleq \| r \nab \hot \phi \|_{s_{0}, p_{0}, p_{\infty}} + \| r \nab_{4} \phi \|_{s_{0}, p_{0}, p_{\infty}} + \| r \nab \phi \|_{s_{0}, p_{0}, p_{\infty}} + \epsilon_{0} | \brk{u}^{-\frac{1}{2}} \phi |_{s_{0}-3, p_{0}^{-}-\frac{1}{2}, p_{\infty}^{-}+1},
\end{align*}    
where, by \eqref{eq:equivalence-pointwise}, the left hand side controls $\| r \nab \phi \|_{s_{0}, p_{0}, p_{\infty}}$. Since we assumed $s_{0}$ to be even, adding $\| r \nab_{4} \phi \|_{s_{0}, p_{0}, p_{\infty}} + \| r \nab \phi \|_{s_{0}, p_{0}, p_{\infty}}$ to both sides makes the left hand side proportional to $\| \phi \|_{s_{0}+1, p_{0}, p_{\infty}, \gamma}$, while the right hand side remains unchanged (up to modifying the constant). Applying \eqref{eq:equivalence-integrated} and observing that the pointwise norm in this equivalence is also bounded by $| \brk{u}^{-\frac{1}{2}} \phi |_{s_{0}-3, p_{0}^{-}-\frac{1}{2}, p_{\infty}^{-}+1}$, we obtain \eqref{eq:elliptic-integrated}. 

Finally, \eqref{eq:elliptic-integrated-strong} follows by repeating the above argument, but using \eqref{4AcommestSTstr}, and \eqref{ABcomestSTstr} in place of \eqref{4AcommestST}, and \eqref{ABcomestST}, respectively. In this case, all pointwise norms that arise in the proof are bounded by $|\phi |_{s_{0}-3, p_{0}^{-}-\frac{1}{2}, p_{\infty}^{+}+\frac{1}{2}}$.\qedhere
\end{proof}

\section{Transport equations} \label{sec:transport}
Under the bootstrap assumptions, we now establish estimates for the transport equation along $e_{4}$.
The main result is as follows:
\begin{theorem}\label{thm:transest}
    For any fixed choice of indices for the norms below, there exists $\ep_0$ sufficiently small such that the following holds whenever the weak bootstrap assumptions \eqref{eq:BA.weak.pointwise}--\eqref{eq:BA.weak.integrated} hold for $\ep \leqslant \epsilon_0$. %
    Suppose $\phi$, a rank $k$ $S-$tangent field satisfies the transport equation
    \[
    {\nab}_4 \phi + \lambda_0 \trch \phi  = F.
    \]
    Then there exist implicit constants depending only on the indices for the norms such that the following estimates hold:
    \begin{enumerate}[i)]
    \item If $s \leqslant s_{0}$, $2\lambda_0 + q_0+1>0$, $\tilde{q}_{\infty} \leqslant q_{\infty}$, and for $u\geqslant 1$ we have
    \[
    \lim_{r\to 0} r^{2\lambda_0-1} |\phi|_s(u,r) =0,
    \]
    then we can estimate
    \begin{align}
    \abs{\jb{u}^{\tilde{q}_\infty-p_\infty-1} \jb{u_{+}}^{p_{u_{+}}} \phi}_{s, q_0+1, p_\infty} &\lesssim \abs{\phi}_{s, q_0+1, \tilde{q}_\infty-1}(\tilde{\Sigma}_1) +  |\jb{u_{+}}^{q_{u_{+}}} F|_{s, q_0, q_\infty}, \label{eq:main.sphere.transest}
    \end{align}
    where
    \[
    p_\infty = \left \{ \begin{array}{cc}
        2\lambda_0 & \tilde{q}_\infty >2\lambda_0 + 1, \\
         \tilde{q}_\infty-1 & \tilde{q}_\infty < 2\lambda_0 + 1.
    \end{array}\right. , \qquad
    p_{u_{+}} = q_{u_{+}} + q_{\infty} - \tilde{q}_{\infty}.
    \]
    Furthermore, we have
    \begin{align}
       \int_{u_T}^T \abs{\jb{u}^{\tilde{q}_\infty-p_\infty-1} \jb{u_{+}}^{q_{\infty} - \tilde{q}_{\infty}} \phi}_{s, q_0+1, p_\infty}^2(H_u) \ud u &\lesssim \abs{\phi}^2_{s, q_0+1, \tilde{q}_\infty-1}(\tilde{\Sigma}_1) + \int_{u_T}^T |F|_{s, q_0, q_\infty}^2(H_u) \, \ud u, \label{eq:22.sphere.transest} \\
    \int_{u_T}^T \abs{\jb{u}^{\tilde{q}_\infty-p_\infty-1}\jb{u_{+}}^{q_{\infty}-\tilde{q}_{\infty}} \phi}_{s, q_0+1, p_\infty}^2(H_u) \ud u &\lesssim \|{\phi}\|^2_{s, q_0+1, \tilde{q}_\infty-1}(\tilde{\Sigma}_1) +  \|F\|_{s, q_0, q_\infty}^2.\label{eq:22.sphere.transest.2} 
    \end{align}
    \item If $ 2\lambda_0  + q_0 +\frac{1}{2}\geqslant 0$, and
    \[
    \lim_{r\to 0} \int_1^T r^{-2(q_0+2)} |\phi|^2_s(u,r) \ud u =0,
    \]
    then for $s \leqslant s_{0}$, we have
    \begin{equation}\label{eq:main.int.transest}
        \|\brk{u}^{q_{\infty} - \tilde{p}_{\infty} - 1} \phi\|_{s, q_0+1, \tilde{p}_{\infty}} \lesssim \|\phi\|_{s, q_0+\frac{1}{2}, q_\i -\frac{1}{2}}(\tilde{\Sigma}_1) + \| F\|_{s, q_0, q_\infty},
    \end{equation}
    where
    \[
    \tilde{p}_\infty = \left \{ \begin{array}{cc}
        2\lambda_0 - \frac{1}{2} \delta & q_\infty >2\lambda_0 + 1, \\
         q_{\infty}-1 & q_{\infty} < 2\lambda_0 + 1.
    \end{array}\right.
    \]
    
    If $s = s_{0}+1$, then we have
    \begin{equation}\label{eq:main.int.transest.top}
    \begin{aligned}
        \|\brk{u}^{q_{\infty} - \tilde{p}_{\infty} - 1} \phi\|_{s_{0}+1, q_0+1, \tilde{p}_{\infty}} &\lesssim \|\phi\|_{s_{0}+1, q_0+\frac{1}{2}, q_\i -\frac{1}{2}}(\tilde{\Sigma}_1) + \| F\|_{s, q_0, q_\infty} \\
        &\qquad + \epsilon_{0}|\brk{u}^{q_{\infty} - \tilde{p}_{\infty} - 1} \phi|_{s_{0}-2, q_0+\frac{1}{2}, \tilde{p}_{\infty}+\frac{1}{2}}.
    \end{aligned}
    \end{equation}
    \end{enumerate}
   
\end{theorem}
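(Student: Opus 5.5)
The plan is to reduce the tensorial transport equation to a scalar ODE in $r$ along each generator of $H_u$, for the sphere norms $|\phi|_s(u,r)$, and then integrate it with weights.

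\textbf{Base case $s=0$.} Compute $\partial_r |\phi|_{0,\gamma}^2$ using \eqref{eq:first.variation}, which gives $\partial_r\int_{S_{u,r}}|\phi|^2_\gamma \ud A_\gamma = \int (2\phi\cdot\nab_4\phi + \trch|\phi|^2)\ud A_\gamma$. Substituting $\nab_4\phi = F-\lambda_0\trch\phi$ and writing $\trch = 2r^{-1} + (\trch-2r^{-1})$ yields
\[
\partial_r\big(r^{2\lambda_0-1}|\phi|_{0,\gamma}\big) \leqslant r^{2\lambda_0-1}|F|_{0,\gamma} + C\,|\trch-2r^{-1}|_{L^\infty}\, r^{2\lambda_0-1}|\phi|_{0,\gamma}.
\]
The last term has an integrable-in-$r$ coefficient $\lesssim \epsilon r^{0^+}(\brk{u}+r)^{-1-0^+-0^+}$ by Lemma~\ref{lem:products} and \eqref{eq:BA.weak.pointwise}, so Grönwall absorbs it with a factor $1+O(\epsilon)$.

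\textbf{Higher $s$.} Commute the equation with $r\nab_4$, $r\nab$ and $\brk{u}\nab_u$. The commutators are controlled by Theorem~\ref{thm:comest}. Two of them produce $\lambda_0$-dependent lower-order terms that stay in the ODE:
\begin{itemize}
\item $[r\nab_4,\nab_4] = -\nab_4$, which costs nothing after re-substituting $\nab_4\phi$;
\item $r\nab_4(\trch)$, handled via \eqref{eq:4trchi}.
\end{itemize}
For $\brk{u}\nab_u$ we need $\nab_u\trch$, which \eqref{eq:3trchi} expresses through quantities already in the bootstrap. Every other error carries $\epsilon(\brk{u}+r)^{-1^+}$ and is absorbed by Grönwall after inducting on $s\leqslant s_0$. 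Lemma~\ref{Hs equivalence} then converts back to background norms. The result is, for each $u$,
\[
r^{2\lambda_0-1}|\phi|_s(u,r) \lesssim \lim_{r'\to r_{\min}(u)} r'^{2\lambda_0-1}|\phi|_s + \int_{r_{\min}(u)}^r r'^{2\lambda_0-1}|F|_s(u,r')\,\ud r'.
\]
Here $r_{\min}(u)=0$ for $u\geqslant1$, where the hypothesis makes the boundary term vanish, and $r_{\min}(u)=1-u$ on $\tilde\Sigma_1$ for $u<1$.

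\textbf{Inserting the weighted bound on $F$ (part i).} Use $|F|_s\leqslant r^{q_0+1}(\brk{u}+r)^{-q_0-q_\infty}\brk{u_+}^{-q_{u_+}}|\cdot|_{s,q_0,q_\infty}$ and evaluate
\[
\int_0^r r'^{2\lambda_0+q_0}(\brk{u}+r')^{-q_0-q_\infty}\,\ud r'.
\]
Convergence at $0$ uses $2\lambda_0+q_0+1>0$. The size of $q_\infty$ relative to $2\lambda_0+1$ decides whether the large-$r'$ part of the integral converges or grows like $r^{2\lambda_0+1-q_\infty}$. This dichotomy is exactly the case split defining $p_\infty$. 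Multiplying by the weights of $|\cdot|_{s,q_0+1,p_\infty}$ gives \eqref{eq:main.sphere.transest}. The factor $\brk{u}^{\tilde q_\infty-p_\infty-1}$ and the trade of $\brk{u_+}$ powers come from comparing $\brk{u}+r$ with $\brk{u}$; for $u<1$ one also compares with $\brk{u}\sim r$ at the data point, which is where the data norm with index $\tilde q_\infty-1$ enters.

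For the $L^2_u$ versions \eqref{eq:22.sphere.transest}–\eqref{eq:22.sphere.transest.2}, square the same estimate and integrate in $u$. Bound the $r'$-integral by Cauchy–Schwarz rather than sup, spending part of the weight.

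\textbf{Part ii.} Use a weighted Hardy inequality in $r$ along each $H_u$ instead of the pointwise bound. Multiply the scalar ODE by $\psi\, r^{2\lambda_0-1}$ with
\[
\psi = r^{-2(q_0+1)-1-2\lambda_0+\dots}(\brk{u}+r)^{\dots}
\]
and integrate by parts in $r$. The condition $2\lambda_0+q_0+\tfrac12\geqslant0$ gives a good sign for the bulk term near $0$. At large $r$ the sign is good when $q_\infty<2\lambda_0+1$. At $q_\infty>2\lambda_0+1$ we lose $\tfrac12\delta$, to keep a positive bulk coefficient, which explains $\tilde p_\infty$. Then integrate in $u$.

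At top order $s_0+1$, Lemma~\ref{Hs equivalence} only gives \eqref{eq:equivalence-integrated}. The commutators must also be estimated with Theorem~\ref{thm:comestST} instead of Theorem~\ref{thm:comest}, which produces the extra term $\epsilon_0|\cdot|_{s_0-2,\dots}$ in \eqref{eq:main.int.transest.top}.

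\textbf{Main obstacle.} I expect the hardest part to be the bookkeeping near $r=0$ and on $\tilde\Sigma_1$ for $u<1$. There one must check that the weight powers of $r$ and $(\brk{u}+r)$ match after the Hardy/Grönwall steps and that the $\brk{u}$, $\brk{u_+}$ exponents come out as claimed. Also delicate is ensuring that the commuted $\trch$ terms (involving $\nab_u\trch$) stay within the bootstrap derivative counts.
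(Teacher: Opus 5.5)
Your outline is sound, but it handles the geometry by a different route from the paper.

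\textbf{The paper's route.} The paper never commutes the geometric equation. It rewrites it as the flat equation $\mathring{\nab}_4\phi+\frac{2\lambda_0}{r}\phi=F+F_e$, where
\[
F_e=\lambda_0\big(\tfrac{2}{r}-\trch\big)\phi+(\mathring{\nab}_4-\nab_4)\phi\sim(\trch-2r^{-1})\phi+\chih\cdot\phi .
\]
It proves the weighted ODE/Hardy bounds for the flat operator (Lemma~\ref{Lem:bgtransport}, Corollary~\ref{cor:bgtrans}). There $r\mathring{\nab}$ and $\brk{u}\mathring{\nab}_u$ commute exactly, and $r\mathring{\nab}_4$ only generates the extra source $F$, so every $s$ comes for free. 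It then absorbs $F_e$ using Lemma~\ref{lem:products}. At $s=s_0+1$, the integrated product estimate is exactly what produces the last term of \eqref{eq:main.int.transest.top}. This uses only $\chih$ and $\trch-2r^{-1}$, whose regularity is what caps $s$, and no commutator estimates or norm equivalences. Its price is that the absorption happens in the weighted norm. So the paper must first reduce to $\tilde q_\infty=q_\infty\leqslant 2\lambda_0+1+\frac12\delta$, ensuring $q_\infty^--1\leqslant p_\infty,\tilde p_\infty$.

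\textbf{Your route.} Your Gr\"onwall absorption along each generator rests on $\int|\trch-2r^{-1}|\,\ud r\ls\epsilon$. It is weight-independent and makes that reduction unnecessary in part i). At top order you still need a similar weight trade for the forcing built from top derivatives of the coefficients times low-order $\phi$. In exchange, commuting with $r\nab$, $r\nab_4$, $\brk{u}\nab_u$ brings in $\beta,\eta,\sigma,\rho,\omegab,f,\trchb$, and three points your sketch glosses over must be made explicit.
\begin{enumerate}[i)]
\item After substituting $\nab_4\phi=F-\lambda_0\trch\phi$, the term produced by $[r\nab_4,\nab_4]=-\nab_4$ must be combined with $\lambda_0(r\nab_4\trch)\phi$. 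By \eqref{eq:4trchi}, their sum has coefficient $O(|\trch-2r^{-1}|+r|\trch-2r^{-1}|^2+r|\chih|^2)$. Each alone is of size $\lambda_0\phi/r$ and would cost a logarithm when $q_\infty>2\lambda_0+1$.
\item $\brk{u}\nab_u\trch$ is harmless only because the $r^{-2}$ terms of $\frac12\nab_3\trch$ and $\frac{f}{2}\nab_4\trch$ cancel. What remains is $-(f-2\omegab r-1)r^{-2}$, $(\trch-2r^{-1})r^{-1}$, $(\trchb+2r^{-1})r^{-1}$, $\div\eta$, $\rho$, and quadratic terms.
\item $\brk{u}[\nab_u,\nab_4]=\brk{u}r^{-1}[\nab_u,r\nab_4]$ is, by \eqref{u4commest}, of size $\epsilon\brk{u}r^{-1}(\brk{u}+r)^{-1-\delta}$, not $\epsilon(\brk{u}+r)^{-1-\delta}$ as you claim. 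This is not integrable at $r=0$. You need the extra $r^{\delta}$ factors of $\eta$ and $r\sigma$ supplied by the $p_0$-exponents in \eqref{eq:BA.weak.pointwise}.
\end{enumerate}
With these, your argument closes.
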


We note that we may add extra $\jb{u_{+}}$-weights for $F$ in \eqref{eq:22.sphere.transest} and \eqref{eq:22.sphere.transest.2}, but it is not needed below.

The first step of the proof is to consider transport equations on background geometry. It is straightforward to verify that
\[
\frac{\partial}{\partial r} \int_{S_{u,r}} h \ud A_{\mathring{\gamma}} =  \int_{S_{u,r}} \left(\frac{\partial h}{\partial r} + \frac{2}{r} h \right)\ud A_{\mathring{\gamma}}, \qquad \frac{\partial }{\partial r} |\phi|_{\mathring{\gamma}}^2 = 2 \langle\phi, \mathring{\nab}_4\phi\rangle_{\mathring{\gamma}}.
\]
\begin{lemma}\label{Lem:bgtransport}
    Suppose $\phi$, a rank $k$ $S$-tangent field, satisfies the transport equation
    \[
    \mathring{\nab}_4 \phi + \frac{2 \lambda_0}{r}\phi  = F.
    \]
    Suppose $\lambda, \lambda'$ satisfy $\max\{\lambda, \lambda+\lambda'\}\leqslant 2 \lambda_0-1$. Then provided $S_{u, r_0}$ and $S_{u,r_1}$ are both in $S_T$ with $r_0<r_1$ we have
    \begin{equation}\label{eq:bgtransport.1}
        \sup_{r_0\leqslant r \leqslant r_1} r^\lambda (\jb{u}+r)^{\lambda'}|\phi|_s(u,r) \lesssim r_0^\lambda (\jb{u}+r_0)^{\lambda'}|\phi|_s(u,r_0) + \int_{r_0}^{r_1} {r}^\lambda (\jb{u}+r)^{\lambda'} |F|_s(u,r) \ud r    
    \end{equation}
    and for any $\nu \in [0,1)$ the Hardy estimate
    \begin{equation}\label{eq:bgtransport.2}
        \begin{split}
            &(1-\nu) \int_{r_0}^{r_1}  {r}^{2\lambda-2} (\jb{u}+r)^{2\lambda'+ \nu} |\phi|^2_s(u,r) \ud r \\
            &\qquad \lesssim \: r_0^{2\lambda-1} (\jb{u}+r_0)^{2\lambda'+\nu}|\phi|^2_s(u,r_0) + (1-\nu)^{-1}\int_{r_0}^{r_1} {r}^{2\lambda} (\jb{u}+r)^{2\lambda'+\nu} |F|^2_s(u,r) \ud r
        \end{split}
    \end{equation}
    for any $s$, where the implicit constants depend only on $s$.
\end{lemma}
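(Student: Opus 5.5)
We will treat the transport equation as an ODE in $r$ for the pointwise-in-$(u,r)$ sphere norm, after commuting with the background operators appearing in $|\cdot|_s$. First, the background operators $r\mathring{\nab}_4$, $r\mathring{\nab}$, $\brk{u}\mathring{\nab}_u$ commute with each other, and they commute with the operator $\mathring{\nab}_4+\frac{2\lambda_0}{r}$. More precisely, $[r\mathring{\nab}_4,\mathring{\nab}_4] = -\mathring{\nab}_4$ and $[r\mathring\nab_4, r^{-1}] = -r^{-1}$, so $r\mathring{\nab}_4$ maps a solution with source $F$ to a solution with source $(r\mathring\nab_4+1)F$. It is therefore cleaner to apply $\mathring{\nab}_4 r$ and not $r\mathring\nab_4$. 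Alternatively, since $|\phi|_s$ is equivalent to the norm built from $\mathring\nab_4 r$, we simply observe that each commuted quantity $\Phi=(r\mathring{\nab}_4)^i(r\mathring{\nab})^j(\brk{u}\mathring{\nab}_u)^k\phi$ satisfies $\mathring{\nab}_4\Phi+\frac{2\lambda_0}{r}\Phi=\tilde F$, with $|\tilde F|_0\lesssim |F|_{s}$. This holds after an induction on $i$ that uses the equation itself to rewrite $r\mathring\nab_4\phi = rF-2\lambda_0\phi$. Summing over multi-indices reduces both estimates to the case $s=0$, up to constants depending on $s$.

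For $s=0$, set $y(r)=|\phi|_0^2(u,r)$. Using the stated first-variation formula $\partial_r\int h\,\ud A_{\mathring\gamma}=\int(\partial_r h+\frac{2}{r}h)\,\ud A_{\mathring\gamma}$ together with $\partial_r|\phi|^2_{\mathring\gamma}=2\langle\phi,\mathring\nab_4\phi\rangle$, and noting that $\mathring\nab_4$ includes the $-\frac{k-k'}{r}$ shift, which combined with the area factor gives exactly a $\frac{1}{r}$-weight, we obtain
\[
\partial_r |\phi|_0 + \frac{2\lambda_0 - 1}{r}|\phi|_0 \leqslant |F|_0 .
\]
The precise constant ($2\lambda_0$ versus $2\lambda_0-1$) is fixed by the convention $|\cdot|_0$ is an $L^2$ norm over the sphere of radius $r$. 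I will check it against the hypothesis $\max\{\lambda,\lambda+\lambda'\}\le 2\lambda_0-1$, which suggests the integrating factor $r^{2\lambda_0-1}$. Multiplying by $r^{2\lambda_0-1}$ and integrating from $r_0$ to $r$ gives $r^{2\lambda_0-1}|\phi|_0(r)\le r_0^{2\lambda_0-1}|\phi|_0(r_0)+\int_{r_0}^r \rho^{2\lambda_0-1}|F|_0\,\ud\rho$. To insert the weights, write $r^\lambda(\jb{u}+r)^{\lambda'} = r^{2\lambda_0-1}\cdot w(r)$ with $w(r)=r^{\lambda-2\lambda_0+1}(\jb{u}+r)^{\lambda'}$. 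The function $w$ is nonincreasing in $r$: its logarithmic derivative is $\frac{\lambda-2\lambda_0+1}{r}+\frac{\lambda'}{\jb u+r}$, which is $\le 0$ if $\lambda'\ge0$ (bound it by $(\lambda+\lambda'-2\lambda_0+1)/r\le 0$) and trivially $\le0$ if $\lambda'<0$ (using $\lambda\le 2\lambda_0-1$). This is exactly where both conditions of the hypothesis enter. Multiplying the integrated inequality by $w(r)\le w(\rho)$ for $\rho\le r$ yields \eqref{eq:bgtransport.1}.

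For the Hardy estimate \eqref{eq:bgtransport.2}, compute $\partial_r\big(r^{2\lambda-1}(\jb u+r)^{2\lambda'+\nu}|\phi|_0^2\big)$ using the differential inequality. The coefficient of $r^{2\lambda-2}(\jb u+r)^{2\lambda'+\nu}|\phi|_0^2$ is then
\[
(2\lambda-1)-2(2\lambda_0-1)+(2\lambda'+\nu)\tfrac{r}{\jb u+r} \leqslant -(1-\nu),
\]
again by splitting on the sign of $\lambda'$ and using the hypothesis. Integrating from $r_0$ to $r_1$ and dropping the nonnegative boundary term at $r_1$ leaves the cross term $2r^{2\lambda-1}(\jb u+r)^{2\lambda'+\nu}|\phi|_0|F|_0$. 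We handle it by Cauchy--Schwarz with parameter $\frac{1-\nu}{2}$, absorbing half of the good term; this produces the factor $(1-\nu)^{-1}$ in front of the $F$-integral. The main obstacle is the bookkeeping in the first step: making sure that commuting with $r\mathring\nab_4$ does not shift $\lambda_0$ in a way that breaks the monotonicity conditions. If the shift is genuine, the fallback is to use the equation to trade $r\mathring\nab_4\phi$ for $rF$ and $\phi$, as described above.
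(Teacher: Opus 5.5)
Your proposal is correct and is essentially the paper's proof. The $s=0$ case comes from the sphere-$L^2$ inequality $\partial_r|\phi|_0+\frac{2\lambda_0-1}{r}|\phi|_0\leqslant|F|_0$ combined with $\max\{\lambda,\lambda+\lambda'\}\leqslant 2\lambda_0-1$. The paper instead works directly with $|w\phi|_0$ for $w=r^\lambda(\jb{u}+r)^{\lambda'}$, and obtains the Hardy bound by integrating by parts against $(\jb{u}+r)^\nu/r$; this is the same computation as your single weighted identity. Higher $s$ follows by commuting with $r\mathring{\nab}$, $\jb{u}\mathring{\nab}_u$ and $r\mathring{\nab}_4$, exactly as you do.

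Your own identity $\big(\mathring{\nab}_4+\frac{2\lambda_0}{r}\big)(r\mathring{\nab}_4\phi)=(r\mathring{\nab}_4+1)F$ already shows that $\lambda_0$ is never shifted. So the detour through $\mathring{\nab}_4 r$ and the ``fallback'' of trading $r\mathring{\nab}_4\phi$ for $rF-2\lambda_0\phi$ are unnecessary. The only thing to add is the $(|\phi|_0^2+\varepsilon)^{1/2}$ regularisation where $|\phi|_0$ vanishes.
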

\begin{proof}
    First we consider the case $s=0$. Let $w = r^\lambda (\jb{u}+r)^{\lambda'}$, and observe that
    \[
    \partial_r w^2 = \frac{2}{r} w^2 \left( \lambda + \lambda'\frac{r}{\jb{u}+r}\right) \leqslant \max\{\lambda, \lambda+\lambda'\}\frac{2}{r}w^2.
    \]
    Hence
    \begin{align*}
        \frac{\partial}{\partial r}(w^2 |\phi|^2_{\mathring{\gamma}^2}) +\frac{2}{r}w^2 |\phi|^2_{\mathring{\gamma}^2} &\leqslant (1+\max\{\lambda, \lambda+\lambda'\})\frac{2}{r}w^2 |\phi|^2_{\mathring{\gamma}^2} + 2 w^2 \langle\phi, \mathring{\nab}_4\phi\rangle_{\mathring{\gamma}} \\
        &= (\max\{\lambda, \lambda+\lambda'\} - \lambda_1)\frac{2}{r}w^2 |\phi|^2_{\mathring{\gamma}^2} + 2 \langle w\phi, wF \rangle_{\mathring{\gamma}}.
    \end{align*}
    Integrating over the sphere and using the formula above with the Cauchy--Schwarz inequality, we deduce that
    \begin{equation*}
        \begin{split}
            \frac{\partial}{\partial r} |w\phi|^2_0 =&\:  \int_{S_{u,r}} \Big( \frac{\partial}{\partial r}(w^2 |\phi|^2_{\mathring{\gamma}^2}) +\frac{2}{r}w^2 |\phi|^2_{\mathring{\gamma}^2} \Big) \ud A_{\mathring{\gamma}}  \\
            \leqslant &\: \frac{2}{r} (\max\{\lambda, \lambda+\lambda'\} - \lambda_1) |w\phi|^2_0 + 2|w\phi|_0 |wF|_0\leqslant 2|w\phi|_0 |wF|_0
        \end{split}
    \end{equation*}
    so that for any $\varepsilon>0$
    \[
    \frac{\partial}{\partial r} \left(|w\phi|^2_0 + \varepsilon\right)^\frac{1}{2} \leqslant |wF|_0.
    \]
    Integrating between $r_0$ and $r$ and sending $\varepsilon \to 0$ gives the first result after taking a supremum over $r\leqslant r_1$.

    For the Hardy estimate, first note that if $\nu \in [0,1)$, then
    $\f{\rd}{\rd r}\f{(\brk{u}+r)^\nu}{r} = \f{(\brk{u}+r)^{\nu-1}}{r^2}(\nu r - (\brk{u}+r))  \f{(1-\nu)(\brk{u}+r)^{\nu}}{r^2}$, which implies
    $$\f{(1-\nu)(\brk{u}+r)^\nu)}{r^2} \leqslant -\f{\rd}{\rd r} \f{(\brk{u}+r)^\nu}{r}.$$
    Then 
    \begin{align*}
        &(1-\nu) \int_{r_0}^{r_1} \f{(\brk{u}+r)^{\nu}}{r^2} |w\phi|_0^2(u,r) \ud r \\
        &\qquad \leqslant   -\int_{r_0}^{r_1} \frac{\partial}{\partial r}\left( \f{(\brk{u}+r)^\nu}{r} \right)|w\phi|_0^2(u,r)  \ud r \\
        &\qquad =  \left[\f{(\brk{u}+r)^\nu}{r}|w\phi|_0^2(u,r) \right]^{r_0}_{r_1} + \int_{r_0}^{r_1} \f{(\brk{u}+r)^\nu}{r} \frac{\partial}{\partial r}|w\phi|_0^2(u,r) \ud r \\
        &\qquad \leqslant \f{(\brk{u}+r_0)^\nu}{r_0}|w\phi|_0^2(u,r_0)  + \frac{1-\nu}{2}\int_{r_0}^{r_1} \frac{(\brk{u}+r_0)^\nu}{{r}^2} |w\phi|_0^2(u,r) \ud r+ \frac{2}{1-\nu} \int_{r_0}^{r_1} (\brk{u}+r_0)^\nu |wF|_0^2(u,r) \ud r .
    \end{align*}
    
    Rearranging we have the required result. 

    The result for higher $s$ is an immediate consequence of the fact that by commuting the equation we have
    \begin{align*}
    &\mathring{\nab}_4(r\mathring{\nab} \phi) + \frac{2 \lambda_0}{r} (r\mathring{\nab} \phi) =r \mathring{\nab} F, \qquad \mathring{\nab}_4(\jb{u}\mathring{\nab}_u \phi) + \frac{2 \lambda_0}{r} (\jb{u}\mathring{\nab}_u \phi) = \jb{u}\mathring{\nab}_u F, \\
   & \qquad \textrm{and}\quad  \mathring{\nab}_4(r\mathring{\nab}_4 \phi) + \frac{2 \lambda_0}{r} (r\mathring{\nab}_4 \phi) =r \mathring{\nab}_4 F + F.
    \end{align*}
    A simple induction completes the proof.
\end{proof}

\begin{corollary} \label{cor:bgtrans}
Estimates \eqref{eq:main.sphere.transest}, \eqref{eq:22.sphere.transest}, \eqref{eq:22.sphere.transest.2}, and \eqref{eq:main.int.transest} in Theorem~\ref{thm:transest} hold for the background transport equation $\mathring{\nab}_{4} \phi + \frac{2 \lambda_{0}}{r} \phi = F$ (without any restrictions on $s$).   
\end{corollary}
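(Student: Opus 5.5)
The plan is to derive each estimate of Corollary~\ref{cor:bgtrans} by integrating the two basic inequalities of Lemma~\ref{Lem:bgtransport} along each generator $H_u$. We start from the initial surface $\tilde{\Sigma}_1$ when $u\leqslant 1$, and from the centre $r=0$ when $u\geqslant 1$. We then translate the $r$- and $(\jb{u}+r)$-weights into the weighted norms of \S\ref{sec:geometry}. Since the background equation commutes exactly with $r\mathring{\nab}$, $r\mathring{\nab}_4$ (up to the harmless $+F$) and $\jb{u}\mathring{\nab}_u$, Lemma~\ref{Lem:bgtransport} already holds for every $s$. Hence no restriction on $s$ arises, and no top-order pointwise correction term of the kind in \eqref{eq:main.int.transest.top} appears.

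\textbf{Pointwise estimate \eqref{eq:main.sphere.transest}.} Fix $u$ and split into two cases.

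For $u\geqslant1$ we apply \eqref{eq:bgtransport.1} with $r_0\to0$ and $\lambda=2\lambda_0-1$. The boundary term vanishes by the hypothesis $\lim_{r\to0}r^{2\lambda_0-1}|\phi|_s=0$. We choose $\lambda'=0$, or $\lambda'<0$ when $\tilde q_\infty<2\lambda_0+1$, so that $\max\{\lambda,\lambda+\lambda'\}\leqslant 2\lambda_0-1$. We then bound
\[
|F|_s\leqslant r^{q_0+1}(\jb{u}+r)^{-q_0-q_\infty}\jb{u_+}^{-q_{u_+}}|\jb{u_+}^{q_{u_+}}F|_{s,q_0,q_\infty}
\]
and evaluate the resulting integral
\[
\int_0^r r'^{2\lambda_0+q_0}(\jb{u}+r')^{-q_0-q_\infty+\lambda'}\,\ud r'.
\]
It converges at $0$ because $2\lambda_0+q_0+1>0$. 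When $r\lesssim\jb{u}$ it is of size $r^{2\lambda_0+q_0+1}\jb{u}^{-q_0-q_\infty}$. When $r\gg\jb{u}$ it grows like $r^{2\lambda_0+1-q_\infty}$ if $q_\infty<2\lambda_0+1$, and is bounded by $\jb{u}^{2\lambda_0+1-q_\infty}$ otherwise. This dichotomy is exactly what produces the two values of $p_\infty$, and comparing weights gives the stated $\jb{u}^{\tilde q_\infty-p_\infty-1}$ and $\jb{u_+}^{p_{u_+}}$ factors (here one uses $\tilde q_\infty\leqslant q_\infty$).

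For $u<1$ the generator starts on $\tilde{\Sigma}_1$ at $r_0=1-u\sim\jb{u}$, and the data term contributes $|\phi|_{s,q_0+1,\tilde q_\infty-1}(\tilde{\Sigma}_1)$ times the same weight comparison. On this range $\jb{u_+}\sim1$, so the $u_+$ weights are trivial.

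\textbf{Integrated estimates \eqref{eq:22.sphere.transest}, \eqref{eq:22.sphere.transest.2} and \eqref{eq:main.int.transest}.} For \eqref{eq:22.sphere.transest} we square the pointwise-in-$u$ bound and integrate in $u$. For \eqref{eq:22.sphere.transest.2} we instead apply Cauchy--Schwarz in $r$ inside the $r$-integral before squaring. This bounds the $F$ contribution by the spacetime norm $\|F\|_{s,q_0,q_\infty}$, at the cost of $\int r'^{-1-\text{small}}$-type factors. These factors are finite by the strict inequality $2\lambda_0+q_0+1>0$ and by the separation of $\tilde q_\infty$ from $2\lambda_0+1$. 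The data term becomes the integrated $\tilde{\Sigma}_1$ norm after changing variables $u\mapsto r_0=1-u$.

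For \eqref{eq:main.int.transest} we use the Hardy estimate \eqref{eq:bgtransport.2} with $\lambda=q_0+\tfrac32$, a suitable $\lambda'$, and $\nu<1$ encoding the $(\jb{u}+r)$ power. We then integrate in $u$; the boundary term at $r_0\to0$ vanishes by the stated limit hypothesis. The admissibility condition on $\lambda,\lambda'$ requires $q_0+\tfrac32\leqslant 2\lambda_0-1$ in $\max\{\lambda,\lambda+\lambda'\}$, which I expect to be arranged by splitting the weight $r^{-3-2p_0}(\jb{u}+r)^{\dots}$ between $w$ and $\nu$. The case $q_\infty>2\lambda_0+1$ is the borderline one. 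There one cannot take $\nu=1$, and I will take $1-\nu\sim\delta$ and absorb the extra $\jb{u}^{\delta/2}$ loss; this is where $\tilde p_\infty=2\lambda_0-\tfrac12\delta$ comes from.

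\textbf{Expected obstacle.} The main difficulty I anticipate is this weight bookkeeping. One must choose $(\lambda,\lambda',\nu)$ compatibly with the constraint $\max\{\lambda,\lambda+\lambda'\}\leqslant2\lambda_0-1$ while reproducing precisely the exponents $p_\infty$, $\tilde p_\infty$ and the $\jb{u}$ powers, especially near the endpoint $2\lambda_0+q_0+\tfrac12=0$. If the constraint fails for the naive choice, I would first try decomposing into $r\leqslant\jb{u}$ and $r>\jb{u}$ and using different weights on each piece.
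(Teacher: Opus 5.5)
Your treatment of \eqref{eq:main.sphere.transest}, \eqref{eq:22.sphere.transest} and \eqref{eq:22.sphere.transest.2} matches the paper. The alternative $\lambda'<0$ is unnecessary: the $\lambda'<0$ version of \eqref{eq:bgtransport.1} follows from the $\lambda'=0$ one, and $\lambda'=0$ works in every case.

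The genuine gap is in the Hardy step for \eqref{eq:main.int.transest}, where your choice $\lambda=q_0+\tfrac32$ has the wrong sign. In \eqref{eq:bgtransport.2} the left side carries $r^{2\lambda-2}$, the forcing term carries $r^{2\lambda}$, and the boundary term carries $r_0^{2\lambda-1}$. These must match, respectively:
the weight $r^{-5-2q_0}$ in $\|\phi\|_{s,q_0+1,\cdot}$;
the weight $r^{-3-2q_0}$ in $\|F\|_{s,q_0,q_\infty}$;
the factor $r_0^{-2(q_0+2)}$, which appears both in the axis hypothesis of part ii) and in $\|\phi\|_{s,q_0+\frac12,q_\infty-\frac12}(\tilde{\Sigma}_1)$.
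All three force $\lambda=-\tfrac32-q_0$. With your sign none of them match. Moreover, the constraint $q_0+\tfrac32\leqslant 2\lambda_0-1$ that you flag as the obstacle is implied by no hypothesis, and splitting into $r\leqslant\jb{u}$ and $r>\jb{u}$ cannot repair a weight of the wrong sign. With the correct sign the obstacle disappears: $\lambda\leqslant 2\lambda_0-1$ is \emph{exactly} the hypothesis $2\lambda_0+q_0+\tfrac12\geqslant 0$ of part ii), which your argument never uses.

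The remaining bookkeeping is short. If $q_\infty<2\lambda_0+\tfrac12$, take $\lambda'=q_\infty+q_0$ and $\nu=0$. If $q_\infty\in[2\lambda_0+\tfrac12,2\lambda_0+1)$, take $\lambda'=2\lambda_0+\tfrac12+q_0$ and $\nu=2q_\infty-4\lambda_0-1\in[0,1)$. In both cases $\lambda+\lambda'\leqslant 2\lambda_0-1$ and $2\lambda'+\nu=2q_\infty+2q_0$. So the left side is the $H_u$-slice of $\|\phi\|^2_{s,q_0+1,q_\infty-1}$; set $r_0=\max\{1-u,0\}$ and integrate in $u$.

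For $q_\infty>2\lambda_0+1$, your idea $1-\nu\sim\delta$ is right in spirit, but make it precise as the paper does. Apply the case $q_\infty'=2\lambda_0+1-\tfrac12\delta$ (where $1-\nu=\delta$) to $\jb{u}^{q_\infty-q_\infty'}\phi$, which solves the same equation with forcing $\jb{u}^{q_\infty-q_\infty'}F$. Then use $\jb{u}^{q_\infty-q_\infty'}\leqslant (r+\jb{u})^{q_\infty-q_\infty'}$ in the data and forcing norms. This trade of $(r+\jb{u})$-weight for $\jb{u}$-weight is what produces $\tilde p_\infty=2\lambda_0-\tfrac12\delta$ together with the prefactor $\jb{u}^{q_\infty-\tilde p_\infty-1}$; nothing is absorbed.
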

\begin{proof}
    We apply Lemma~\ref{lem:basicest} for suitably chosen $\lambda, \lambda'$.
 
    \pfstep{Proof of i)} Set $\lambda = 2\lambda_0-1$ and $\lambda' = 0$, which is admissible by our assumptions. In the case of \eqref{eq:main.sphere.transest}, we may divide both sides of the transport equation by $\jb{u_{+}}^{q_{u_{+}}}$ to reduce to the case $q_{u_{+}} = 0$ and $p_{u_{+}} = q_{\infty} - \tilde{q}_{\infty}$. If $u\geqslant 1$ then we apply the estimate with $r_0=0$ and can discard a term by the assumption near the axis. We then find
    \begin{align*}
     r^{2\lambda_0-1} |\phi|_s(u,r) &\lesssim \int_0^r {r'}^{2\lambda_0-1} |F|_s(u, r') \ud r' \leqslant |F|_{s, q_0, q_\infty}\int_0^r {r'}^{2\lambda_0+q_0} {(\jb{u}+{r'})^{-q_0-q_\infty}} \ud r' \\
     &\lesssim |F|_{s, q_0, q_\infty} I(r),
    \end{align*}
    where
    \[
    I(r) = \left \{ \begin{array}{cc}
        \jb{u}^{1+2\lambda_0-q_\infty}(r (\jb{u}+r)^{-1})^{2\lambda_0+q_0+1} & q_\infty>2\lambda_0 + 1, \\
         r^{2\lambda_0+q_0+1} (\jb{u}+{r})^{-q_0-q_\infty} & q_\infty<2\lambda_0 + 1.
    \end{array}\right.
    \]
    Rearranging and taking the supremum over $r, u$ gives the result for $u\geqslant 1$. For $u<1$, we replace $q_{\infty}$ by $\tilde{q}_{\infty}$ and apply the estimate with $r_0 = 1-u$ to obtain
    \begin{align*}
     r^{2\lambda_0-1} |\phi|_s(u,r) &\lesssim r_0^{2\lambda_0-1} |\phi|_s(u,r_0) + \int_{r_0}^r {r'}^{2\lambda_0-1} |F|_s(u, r') \ud r'\\ 
     & \leqslant r_0^{2\lambda_0-1}  |\phi|_s(u,r_0)+ |F|_{s, q_0, q_\infty}I(r),
    \end{align*}
    where we used $\tilde{q}_{\infty} \leqslant q_{\infty}$. Hence,
    \[
    \frac{r^{2 \lambda_0-1}}{I(r)} |\phi|_s(u,r) \lesssim \frac{r_0^{2 \lambda_0-1}}{I(r_0)} |\phi|_s(u,r_0) + |F|_{s, q_0, q_\infty},
    \]
    where we have used that $I(r)$ is positive and monotone increasing so that $I(r_0)\leqslant I(r)$. Again taking suprema we arrive at \eqref{eq:main.sphere.transest}, recalling that $\jb{u}\sim \jb{r_0}$. Alternatively, taking the supremum in $r$ and then $L^2$ in $u$, we obtain \eqref{eq:22.sphere.transest}.
    
    For the other variants of the estimate, we instead use the bound
    \begin{align*}
      \int_{r_0}^r {r'}^{2\lambda_0-1} |F|_s(u, r') \ud r' & \leqslant \left(\int_{r_0}^r  |F|^2_s(u, r') \frac{(r'+\jb{u})^{2q_0 + 2q_\infty}}{r'{}^{3 + 2q_0}} \ud r' \right)^{\frac{1}{2}}\left(\int_{r_0}^r   \frac{r'{}^{4\lambda_0 + 2q_0+1}} {(r'+\jb{u})^{2q_0 + 2q_\infty}}\ud r' \right)^{\frac{1}{2}} \\
      &\lesssim \|F\|_{s, q_0, q_\infty}(H_u) \times  I(r).
    \end{align*}
    to deduce
    \[
    \frac{r^{2 \lambda_0-1}}{I(r)} |\phi|_s(u,r) \lesssim \frac{r_0^{2 \lambda_0-1}}{I(r_0)} |\phi|_s(u,r_0) + \|F\|_{s, q_0, q_\infty}(H_u).
    \]
    Squaring, taking the supremum over $r$ and integrating over $u$ gives \eqref{eq:22.sphere.transest.2}.

   \pfstep{Proof of \eqref{eq:main.int.transest}} In the case $q_\infty \in [2\lambda_0+\f 12, 2\lambda_0+1)$, we set $\lambda = -\frac{3}{2}-q_0$, $\lambda' = 2\lambda_0 + \frac{1}{2} + q_0$, and $\nu = 2 q_\i -4\lambda_0 -1 \in [0,1)$. In the case $q_\i < 2\lambda_0+\f 12$, we set  $\lambda = -\frac{3}{2}-q_0$, $\lambda' = q_\i + q_0$, and $\nu = 0$.
    Then, in these two cases, the Hardy estimate \eqref{eq:bgtransport.2}  gives
    \begin{align*}
        &\: \int_{r_0}^{r_1} |\phi|_s^2(u,r) \frac{(\jb{u}+r)^{2q_\i + 2q_0}}{r^{3 + 2 (q_0+1)}} \ud r \\
        \lesssim &\: |\phi|_s^2(u,r_0) \frac{(\jb{u}+r_0)^{2q_\i + 2q_0}}{r_0^{3 + 2 (q_0+1/2)}} + \int_{r_0}^{r_1} |F|_s^2(u,r) \frac{(\jb{u}+r)^{2q_\i + 2q_0}}{r^{3 + 2 q_0}} \ud r \\
        \lesssim &\: |\phi|_s^2(u,r_0) \frac{(\jb{u}+r_0)^{2(q_\i-1/2) + 2(q_0+1/2)}}{r_0^{3 + 2 (q_0+1/2)}} + \int_{r_0}^{r_1} |F|_s^2(u,r) \frac{(\jb{u}+r)^{2q_\infty + 2q_0 }}{r^{3 + 2 q_0}} \ud r.
    \end{align*}
    Setting $r_0 = \max\{1-u, 0\}$ and choosing $r_1(u)$ such that $\tau(u, r_1(u)) = T$, we can then integrate over $u$ and discard the contribution at $r=0$ to get the result for $q_{\infty} < 2 \lambda_0 + 1$. When $q_\infty > 2 \lambda_0 + 1$, then we simply use $\brk{u} \leqslant r + \brk{u}$ to estimate 
    \begin{equation*}
        \|\brk{u}^{q_{\infty}-\tilde{p}_{\infty}-1} \phi\|_{s, q_0+\frac{1}{2}, \tilde{p}_{\infty}+\frac{1}{2}}(\tilde{\Sigma}_1)
        +\|\brk{u}^{q_{\infty}-\tilde{p}_{\infty}-1} F\|_{s, q_0, \tilde{p}_{\infty}+1} \leqslant 
        \|\phi\|_{s, q_0+\frac{1}{2}, q_\i -\frac{1}{2}}(\tilde{\Sigma}_1) 
        + \| F\|_{s, q_0, q_\infty}, 
    \end{equation*}
    where $\tilde{p}_{\infty} = 2 \lambda_{0} + \frac{1}{2} \delta$, and use the case $q_{\infty} = 2 \lambda_0 + 1 - \frac{1}{2} \delta$. \qedhere
\end{proof}

We are now ready to complete the proof of Theorem~\ref{thm:transest}.

\begin{proof}[Proof of Theorem~\ref{thm:transest}]
As before, in the case of \eqref{eq:main.sphere.transest}, we divide both sides of the transport equation by $\jb{u_{+}}^{q_{u_{+}}}$ to reduce to the case $q_{u_{+}} = 0$ and $p_{u_{+}} = q_{\infty} - \tilde{q}_{\infty}$.
We work under the assumption that $\tilde{q}_{\infty} = q_{\infty} \leqslant 2 \lambda_{0} +1 +\frac{1}{2} \delta$. The general case is reduced to this case by trading powers of $r + \brk{u}$ with appropriate powers of $\brk{u}$ and $\jb{u_{+}}$ in the norms on the right hand sides of the estimates. More precisely, we (i)~estimate $\jb{u_{+}}^{q_{\infty} - \tilde{q}_{\infty}} \leqslant (r + \brk{u})^{q_{\infty} - \tilde{q}_{\infty}}$ in the norm for $F$ to reduce to the case $q_{\infty} = \tilde{q}_{\infty}$, and then (ii)~if $q_{\infty} > 2 \lambda_{0} + 1 + \frac{1}{2} \delta$, estimate further $\brk{u}^{q_{\infty} - 2 \lambda_{0} +1 +\frac{1}{2} \delta} \leqslant (r + \brk{u})^{q_{\infty} - 2 \lambda_{0} +1 +\frac{1}{2} \delta}$ in the norms for $\phi$ on $\tilde{\Sigma_{1}}$ and $F$ to reduce to the case the other inequality also holds. The benefit of restricting to this range is that 
\begin{equation} \label{eq:transest-qinfty}
    q_{\infty}^{-}-1 \leqslant p_{\infty}, \quad q_{\infty}^{-}-1 \leqslant \tilde{p}_{\infty}
\end{equation}
which will simplify our perturbation argument.

We re-write the equation as
    \[
    \mathring{\nab}_4 \phi + \frac{2\lambda_0}{r} \phi  = F + \lambda_0\left(\frac{2}{r} -\trch \right) \phi + (\mathring{\nab}_4 - \nab_4)\phi =: F + F_e.
    \]
    We note that schematically we have
    \[
    F_e \sim \left(\trch - 2 r^{-1} \right) \phi + \chih \cdot \phi
    \]
    By Lemma~\ref{lem:products} and the pointwise weak bootstrap assumption \eqref{eq:BA.weak.pointwise}, we have, for $s \leqslant s_{0}$,
    \begin{equation} \label{eq:transest-pert-1}
        |(\trch - 2 r^{-1}) \phi|_{s, q_{0}, q_{\infty}}(u, r) + |\chih \phi|_{s, q_{0}, q_{\infty}}(u, r) \aleq \epsilon_{0} |\phi|_{s, q_{0}^{-}, q_{\infty}^{-}-1}(u, r).
    \end{equation}
    where $|\cdot|_{s, q_{0}, q_{\infty}}(u, r)$ is a shorthand for $|\cdot|_{s, q_{0}, q_{\infty}}(\{ (u, r) \})$. This estimate immediately implies \eqref{eq:main.sphere.transest}, \eqref{eq:22.sphere.transest}, \eqref{eq:22.sphere.transest.2}, and \eqref{eq:main.int.transest} for the geometric transport equation when $s \leqslant s_{0}$. For instance, in the case of \eqref{eq:main.sphere.transest}, we use \eqref{eq:transest-pert-1} to estimate
    \begin{align*}
        |F_e|_{s, q_{0}, q_{\infty}} \aleq \epsilon_{0} |\phi|_{s, q_{0}^{-}, q_{\infty}^{-}-1}
         \aleq \epsilon_{0} | \phi |_{s, q_{0}+1, p_{\infty}},
    \end{align*}
    where we used the first inequality in \eqref{eq:transest-qinfty}. Applying \eqref{eq:main.sphere.transest} to the background transport operator (which holds by Corollary~\ref{cor:bgtrans}), we see that the contribution of $F_{e}$ can be absorbed if $\epsilon_{0}$ is sufficiently small. This proves the desired geometric transport estimate. A similar argument establishes \eqref{eq:22.sphere.transest}, \eqref{eq:22.sphere.transest.2}, and \eqref{eq:main.int.transest}; for the last estimate, we use the second inequality in \eqref{eq:transest-qinfty}.
    
    It remains to prove \eqref{eq:main.int.transest.top}. By Lemma~\ref{lem:products} and both weak bootstrap assumptions \eqref{eq:BA.weak.pointwise}--\eqref{eq:BA.weak.integrated}, we have
    \begin{equation*} 
        \|(\trch - 2 r^{-1}) \phi\|_{s_{0}+1, q_{0}, q_{\infty}} + \|\chih \phi\|_{s_{0}+1, q_{0}, q_{\infty}} \aleq \epsilon_{0} \|\phi\|_{s_{0}+1, q_{0}^{-}, q_{\infty}^{-}-1} + \epsilon_{0}|\phi|_{s_{0}-2, q_{0}^{-}-\frac{1}{2}, q_{\infty}^{-}-\frac{1}{2}}.
    \end{equation*}
    We now apply \eqref{eq:main.sphere.transest} to the background transport operator, which holds even for $s = s_{0} + 1$ by Corollary~\ref{cor:bgtrans}. The contribution of the first term can be handled as before. We bound the second term by the last term on the right hand side of \eqref{eq:main.int.transest.top}, which is possible thanks to the second inequality in \eqref{eq:transest-qinfty}. \qedhere
\end{proof}
\begin{remark} Restricting our proof of \eqref{eq:main.sphere.transest} to $u \geqslant 1$, we obtain the following variant: if the assumptions of Theorem~\ref{thm:transest} $i)$ hold, then for $u \geqslant 1$, we have
\begin{equation} \label{eq:main.sphere.transest.3}
|u^{q_{\infty} - p_{\infty}-1} \phi|_{s, q_0+1, p_\infty}(H_{u}) \lesssim |F|_{s, q_0, q_\infty}(H_{u}),
\end{equation}
where $p_{\infty} = 2 \lambda_{0}$ for $q_{\infty} > 2 \lambda_{0} + 1$, and $p_{\infty} = q_{\infty} - 1$ for $q_{\infty} < 2 \lambda_{0} + 1$.
\end{remark}
\section{The Teukolsky equation} \label{sec:teukolsky}

\subsection{Derivation of the Teukolsky equation}
We now derive a wave equation satisfied by the components $\alpha$ of the Weyl tensor.

\begin{proposition}\label{prop:Teukolsky.nonlinear.control}
    There exists $\ep_0>0$ sufficiently small such that the following holds for some implicit constant depending only on $s_0$. The Weyl tensor components $\alpha$ satisfy the Teukolsky equation
    \begin{equation} \label{eq:Teukolsky}
       \nab_3(r \nab_4(r^2\alpha)) + 2 \nab_4 (r^2 \alpha) + 3 \nab_3 (r^2 \alpha) + \frac{4}{r} (r^2 \alpha) -  r \slashed\Delta (r^2 \alpha) = \mathcal{E},
    \end{equation}
    where if the weak bootstrap assumptions are satisfied, the error term $\mathcal{E}$ satisfies the estimate
        \begin{equation}
        \| \brk{u}^{\frac{1}{2}^{+}} \mathcal{E} \|_{s_0,\frac{1}{2}^+, 0^+}\lesssim \epsilon^2.
    \end{equation}
    If additionally the strong bootstrap assumptions hold then
    \begin{equation}
        \|\mathcal{E}\|_{s_0,\frac{1}{2}^+,\frac{7}{2}^{-} }\lesssim \epsilon^2.
    \end{equation}
\end{proposition}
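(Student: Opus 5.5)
The proof splits into two parts: an algebraic derivation of \eqref{eq:Teukolsky} with an explicit expression for $\mathcal{E}$, and then a term-by-term estimate of $\mathcal{E}$ using Lemma~\ref{lem:products} together with the bootstrap assumptions.

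\textbf{Derivation.} I apply $\nab_3$ to the Bianchi equation \eqref{eq:4beta}, or equivalently $\nab_4$ to \eqref{eq:3alpha}. The cleaner route is to take $\nab_4$ of \eqref{eq:3alpha} and substitute \eqref{eq:4beta} for $\nab_4\beta$. The required commutators are $[\nab_4,\nab_3]$ from \eqref{eq:CK.nab3.nab4} and $[\nab_4,\nab\hot]$ from \eqref{eq:CK.nab4.nab}.

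In Minkowski, $\trch=2/r$ and $\trchb=-2/r$. Hence the linear part is $\nab\hot\div\alpha$, which by \eqref{eq:CK2.2.2a} equals $\slashed\Delta\alpha-2K\alpha$ with $K=r^{-2}$. Conjugating by $r^2$ and multiplying by $r$ then produces exactly the left-hand side of \eqref{eq:Teukolsky}; I would check the coefficients $2,3,4$ against the flat case.

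Every remaining term goes into $\mathcal{E}$. Schematically, $\mathcal{E}$ is $r^3$ times a sum of the following:
\begin{itemize}
\item linear-in-deviation terms $(\trch-2r^{-1},\ \trchb+2r^{-1},\ \omegab,\ (f-1)/r,\ K-r^{-2})\cdot(\nab_4\alpha,\nab\alpha,\alpha)$, including the $\omegab\nab_4\alpha$ term from the commutator;
\item $\eta\cdot\nab\alpha$, $\chih\cdot\nab\beta$, $\nab(\eta\hot\beta)$, $\nab_4(\rho\chih+\sigma{}^*\chih)$, and similar terms;
\item $\beta\cdot\beta$-type terms coming from $[\nab_4,\nab]$.
\end{itemize}
All $\nab_4$ derivatives of Ricci coefficients or curvature are eliminated using the transport equations \eqref{eq:4chih}, \eqref{eq:4zeta}, \eqref{eq:omegab}, \eqref{eq:4rho}, and \eqref{eq:4beta}. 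This leaves at most one angular derivative on $\beta,\eta,\chih$ and no derivative on $\rho,\sigma,K$.

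\textbf{Estimate.} Each term is quadratic, so I apply \eqref{eq:product.for.integrated}. One factor goes in a pointwise norm at order $s_0-3$ from \eqref{eq:BA.weak.pointwise}, and the other goes in an integrated norm at order $s_0$ or $s_0+1$ from \eqref{eq:BA.weak.integrated}. The choice is arranged so that the top-order factor carries whichever of the two has the higher allowed regularity: $\alpha,\beta,\chih,\eta$ at $s_0+1$, and $\rho,\sigma,K,\chibh,\omegab,f-1$ at $s_0$.

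The derivative count works as follows. $\mathcal{E}$ is measured at order $s_0$. A term like $\eta\cdot r\nab\alpha$ needs $\alpha$ at order $s_0+1$, which is available in the integrated norm, and $\eta$ pointwise at low order. A term like $\chih\cdot r\nab\beta$ uses $\|\beta\|_{s_0+1}$. Terms like $\omegab\, r\nab_4(r^2\alpha)$ use $\|\alpha\|_{s_0+1,(-\frac12)^+,\frac32^+}$ times $|\brk{u}^{1^+}\omegab|_{s_0-2,0^+,0}$, or alternatively the $L^2_u$ integrated $\omegab$ norm times $|\alpha|_{s_0}$ pointwise. Terms containing $f-1$ appear through $\nab_3=2\nab_u-f\nab_4$; they contribute $\frac{f-1}{r}\, r\nab_4(r^2\alpha)$, which is handled like the $\omegab$ terms using $\brk{u}^{1^+}(f-1)$.

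The weight bookkeeping is then pure exponent arithmetic. Every pair (curvature)$\times$(connection) has total exponents matching $(p_0,p_\infty)=(\frac12^+,0^+)$ after accounting for the factor $r^3$ and the shifts \eqref{eq:norm.r.shift}, with an extra $\brk{u}^{\frac12^+}$ supplied by the $u$-decay factors of the connection coefficients. For the strong case I repeat the argument with \eqref{eq:BA.strong.pointwise} and \eqref{eq:BA.strong.integrated}. Here the $p_\infty$ budget is $\frac72^-$, obtained for example from $\|\alpha\|_{\frac92^-}$ times $|\chih|_{\infty,2}$.

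\textbf{Main obstacle.} I expect the difficulty to lie in the terms involving $\omegab$, $f-1$, $\chibh$, and $\trchb+2r^{-1}$ at top order. These coefficients are controlled only at $s_0$ derivatives, and $\xib$ only at $s_0-1$, so the distribution of derivatives must be checked carefully to avoid any loss. In particular, I must confirm that $\xib$ and $\nab_3$ of curvature never appear in $\mathcal{E}$ at order $s_0$; they should cancel or be absent because the equation was derived by taking $\nab_4$ of \eqref{eq:3alpha}. The other delicate point is the weak $r$-growth of the metric deviations, handled through $(f-1)/r$. If some term fails the pointwise$\times$integrated split, I would fall back on the $L^2_u L^\infty_r$ norms at the end of \eqref{eq:BA.weak.integrated}.
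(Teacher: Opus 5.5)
Your derivation (conjugate \eqref{eq:3alpha} by $r^2$, apply $r\nab_4+3$ or $\nab_4$, insert \eqref{eq:4beta}, commute) is the paper's. Your weak-case bookkeeping also works. There only the combined power of $\brk{u}$ and $\brk{u}+r$ matters, so estimating $\omegab\, r\nab_4(r^2\alpha)$ and $\frac{f-1}{r}\, r\nab_4(r^2\alpha)$ separately through their $\brk{u}^{-1^+}$ decay is harmless.

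The gap is the strong bound, which you dispose of with ``repeat the argument''.
\begin{itemize}
\item The strong bootstrap gives no improvement for $\omegab$ or $f-1$. Along $H_u$, $\omegab$ has no $r$-decay: $\nab_4\omegab=\rho+3|\eta|^2$ is integrable, so $\omegab$ tends to a generically nonzero limit. Meanwhile $f-1$ grows like $r$.
\item Hence each of these terms on its own reaches at best $p_\infty=\frac52^-$ (pairing with $\|\alpha\|_{s_0+1,(-\frac12)^+,\frac92^-}$), or about $3^-$ (pairing $|\alpha|_{s_0,(-1)^+,5^-}$ with the integrated $\omegab$ norm).
\item $\brk{u}$-weights cannot be converted into $(\brk{u}+r)$-weights, so the target $\frac72^-$ is out of reach. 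Concretely, at fixed $u$ these terms decay only like $r^{-3}$, while $\|\cdot\|_{s_0,\frac12^+,\frac72^-}$ requires essentially $r^{-7/2}$.
\end{itemize}

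The missing observation is that these contributions only ever occur in the combination $f-1-2\omegab r$.
\begin{itemize}
\item In $r^2\cdot$\eqref{eq:3alpha}, $\nab_3 r=-f$ pairs with the $-4\omegab\alpha$ term to give $2(f-1-2\omegab r)r\alpha$.
\item In the commutator, \eqref{eq:comm-nab3rnab4} gives $[\nab_3,r\nab_4]+\nab_4=2[\nab_u,r\nab_4]-\frac{f-2\omegab r-1}{r}\,r\nab_4$.
\item By \eqref{eq:f-2ombr-1}, this combination is bounded in $r$. It is controlled by $|\brk{u}^{0^+}(f-2\omegab r-1)|_{s_0-2,1^+,0}$ and $\|f-2\omegab r-1\|_{s_0,\frac32^+,(-\frac12)^+}$, which give exactly $\frac72^-$ when paired with $\alpha$. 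Moreover, $r\nab_4(f-1-2\omegab r)=-2r^2(\rho+3|\eta|^2)$ contains no $\omegab$ at all.
\end{itemize}

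The same issue arises for $\rho\chih$ (and $\sigma{}^*\chih$). After you eliminate $\nab_4\rho$ and $\nab_4\chih$ via \eqref{eq:4rho} and \eqref{eq:4chih}, $(r\nab_4+3)(r^2\rho\chih)$ contains $5r^2\rho\chih-\frac52 r^3\trch\rho\chih$. Under the strong bootstrap each piece alone gives only $p_\infty=\frac52^-$, e.g.\ $r^2$ times $|\chih|_{s_0,0^+,2}$ times $\|\rho\|_{s_0,(-\frac12)^+,\frac52^-}$, or the reversed split. You must group them as $\frac52 r^2(2-r\trch)\rho\chih$, an effectively cubic term that reaches $\frac72^-$.

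So the strong estimate is not pure exponent arithmetic on quadratic (curvature)$\times$(connection) pairs. You need to exhibit these two cancellations. The paper's table singles out exactly $(f-1-2\omegab r)r^2\nab_4\alpha$ and $r^2(2-r\trch)\chih\rho$ as the critical terms. Your worry about $\xib$ and $\nab_3$ of curvature, by contrast, is unfounded: neither appears in $\mathcal{E}$.
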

\begin{proof}
We start with \eqref{eq:3alpha} and multiply by $r^2$ to find
\begin{align*}
    \nab_3(r^2\alpha ) + 2 f r \alpha - 4 \omegab r^2 \alpha + \frac{r^2}{2} \trchb \alpha - \nab \hat{\otimes} (r^2 \beta) = 5 r^2 \eta \hat{\otimes} \beta - 3r^2(\rho \chih+\sigma {}^* \chih)
\end{align*}
so that
\begin{align}
\nab_3(r^2\alpha ) + r\alpha - \nab \hat{\otimes} (r^2 \beta) &= -2(f-1-2\omegab r) r\alpha - \frac{1}{2}(2+r \trchb) r \alpha + 5 r^2 \eta \hat{\otimes} \beta - 3r^2(\rho \chih+\sigma {}^* \chih) \nonumber\\
&=: \mathcal{E}_1 \label{eq:3alpha v2}
\end{align}
Next we similarly multiply \eqref{eq:4beta} by $r^2$ to obtain
\begin{equation}
    \nab_4(r^2 \beta) + 2r \beta - \div(r^2 \alpha) = 2 (2-r \trch) r \beta + r^2 \eta\cdot \alpha =:\mathcal{E}_2. \label{eq:4beta v2}
\end{equation}
Acting on \eqref{eq:3alpha v2} with $r\nab_4+3$ yields
\begin{align*}
   & r\nab_4\nab_3(r^2\alpha ) + \nab_4(r^2\alpha)+3\nab_3(r^2\alpha) +2 r\alpha - r\nab \hat{\otimes} \nab_4(r^2 \beta) - 2 r\nab \hat{\otimes} (r \beta) \\&\qquad \qquad = (r\nab_4 +3) \mathcal{E}_1 + \nab_4\left[r\nab \hat{\otimes} (r^2 \beta)\right]-r\nab \hat{\otimes} \nab_4(r^2 \beta).
\end{align*}
Inserting \eqref{eq:4beta v2} this becomes
\begin{align*}
    &r\nab_4\nab_3(r^2\alpha ) + \nab_4(r^2\alpha)+3\nab_3(r^2\alpha) +4 r\alpha  - r \nab \hat{\otimes} \div(r^2 \alpha) -2 r^2K r\alpha 
    \\ &\qquad \qquad = (r\nab_4 +3) \mathcal{E}_1 + r \nab \hat{\otimes} \mathcal{E}_2 + 2 (1-Kr^2) r\alpha+ \nab_4\left[r\nab \hat{\otimes} (r^2 \beta)\right]-r\nab \hat{\otimes} \nab_4(r^2 \beta). 
\end{align*}
Recalling that $r\nab \hat{\otimes} \div (r^2 \alpha) = r\slashed \Delta (r^2 \alpha) - 2 r^2K r \alpha$ and commuting the derivatives in the leading term we find
\begin{equation}\label{eq:Teukolsky.with.explicit.error}
    \begin{split}
        &\nab_3 (r\nab_4 (r^2\alpha))  + 2 \nab_4(r^2 \alpha) +3\nab_3(r^2 \alpha) + \frac{4}{r}(r^2\alpha) - \frac{1}{r} r^2\slashed\Delta (r^2 \alpha) \\
& \qquad= \underbrace{(r\nab_4 +3) \mathcal{E}_1}_{=:I} + \underbrace{r \nab \hat{\otimes}\mathcal{E}_2}_{=:II} + \underbrace{2 (1-Kr^2) r\alpha}_{=:III}\\
& \qquad\qquad + \underbrace{\nab_4\left[r\nab \hat{\otimes} (r^2 \beta)\right]-r\nab \hat{\otimes} \nab_4(r^2 \beta)}_{=:IV} + \underbrace{[\nab_3, r\nab_4](r^2\alpha) + \nab_4(r^2\alpha)}_{=:V},
    \end{split}
\end{equation}
and we now define $\mathcal E$ in \eqref{eq:Teukolsky} to be the right-hand side of \eqref{eq:Teukolsky.with.explicit.error}. We need to estimate the terms appearing on the right-hand side. For the commutator terms, we use \eqref{4AcommestST} and \eqref{34commestST} to obtain 
\begin{align*}
    \| \brk{u}^{\frac{1}{2}^{+}} IV\|_{s_0, \f 12{}^+,0^+} \ls &\: \ep \Big(\| \brk{u}^{\frac{1}{2}^{+}} r^2 \bt \|_{s_0+1, \f 12{}^+,-1} + |\brk{u}^{0^{+}} r^2 \bt |_{s_0-2,0^+,0} \Big) \\
    \ls &\: \ep \Big(\| \brk{u}^{\frac{1}{2}^{+}} \bt \|_{s_0+1, (-\f 32){}^+, 1} + |\brk{u}^{0^{+}} \bt |_{s_0-2,(-2)^+,2} \Big) \ls \ep^2, \\
    \| \brk{u}^{\frac{1}{2}^{+}} V \|_{s_0, \f 12{}^+, 0^{+}} \ls &\: \ep \Big(\| \jb{u}^{\frac{1}{2}} r^2 \alp \|_{s_0+1, \f 12{}^+, -1^+} + |\brk{u}^{0^{+}} r^2 \alp |_{s_0-2,0^+,0} \Big) \\
    \ls  &\: \epsilon \Big( \| \jb{u}^{\frac{1}{2}} \alp \|_{s_0+1, (-\f 32){}^+, 1^+} + |\brk{u}^{0^{+}} \alp |_{s_0-2,(-2)^+, 2} \Big) \ls \ep^2.
\end{align*}
Under the strong bootstrap assumption, we argue similarly using \eqref{4AcommestSTstr} and \eqref{34commestSTstr} in place of \eqref{4AcommestST} and \eqref{34commestST}, respectively, to obtain the stronger bounds
\begin{equation}
    \| IV\|_{s_0, \f 12{}^+,\f 72{}^-} + \| V\|_{s_0, \f 12{}^+,\f 72{}^-} \ls \ep^2.
\end{equation}


For the remaining terms, we first rewrite $I$ using some transport equations:
\begin{align*}
    (r\nab_4 +3) \mathcal{E}_1 &= 4 (3 \abs{\eta}^2 + \rho) r^3 \alpha -2 (f-1-2\omegab r)( r^2 \nab_4 \alpha +4 r \alpha) \\
    & \qquad -\frac{1}{2} \left(\frac{1}{2} r \trchb (2-r \trch) - 2 r^2 \div \eta - r^2 \chih\cdot \chibh  + 2 r^2 \abs{\eta}^2 + 2 r^2 \rho\right)(r \alpha)\\&\qquad  -\frac{1}{2}(2+r \trchb) (r^2\nab_4 \alpha + 4 r\alpha) + (r\nab_4 +3) 5r^2 \eta\hat{\otimes} \beta \\
    &\qquad -3 r^3 \left( \frac{5}{2r}(2 - r \trch) \rho + \div \beta - \eta\cdot \beta - \frac{1}{2} \chibh \cdot \alpha\right) \chih  - 3r^3 \alpha \rho,
\end{align*}
where we have used the transport equations satisfied by $f, \omegab, \trchb, \rho, \chih$.
Observe that in $I$, $II$ and $III$, there are three types of terms: (1) nonlinear terms with a factor of $\alp$, $r\nab_4\alp$ or $r\nab \alp$, (2) nonlinear terms with a factor of $\bt$ or $r\nab \bt$, and (3) the term $r^2 (2-r\trch)\rho\chih$. For types (1) and (2), it suffices to check that hardest term, namely $(f-1-2\omb r) r^2\nab_4 \alp$ and $r^3\eta \widehat{\otimes}\nab_4 \bt$ or $r^3 \nab_4 \eta \widehat{\otimes} \bt$, since in all the other terms $\alp$, $\bt$ (or their derivatives) are multiplied by terms with either the same or better bounds according to the bootstrap assumptions.

We check these terms in Table~\ref{table:Teukolsky}. The table is to be understood as follows. We need to check the parameters $s,p_0,p_\i$ for which the term can be controlled in $\| \cdot \|_{s,p_0,p_\i}$ norms. Each row checks a different parameter; the $p_\i$ (w) and $p_\i$ (s) rows checks the $p_\i$ parameter under the weak and strong bootstrap assumptions, respectively. We need that every entry satisfies the goal stated on the far-right column. When checking the admissibility of the indices, we use the product estimates \eqref{eq:product.for.integrated} and the bootstrap assumptions, and thus naturally we need to consider sums of indices. Notice also that we need to consider putting different factors in the integrated and pointwise norms, which is why we need to take the minimum of multiple possibilities. Moreover, under the weak bootstrap assumptions, there is no need to separately check the powers of the weaker weight $\brk{u}$ since the total sum of all powers of $\brk{u}$ and $r + \brk{u}$ (represented by the $p_{\infty}$ parameter) remain constant in the product estimates and the bootstrap assumptions.
\begin{table}[!ht]
    \centering
    \caption{Checking the nonlinear terms in the Teukolsky equation}
    \label{table:Teukolsky}
        
        \begin{tabular}{l|ccc|r}
            \toprule
             & $(f-1-2\omb r) r^2\nab_4 \alp$ & $r^2\eta \widehat{\otimes}(r\nab_4) \bt, r^2 (r\nab_4) \eta \widehat{\otimes} \bt$ & $r^2 (2-r\trch)\chih\rho$ & Goal \\
            \midrule
            $s$ & $s_0 \wedge s_0$ & $(s_0+1) \wedge s_0$ & $s_0\wedge s_0\wedge (s_0+1)$ & $\geqslant s_0$ \\
            $p_0$ & \tiny{$\min \left\{ \begin{smallmatrix} 1+1^++(-\f 12)^+ \\ 1+\f 32{}^+ + (-1)^+ \end{smallmatrix}  \right\}$} & \tiny{$\min \left\{ \begin{smallmatrix} 2+0^++(-\f 12)^+ \\ 2+\f 12{}^+ + (-1)^+ \end{smallmatrix}  \right\}$} & \tiny{$\min \left\{ \begin{smallmatrix} 3+0^+ + 0^+ + (-\f 12)^+ \\ 3+0^+ + \f 12{}^+ + \f 52{}^+ \end{smallmatrix}  \right\}$} & $\geqslant \f 12{}^+$\\
            $p_\infty$ (\hbox{w}) & \tiny{$\min \left\{\begin{smallmatrix}
                (-1)+ 0+\frac{3}{2}^+\\ (-1)+(-1)^++ 2^+
            \end{smallmatrix}
            \right\}$}  & \tiny{$\min \left\{\begin{smallmatrix} (-2)+ 1^+ + 1^+ \\ (-2)+ \f 12{}^++2^+ \end{smallmatrix} \right\}$} & \tiny{$\min \left\{\begin{smallmatrix}
                (-3) + 1^+ + 1^+ + 1^{+}  \\
                (-3)+ 1^+ + \frac{1}{2}^+ + 2^{+} 
            \end{smallmatrix}  \right\}$} & $\geqslant 0^+$ \\
            $p_\infty$ (\hbox{s}) & \tiny{$\min \left\{\begin{smallmatrix}
                (-1)+ 0+\frac{9}{2}^{-} \\ (-1)+(-\f 12)^++5^{-}
            \end{smallmatrix}
            \right\}$}  & \tiny{$\min \left\{\begin{smallmatrix} (-2)+ 2 +\frac{7}{2}^- \\ (-2)+ \frac{3}{2}^{-}+4 \end{smallmatrix} \right\}$} & \tiny{$\min \left\{\begin{smallmatrix}
                (-3) + 2 + 2 + \frac{5}{2}^{-}  \\
                (-3)+ 2 + \frac{3}{2}^{-} + 3 
            \end{smallmatrix}  \right\}$} & $\geqslant \frac{7}{2}^{-}$ \\
            \bottomrule
        \end{tabular}

\end{table}

\end{proof}

\subsection{Estimates for the linearised Teukolsky operator} \label{TeukEsts}

The central estimate in our proof is a Morawetz and $r^p$-type estimate for the linearised Teukolsky operator, which will permit us to establish control of the curvature components $\alpha$. 
\begin{definition}
    Let $\phi$ be a sufficiently regular $S$-tangent tensor. The linearised Teukolsky operator acting on $\phi$ is given by:
    \[
    \mathcal{T}\phi =  \nab_3(r \nab_4 \phi) + 2  \nab_4 \phi + 3 \nab_3 \phi + \frac{4}{r} \phi - \frac{1}{r} r^2 \slashed \Delta \phi.
    \]
    When commuting to control higher derivatives, we will also require the modified operator
    \[
     \mathcal{T}_l\phi =  \mathcal{T}\phi + \frac{l}{r}(r \nab_4 \phi + 3 \phi)
    \]
    where $l\geqslant 0$.
\end{definition}

\subsubsection{The basic estimate}\label{sec:Teu.basic}
We write the Teukolsky operator as
\[
\mathcal{T}\phi =  \nab_3(r \nab_4 \phi + 3 \phi) + 2  \nab_4 \phi + \frac{4}{r} \phi - \frac{1}{r} r^2 \slashed \Delta \phi
\]
and multiply by $(r \nab_4 \phi + 3 \phi)$ to find
\begin{align*}
    \mathcal{T}\phi \cdot \left(r\nab_4 \phi + 3 \phi\right) &= \nab_3 \left[\frac{1}{2 }\abs{r \nab_4 \phi + 3 \phi}^2 \right] + 2 r \abs{\nab_4 \phi}^2+10 \phi \cdot \nab_4 \phi + \frac{12}{r}\abs{\phi}^2\\&\qquad -  r^2 (\slashed \Delta \phi) \cdot \nab_4 \phi - \frac{3}{r}  \phi \cdot r^2\slashed \Delta \phi  
     \\    &= \left( \nab_3  + \trchb - 2 \omegab \right)  
     \left[\frac{1}{2}\abs{r \nab_4 \phi + 3 \phi}^2\right] + \frac{1}{r}\abs{r \nab_4 \phi + 3 \phi}^2 + 2 r \abs{\nab_4 \phi}^2
     \\ &\qquad + \left[ 2r \omegab -(2 + r \trchb) \right]\frac{1}{2r}\abs{r \nab_4 \phi + 3 \phi}^2 \\
     &\qquad +(\nab_4 + \trch)\left[\abs{\phi}^2 \right] + \frac{10}{r} \abs{\phi}^2 + 8 \phi \cdot \nab_4 \phi+ (2 - r \trch) \frac{1}{r} \abs{\phi}^2 \\
     &\qquad - \nab^A \left( (r\nab_A \phi)\cdot(r \nab_4 \phi) + \frac{3}{r} r\phi \cdot r\nab_A \phi  \right) + \left[  r\nab_A, \nab_4\right]\phi \cdot (r\nab^A \phi) \\
     &\qquad + ( \nab_4 + \trch)\left[ \frac{1}{2} \abs{r\nab\phi}^2\right] + \frac{2}{r} \abs{r\nab\phi}^2 + (2 - r \trch) \frac{1}{2r} \abs{r\nab\phi}^2.
\end{align*}
Integrating over $S_{u,r}$ we find
\begin{align}
   & \frac{\partial}{\partial r} \int_{S_{u, r}}\left[ \frac{1}{2} \abs{r\nab\phi}^2 +  \abs{\phi}^2 - \frac{f}{2}\abs{r \nab_4 \phi + 3 \phi}^2  \right] \ud A_{\gamma} + \frac{\partial}{\partial u} \int_{S_{u, r}}\abs{r \nab_4 \phi + 3 \phi}^2  \ud A_{\gamma}\nonumber \\ \label{eq:nab4mult}
    & \qquad + \int_{S_{u,r}} \left[ \frac{1}{r}\abs{r \nab_4 \phi + 3 \phi}^2 + 2 r\abs{\nab_4 \phi}^2 + \frac{2}{r} \abs{r\nab\phi}^2+ \frac{8}{r}\phi\cdot (r\nab_4 \phi) + \frac{10}{r} \abs{\phi}^2\right] \ud A_{\gamma} \\
    &= \int_{S_{u,r}} \bigg[ \mathcal{T}\phi \cdot \left(r \nab_4 \phi + 3 \phi\right) + \left\{ (2 + r \trchb) - 2\omegab r\right\}\frac{1}{2r}\abs{r \nab_4 \phi + 3 \phi}^2\nonumber\\&\qquad \qquad  +(r \trch-2) \left\{ \frac{1}{2r} \abs{r\nab\phi}^2+ \frac{1}{r}\abs{\phi}^2\right\}+  \left[  \nab_4, r\nab_A\right]\phi \cdot (r\nab^A \phi) \bigg] \ud A_{\gamma}. \nonumber
\end{align}

Before we integrate this estimate, we first multiply by a suitable function to improve the weights near the origin, and simultaneously to establish decay in $u$. To this end, we introduce
\[
w = \frac{(r+a\jb{u})^{2p_0 + 2p_\infty}}{r^{3+2p_0}},
\]
where $0<a<1$ is a small constant to be chosen later, and $p_0, p_\infty$ control decay near $r=0$, $r=\infty$ respectively. Our weight function
satisfies
\[
\frac{r \partial_r w}{w} = (2p_\infty-3 )\frac{r}{r+ a \jb{u}} -(3+2p_0)\frac{a\jb{u} }{ r+ a \jb{u}}, \qquad \frac{r \partial_u w}{w} =  (2p_0+2p_\infty)\frac{a r}{r+a \jb{u}} \frac{u}{\jb{u}} 
\]
Multiplying \eqref{eq:nab4mult} by $w$ we have
\begin{align*}
   & \frac{\partial}{\partial r} \int_{S_{u, r}}\left[ \frac{1}{2} \abs{r\nab\phi}^2 + \abs{\phi}^2 - \frac{f}{2}\abs{r \nab_4 \phi + 3 \phi}^2  \right] w\ud A_{\gamma} + \frac{\partial}{\partial u} \int_{S_{u, r}}\abs{r \nab_4 \phi + 3 \phi}^2  w\ud A_{\gamma}\nonumber \\ 
    & \qquad + \int_{S_{u,r}} \bigg[ \frac{1}{r}\left(1 + \frac{f}{2}\frac{r \partial_r w}{w}  - \frac{r \partial_u w}{w}\right)\abs{r \nab_4 \phi + 3 \phi}^2 + 2 r\abs{\nab_4 \phi}^2 + \frac{8}{r} \phi \cdot(r \nab_4 \phi)\\
    &\qquad + \frac{1}{2 r}\left( 4 -\frac{r \partial_r w}{w}\right) \abs{r\nab\phi}^2 + \frac{1}{r}\left(10 -\frac{r \partial_r w}{w} \right) \abs{\phi}^2\bigg] w\ud A_{\gamma} \\
    &= \int_{S_{u,r}} \bigg[ \mathcal{T}\phi \cdot \left(r \nab_4 \phi + 3 \phi\right) + \left\{ (2 + r \trchb) - 2\omegab r\right\}\frac{1}{2r}\abs{r \nab_4 \phi + 3 \phi}^2\nonumber\\&\qquad \qquad  +(r \trch-2) \left\{ \frac{1}{2r} \abs{r\nab\phi}^2+ \frac{1}{r}\abs{\phi}^2\right\}+  \left[  \nab_4, r\nab_A\right]\phi \cdot (r\nab^A \phi) \bigg] w\ud A_{\gamma} \nonumber
\end{align*}

Introducing
\[
A =  1 + \frac{1}{2}\frac{r \partial_r w}{w}  - \frac{r \partial_u w}{w}, \quad B =  10 -\frac{r \partial_r w}{w} , \quad D =  2 -\frac{r \partial_r w}{2w}
\]
we can rewrite the identity as
\begin{align}
   & \frac{\partial}{\partial r} \int_{S_{u, r}}\left[ \frac{1}{2} \abs{r\nab\phi}^2 + \abs{\phi}^2 - \frac{f}{2}\abs{r \nab_4 \phi + 3 \phi}^2  \right] w \ud A_{\gamma} + \frac{\partial}{\partial u} \int_{S_{u, r}}\abs{r \nab_4 \phi + 3 \phi}^2  w \ud A_{\gamma}  \nonumber\\ 
    & \qquad + \int_{S_{u,r}} \bigg[ A \abs{r \nab_4 \phi + 3 \phi}^2 + 2 \abs{r\nab_4 \phi}^2+ 8 \phi \cdot(r \nab_4 \phi) +B  \abs{\phi}^2 + D\abs{r\nab\phi}^2 \bigg] \frac{w}{r} \ud A_{\gamma} \label{eq:nab4multwt}\\
    &= \int_{S_{u,r}} \bigg[ \mathcal{T}\phi \cdot \left( r\nab_4 \phi + 3 \phi\right) + \left\{ (1-f)\frac{r \partial_r w}{w} +(2 + r \trchb) - 2\omegab r \right\}\frac{1}{2r}\abs{r \nab_4 \phi + 3 \phi}^2\nonumber\\&\qquad \qquad  +(r \trch-2) \left\{ \frac{1}{2r} \abs{r\nab\phi}^2+ \frac{1}{r} \abs{\phi}^2\right\}+  \left[  \nab_4, r\nab_A\right]\phi \cdot (r\nab^A \phi) \bigg] w \ud A_{\gamma} \nonumber
\end{align}
The terms quadratic in $\phi$ on the right we shall treat as errors. We have that
\[
D = \frac{ (7+2p_0)a  \jb{u}+(7-2p_\infty)  r}{2 (a \jb{u}+r)} \sim 1
\]
provided $p_0>-\frac{7}{2}$ and $p_\infty< \frac{7}{2}$. We must additionally consider the positivity of the quadratic form with matrix
\[
 M=       \begin{pmatrix}
            2 + A & 4+3 A \\
            4+3 A & 9A+B \\
        \end{pmatrix}.
\]
We compute (setting $\theta = r/a\jb{u}\geqslant 0$):
\begin{align*}
(1+\theta) \textrm{Tr}\,M &= (10-8p_0) + (10+8 p_\infty)\theta - 20a \frac{u}{\jb{u}} (p_0 + p_\infty)\theta ,\\
(1+\theta)^2 \det M &= \left(\frac{17}{2} - 2 (p_0+1)^2 \right) + \left(4(p_0+1)(p_\infty-1)+17\right)\theta +\left(\frac{17}{2} - 2 (p_\infty-1)^2 \right)\theta^2\\ 
&\qquad - 2 (p_0+p_\infty) a \frac{u}{\jb{u}} ((7+2p_0)\theta + (7-2p_\infty)\theta^2).
\end{align*}
We observe that when $a = 0$, provided 
\begin{align}
    &-\frac{\sqrt{17}}{2} -1 <p_0 <\frac{\sqrt{17}}{2} -1 \label{eq:p0bound} \\
    &-\frac{\sqrt{17}}{2} +1 <p_\infty <\frac{\sqrt{17}}{2} +1\label{eq:pibound}
\end{align}
we have that both $\det M, \textrm{Tr}\,M$ are bounded below by a positive constant. By continuity this holds also when $a$ is small since the terms involving $a$ are of no higher order in $\theta$ than those independent of $a$. We have established:
\begin{proposition} \label{prop:rp-pos}
    Suppose that the bounds \eqref{eq:p0bound}, \eqref{eq:pibound} hold. Then we may choose $a>0$ sufficiently small that  
    \[
A \abs{r \nab_4 \phi + 3 \phi}^2 + 2 \abs{r\nab_4 \phi}^2 + 8 \phi \cdot(r \nab_4 \phi)  +B  \abs{\phi}^2 + D\abs{r\nab\phi}^2 \sim \abs{r\nab_4\phi}^2 + \abs{r\nab\phi}^2 +\abs{\phi}^2.
\]
\end{proposition}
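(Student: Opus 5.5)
The plan is to reduce Proposition~\ref{prop:rp-pos} to uniform positive definiteness of a $2\times 2$ quadratic form in $(X,Y) := (r\nab_4\phi, \phi)$, together with uniform positivity and boundedness of the coefficient $D$.

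\textbf{Step 1: the quadratic form.} Expanding $|r\nab_4\phi+3\phi|^2$, the combination
\[
A|X+3Y|^2 + 2|X|^2 + 8\, X\cdot Y + B|Y|^2
\]
equals $(2+A)|X|^2 + 2(4+3A)\, X\cdot Y + (9A+B)|Y|^2$. This is exactly the form with matrix $M$. The matrix acts on each component of the tensors (i.e.\ after contraction with $\gamma$), so it suffices to show that $M$ has eigenvalues bounded above and below by positive constants, uniformly in $(u,r)$. The angular term $D|r\nab\phi|^2$ is handled separately.

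\textbf{Step 2: the coefficients are bounded.} With $\theta = r/(a\jb{u}) \in [0,\infty)$, the formulas for $r\partial_r w/w$ and $r\partial_u w/w$ show that $A$, $B$, $D$ are rational functions of $\theta$ of degree $(1,1)$. Their numerators are linear in $\theta$ and the denominator is $1+\theta$. The coefficients also depend on the bounded parameter $a\,u/\jb{u} \in [-a,a]$. Hence $A$, $B$, $D$, and therefore the entries of $M$, are bounded uniformly, which gives the upper bound $\lesssim |X|^2+|Y|^2+|r\nab\phi|^2$.

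For $D$, the explicit formula $D = \frac{(7+2p_0)a\jb{u} + (7-2p_\infty)r}{2(a\jb{u}+r)}$ is a convex combination of $\frac{7+2p_0}{2}$ and $\frac{7-2p_\infty}{2}$. Both are positive under \eqref{eq:p0bound}--\eqref{eq:pibound}, since these imply $p_0 > -\frac72$ and $p_\infty < \frac72$. So $D \sim 1$.

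\textbf{Step 3: the lower bound for $M$.} A symmetric $2\times2$ matrix with bounded entries is uniformly positive definite if and only if $\textrm{Tr}\, M \geq c > 0$ and $\det M \geq c > 0$. Using the computed expressions:
\begin{itemize}
\item $(1+\theta)\textrm{Tr}\,M$ is an affine function of $\theta$.
\item $(1+\theta)^2\det M$ is a quadratic polynomial $P_a(\theta)$.
\end{itemize}
Positivity of the trace and determinant uniformly on $[0,\infty)$ is equivalent to a uniform lower bound for the normalised quantities $\frac{\textrm{Tr}\,M}{1}$ and $\frac{P_a(\theta)}{(1+\theta)^2}$.

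At $a=0$, the quadratic $P_0$ has:
\begin{itemize}
\item constant coefficient $\frac{17}{2} - 2(p_0+1)^2 > 0$, by \eqref{eq:p0bound};
\item leading coefficient $\frac{17}{2} - 2(p_\infty-1)^2 > 0$, by \eqref{eq:pibound}.
\end{itemize}
These give positivity near $\theta=0$ and as $\theta\to\infty$. Positivity of $P_0/(1+\theta)^2$ on compact sets of $\theta$ comes from the middle coefficient. If $4(p_0+1)(p_\infty-1) + 17 \geq 0$, then all three coefficients are nonnegative and the endpoint coefficients are positive, so positivity is immediate. Otherwise, I would use that the discriminant condition
\[
\bigl(4(p_0+1)(p_\infty-1)+17\bigr)^2 < 4\Bigl(\tfrac{17}{2}-2(p_0+1)^2\Bigr)\Bigl(\tfrac{17}{2}-2(p_\infty-1)^2\Bigr)
\]
holds, or more simply use AM--GM: $|4xy| \leq 2x^2 + 2y^2 < 17$ with $x = p_0+1$, $y = p_\infty-1$, so the middle coefficient is in fact always positive. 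The trace is handled in the same way: its coefficients $10-8p_0$ and $10+8p_\infty$ are positive in the given ranges.

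\textbf{Step 4: perturbing to $a>0$.} The $a$-dependent terms are $a\,u/\jb{u}$ times polynomials in $\theta$ of degree at most that of the $a$-independent parts: degree $\le 1$ for the trace and $\le 2$ for the determinant. Dividing by $(1+\theta)$, respectively $(1+\theta)^2$, these terms are $O(a)$ uniformly in $\theta\in[0,\infty)$. Choosing $a$ small therefore preserves the positive lower bounds.

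The main obstacle is this uniformity in $\theta$ over the noncompact range. It is resolved precisely by the matching of degrees described in Step 4.
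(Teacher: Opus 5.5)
Your proposal is correct and follows the paper's argument. You rewrite the form via the matrix $M$ and check that, at $a=0$, the coefficients of $(1+\theta)\,\mathrm{Tr}\,M$ and $(1+\theta)^2\det M$ are positive under \eqref{eq:p0bound}--\eqref{eq:pibound}; you then perturb in $a$ using that the $a$-dependent terms have no higher degree in $\theta$. Your AM--GM check that the middle coefficient $17+4(p_0+1)(p_\infty-1)$ is positive, and the convex-combination argument for $D$, supply details the paper leaves implicit.
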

Note in particular that if $\delta < 1/20$ the bounds are satisfied for $p_0 = 1+\delta$ and $p_\infty \in [0, 3+\delta]$\footnote{In this section, we include the case $p_{\infty} > 3$ in view of the applications discussed in Remarks~\ref{rem:main-1}.ii), \ref{rem:strong-peeling-rp}, and \ref{rem:strong-peeling-spnull}. For the proof of the main theorem, however, we only require $p_{\infty} = 3 - \delta$, which simplifies some of the analysis here.}. We are now ready to establish
\begin{lemma}\label{lem:basicest}
    Assume the weak bootstrap assumptions. Suppose \eqref{eq:p0bound}, \eqref{eq:pibound} hold and $\phi$ is a sufficiently regular $S$-tangent tensor satisfying the regularity condition at the axis:
    \[
    \lim_{r \to 0}\sup_{1-r<u<T} r^{-p_0-\frac{3}{2}} |\phi|_{1} (u, r) = 0
    \]
    Then for $\epsilon_0, a$ sufficiently small there exist constants $c, C>0$ depending on $p_0, p_\infty$ such that 
    \begin{align*}
        &\sup_{u_T<u<T} c\int_{H_u} \abs{r \nab_4 \phi + 3 \phi}^2 w \ud A_{\gamma} \ud r + c\int_{S_T} \left[ \abs{r\nab_4\phi}^2 + \abs{r\nab\phi}^2 +\abs{\phi}^2\right]\frac{1}{r} w \ud A_{\gamma} \ud r \ud u \\
        &\qquad \leqslant  \int_{S_T} \mathcal{T}\phi \cdot \left( r\nab_4 \phi + 3 \phi\right) w \ud A_{\gamma} \ud r \ud u + C\int_{\tilde{\Sigma}_1}  \left[ \abs{r\nab_4\phi}^2 + \abs{r\nab\phi}^2 +\abs{\phi}^2\right] w \ud A_{\gamma} \ud r. 
    \end{align*}
\end{lemma}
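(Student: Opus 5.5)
We integrate the weighted identity \eqref{eq:nab4multwt} over the region $S_T\cap\{u\leqslant u_1\}$ for each $u_1\in(u_T,T)$, with respect to $\ud r\,\ud u$. Integrating in $r$ and $u$ with the total derivative terms $\partial_r\int_{S_{u,r}}(\cdots)w$ and $\partial_u\int_{S_{u,r}}|r\nab_4\phi+3\phi|^2w$ turns them into boundary integrals. These live on four pieces:
\begin{itemize}
\item[(a)] the final cone $H_{u_1}$, which carries $\int_{H_{u_1}}|r\nab_4\phi+3\phi|^2w$ with a positive sign;
\item[(b)] the axis $r=0$, where the hypothesis $r^{-p_0-\frac32}|\phi|_1\to0$ together with $w\sim r^{-3-2p_0}$ near $r=0$ makes the flux $r^{-3-2p_0}|\phi|_1^2$ vanish in the limit;
\item[(c)] the initial surface $\tilde\Sigma_1$, where the flux is a combination of $|r\nab\phi|^2,|\phi|^2,|r\nab_4\phi+3\phi|^2$ with bounded coefficients (using $|f|\lesssim 1+\epsilon r\jb{u}^{-1-}$ and $\jb{u}\sim r$ on $\tilde\Sigma_1$ away from the centre), bounded by $C\int_{\tilde\Sigma_1}[\cdots]w$;
\item[(d)] the future boundary $\{\tau=T\}$.
\end{itemize}
For (d) we use the construction of $\tau$ in \S\ref{sec:tau} (assumed): it is chosen so that its level sets are spacelike, hence the combined flux $\frac12|r\nab\phi|^2+|\phi|^2-\frac f2|X|^2$ (normal component $\ud r$) plus $|X|^2$ (normal $\ud u$), with $X=r\nab_4\phi+3\phi$, has a nonnegative sign on it. If necessary, one instead takes $u_1$ so that the region is bounded by $H_{u_1}$ and $\tilde\Sigma_1$ only, letting $r\to\infty$ using the finiteness of the norms.

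The bulk term on the left-hand side is bounded below by Proposition~\ref{prop:rp-pos}: for $a$ small it is $\gtrsim\int[|r\nab_4\phi|^2+|r\nab\phi|^2+|\phi|^2]\frac wr$. The remaining task is to absorb the right-hand error terms other than $\mathcal T\phi\cdot X$. These are the following, and all are controlled under the weak bootstrap assumptions \eqref{eq:BA.weak.pointwise} via Lemma~\ref{lem:products}.
\begin{itemize}
\item[(i)] The term $(1-f)\frac{r\partial_rw}{w}\frac1{2r}|X|^2$. Since $\frac{r\partial_rw}{w}$ is bounded and $|f-1|\lesssim\epsilon r\jb{u}^{-1-\delta'}$, this is $\lesssim\epsilon\jb{u}^{-1-\delta'}|X|^2w$. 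Integrating in $u$ gives at most $\epsilon\sup_u\int_{H_u}|X|^2w$, which is absorbed by the left side after taking the supremum over $u_1$. This is exactly the mechanism described in \S\ref{subsubsec:wave}, and it is the main point.
\item[(ii)] The term $(2+r\trchb-2\omegab r)\frac1{2r}|X|^2$. Here $|2+r\trchb|\lesssim\epsilon\frac{r}{\jb u+r}\jb{u}^{-0+}$ and $|\omegab r|\lesssim\epsilon r\jb{u}^{-1-}$, so this is handled either like (i) or absorbed directly into the bulk.
\item[(iii)] The term $(r\trch-2)$ times quadratic terms, with $|r\trch-2|\lesssim\epsilon\frac{r}{(\jb u+r)^{1+}}\cdot r^{0+}$-type smallness. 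It is bounded by $\epsilon$ times the bulk.
\item[(iv)] The commutator $[\nab_4,r\nab]\phi\cdot r\nab\phi$. By \eqref{4Acommest} with $s=0$ it is $\lesssim\epsilon(\jb u+r)^{-1-}|\phi|_1|r\nab\phi|_0$, which is bounded by $\epsilon$ times the bulk density $\frac1r|\phi|_1^2$.
\end{itemize}

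After absorption, $\sup_{u_1}$ is taken. We expect the delicate step to be (i) together with the boundary (d). For (i), the gain is only $u$-integrability rather than a bulk-weight improvement, since $\frac{f-1}{r}$ is not small relative to $\frac1r$ for large $r$. It therefore must be paired with the flux supremum, not the bulk. Hence we first close a Grönwall-type inequality in $u_1$, $F(u_1)\leqslant \text{Data}+\int\mathcal T\phi\cdot Xw+\epsilon\int^{u_1}\jb{u}^{-1-}F(u)\,\ud u$, and conclude via Grönwall.
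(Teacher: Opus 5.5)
Your proposal is correct and follows the paper's proof essentially step by step. It uses the same region, cut off at $r=r_0\to 0$ via the axis hypothesis, and the same four boundary fluxes, with the $\{\tau=T\}$ term nonnegative because $2\partial_u\tau>f\partial_r\tau$. It gets bulk positivity from Proposition~\ref{prop:rp-pos} and handles the $(f-1)$ and $\omegab r$ errors through $u$-integrability of $\jb{u}^{-1-}$ paired with the $H_u$-flux supremum, while the $\trchb$, $\trch$ and commutator errors are absorbed into the bulk. The Gr\"onwall step at the end is harmless but unnecessary: as you already note in (i), $\int \jb{u}^{-1-}\,\ud u<\infty$, so one can simply take the supremum over $u_1$ and absorb for $\epsilon_0$ small, which is what the paper does.
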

\begin{proof}
    For $u_0<T$ and $r_0>0$ we integrate \eqref{eq:nab4multwt} over the region $R=\{(r,u)\in S_T : r_0<r,\, u<u_0\}$, with $a$ chosen sufficiently small as above. The boundary of this region consists of the components $\mathfrak S_1 := \overline{R} \cap \{r = r_0\}$, $\mathfrak S_2 :=H_{u_0}\cap \{\max\{r_0, 1-u_0\}<r\}$, $\mathfrak S_3:=\Sigma_T \cap \{u<u_0\}$, $\mathfrak S_4:=\tilde{\Sigma}_1\cap \{u_T < u < \min \{u_0, 1-r_0\} \}$. The integral receives flux contributions $\mathcal{F}_i$ from the surfaces $\mathfrak S_i$ and two bulk contributions $\mathcal{B}_i$ where we write
    \[
    \mathcal{F}_1 +\mathcal{F}_2+\mathcal{F}_3+\mathcal{B}_1 = \mathcal{F}_4+\mathcal{B}_2.
    \]
    Working through these terms,
    \[
    \mathcal{F}_1 = -\int_{1-r_0}^T \int_{S_{u, r_0}}\left[ \frac{1}{2} \abs{r\nab\phi}^2 + \abs{\phi}^2 - \frac{f}{2}\abs{r \nab_4 \phi + 3 \phi}^2  \right] w \ud A_{\gamma} \ud u.
    \]
    We can estimate $|\mathcal{F}_1| \lesssim T \sup_{1-r_0<u<T} r_0^{-2-2p_0} |\phi|_{1}^2 (u, r_0)$ so that the contribution from $\mathcal{F}_1$ vanishes as $r_0 \to 0$. Next we have
    \[
    \mathcal{F}_2  =  \int_{\max\{r_0, 1-u_0\}} \int_{S_{u_0, r}}\abs{r \nab_4 \phi + 3 \phi}^2  w \ud A_{\gamma} \ud r \to \int_{H_{u_0}}\abs{r \nab_4 \phi + 3 \phi}^2  w \ud A_{\gamma} \ud r
    \]
    as $r_0 \to 0$. The term from the future spacelike boundary is 
    \begin{align*}
        \mathcal{F}_3  = \int_{u_T}^{u_0} \int_{S_{u, r_T(u)}}\left[ \frac{1}{2} \abs{r\nab\phi}^2 + \abs{\phi}^2+ \left( \frac{\partial_u \tau}{\partial_r \tau} - \frac{f}{2}\right)\abs{r \nab_4 \phi + 3 \phi}^2  \right] w \ud A_{\gamma} \ud u \geqslant 0
    \end{align*}
     where $r_T(u)$ is the unique solution to $\tau(u,r_T(u)) = T$. This term is non-negative, and hence can be discarded, due to the inequality $2\partial_u \tau > f \partial_r \tau$ which follows from the construction of $\tau$ in \S\ref{sec:tau}. The first bulk term is
    \begin{align*}
    \mathcal{B}_1 &= \int_{R} \bigg[ A \abs{r \nab_4 \phi + 3 \phi}^2 + 2 \abs{r\nab_4 \phi}^2+ 8 \phi \cdot(r \nab_4 \phi) +B  \abs{\phi}^2 + C\abs{r\nab\phi}^2 \bigg] \frac{w}{r} \ud A_{\gamma} \ud u \ud r \\
    &\gtrsim \int_{R} \left[  \abs{r\nab_4 \phi}^2  +\abs{r\nab\phi}^2+\abs{\phi}^2  \right] \frac{w}{r} \ud A_{\gamma} \ud u \ud r
    \end{align*}
    by our previous discussion when constructing $w$. In the limit $r_0\to 0$ we can replace the region of integration with $S_T\cap \{u<u_0\}$. Moving now to the terms on the other side, the flux term is
    \begin{align*}
        \mathcal{F}_4  &= \int_{r_0}^{1-u_T} \int_{S_{1-r, r}}\left[ \frac{1}{2} \abs{r\nab\phi}^2 + \abs{\phi}^2+ \left( 1 - \frac{f}{2}\right)\abs{r \nab_4 \phi + 3 \phi}^2  \right] w \ud A_{\gamma} \ud r \\
        &\lesssim \int_{0}^{1-u_T} \int_{S_{1-r, r}}\left[ \abs{r \nab_4 \phi}^2+ \abs{r\nab\phi}^2 + \abs{\phi}^2  \right] w \ud A_{\gamma} \ud r,
    \end{align*}
     since $|f-1|\lesssim \epsilon$ on $\tilde{\Sigma}_1$. It remains to consider the second bulk term.
     \begin{align*}
         \mathcal{B}_2 &=\int_{R} \bigg[ \mathcal{T}\phi \cdot \left( r\nab_4 \phi + 3 \phi\right) + \left\{ (1-f)\frac{r \partial_r w}{w} +(2 + r \trchb) - 2\omegab r \right\}\frac{1}{2r}\abs{r \nab_4 \phi + 3 \phi}^2\nonumber\\&\qquad \qquad  +(r \trch-2) \left\{ \frac{1}{2r} \abs{r\nab\phi}^2+ \frac{5}{r} \abs{\phi}^2\right\}+  \left[  \nab_4, r\nab_A\right]\phi \cdot (r\nab^A \phi) \bigg] w \ud A_{\gamma} \ud u \ud r.
     \end{align*}
     The first contribution can be left alone. We observe that
     \[
     \abs{\frac{1-f}{2}\frac{r \partial_r w}{w} +(2 + r \trchb) - 2\omegab r } \lesssim \epsilon \left( \frac{r}{\jb{u}^{1^+}} + 1\right)
     \]
     which together with bounds on $r\trch-2$ and the commutator estimate \eqref{4Acommest} implies that
     \begin{align*}
         \mathcal{B}_2 &\leqslant \int_{S_T}  \mathcal{T}\phi \cdot \left( r\nab_4 \phi + 3 \phi\right) w \ud A_{\gamma} \ud u \ud r  + C \epsilon_0 \int_{R} \left[  \abs{r\nab_4 \phi}^2  +\abs{r\nab\phi}^2+\abs{\phi}^2  \right] \frac{w}{r} \ud A_{\gamma} \ud u \ud r \\
         & \qquad + C \epsilon_0 \int_{R} \frac{1}{\jb{u}^{1^+}}\abs{r \nab_4 \phi + 3 \phi}^2 w \ud A_{\gamma} \ud u \ud r.
     \end{align*}
     We note that
     \begin{align*}
     \int_{R} \frac{1}{\jb{u}^{1^+}}\abs{r \nab_4 \phi + 3 \phi}^2 w \ud A_{\gamma} \ud u \ud r \leqslant \left(\int_{-\infty}^{\infty} \frac{1}{\jb{u}^{1^+}} \ud u\right) \times \sup_{u_T<u<T} \int_{H_{u}}\abs{r \nab_4 \phi + 3 \phi}^2  w \ud A_{\gamma} \ud r 
     \end{align*}
     where the $u$-integral converges. Thus, we may send $r_0 \to0$, take a supremum over $u_0$ in our estimate and then absorb the two error terms proportional to $\epsilon_0$, provided this is sufficiently small, which completes the proof.
\end{proof}

\begin{corollary}
    Under the assumptions of Lemma \ref{lem:basicest} we have
    \begin{align*}
       & \sup_{u}\left \|r \nab_4 \phi + 3 \phi \right\|_{0, p_0, p_\infty}^2(H_u) + \|\phi\|_{1, p_0+\frac{1}{2}, p_\infty-\frac{1}{2}}^2 \\ \qquad &\lesssim \| \mathcal{T}_l \phi \|^2_{0,p_0, p_\infty,*} + \| (r \nab_{4} \phi, r \nab \phi, \phi)\|_{0, p_0, p_\infty}^2(\tilde{\Sigma}_1)
    \end{align*}
    for any $l\geqslant 0$.
\end{corollary}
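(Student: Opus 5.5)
The plan is to start from Lemma~\ref{lem:basicest}, rewrite its left-hand side in the background norms, and then deal with two issues on the right-hand side: the operator is $\mathcal{T}$ rather than $\mathcal{T}_l$, and the source appears paired against the multiplier rather than in the dual norm. Throughout I assume, as in the lemma, that $\epsilon_0$ and $a$ are small.

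\textbf{Step 1 (left-hand side).} I first compare the weight $w = (r+a\jb{u})^{2p_0+2p_\infty} r^{-3-2p_0}$ with $(r+\jb{u})^{2p_0+2p_\infty} r^{-3-2p_0}$. For fixed $0<a<1$ these are comparable, with constants depending on $a$ and on $p_0,p_\infty$. Next, Lemma~\ref{Hs equivalence} (its $s=0$ and $s=1$ cases) lets me replace $\ud A_\gamma$ and $\gamma$-norms by background norms. Together these show that the left-hand side of Lemma~\ref{lem:basicest} bounds
\[
\sup_u \|r\nab_4\phi+3\phi\|_{0,p_0,p_\infty}^2(H_u) + \|\phi\|^2_{1,p_0+\frac12,p_\infty-\frac12}.
\]
The weight $w/r$ in the bulk term corresponds exactly to shifting the indices to $(p_0+\tfrac12, p_\infty-\tfrac12)$. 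In the same way, the data term on $\tilde\Sigma_1$ is comparable to $\|(r\nab_4\phi, r\nab\phi,\phi)\|^2_{0,p_0,p_\infty}(\tilde\Sigma_1)$.

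\textbf{Step 2 ($\mathcal{T}_l$ in place of $\mathcal{T}$).} I write $\mathcal{T}\phi = \mathcal{T}_l\phi - \frac{l}{r}(r\nab_4\phi+3\phi)$. The correction contributes
\[
-l\int \frac{1}{r}\,|r\nab_4\phi+3\phi|^2\, w \;\le\; 0
\]
to the right-hand side of Lemma~\ref{lem:basicest}, because $l\ge 0$. It can therefore simply be dropped. The sign of this term is what makes the estimate hold uniformly for every $l\ge 0$.

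\textbf{Step 3 (dual norm).} The remaining source term is
\[
\int_{S_T} \mathcal{T}_l\phi\cdot \psi \,w\, \ud A_\gamma\, \ud r\, \ud u, \qquad \psi = r\nab_4\phi+3\phi .
\]
I first exchange $w$ for the $a=0$ weight $(r+\jb{u})^{2p_\infty+2p_0}r^{-3-2p_0}$. This is done by writing $\psi\, w = \tilde\psi\,(r+\jb{u})^{2p_\infty+2p_0}r^{-3-2p_0}$ with $\tilde\psi := \psi\cdot\big((r+a\jb{u})/(r+\jb{u})\big)^{2p_0+2p_\infty}$. The ratio in $\tilde\psi$ is bounded above and below and depends only on $(u,r)$, so the norms of $\tilde\psi$ are comparable to those of $\psi$.

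By the definition of the dual norm, the source term is then at most
\[
\|\mathcal{T}_l\phi\|_{0,p_0,p_\infty,*}\,\Big(\sup_u\|\tilde\psi\|_{0,p_0,p_\infty}(H_u)+\|\tilde\psi\|_{0,p_0+\frac12,p_\infty-\frac12}\Big).
\]
The first factor in the bracket is controlled by the flux term on the left-hand side. The second is controlled by $\|\phi\|_{1,p_0+\frac12,p_\infty-\frac12}$, since $|\psi|_0 \lesssim |\phi|_1$. One further point needs checking: the dual norm is defined with $\ud A_\gamma$ and $\gamma$-contraction, whereas the test set $B$ is defined with background norms. Lemma~\ref{lem:gammacont} makes these comparable, so this costs only a constant.

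\textbf{Closing.} Applying Young's inequality to the source term and absorbing the quadratic terms in $\phi$ into the left-hand side gives the stated estimate.

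I expect the main obstacle to be bookkeeping rather than any new idea. It consists of checking that the weight conversions in Steps 1 and 3 are uniform in $u$ (a weight depending only on $(u,r)$ does not affect membership in $B$ beyond a constant), and that the axis condition of Lemma~\ref{lem:basicest} is carried along as a standing hypothesis.
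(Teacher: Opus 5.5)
Your proposal is correct and follows the paper's own argument. You insert $\mathcal{T}\phi = \mathcal{T}_l\phi - \frac{l}{r}(r\nab_4\phi+3\phi)$ into Lemma~\ref{lem:basicest}, discard the resulting non-positive term using $l\geqslant 0$, bound the remaining pairing by the dual norm, and absorb via Young's inequality; your extra bookkeeping (passing from the $a$-dependent weight $w$ to the $a=0$ weight via $\tilde\psi$, and the norm equivalences) only makes explicit comparabilities that the paper uses tacitly.
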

\begin{proof}
    We insert the identity $\mathcal{T}\phi = \mathcal{T}_l\phi -\frac{l}{r} (r\nab_4 \phi + 3\phi)$ into the result of Lemma \ref{lem:basicest}. On the right hand side we estimate for $\varepsilon>0$:
    \begin{align*}
    &\int_{S_T} \left(\mathcal{T}_l\phi \cdot \left( r\nab_4 \phi + 3 \phi\right) - \frac{l}{r}\abs{r\nab_4 \phi + 3 \phi}^2 \right)w \ud A_{\gamma} \ud r \ud u \\
    &\qquad = \int_{S_T} \left(\mathcal{T}_l\phi \cdot \left( r\nab_4 \phi + 3 \phi\right) - \frac{l}{r}\abs{r\nab_4 \phi + 3 \phi}^2 \right)\frac{(r+a \jb{u})^{2p_0+2p_\infty}}{r^{3+2p_0}} \ud A_{\gamma} \ud r \ud u \\
    &\qquad \leqslant  \|\mathcal{T}_l\phi\|_{0,p_0, p_\infty*} \left( \sup_{u}\left \|r \nab_4 \phi + 3 \phi \right\|_{0, p_0, p_\infty}^2(H_u) + \|r \nab_4 \phi + 3 \phi\|_{1, p_0+\frac{1}{2}, p_\infty-\frac{1}{2}}^2 \right)^{\frac{1}{2}} \\
    &\qquad \leqslant \varepsilon \left( \sup_{u}\left \|r \nab_4 \phi + 3 \phi \right\|_{0, p_0, p_\infty}^2(H_u) + \|\phi\|_{1, p_0+\frac{1}{2}, p_\infty- \frac{1}{2}}^2 \right) + \frac{1}{\varepsilon} \|\mathcal{T}_l\phi\|_{0,p_0, p_\infty,*}^2,
    \end{align*}
    where we used $l \geqslant 0$ to drop a non-positive term.
    Rewriting the integrals in terms of the weighted norms we are done after observing that for $\varepsilon$ sufficiently small we can absorb the first term. 
\end{proof}

\subsubsection{Control of further derivatives} We observe that our estimates above only give bulk control of derivatives of $\phi$ tangent to the surfaces of constant $u$, and only control the flux on $H_u$ of a (twisted) radial derivative. To gain bulk control of derivatives transverse to the lightcone, and also additional flux terms, we make use of a multiplier constructed from $\nab_3\phi$. Because of the error terms generated by this, we pay the price of a weaker $r$-weight near infinity in comparison to the basic estimate, but get compensated by the corresponding $u$-weight. We first introduce
\[
\tilde{\mathcal{T}} \phi = \mathcal{T}\phi - [\nab_3, r \nab_4]\phi - \nab_4 \phi = \nab_4(r \nab_3 \phi) + \nab_4 \phi + 2 \nab_3 \phi +\frac{4}{r} \phi - \frac{1}{r} r^2 \slashed \Delta \phi.
\]
Multiplying by $r\nab_3 \phi$, we find
\begin{align*}
   r\nab_3\phi \cdot  \tilde{\mathcal{T}}\phi   &= \left(\nab_4 + \trch\right) \frac{1}{2} \abs{r\nab_3 \phi}^2  + \frac{1}{r} \abs{r\nab_3 \phi}^2 +\frac{1}{r} (r\nab_3 \phi) \cdot (r\nab_4 \phi)+ (2-\trch r)\frac{1}{2r } \abs{r\nab_3 \phi}^2  \\&\qquad + \left(\nab_3 + \trchb - 2\omegab \right)\left[ \frac{1}{2} \abs{r \nab \phi}^2 + 2\abs{\phi}^2 \right] \\&\qquad +\left(2 \omegab r - \trchb r\right) \left[ \frac{1}{2r} \abs{r \nab \phi}^2 + \frac{2}{r}\abs{\phi}^2 \right] \\
   &\qquad - \nab^A \left((r \nab_A \phi)\cdot (r\nab_3) \phi \right) + \left[ r \nab_A ,\nab_3\right]\phi \cdot (r\nab^A \phi).
\end{align*}
Integrating over $S_{u,r}$ gives
\begin{align}
   & \frac{\partial}{\partial r} \int_{S_{u, r}}\left[ \frac{1}{2} \abs{r\nab_3 \phi}^2 - f\left( \frac{1}{2} \abs{r \nab \phi}^2 + 2\abs{\phi}^2 \right)  \right] \ud A_{\gamma} + \frac{\partial}{\partial u} \int_{S_{u, r}}\left(  \abs{r \nab \phi}^2 + 4\abs{\phi}^2 \right)   \ud A_{\gamma}\nonumber \\ \label{eq:nab3mult}
    & \qquad + \int_{S_{u,r}} \frac{1}{r} \abs{r\nab_3 \phi}^2  \ud A_{\gamma} \\
    &= \int_{S_{u,r}} \bigg[ r\nab_3\phi \cdot  \tilde{\mathcal{T}}\phi- \frac{1}{r} (r\nab_3 \phi) \cdot (r\nab_4 \phi) -\left(2 \omegab r - \trchb r\right) \left[ \frac{1}{2r} \abs{r \nab \phi}^2 + \frac{2}{r}\abs{\phi}^2 \right] \nonumber\\&\qquad \qquad  +(r \trch-2)\frac{1}{2r} \abs{r\nab_3\phi}^2+  \left[  \nab_3, r\nab_A\right]\phi \cdot (r\nab^A \phi) \bigg] \ud A_{\gamma}. \nonumber
\end{align}
We introduce the weight
\[
\tilde{w}= \frac{(r+ \jb{u})^{2p_0+2p_\infty'} \brk{u}^{2 p_{\infty} - 2 p_{\infty}'}}{r^{3+2p_0}} ,
\]
where we shall assume that $p_0, p_\infty$ are as in \S\ref{sec:Teu.basic} above, and $p_\infty'\leqslant p_\infty$. The weight satisfies
\[
\frac{r \partial_r \tilde{w}}{\tilde{w}} = (2p_\infty'-3 )\frac{r}{r+ \jb{u}} -(3+2p_0)\frac{\jb{u} }{ r+ \jb{u}}, \qquad \frac{r \partial_u \tilde{w}}{\tilde{w}} =  (2p_0+2p_\infty')\frac{r}{r+ \jb{u}} \frac{u}{\jb{u}} + (2 p_{\infty} - 2 p_{\infty}') \frac{r u}{\jb{u}^{2}} .
\]
Since $r + a \brk{u} \sim r + \brk{u}$ (with $a$ fixed as in Proposition~\ref{prop:rp-pos}), we have
\begin{equation} \label{eq:tildew-w}
	\tilde{w} \sim w \left(\frac{\brk{u}}{r + \brk{u}}\right)^{2 p_{\infty} - 2 p_{\infty}'}.
\end{equation}
Multiplying \eqref{eq:nab3mult} by $\tilde{w}$ we find
\begin{align}
   & \frac{\partial}{\partial r} \int_{S_{u, r}}\left[ \frac{1}{2} \abs{r\nab_3 \phi}^2 - f\left( \frac{1}{2} \abs{r \nab \phi}^2 + 2\abs{\phi}^2 \right)  \right] \tilde{w} \ud A_{\gamma} + \frac{\partial}{\partial u} \int_{S_{u, r}}\left(  \abs{r \nab \phi}^2 + 4\abs{\phi}^2 \right)  \tilde{w} \ud A_{\gamma}\nonumber \\ \label{eq:nab3multwt}
    & \qquad + \int_{S_{u,r}} \frac{1}{2r} \left( 2 - \frac{r\partial_r \tilde{w}}{\tilde{w}}\right) \abs{r\nab_3 \phi}^2 \tilde{w} \ud A_{\gamma} \\
    &= \int_{S_{u,r}} \bigg[ r\nab_3\phi \cdot  \tilde{\mathcal{T}}\phi- \frac{1}{r} (r\nab_3 \phi) \cdot (r\nab_4 \phi) -\left(2 \omegab r  + f \frac{r\partial_r \tilde{w}}{\tilde{w}} - 2 \frac{r\partial_u \tilde{w}}{\tilde{w}}- \trchb r\right) \left[ \frac{1}{2r} \abs{r \nab \phi}^2 + \frac{2}{r}\abs{\phi}^2 \right] \nonumber\\&\qquad \qquad  +(r \trch-2)\frac{1}{2r} \abs{r\nab_3\phi}^2+  \left[  \nab_3, r\nab_A\right]\phi \cdot (r\nab^A \phi) \bigg]\tilde{w}\ud A_{\gamma}. \nonumber
\end{align}
We note that
\[
2 - \frac{r\partial_r \tilde{w}}{\tilde{w}} = (5 - 2p_\infty') \frac{r}{r+\jb{u}} + (5+2p_0) \frac{ \jb{u}}{r+ \jb{u}} \sim 1
\]
provided $p_\infty'<\frac{5}{2}$ and $p_0 >-\frac{5}{2}$. We can now mostly repeat our previous argument to obtain an estimate --- we focus only on where the argument requires adjustment.

\begin{lemma}\label{lem:nab3est}
    Fix $0<\delta < 1/20$ and suppose $p_0 = 1+\delta$ and $p_\infty \in [0, 3+\delta]$. Let 
    \[
    p_\infty' =\left \{ \begin{array}{cc}
         p_\infty-\frac{1}{2} & p_\infty <3-\frac{\delta}{2} \\
         \frac{5}{2}-\frac{\delta}{2} & p_\infty \geqslant 3-\frac{\delta}{2}.
    \end{array} \right.
    \]
    Suppose that the weak bootstrap assumptions hold and $\phi$ is a sufficiently regular $S$-tangent tensor satisfying the regularity condition at the axis:
    \[
    \lim_{r \to 0}\sup_{1-r<u<T} r^{-p_0-\frac{3}{2}} \left(|\phi|_{1}+ |r\nab_3 \phi|_0\right) (u, r) = 0.
    \]
    Then for $\epsilon_0$ sufficiently small and any $l\geqslant 0$ we have the estimate
    \begin{align*}
        &\sup_{u_T<u<T} \int_{H_u} \left(  \abs{r \nab \phi}^2 + \abs{\phi}^2 \right)  \tilde{w} \ud A_{\gamma} \ud r + \int_{S_T} \abs{r\nab_3\phi}^2 \frac{1}{r} \tilde{w} \ud A_{\gamma} \ud r \ud u \\
        &\qquad \ls \int_{S_T} \abs{\mathcal{T}_l\phi}^2 r \tilde{w} \ud A_{\gamma} \ud r \ud u + \int_{\tilde{\Sigma}_1}  \left(\abs{r\nab_3\phi}^2+\abs{r\nab\phi}^2 +\abs{\phi}^2 \right)\tilde{w} \ud A_{\gamma} \ud r \\
        &\qquad \qquad+ \int_{S_T} \left[ \abs{r\nab_4\phi}^2 + \abs{r\nab\phi}^2 +\abs{\phi}^2\right]\frac{w}{r} \ud A_{\gamma} \ud r \ud u,
    \end{align*}
    where the implicit constant depends on $\delta, l$.
\end{lemma}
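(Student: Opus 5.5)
We will follow the proof of Lemma~\ref{lem:basicest}, now integrating the weighted identity \eqref{eq:nab3multwt} over $R=\{(r,u)\in S_T: r>r_0,\ u<u_0\}$ and letting $r_0\to 0$. The axis flux vanishes by the assumed regularity condition. The flux on $H_{u_0}$ is $\int(\abs{r\nab\phi}^2+4\abs{\phi}^2)\tilde w$, which is the desired left-hand side term. The flux through $\Sigma_T$ has the coefficient $\frac12|r\nab_3\phi|^2 - \frac{f}{2}(\cdots)+\frac{\partial_u\tau}{\partial_r\tau}(\cdots)$. We check that it is nonnegative using $2\partial_u\tau> f\partial_r\tau$ from \S\ref{sec:tau}; the $-f$ term is dominated by the $u$-derivative contribution. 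The flux on $\tilde\Sigma_1$ is bounded by the data term.

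The bulk coefficient $2-\frac{r\partial_r\tilde w}{\tilde w}$ is $\sim 1$ because $p_0=1+\delta>-\frac52$ and $p_\infty'<\frac52$. This holds in both cases of the definition of $p_\infty'$: for $p_\infty<3-\frac\delta2$ we have $p_\infty'<\frac52-\frac\delta4$. It yields $\int|r\nab_3\phi|^2\tilde w/r$.

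On the right-hand side, first write $\tilde{\mathcal T}\phi=\mathcal T_l\phi-\frac lr(r\nab_4\phi+3\phi)-([\nab_3,r\nab_4]\phi+\nab_4\phi)$. The main term $r\nab_3\phi\cdot\mathcal T_l\phi$ is split by Cauchy--Schwarz into $\varepsilon|r\nab_3\phi|^2\tilde w/r+\varepsilon^{-1}|\mathcal T_l\phi|^2 r\tilde w$, and the first part is absorbed. The $l$-term, the cross term $\frac1r(r\nab_3\phi)\cdot(r\nab_4\phi)$ and the commutator term (via \eqref{34comest}) are treated the same way. They leave $\int(|r\nab_4\phi|^2+|\phi|^2)\tilde w/r$, which is bounded by the last line of the estimate because $\tilde w\lesssim w$ by \eqref{eq:tildew-w} and $p_\infty'\le p_\infty$. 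The $(r\trch-2)|r\nab_3\phi|^2$ term is $O(\epsilon)$ and is absorbed.

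The main obstacle is the coefficient of the sphere terms $[\frac1{2r}|r\nab\phi|^2+\frac2r|\phi|^2]$, namely $2\omegab r+f\frac{r\partial_r\tilde w}{\tilde w}-2\frac{r\partial_u\tilde w}{\tilde w}-r\trchb$. We compute its background part (with $f=1$, $\omegab=0$, $r\trchb=-2$). It is a bounded function, so that piece is controlled by the $w/r$ bulk term, again using $\tilde w\lesssim w$. There is one subtlety. The term $-2\frac{r\partial_u\tilde w}{\tilde w}$ contains $\frac{r u}{\brk u^2}$ and $\frac{r}{r+\brk u}$, which are $O(1)$ times $\frac{r}{r+\brk u}$. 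These are not small, but they are bounded, so they are fine: no sign is needed, because we are allowed the full $w/r$ bulk on the right-hand side.

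The genuinely dangerous piece is the perturbation $(f-1)\frac{r\partial_r\tilde w}{\tilde w}+2\omegab r$, which grows like $\epsilon r\brk u^{-1-}$. We bound it in the form $\epsilon\frac{r}{\brk u^{1+}}\cdot\frac1r(\cdots)\tilde w=\epsilon\brk u^{-1-}(\cdots)\tilde w$. Integrating in $u$ and using the integrability of $\brk u^{-1-}$, this is bounded by $\epsilon\sup_u\int_{H_u}(|r\nab\phi|^2+|\phi|^2)\tilde w$, which we absorb into the left-hand side after taking the supremum over $u_0$. This is exactly as for the $f-1$ term in Lemma~\ref{lem:basicest}; note that $r\partial_r\tilde w/\tilde w$ is bounded, so no extra growth arises.

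Finally, the commutator $[\nab_3,r\nab_A]\phi\cdot r\nab\phi$ is handled by the pointwise bound $\epsilon\brk u^{-1-}|\phi|_1$ from Theorem~\ref{thm:comest}. Its $r\nab_4\phi$ part, from $\xib\, r\nab_4\phi$, goes into the $w/r$ bulk using $\brk u^{-1-}\tilde w\lesssim w/r$, since $\brk u^{2p_\infty-2p_\infty'}$ compensates. Its remaining parts are absorbed as in the previous step.
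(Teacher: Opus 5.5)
\textbf{There is a genuine gap.} Your region, the four fluxes, the positivity of $2-\frac{r\partial_r\tilde w}{\tilde w}$, and the Cauchy--Schwarz treatment of the $\mathcal{T}_l\phi$, $l$-, cross and $[\nab_3,r\nab_4]$ terms all match the paper. The gap is in the step you flag as the main obstacle, which rests on a false claim.

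The background part of the coefficient of $\frac{1}{2r}\abs{r\nab\phi}^2+\frac{2}{r}\abs{\phi}^2$ is \emph{not} bounded. Differentiating the factor $\brk{u}^{2p_\infty-2p_\infty'}$ in $\tilde w$ puts the term $(2p_\infty-2p'_\infty)\frac{ru}{\brk{u}^2}$ into $\frac{r\partial_u\tilde w}{\tilde w}$. Since $\frac{ru}{\brk{u}^2}=\frac{r}{r+\brk{u}}\cdot\frac{u(r+\brk{u})}{\brk{u}^2}$, for $u\geq 1$ and $r\gg\brk{u}$ this is comparable to $r/\brk{u}$, not $O(1)\cdot\frac{r}{r+\brk{u}}$. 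For $u>0$ it enters the right-hand side with the unfavourable sign. You are therefore left with a source of size $\frac{1}{\brk{u}}(\abs{r\nab\phi}^2+\abs{\phi}^2)\tilde w$.

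Knowing only $\tilde w\ls w$ (i.e.\ $p'_\infty\le p_\infty$), this source is not bounded by $\frac{w}{r}(\cdots)$ when $r\gg\brk{u}$. Your $u$-integrability absorption into the $H_u$-flux does not rescue it either, since $\brk{u}^{-1}$ is neither integrable nor multiplied by $\epsilon$. So this background piece is at least as dangerous as the $(f-1)$ perturbation, and the argument as written does not close.

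The missing ingredient is exactly why $p'_\infty$ is defined as it is. In both cases $p'_\infty\le p_\infty-\frac12$, so \eqref{eq:tildew-w} gives $\tilde w\ls w\,\frac{\brk{u}}{r+\brk{u}}$, hence $(1+\frac{r}{\brk{u}})\,\tilde w\ls w$. The paper uses this once: it bounds the whole coefficient $\abs{2\omegab r+f\frac{r\partial_r\tilde w}{\tilde w}-2\frac{r\partial_u\tilde w}{\tilde w}-r\trchb}\ls 1+\frac{r}{\brk{u}}$, including the $f-1$ and $\omegab$ perturbations. The entire sphere-term contribution is then controlled by the last bulk term $\int[\abs{r\nab_4\phi}^2+\abs{r\nab\phi}^2+\abs{\phi}^2]\frac{w}{r}$.

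You invoke this compensation in your last paragraph for the $\xib\, r\nab_4\phi$ commutator term, but it is needed already here. Once it is in place, your separate treatment of $(f-1)\frac{r\partial_r\tilde w}{\tilde w}+2\omegab r$ is also valid, but unnecessary. That treatment bounds the perturbation by $\epsilon\brk{u}^{-1-}$ times the sphere terms and absorbs it into the $H_u$-flux via $u$-integrability.
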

\begin{proof}
    We integrate over the same region as in the proof of Lemma \ref{lem:basicest} and similarly split the contributions into fluxes $\mathcal{F}_i, \mathcal{B}_i$. The flux terms $\mathcal{F}_1, \mathcal{F}_2, \mathcal{F}_3, \mathcal{F}_4$ and bulk term $\mathcal{B}_1$ are handled in the same way, and give rise to the terms on the left of the estimate and the initial data term. The flux term $\mathcal{F}_4$ we bound by
    \begin{align*}
        \mathcal{F}_4  &= \int_{r_0}^{1-u_T} \int_{S_{1-r, r}}\left[ \frac{1}{2} \abs{r\nab_3\phi}^2 +  \left( 1 - \frac{f}{2}\right)\left( \abs{r\nab\phi}^2 + 4\abs{\phi}^2 \right)  \right] \tilde{w} \ud A_{\gamma} \ud r \\
        &\lesssim \int_{0}^{1-u_T} \int_{S_{1-r, r}}\left[ \abs{r \nab_3 \phi}^2+ \abs{r\nab\phi}^2 + \abs{\phi}^2  \right] \tilde{w} \ud A_{\gamma} \ud r.
    \end{align*}
    Turning now to the bulk error terms, we can estimate for any $\varepsilon>0$
    \[
    \left( \abs{\frac{1}{r} (r\nab_3\phi)\cdot(r\nab_4\phi)}+\abs{r\nab_3 \phi\cdot (\tilde{\mathcal{T}}-\mathcal{T}_l)\phi} \right)\tilde{w} \lesssim \varepsilon  \frac{1}{r} \abs{r\nab_3\phi}^2  \tilde{w}+ \frac{1}{\varepsilon} \frac{1}{r} \left(\abs{r\nab_4\phi}^2 +\abs{\phi}^2)\right) \tilde{w},
    \]
    using identities in the proof of Theorem~\ref{thm:comest}.
    Further, making use of the estimates \eqref{eq:tildew-w},
    \[
    \abs{r\trch - 2}\lesssim \epsilon_0
    \]
    and
    \[
    \abs{2 \omegab r  + f \frac{r\partial_r \tilde{w}}{\tilde{w}} - 2 \frac{r\partial_u \tilde{w}}{\tilde{w}}- \trchb r} \tilde{w} \lesssim \left( \frac{r}{\brk{u}} + 1 \right) w \left(\frac{\brk{u}}{r + \brk{u}}\right)^{2 p_{\infty} - 2 p_{\infty}'} \lesssim w,
    \]
    which follow from the bootstrap assumptions \eqref{eq:BA.weak.pointwise} and $p_{\infty}' \leqslant p_{\infty} - \frac{1}{2}$, together with
    \[
    \abs{[\nab_3, r\nab_A]\phi \cdot r\nab^A \phi}  \tilde{w} \lesssim \left(\frac{1}{\brk{u}} + \frac{1}{r}\right) \left[ \abs{r\nab_4\phi}^2 + \abs{r\nab\phi}^2 +\abs{\phi}^2\right] \tilde{w} \lesssim \frac{1}{r}\left[ \abs{r\nab_4\phi}^2 + \abs{r\nab\phi}^2 +\abs{\phi}^2\right] w
    \]
    which follows from \eqref{eq:comm-nab3rnab} in the proof of Theorem \ref{thm:comest}, \eqref{eq:tildew-w} and again $p_{\infty}' \leqslant p_{\infty} - \frac{1}{2}$, we deduce that
    \begin{align*}
    \mathcal{B}_2  &\lesssim
     \int_{S_T} \left[\frac{\epsilon_0+\varepsilon}{r} \abs{r\nab_3 \phi}^2 +\frac{r}{\varepsilon}\abs{\mathcal{T}_l\phi}^2\right]\tilde{w} \ud A_\gamma \ud r \ud u \\
    &\qquad +  \int_{S_T} \frac{1}{r}\left[ \abs{r\nab_4\phi}^2 + \abs{r\nab\phi}^2 +\abs{\phi}^2\right] w\ud A_\gamma \ud r \ud u.
    \end{align*} 
    The first term can be absorbed for $\epsilon_0, \varepsilon$ sufficiently small and we leave the remaining terms as they are after fixing $\varepsilon$. 
\end{proof}

\begin{corollary}\label{cor:BasicEst}
    Under the assumptions of the previous lemma we have     
    \begin{align*}
        &\sup_{u_T<u<T}\left[\left \|r \nab_4 \phi + 3 \phi \right\|_{0, p_0, p_\infty}^2(H_u) +\|\brk{u}^{p_{\infty} - p_{\infty}'}\phi\|_{1, p_0, p_\infty'}^2(H_u) \right]  \\
        & \qquad +  \|\phi\|_{1, p_0 + \frac{1}{2}, p_\infty-\frac{1}{2}}^2+  \|\brk{u}^{p_{\infty} - p_{\infty}'}\nab_3\phi\|_{0, p_0-\frac{1}{2}, p_\infty'+\frac{1}{2}}^2 +\abs{\phi}_{0, p_0, p''_\infty}^2\\
        &\qquad \lesssim \| \mathcal{T}_l \phi \|^2_{0,p_0, p_\infty,*} + \| \brk{u}^{\frac{1}{2}}\mathcal{T}_l \phi \|^2_{0,p_0-\frac{1}{2}, p_\infty}
        + \| (r \nab_{4} \phi, r \nab \phi, \phi)\|_{0, p_0, p_\infty}^2(\tilde{\Sigma}_1) \\
        &\qquad \qquad + \|\nab_3 \phi\|_{0, p_0-1, p_\infty+1}^2(\tilde{\Sigma}_1)+\abs{\phi}_{0, p_0, p_\infty''}^2(\tilde{\Sigma}_1)
    \end{align*}
    where 
    \[
    p''_\infty = \left \{ \begin{array}{ccc}
        3 , & \quad &p_\infty >3, \\
        p_\infty, && p_\infty <3.
    \end{array} \right.
    \]
\end{corollary}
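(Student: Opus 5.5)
We combine the Corollary following Lemma~\ref{lem:basicest} with Lemma~\ref{lem:nab3est}, and then add a separate argument for the pointwise term $\abs{\phi}_{0,p_0,p''_\infty}$.

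\emph{Step 1.} The Corollary after Lemma~\ref{lem:basicest} controls $\sup_u\|r\nab_4\phi+3\phi\|^2_{0,p_0,p_\infty}(H_u)+\|\phi\|^2_{1,p_0+\frac12,p_\infty-\frac12}$ by $\|\mathcal{T}_l\phi\|^2_{0,p_0,p_\infty,*}$ plus data. The bulk integral $\int_{S_T}[\abs{r\nab_4\phi}^2+\abs{r\nab\phi}^2+\abs{\phi}^2]\frac{w}{r}$ is, up to norm equivalence (Lemma~\ref{lem:gammacont}, Lemma~\ref{Hs equivalence}), exactly $\|\phi\|^2_{1,p_0+\frac12,p_\infty-\frac12}$.

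\emph{Step 2.} We feed this into the right-hand side of Lemma~\ref{lem:nab3est}. Using \eqref{eq:tildew-w}, the left-hand side there is comparable to $\sup_u\|\brk{u}^{p_\infty-p_\infty'}(r\nab\phi,\phi)\|^2_{0,p_0,p'_\infty}(H_u)+\|\brk{u}^{p_\infty-p'_\infty}\nab_3\phi\|^2_{0,p_0-\frac12,p'_\infty+\frac12}$. For the right-hand side, $\int\abs{\mathcal{T}_l\phi}^2 r\tilde w\lesssim\int\abs{\mathcal{T}_l\phi}^2 r w$, since $p'_\infty\le p_\infty$; this is bounded by $\|\brk{u}^{\frac12}\mathcal{T}_l\phi\|^2_{0,p_0-\frac12,p_\infty}$ after using $r\le r+\brk{u}$. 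The data term is bounded in the same way, because $\tilde w\lesssim w$ and $r\nab_3$ carries the weight giving $\|\nab_3\phi\|_{0,p_0-1,p_\infty+1}(\tilde\Sigma_1)$.

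To complete the $H_u$-flux $\|\brk{u}^{p_\infty-p'_\infty}\phi\|_{1,p_0,p'_\infty}(H_u)$, we also need $r\nab_4\phi$ on $H_u$. This follows from the first flux term: writing $r\nab_4\phi=(r\nab_4\phi+3\phi)-3\phi$, we have $\brk{u}^{2(p_\infty-p'_\infty)}\tilde w\lesssim w$, and the $\phi$ part is already controlled.

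\emph{Step 3 (main obstacle): the pointwise bound on $\abs{\phi}_{0,p_0,p''_\infty}$.} Set $\psi=\phi$. We use the transport identity $\partial_r(r^3\abs{\phi}^2_0)\approx 2r^2\langle\phi,r\nab_4\phi+3\phi\rangle_0$ along $H_u$ (error terms come from $\trch-2/r$ and are $O(\epsilon)$). Alternatively, we apply Lemma~\ref{Lem:bgtransport} with $\lambda_0=\frac32$ and source $r^{-1}(r\nab_4\phi+3\phi)$, which is exactly of the form covered by Theorem~\ref{thm:transest}. Integrating from the axis (for $u\ge1$, using the axis vanishing assumption) or from $\tilde\Sigma_1$ (for $u<1$), and applying Cauchy--Schwarz against the $H_u$ flux $\|r\nab_4\phi+3\phi\|_{0,p_0,p_\infty}(H_u)$, gives the $L^2(H_u)$-to-sup Hardy-type bound in the form \eqref{eq:22.sphere.transest.2}/\eqref{eq:main.sphere.transest}.

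The resulting $r$-integral $\int_0^r r'^{2\cdot\frac32}\cdots$ converges with $r$-weight $(\brk{u}+r)^{-p_\infty}$ only while $p_\infty<3$. This is the $2\lambda_0=3$ threshold in Theorem~\ref{thm:transest}, and it is the origin of the cap $p''_\infty=\min\{p_\infty,3\}$. When $p_\infty>3$, we get weight $3$ in $r$ and surplus $\brk{u}$ decay, which we discard.

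The delicate points are near $r=0$, where the exponent $p_0=1+\delta$ must satisfy the $r^{-p_0-1}$ weighting condition $2\lambda_0+q_0+1>0$, and the initial piece on $\tilde\Sigma_1$. That piece is handled by the extra data term $\abs{\phi}_{0,p_0,p''_\infty}(\tilde\Sigma_1)$.
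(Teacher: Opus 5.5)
Your overall route is the paper's. You combine the corollary following Lemma~\ref{lem:basicest} with Lemma~\ref{lem:nab3est}, and recover $r\nab_4\phi$ on $H_u$ from $r\nab_4\phi+3\phi$ and $\phi$. You then obtain $\abs{\phi}_{0,p_0,p''_\infty}$ by integrating an $e_4$-transport equation from the axis, with source controlled by $\sup_u\|r\nab_4\phi+3\phi\|_{0,p_0,p_\infty}(H_u)$. The paper writes this as $\nab_4(r^3\phi)=r^2(r\nab_4\phi+3\phi)$ and applies Theorem~\ref{thm:transest}~i) with $\lambda_0=0$. Your $\lambda_0=\frac32$ equation for $\phi$ is the same equation divided by $r^3$, with the same threshold at $p_\infty=3$. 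Step 1 and the substance of Step 3 are fine.

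The step that fails as written is your bound on the forcing term of Lemma~\ref{lem:nab3est}. By definition of the norms,
\[
\|\brk{u}^{\frac12}\psi\|^2_{0,p_0-\frac12,p_\infty}\sim\int_{S_T}\abs{\psi}^2\,\frac{\brk{u}}{r+\brk{u}}\,rw\,\ud A_\gamma\,\ud r\,\ud u .
\]
This is \emph{weaker} than $\int_{S_T}\abs{\psi}^2 rw$ wherever $r\gg\brk{u}$. So your inequality $\int\abs{\mathcal{T}_l\phi}^2 rw\lesssim\|\brk{u}^{\frac12}\mathcal{T}_l\phi\|^2_{0,p_0-\frac12,p_\infty}$ is false; the bound $r\leqslant r+\brk{u}$ gives the reverse direction. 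By replacing $\tilde w$ with $w$ using only $p'_\infty\leqslant p_\infty$, you throw away exactly the factor you need.

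The correct argument, which is the one point the paper singles out, keeps the factor in \eqref{eq:tildew-w}. It then uses $p_\infty-p'_\infty\geqslant\frac12$, which holds with equality if $p_\infty<3-\frac\delta2$, and as $p_\infty-\frac52+\frac\delta2\geqslant\frac12$ otherwise. This gives
\[
r\tilde w\sim rw\Big(\frac{\brk{u}}{r+\brk{u}}\Big)^{2(p_\infty-p'_\infty)}\lesssim rw\,\frac{\brk{u}}{r+\brk{u}},
\]
which is precisely the weight in $\|\brk{u}^{\frac12}\cdot\|^2_{0,p_0-\frac12,p_\infty}$. With this correction the argument goes through.

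There are two smaller slips.
\begin{enumerate}
\item With $\abs{\cdot}_0$ as defined in the paper, the exact identity is $\partial_r(r^4\abs{\phi}_0^2)=2r^3\int_{S_{u,r}}\langle\phi,r\mathring{\nab}_4\phi+3\phi\rangle_{\mathring\gamma}\,\ud A_{\mathring\gamma}$, not the version with $r^3$ and $2r^2$.
\item The weight of $\|\brk{u}^{p_\infty-p'_\infty}\phi\|^2_{1,p_0,p'_\infty}(H_u)$ is $\tilde w$ itself. What you need, and what is true, is $\tilde w\lesssim w$, not $\brk{u}^{2(p_\infty-p'_\infty)}\tilde w\lesssim w$.
\end{enumerate}
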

\begin{proof}
    Combining the results of Lemmas \ref{lem:basicest}, \ref{lem:nab3est} we control all of the terms on the left in terms of the data terms on the right, after rewriting the integrals as norms. To control the norm of $\mathcal{T}_{l} \phi$ in Lemma~\ref{lem:nab3est}, we used \eqref{eq:tildew-w} and $p_{\infty} - p_{\infty}' \geqslant \frac{1}{2}$. It remains to establish control of $\abs{\phi}_{0, p_0, p_\infty''}$. For this, we observe that
    \[
    \nab_4 (r^3\phi) = r^{2} (r \nab_4 \phi + 3 \phi) =:F
    \]
    where we control $\sup_u \|F\|_{0, p_0+2, p_\infty-2}(H_u)$. We apply Theorem \ref{thm:transest} $i)$ with $\lambda_0 = 0$ to deduce that
    \[
    |r^3\phi|_{0, p_0+3, p_\infty''-3} \lesssim |r^3\phi|_{0, p_0+3, p_\infty''-3}(\tilde{\Sigma}_1)+\sup_u \|F\|_{0, p_0+2, p_\infty-2}(H_u)
    \]
    which gives the result.
\end{proof}

\begin{remark} By Lemma~\ref{lem:dualnormest}, we have two simple ways of estimating the two norms for $\mathcal{T}_{l} \phi$ in Corollary~\ref{cor:BasicEst}: for any $\delta' > 0$,
\begin{align} 
	\| \psi \|_{s,p_0, p_\infty,*} + \| \brk{u}^{\frac{1}{2}} \psi \|_{s, p_0-\frac{1}{2}, p_\infty} & \lesssim_{\delta'} 
	\| \brk{u}^{\frac{1}{2}+\delta'} \psi \|_{s, p_0-\frac{1}{2}, p_\infty}, \label{eq:forcingest0} \\
	\| \psi \|_{s,p_0, p_\infty,*} + \| \brk{u}^{\frac{1}{2}} \psi \|_{s, p_0-\frac{1}{2}, p_\infty} & \lesssim 
	\| \psi \|_{s, p_0-\frac{1}{2}, p_\infty+\frac{1}{2}}. \label{eq:forcingest1}
\end{align}
Estimate \eqref{eq:forcingest0}, which involves a weaker $r$-weight, will be used for all nonlinear terms. The virtue of estimate \eqref{eq:forcingest1} is that no powers of $\brk{u}$ or $r + \brk{u}$ are lost; it will be used to bound linear contributions. 
\end{remark}

\subsubsection{Higher order estimates}

For compactness of notation, let us write
\begin{align*}
\mathcal{N}_s[\phi] &:= \sup_{u_T<u<T}\left[\left \|r \nab_4 \phi + 3 \phi \right\|_{s, p_0, p_\infty}(H_u) +\|\brk{u}^{p_{\infty} - p_{\infty}'} r\nab \phi\|_{s, p_0, p_\infty'}(H_u)+\|\brk{u}^{p_{\infty} - p_{\infty}'} \phi\|_{s, p_0, p_\infty'}(H_u) \right]\\
&\qquad+\|r\nab_4\phi\|_{s, p_0+\frac{1}{2}, p_\infty-\frac{1}{2}}+\|r\nab \phi\|_{s, p_0+\frac{1}{2}, p_\infty-\frac{1}{2}}+  \|\phi\|_{s, p_0+\frac{1}{2}, p_\infty-\frac{1}{2}} \\
&\qquad+  \|\brk{u}^{p_{\infty} - p_{\infty}'} \nab_3\phi\|_{s,  p_0-\frac{1}{2}, p_\infty'+\frac{1}{2}} +\abs{\phi}_{s, p_0, p_{\i}''}.
\end{align*}
For the proofs in this subsection, we also fix some $\delta' \in (0, \frac{1}{2} \delta]$ and write $p^{\pm\prime} := p \pm \delta'$.
\begin{lemma}\label{lem:nabAcomm}
    Let $s\leqslant s_0-1$ and $p_\i \in [0,3 + \frac{1}{2} \delta ] \setminus \{3\}$. Assume that the weak bootstrap assumptions hold. Then
    \begin{align*}
    \left\| \mathcal{T}_{l}(r\nab \phi) - r\nab \mathcal{T}_l\phi \right\|_{s, p_0, p_\infty, *} + \left\| \brk{u}^{\frac{1}{2}} (\mathcal{T}_{l}(r\nab \phi) - r\nab \mathcal{T}_l\phi )\right\|_{s, p_0-\frac{1}{2}, p_\infty} &\lesssim \epsilon \mathcal{N}_{s+1}[\phi] + \mathcal{N}_{s}[\phi].
    \end{align*}
\end{lemma}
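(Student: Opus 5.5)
We expand the commutator $[\mathcal{T}_l, r\nab]\phi$ explicitly, term by term, and bound each piece by the dual-norm estimates \eqref{eq:forcingest0}--\eqref{eq:forcingest1}. Writing
\[
\mathcal{T}_l\phi = \nab_3(r\nab_4\phi) + (2+l)\nab_4\phi + 3\nab_3\phi + \tfrac{4+3l}{r}\phi - r\slashed\Delta\phi,
\]
and using $\nab_A r = 0$, we obtain schematically
\begin{align*}
\mathcal{T}_l(r\nab\phi) - r\nab\mathcal{T}_l\phi &= [\nab_3, r\nab](r\nab_4\phi) + \nab_3\big(r\,[r\nab_4, r\nab]\phi / r\big) + (2+l)[\nab_4, r\nab]\phi \\
&\qquad + 3[\nab_3, r\nab]\phi - r\big(\slashed\Delta(r\nab\phi) - r\nab\slashed\Delta\phi\big).
\end{align*}
The commutators between the tangential operators $r\nab_4$, $r\nab$ and $\nab_3$, $\nab_4$ are given by Lemma~\ref{lem:CK.733} and the identities \eqref{eq:comm-nab4rnab}, \eqref{eq:comm-nab3rnab}, \eqref{eq:comm-rnabrnab} in the proof of Theorem~\ref{thm:comest}. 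The angular term is $[r^2\slashed\Delta, r\nab]\phi \sim r^2K\, r\nab\phi + r\nab(r^2K)\,\phi$.

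\textbf{Linear part.} Writing $r^2K = 1 + r^2(K-r^{-2})$, the leading part of the angular commutator is $c\, r\nab\phi/r$, with no smallness. We bound it with \eqref{eq:forcingest1}: its $\|\cdot\|_{s,p_0-\frac12,p_\infty+\frac12}$ norm is $\|r\nab\phi\|_{s,p_0+\frac12,p_\infty-\frac12}\lesssim \mathcal{N}_s[\phi]$, since dividing by $r$ shifts the weights exactly by \eqref{eq:norm.r.shift}. This is the origin of the unsmall $\mathcal{N}_s[\phi]$ term in the statement.

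\textbf{Nonlinear part.} Every other term carries a factor $\epsilon$ coming from the Ricci coefficients and curvature: $\chih$, $\trch-2r^{-1}$, $\eta$, $\beta$ in $[\nab_4,r\nab]$; $\xib$, $\chibh$, $\trchb+2r^{-1}$, $(f-1)/r$, $\betab$ in $[\nab_3,r\nab]$; and $K-r^{-2}$ with its angular derivative. For these terms we use \eqref{eq:forcingest0}. We then apply the integrated commutator bounds of Theorem~\ref{thm:comestST}, or directly the product estimate \eqref{eq:product.for.integrated}. The rough factor is placed in a pointwise norm when it carries few derivatives, and in an integrated norm (bootstrap \eqref{eq:BA.weak.integrated}) at the top order. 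The resulting $\phi$-norms involve at most one extra tangential derivative, $r\nab_4$, $r\nab$ or $\nab_3$, applied to $\phi$ in $|\cdot|_s$. These are controlled by $\mathcal{N}_{s+1}[\phi]$, which contains $\|r\nab_4\phi\|$, $\|r\nab\phi\|$ and $\|\brk{u}^{p_\infty-p_\infty'}\nab_3\phi\|$ at level $s+1$. The pointwise norm $|\phi|_{s,p_0,p_\infty''}$ in $\mathcal{N}$ absorbs the pointwise terms from Theorem~\ref{thm:comestST}; the excluded value $p_\infty=3$ avoids the logarithmic endpoint in $p''_\infty$. The term $\nab_3$ applied to $[r\nab_4, r\nab]\phi$ is handled by commuting $\nab_3$ through and using the $\nab_3$ transport equations for $\chih$ and $\eta$, namely \eqref{eq:3chih} and \eqref{eq:3etab}, to avoid second $\nab_3$ derivatives of $\phi$.

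\textbf{Main obstacle.} We expect the hardest step to be the $u$-weight bookkeeping for terms such as $\xib\, r\nab_4(r\nab_4\phi)$ and $\frac{f-1}{r}\, r\nab\phi$, where the rough factor grows like $r\brk{u}^{-1-}$. We must check that the $\brk{u}^{\frac12+\delta'}$ loss in \eqref{eq:forcingest0} is compensated by the $\brk{u}^{1^+}$ decay in the bootstrap assumptions, for instance $\jb{u}^{1+}\xib$ and $\jb{u}^{1+}(f-1)$ in \eqref{eq:BA.weak.pointwise}. We must also check that the $r$-growth is paid for by shifting $p_\infty \to p_\infty - 1$ in the $\phi$-norm, which $\mathcal{N}_{s+1}$ allows via the $p_\infty-\frac12$ spacetime weight together with $r\lesssim r+\brk{u}$. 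The restriction $s\le s_0-1$ ensures that the top factor never exceeds the regularity available in the bootstrap.
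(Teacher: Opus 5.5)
There is a genuine gap. Your decomposition of $\mathcal{T}_l(r\nab\phi)-r\nab\mathcal{T}_l\phi$ is algebraically correct, and your treatment of the only non-small term (the $\frac1r\,r\nab\phi$ piece of the angular commutator, via \eqref{eq:forcingest1}) matches the paper. The gap is in the step you flag as the main obstacle.

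The coefficients $\xib=\nab f$ and $\frac{f-1}{r}$ are of size $\epsilon\brk{u}^{-1-}$ uniformly in $r$. Your terms $[\nab_3,r\nab](r\nab_4\phi)+3[\nab_3,r\nab]\phi$ contain $\xib\, r\nab_4(r\nab_4+3)\phi$ and $\frac{f-1}{r}\, r\nab(r\nab_4+3)\phi$. These cannot be absorbed by the spacetime part of $\mathcal{N}_{s+1}[\phi]$. Relative to the weight of $\|\cdot\|_{s,p_0-\frac12,p_\infty}$, the weight of $\|\cdot\|_{s+1,p_0+\frac12,p_\infty-\frac12}$ is larger only by $(r+\brk{u})/r^2$. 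Since $\brk{u}^{1+2\delta'}|\xib|^2\sim\epsilon^2\brk{u}^{-1-}$, you would need $r^2\lesssim \brk{u}^{1+}(r+\brk{u})$, which is false for $r\gg\brk{u}^{1+}$. No shift of $p_\infty$ compensates a coefficient with no $r$-decay.

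What closes these terms is the flux $\sup_u\|(r\nab_4+3)\phi\|_{s+1,p_0,p_\infty}(H_u)$. Its weight dominates the target weight by $(r+\brk{u})/r\geqslant 1$. You pair it with square-integrability in $u$ of $\brk{u}^{\frac12+\delta'}$ times the coefficient. This is an $L^2_uL^\infty_r$ against $L^\infty_uL^2_r$ pairing, not the spacetime/pointwise pairing of \eqref{eq:product.for.integrated}.

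At top order even this needs more than you cite. Up to $s_0-1$ derivatives may fall on $\xib$, but $\brk{u}^{1+}\xib$ is controlled pointwise only with $s_0-3$ derivatives. Placing $\xib$ in the spacetime norm $\|\brk{u}^{\frac12+\frac32\delta}\xib\|_{s_0-1,\frac{1}{2}^{+},-\frac12\delta}$ and $(r\nab_4+3)r\nab_4\phi$ in a pointwise norm loses a factor $r^{\delta/2}$ as $r\to\infty$ against $|\phi|_{s,p_0,p_\infty''}$. The missing ingredient is the mixed bootstrap norm in the last line of \eqref{eq:BA.weak.integrated}. Because $\xib=\nab f$, it gives $\int(\brk{u}^{\frac{1}{2}^{+}}|\xib|_{s_0-1,\frac{1}{2}^{+},0}(H_u))^2\,\ud u\lesssim\epsilon^2$. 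This is how the paper handles the first term of \eqref{eq:nabAcommPr2}, and $\frac{f-1}{r}$ is handled the same way.

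A secondary point concerns keeping $\nab_3$ outside, as $\nab_3([r\nab_4,r\nab]\phi)$. This forces $e_3$-derivatives of every coefficient in \eqref{eq:comm-nab4rnab}:
\begin{itemize}
\item not only $\chih$ and $\eta$, but also $\trch-\frac2r$ via \eqref{eq:3trchi} and $r^2\beta$ via \eqref{eq:3beta};
\item $\nab_4\xib$, which enters through \eqref{eq:3etab};
\item the factor $\nab_3 r=-f$, which grows like $r$ and must be recombined with $-\frac r2\trchb$ into $f-2\omegab r-1$.
\end{itemize}
The paper's identity \eqref{eq:nabAcommPr1} avoids all of this. It applies $[r\nab,r\nab_4]$ to $\nab_3\phi$, which is controlled in $\mathcal{N}_{s+1}$, and uses \eqref{eq:comm-nab3rnab4}. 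As a result, only $\nab_4$- and angular derivatives of Ricci coefficients ever appear.
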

\begin{proof}
    We first require the identity
    \begin{align}
        \left[r\nab, \nab_3(r\nab_4 + 3) \right]\phi &= [r\nab, r\nab_4]\nab_3 \phi- [r\nab, \nab_4]\phi  + r\nab([\nab_3, r\nab_4]+ \nab_4) \phi \nonumber\\&\qquad   - ([\nab_3, r\nab_4]+ \nab_4)r\nab \phi + (r\nab_4+3) [r\nab, \nab_3]\phi\label{eq:nabAcommPr1}
    \end{align}
    which can be established by expanding out the commutators on the right hand side. Taking the terms on the first line one at a time, using the estimates of Theorem \ref{thm:comestST}, together with the remark following it, and the weak bootstrap assumptions, we have
    \begin{align*}
        \|\brk{u}^{\frac{1}{2}^{+\prime}}[r\nab, r\nab_4]\nab_3 \phi \|_{s, p_0-\frac{1}{2}, p_\infty} &\lesssim \epsilon \|\brk{u}^{\frac{1}{2}^{+\prime}}\nab_3 \phi\|_{s+1, p_0^{-}-\f 32, p_\infty^-} \lesssim \epsilon \mathcal{N}_{s+1}[\phi], \\
        \|\brk{u}^{\frac{1}{2}^{+\prime}} [r\nab, \nab_4]\phi\|_{s, p_0-\frac{1}{2}, p_\infty} &\lesssim \epsilon \|\brk{u}^{\frac{1}{2}^{+\prime}} \phi\|_{s+1, p_0^{-}-\frac{1}{2}, p_\infty^{-}-1} \lesssim \epsilon \mathcal{N}_{s+1}[\phi], \\
        \|\brk{u}^{\frac{1}{2}^{+\prime}} r\nab([\nab_3, r\nab_4]+ \nab_4) \phi\|_{s, p_0-\frac{1}{2}, p_\infty} &\lesssim \|\brk{u}^{\frac{1}{2}^{+\prime}} ([\nab_3, r\nab_4]+ \nab_4) \phi\|_{s+1, p_0-\frac{1}{2}, p_\infty} \\
        & \lesssim \epsilon (\|\brk{u}^{\frac{1}{2}-(\delta-\delta')} r\nab_4\phi\|_{s+1, p_0^{-}-\frac{1}{2}, p_\infty-1}\\
        &\qquad+\|\brk{u}^{\frac{1}{2}-(\delta-\delta')} r\nab\phi\|_{s+1, p_0^{-}-\frac{1}{2}, p_\infty-1}
        \\&\qquad +\|\brk{u}^{\frac{1}{2}-(\delta-\delta')} \phi\|_{s+1, p_0^{-}-\frac{1}{2}, p_\infty-1}+| \brk{u}^{0^{+\prime}} \phi|_{s-1, p_0^{-}-1, p_\infty^-} )\\&\lesssim \epsilon \mathcal{N}_{s+1}[\phi], \\
        \|\brk{u}^{\frac{1}{2}^{+\prime}} ([\nab_3, r\nab_4]+ \nab_4) r\nab\phi\|_{s, p_0-\frac{1}{2}, p_\infty} & \lesssim \epsilon \| \jb{u}^{\frac{1}{2}} r\nab \phi\|_{s+1, p_0^{-}-\frac{1}{2}, p_\infty-1}\lesssim \epsilon \mathcal{N}_{s+1}[\phi] .
    \end{align*}
    Here, we have used $p_{\infty}-(\delta-\delta') \leqslant p_{\infty}''$ in our range of $p_{\infty}$ to bound $| \brk{u}^{0^{+\prime}} \phi|_{s-1, p_0^{-}-1, p_\infty^-} \aleq | \phi|_{s-1, p_0^{-}-1, p_\infty''}$.
    
    For the final term of \eqref{eq:nabAcommPr1} we expand out to find schematically
    \begin{align}
            (r\nab_4+3)[ \nab_3, r\nab]\phi  & \sim \xib (r\nab_4+3) r\nab_4 \phi  + (r\nab_4 \xib) r\nab_4 \phi+\xib(r\nab_4+3)  \phi  + (r\nab_4 \xib)  \phi \nonumber\\
            &\qquad + (r\nab_4+3)\Bigg [\chibh \cdot r\nab \phi + (\trchb + 2r^{-1}) r \nab \phi + r\chibh \cdot \eta \cdot \phi+ r\chih \cdot \xib \cdot \phi \label{eq:nabAcommPr2}\\
            &\qquad +r(\trchb + 2r^{-1}) \eta \cdot \phi+r(\trch - 2r^{-1}) \xib \cdot \phi + \eta \cdot \phi+ r \betab \cdot \phi \Bigg]. \nonumber
    \end{align}
    
	To control the first term in \eqref{eq:nabAcommPr2}, we first note that since $\xib = \nab f$,
    \begin{align*}
	\left( \int_{u_T}^T \left( \brk{u}^{\frac{1}{2}^{+}}  |\xib|_{s_0-1, \frac{1}{2}^+, 0}(H_u) \right)^{2} \ud u \right)^{\frac{1}{2}}
	&\leqslant \left( \int_{u_{T}}^{T} \left( \brk{u}^{\frac{1}{2}^{+}}\abs{f - 1}_{s_{0}, \frac{3}{2}^{+}, -1}(H_{u}) \right)^{2} \ud u \right)^{\frac{1}{2}}	\aleq \epsilon,
	\end{align*}
	so we can estimate
\begin{align*}
    \| \brk{u}^{\frac{1}{2}^{+}} \xib (r\nab_4+3) r\nab_4 \phi\|_{s,p_0-\frac{1}{2}, p_\infty} \lesssim \epsilon \sup_{u_T<u<T}\left \|r \nab_4 \phi + 3 \phi \right\|_{s+1, p_0, p_\infty}(H_u) \lesssim \epsilon \mathcal{N}_{s+1}[\phi].
\end{align*}

    To control the second term in \eqref{eq:nabAcommPr2} we use the following identity which follows from $\xib = \nab f, \nab_4f = 2\omegab$
    \[
    r\nab_4 \xib = ([\nab_4, r\nab] - \nab)(f-2\omegab r +1) + 2 r [\nab_4, r\nab]\omegab,
    \]
    to deduce that under the weak bootstrap assumptions we have 
\begin{equation*}
    \|\brk{u}^{\frac{1}{2}+\frac{1}{2}\delta} r\nab_4 \xib\|_{s_0-1, \frac{1}{2}^+, \frac{1}{2}\delta}\lesssim \epsilon. 
\end{equation*}
    Hence, using the spacetime integral of \eqref{eq:product.for.pointwise}, 
    \begin{align*}
\|\brk{u}^{\frac{1}{2}^{+\prime}} (r\nab_4 \xib) r\nab_4 \phi\|_{s, p_0-\frac{1}{2}, p_\infty} &\lesssim \|\brk{u}^{\frac{1}{2}+\frac{1}{2}\delta} r\nab_4 \xib\|_{s_0-1, \frac{1}{2}^+, \frac{1}{2}\delta} |\brk{u}^{-(\frac{1}{2}\delta - \delta')}r\nab_4 \phi|_{s, p_0^--1, p_\infty-\frac{1}{2}\delta} 
    \lesssim \epsilon \mathcal{N}_{s+1}[\phi] .
\end{align*}
Here, we have used $p_{\infty}-\frac{1}{2}\delta \leqslant p_{\infty}''$ in our range of $p_{\infty}$ to bound $|r\nab_4 \phi|_{s, p_0^--1, p_\infty-\frac{1}{2}\delta} \aleq | \phi|_{s+1, p_0^{-}-1, p_\infty''}$.
    The next two terms of \eqref{eq:nabAcommPr2} can be bounded in exactly the same way as the first two.
    
    The remaining terms can all be bounded in the $\| \brk{u}^{\frac{1}{2}^{+\prime}} (\cdot) \|_{s, p_0-\frac{1}{2}, p_\infty}$ norm by a multiple of $\epsilon\mathcal{N}_{s+1}[\phi] $ after making use of the various transport equations. We have thus established
    \[
    \left\| \brk{u}^{\frac{1}{2}^{+\prime}} \left[r\nab, \nab_3(r\nab_4 + 3) \right]\phi\right\|_{s, p_0-\frac{1}{2}, p_\infty} \lesssim \epsilon \mathcal{N}_{s+1}[\phi] .
    \]
    To complete the proof, note that $r \nab$ commutes with $\frac{1}{r}$, and the commutator $[r \nab, \nab_{4}]$ has already been treated. Finally, to treat $[r\nab, r\slashed \Delta]$, we first observe the schematic expressions 
    \begin{align*}
	[r\nab, r\slashed \Delta]\phi \sim \nab [r \nab, r \nab] \phi + [r \nab, r \nab] \nab \phi \sim \nab \phi + r^{2} (K - r^{-2}) \nab \phi + r^{2} \nab (K - r^{-2}) \phi
\end{align*}
Using \eqref{eq:forcingest1}, we estimate the first term as follows:
    \begin{align*}
& \|\nab \phi\|_{s,p_0-\frac{1}{2}, p_\infty+\frac{1}{2}} \lesssim\|\phi\|_{s+1,p_0+\frac{1}{2}, p_\infty-\frac{1}{2}} \lesssim  \mathcal{N}_{s}[\phi].
    \end{align*}
    For the other two terms, we use \eqref{eq:forcingest0} and proceed as follows:
    \begin{align*}
	&\| \brk{u}^{\frac{1}{2}^{+\prime}} (r^{2} (K - r^{-2}) \nab \phi + r^{2} \nab (K - r^{-2}) \phi) \|_{s, p_{0}-\frac{1}{2}, p_{\infty}} \\
	& \qquad \aleq  \| \brk{u}^{\frac{1}{2}} r^{2} (K - r^{-2}) \|_{s_{0}, \frac{3}{2}^{+}, (-1)^{+}} |\brk{u}^{0^{+\prime}} \phi|_{s-2, p_{0}^{-}-1, p_{\infty}^{-}} \\
	&\qquad \qquad + |\brk{u}^{0^{+}} r^{2} (K - r^{-2}) |_{s-2, 1^{+}, 0} \| \brk{u}^{\frac{1}{2}-(\delta-\delta')} \phi \|_{s+1, p_{0}^{-}-\frac{1}{2}, p_{\infty}-1}  \lesssim \epsilon \mathcal{N}_{s+1}[\phi]. \qedhere
\end{align*}
\end{proof}

\begin{lemma}\label{lem:nab4comm}
    Let $s\leqslant s_0-1$ and $p_\i \in [0,3+\frac{1}{2}\delta] \setminus \{3\}$. Assume the weak bootstrap assumptions hold. Then
    \begin{align*}
    \left\|{\mathcal{T}_{l+1}(r\nab_4 \phi) - r\nab_4 \mathcal{T}_l\phi}\right\|_{s, p_0, p_\infty, *}
    +\left\| \brk{u}^{\frac{1}{2}} (\mathcal{T}_{l+1}(r\nab_4 \phi) - r\nab_4 \mathcal{T}_l\phi)\right\|_{s, p_0-\frac{1}{2}, p_\infty}&\lesssim \epsilon \mathcal{N}_{s+1}[\phi]+ \mathcal{N}_{s}[r\nab\phi] + \mathcal{N}_{s}[\phi].
    \end{align*}
\end{lemma}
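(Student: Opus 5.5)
We compute the commutator $\mathcal{T}_{l+1}(r\nab_4\phi) - r\nab_4\mathcal{T}_l\phi$ exactly, then estimate each piece as in Lemma~\ref{lem:nabAcomm}, using Theorem~\ref{thm:comestST}, the remark following it, and the bounds \eqref{eq:forcingest0}, \eqref{eq:forcingest1}.

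\textbf{Flat computation.} The shift $l\mapsto l+1$ is designed to absorb the only linear commutator that survives in Minkowski. Using $[r\nab_4, \frac{1}{r}] = -\frac{1}{r}$ and $[r\nab_4, r\nab_4+3]=0$, one checks that in the flat case
\[
r\nab_4\Big(\tfrac{l}{r}(r\nab_4\phi+3\phi)\Big) = \tfrac{l}{r}(r\nab_4+3)(r\nab_4\phi) - \tfrac{l}{r}(r\nab_4\phi+3\phi).
\]
Similarly, the leading part $\nab_3(r\nab_4+3)$ produces, through $[r\nab_4,\nab_3]\approx \nab_4$ (cf.\ \eqref{eq:comm-nab3rnab4}), a term $-\nab_4(r\nab_4+3)\phi = -\frac{1}{r}(r\nab_4+3)(r\nab_4\phi) + \frac{1}{r}(r\nab_4\phi)+\ldots$. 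The terms $2\nab_4\phi$ and $\frac{4}{r}\phi$ give $-\frac{2}{r}r\nab_4\phi$-type contributions after commutation ($[r\nab_4,\nab_4]=-\nab_4$). Finally, $[r\nab_4, r\slashed\Delta] = r\slashed\Delta + r[r\nab_4, \slashed\Delta]$ contributes $\frac{1}{r}r^2\slashed\Delta\phi$ plus commutator terms. Collecting everything, the flat commutator should equal $\frac{1}{r}(r\nab_4\phi+3\phi)$, which is absorbed by taking $\mathcal{T}_{l+1}$ on the left, plus zeroth-order linear remainders of the form $\frac{1}{r}\phi$, $\frac{1}{r}r\nab_4\phi$, $\frac1r r^2\slashed\Delta\phi$. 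The first step is to verify this bookkeeping carefully; I expect the sign and the coefficient $+1$ to work out exactly, so that nothing with a bad $r$-weight remains.

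\textbf{Linear remainders.} These carry no smallness but lose no weight. Estimate them with \eqref{eq:forcingest1} as in Lemma~\ref{lem:nabAcomm}:
\[
\|\tfrac{1}{r}\psi\|_{s,p_0-\frac12,p_\infty+\frac12} = \|\psi\|_{s,p_0+\frac12,p_\infty-\frac12}.
\]
For $\psi = \phi$ or $r\nab_4\phi$ this is bounded by $\mathcal{N}_s[\phi]$. For $r^2\slashed\Delta\phi = r\,\div(r\nab\phi)$ (up to $K$-terms), write it as $r\nab(r\nab\phi)$; this is bounded by $\mathcal{N}_s[r\nab\phi]$, which explains that term in the statement.

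\textbf{Nonlinear errors.} Expand $[r\nab_4, \nab_3(r\nab_4+3)] = [r\nab_4,\nab_3](r\nab_4+3) + \nab_3[r\nab_4,r\nab_4+3]$ and isolate $[\nab_3,r\nab_4]+\nab_4$, controlled by \eqref{34commestST}. The term $[r\nab_4, r\nab]$ inside $\slashed\Delta$ is controlled by \eqref{4AcommestST}. The factor $(\trch-2r^{-1})$ appearing in $[r\nab_4, \tfrac1r]$-type identities is handled through the bootstrap assumptions. Each piece is placed in $\|\brk{u}^{\frac12^{+\prime}}\cdot\|_{s,p_0-\frac12,p_\infty}$ via \eqref{eq:forcingest0}, giving $\epsilon\mathcal{N}_{s+1}[\phi]$, exactly as in the list in Lemma~\ref{lem:nabAcomm}; the pointwise pieces again use $p_\infty-(\delta-\delta')\le p_\infty''$.

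\textbf{Main obstacle.} The worst term is $r\nab_4$ falling on $(f-1)$ or $\omegab$ hidden in $\nab_3 = 2\nab_u - f\nab_4$. It yields $r\nab_4 f = 2r\omegab$, which has the growth $O(\epsilon r\brk{u}^{-1-})$, multiplying $\nab_4(r\nab_4\phi)$. I would rewrite this as $\frac{2r\omegab}{r}\cdot r\nab_4(\ldots)$ and use the integrated-in-$u$ control of $\abs{\omb}_{s_0,\frac12^+,0}(H_u)$ from \eqref{eq:BA.weak.integrated} together with the flux $\sup_u\|r\nab_4\phi+3\phi\|_{s+1}(H_u)$. This mirrors the treatment of $\xib(r\nab_4+3)r\nab_4\phi$ in Lemma~\ref{lem:nabAcomm}.
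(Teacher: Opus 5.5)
Your plan is essentially the paper's proof: the paper isolates the principal part via the exact identity $r\nab_4(\nab_3+\frac{l}{r})(r\nab_4+3)\phi-(\nab_3+\frac{l+1}{r})(r\nab_4+3)r\nab_4\phi=([r\nab_4,\nab_3]-\nab_4)(r\nab_4+3)\phi-\frac{l}{r}(r\nab_4+3)\phi$. It bounds the first term by \eqref{34commestST} and \eqref{eq:forcingest0}, the angular commutators in $[r\nab_4,r\slashed{\Delta}]$ by \eqref{4AcommestST}, and all linear remainders (including $r\slashed{\Delta}\phi$, which produces $\mathcal{N}_s[r\nab\phi]$) by \eqref{eq:forcingest1}, just as you propose.

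Your bookkeeping needs three corrections. First, the term absorbed by the shift $l\mapsto l+1$ is $\frac{1}{r}(r\nab_4+3)(r\nab_4\phi)$, not $\frac{1}{r}(r\nab_4\phi+3\phi)$. The former is top order in $r\nab_4\phi$ with no smallness, so it could not be an error term; the latter survives with coefficient $l$ and is bounded by $\mathcal{N}_s[\phi]$. Second, your ``main obstacle'' is not one: by \eqref{eq:comm-nab3rnab4}, the $2r\omegab$ coming from $r\nab_4 f$ combines with $f-1$ into $f-2\omegab r-1$, so it is already covered by \eqref{34commestST}. Your separate mixed-norm bound for it is valid but redundant. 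Third, $[r\nab_4,r^{-1}]=-r^{-1}$ holds exactly, with no $\trch$ factor.
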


\begin{proof}
    We begin with the identity
    \begin{align*}
        &r\nab_4\left(\nab_3+\frac{l}{r}\right)(r\nab_4+3)\phi - \left(\nab_3+\frac{l+1}{r}\right)(r\nab_4+3)r\nab_4\phi\\
        &\qquad \qquad= ([r\nab_4, \nab_3] - \nab_4) (r\nab_4+3)\phi -\frac{l}{r} (r\nab_4+3)\phi.
    \end{align*}
    For the first term on the last line, we use \eqref{eq:forcingest0} and estimate
    \begin{align*}
        &\| \brk{u}^{\frac{1}{2}^{+\prime}}([r\nab_4, \nab_3] - \nab_4) (r\nab_4+3)\phi \|_{s, p_{0}-\frac{1}{2}, p_{\infty}} \\
        &\qquad \aleq \epsilon (\|\brk{u}^{\frac{1}{2}-(\delta-\delta')} (r \nab_{4} \phi, r \nab \phi, \phi) \|_{s+1, p_{0}^{-}-\frac{1}{2}, p_{\infty}-1} + |\brk{u}^{0^{-\prime}} (r \nab_{4} \phi, r \nab \phi, \phi) |_{s-2, p_{0}^{-}-1, p_{\infty}^{-}}) \\
        &\qquad \aleq \epsilon \mathcal{N}_{s+1}[\phi],
    \end{align*}
    where we used Theorem~\ref{thm:comestST} and the remark following it for the first inequality, $p_{\infty} - (\delta - \delta') \leqslant p_{\infty}''$ for the second. For the other term, use \eqref{eq:forcingest1} and bound
    \begin{equation} \label{eq:lem:nab4comm-1}
        \| r^{-1} (r \nab_{4} \phi, \phi) \|_{s, p_{0}-\frac{1}{2}, p_{\infty}+\frac{1}{2}} \aleq \mathcal{N}_{s}[\phi].
    \end{equation}
    Next, we observe that
    \begin{align*}
    [r\nab_4, 2\nab_4 + \frac{4}{r}] \phi &= - 2\nab_4\phi - \frac{4}{r}\phi, \\
    [r\nab_4, \frac{1}{r} r^2\slashed \Delta]\phi &= - \frac{1}{r} r^2\slashed \Delta\phi + \frac{1}{r} [r\nab_4, r\nab^A]r\nab_A \phi + \frac{1}{r} r\nab_A [r\nab_4, r\nab^A] \phi.
    \end{align*}
    The contribution of $[r\nab_4, 2\nab_4 + \frac{4}{r}] \phi$ is handled again using \eqref{eq:lem:nab4comm-1}. For $r \slashed \Delta\phi$, we use
    \begin{align*}
        \| r \slashed{\Delta} \phi \|_{s, p_{0}-\frac{1}{2}, p_{\infty}+\frac{1}{2}} \aleq \mathcal{N}_{s}[r \nab \phi].
    \end{align*}
    Finally, for the terms involving $[r \nab_{4}, r \nab_{A}]$, we use \eqref{eq:forcingest0}. Again by Theorem~\ref{thm:comestST} and the remark after it,
    \begin{align*}
        &\| \brk{u}^{\frac{1}{2}^{+\prime}} r^{-1}([r\nab_4, r\nab^A]r\nab_A \phi + r\nab_A [r\nab_4, r\nab^A] \phi)\|_{s, p_{0}-\frac{1}{2}, p_{\infty}} \\
        &\qquad \aleq \epsilon (\|\brk{u}^{\frac{1}{2}^{+\prime}} (r \nab_{4} \phi, r \nab \phi, \phi) \|_{s+1, p_{0}^{-}-\frac{1}{2}, p_{\infty}^--1} + |\brk{u}^{0^{-\prime}} (r \nab_{4} \phi, r \nab \phi, \phi) |_{s-2, p_{0}^{-}-1, p_{\infty}^{-}}) \\
        &\qquad \aleq \epsilon \mathcal{N}_{s+1}[\phi]. \qedhere
    \end{align*}
\end{proof} 

\begin{lemma}\label{lem:nabucomm}
    Let $s\leqslant s_0-1$ and $p_\i \in [0,3+\frac{1}{2}\delta] \setminus \{3\}$. Assume that the weak bootstrap assumptions hold. Then
    \begin{align*}
    & \left\|{ \mathcal{T}_{l}(\brk{u} \nab_u \phi) - (\brk{u} \nab_u + \tfrac{u}{\brk{u}})\mathcal{T}_l\phi}\right\|_{s-1,p_0, p_\infty, *} \\
    &\qquad + \|{\jb{u}^{\frac{1}{2}} (\mathcal{T}_{l}(\brk{u} \nab_u \phi) - (\brk{u} \nab_u + \tfrac{u}{\brk{u}})\mathcal{T}_l\phi})\|_{s-1, p_0-\frac{1}{2}, p_\infty'+\frac{1}{2}} \lesssim \epsilon \mathcal{N}_{s+1}[\phi] + \mathcal{N}_{s}[\phi].
    \end{align*}
\end{lemma}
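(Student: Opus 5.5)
We follow the template of Lemmas~\ref{lem:nabAcomm} and \ref{lem:nab4comm}. First we derive an exact commutator identity for $\brk{u}\nab_u$ through each piece of $\mathcal{T}_l$. The term $\tfrac{u}{\brk{u}}\mathcal{T}_l\phi$ comes from $\nab_u\brk{u} = \tfrac{u}{\brk{u}}$ when $\nab_u$ is moved past the weight $\brk{u}$. Note that $\nab_3$ itself contains a $\partial_u$, and $\brk{u}$ does not commute with it: $[\nab_3, \brk{u}] = 2\tfrac{u}{\brk{u}}$. This is exactly the lower-order term $\tfrac{u}{\brk{u}}\nab_3(r\nab_4+3)\phi$ absorbed on the left. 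Since $\nab_u r = 0$, the factors $r$, $\tfrac{1}{r}$ and $r^2$ commute with $\brk{u}\nab_u$.

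Modulo this, we must control the following pieces:
\begin{enumerate}[i)]
\item $\brk{u}[\nab_u, \nab_3](r\nab_4+3)\phi$, with $\nab_3 = 2\nab_u - f\nab_4$;
\item $\brk{u}\nab_3[\nab_u, r\nab_4]\phi$;
\item $\brk{u}[\nab_u, \nab_4]\phi$;
\item $\brk{u}[\nab_u, r^2\slashed\Delta]\phi$.
\end{enumerate}
For i), we have $[\nab_u, \nab_3] = -[\nab_u, f\nab_4]$ up to sign, which gives $-(\nab_u f)\nab_4 - f[\nab_u, \nab_4]$. The factor $[\nab_u, \nab_4]$ is controlled by \eqref{eq:comm-naburnab4}, which is schematically $\eta\cdot r\nab + r\sigma$ divided by $r$, together with $\omegab\nab_4$. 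The factor $\nab_u f$ is bounded via the transport equation for $f$ and the bootstrap control of $\brk{u}^{1^+}(f-1)$ in $\nab_u$-derivatives, which is available at order $s_0-2$ with a loss that is affordable since we work at $s-1$. For iv), we expand $[\nab_u, r\nab_A]$ using \eqref{uAcomest} and \eqref{uAcommestST}, with the schematic form \eqref{eq:comm-naburnab} and \eqref{eq:comm-nab3rnab}.

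Each resulting product is estimated in the $\|\brk{u}^{\frac12^{+\prime}}(\cdot)\|_{s-1,p_0-\frac12,p_\infty}$ norm via \eqref{eq:forcingest0}, exactly as in Lemma~\ref{lem:nabAcomm}. We combine Lemma~\ref{lem:products} (with \eqref{eq:product.for.integrated}) with the bootstrap assumptions \eqref{eq:BA.weak.pointwise}--\eqref{eq:BA.weak.integrated} and the remark after Theorem~\ref{thm:comestST}, which allows losing one tangential derivative. This bounds these terms by $\epsilon\mathcal{N}_{s+1}[\phi]$. Here the drop from $s$ to $s-1$ matches the fact that $\brk{u}\nab_u$ counts as two derivatives. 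Any genuinely linear leftovers, namely terms with coefficient $f-2\omegab r-1$ replaced by $1$, such as $\tfrac{1}{r}(r\nab_4\phi,\phi)$, are bounded using \eqref{eq:forcingest1} by $\mathcal{N}_s[\phi]$.

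The \textbf{main obstacle} is term i) and the part of ii) in which $\nab_3$ falls on the commutator coefficients, since this produces $\nab_3\eta$, $\nab_3\sigma$ and $\nab_u\omegab$. These are not directly in the bootstrap assumptions at the needed order, and they carry the weaker $p_\infty'$ weight on the $\nab_3\phi$ factor. This is why the right-hand side of the statement uses the norm $p_\infty'+\frac12$ rather than $p_\infty$. We handle these using the null structure and Bianchi equations \eqref{eq:3etab}, \eqref{eq:3rho} and \eqref{eq:omegab} to trade $\nab_3$ for angular or $e_4$ derivatives of $\betab$, $\xib$, $\chibh$ and $\eta$. Alternatively, we integrate by parts via the dual norm so that the $\nab_3$ derivative lands on the test function, whose $\nab_3$-flux is controlled. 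The $\nab_3\phi$ factors are absorbed by the term $\|\brk{u}^{p_\infty-p_\infty'}\nab_3\phi\|_{s,p_0-\frac12,p_\infty'+\frac12}$ in $\mathcal{N}_{s+1}[\phi]$, using $\brk{u}^{1^+}$ decay of $\omegab$ and $\xib$ to pay for the weight difference.
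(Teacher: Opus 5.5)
Your architecture matches the paper's. You use the same commutator identity for $[\brk{u}\nab_u,(\nab_3+\tfrac{l}{r})(r\nab_4+3)]$ and the same device of adding $\tfrac{u}{\brk{u}}\mathcal{T}_l\phi$ to cancel the $\tfrac{u}{\brk{u}}\nab_u$ terms. Note that this cancellation still leaves $-\tfrac{u}{\brk{u}}f\nab_4(r\nab_4\phi+3\phi)$, which you must keep. You also treat the linear leftovers via \eqref{eq:forcingest1} and the Laplacian commutator via \eqref{uAcommestST}, as the paper does.

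\textbf{The gap} is in term i), specifically the piece $(\brk{u}\nab_u f)\nab_4(r\nab_4\phi+3\phi)$. You claim the pointwise bootstrap on $\brk{u}^{1^+}(f-1)$ at order $s_0-2$ suffices ``since we work at $s-1$''. It does not. The operator $\brk{u}\nab_u$ costs two orders, so at level $s-1$ the coefficient $r^{-1}\brk{u}\nab_u f$ needs $f$ at order $s+1$. For the top two values $s\in\{s_0-2,s_0-1\}$ this is order $s_0-1$ or $s_0$, beyond \eqref{eq:BA.weak.pointwise}.

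The generic fallback through \eqref{eq:product.for.integrated} does not rescue this either. Pairing $\|\brk{u}^{\frac12+\frac32\delta}(f-1)\|_{s_0,\frac32^+,-1-\frac12\delta}$ with the pointwise part of $\mathcal{N}_s[\phi]$ would require $|\phi|$ at $p_\infty$-index $p_\infty+\tfrac12\delta>p''_\infty$. The surplus $\brk{u}$-decay cannot be turned into $(\brk{u}+r)$-decay when $r\gg\brk{u}$.

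The missing input is the mixed $L^2_uL^\infty_r$ bootstrap norm in the last line of \eqref{eq:BA.weak.integrated}. In that norm, $r^{-1}\brk{u}\nab_u f$ obeys the same bound as $\xib$, one order lower. Pairing this $L^2_u$-integrable coefficient with $\sup_u\|r\nab_4\phi+3\phi\|_{s,p_0,p_\infty}(H_u)$, which is part of $\mathcal{N}_s[\phi]$, gives $\epsilon\mathcal{N}_s[\phi]$. This is exactly the treatment of $\xib(r\nab_4+3)r\nab_4\phi$ in Lemma~\ref{lem:nabAcomm}, and it is what the paper does.

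Your ``main obstacle'' is misdiagnosed. Since the norms contain $\brk{u}\mathring{\nab}_u$, the bootstrap already controls $\brk{u}\nab_3\eta=2\brk{u}\nab_u\eta-\brk{u}f\nab_4\eta$ and $\brk{u}\nab_3(r\sigma)$ two orders below $\eta$ and $\sigma$, which suffices at level $s-1$. The paper uses these bounds directly. The top-order integrated norm of $\sigma$ is paired with the pointwise norm of $\phi$, which is where $p_\infty-(\delta-\delta')\leqslant p''_\infty$ enters. Your rewriting via \eqref{eq:3etab} and \eqref{eq:3rho} is valid but unnecessary. No $\nab_u\omegab$ arises, because the $\omegab\nab_4$ terms cancel in \eqref{eq:comm-naburnab4}.

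Your alternative of integrating by parts in the dual norm would fail. Test functions in $B$ carry no $\nab_3$ control, and the second norm in the statement is not a dual norm.

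Finally, the index $p'_\infty+\tfrac12$ appears on the left-hand side of the lemma; the paper actually proves the stronger bound with $p_\infty$. The $\nab_3\phi$ factors come with coefficients $\brk{u}\eta$ and $\brk{u}r\sigma$, not $\omegab$ or $\xib$. They are bounded by $\epsilon\|\brk{u}^{\frac32+\delta'}\nab_3\phi\|_{s,p_0^--\frac12,p_\infty^--1}\lesssim\epsilon\mathcal{N}_s[\phi]$.
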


\begin{proof}
    We first write out the expression on the left hand side schematically as follows:
\begin{equation} \label{eq:lem:nabucommid-main}
    \begin{aligned}
	&(\brk{u} \nab_u + \tfrac{u}{\brk{u}})\mathcal{T}_l\phi - \mathcal{T}_{l}(\brk{u} \nab_u \phi) \\
	&\qquad = [\brk{u} \nab_{u}, \mathcal{T}_{l}] \phi + \tfrac{u}{\brk{u}} \mathcal{T}_{l} \phi \\
	&\qquad \sim \left(\left[ \brk{u} \nab_{u}, \left(\nab_3+\frac{l}{r}\right)(r\nab_4 + 3 ) \right] + \tfrac{u}{\brk{u}} \nab_{3} (r \nab_{4} + 3) \right) \phi \\
	&\qquad \qquad + [\brk{u} \nab_{u}, \nab_{4}] \phi + [\brk{u} \nab_{u}, r \slashed{\Delta}] \phi + \tfrac{u}{\brk{u}}\nab_{4} \phi + \tfrac{u}{\brk{u}} r^{-1} \phi + \tfrac{u}{\brk{u}} r \slashed{\Delta} \phi.
    \end{aligned}
\end{equation}
    We first require the identity
\begin{equation} \label{eq:lem:nabucommid}
    \begin{aligned}
        & \left[\brk{u} \nab_u, \left(\nab_3+\frac{l}{r}\right)(r\nab_4 + 3 )\right]\phi + \tfrac{u}{\brk{u}} \nab_{3} (r \nab_{4} + 3) \phi \\
        &= ([\brk{u} \nab_u, \nab_3] + \tfrac{u}{\brk{u}} \nab_{3}) (r\nab_4 \phi + 3 \phi ) + \nab_{3} [\brk{u} \nab_u, r\nab_4]\phi
        + \brk{u} r^{-1} l [\nab_u, r \nab_4]\phi. 
    \end{aligned}
\end{equation}
    Let us consider the first term on the right hand side of \eqref{eq:lem:nabucommid}. Since $\nab_{3} = 2 \nab_{u} - f \nab_{4}$, we have 
    \begin{align*}
	[\brk{u} \nab_{u}, \nab_{3}] + \tfrac{u}{\brk{u}} \nab_{3} &= - \brk{u} [\nab_{u}, f \nab_{4}] + [\brk{u}, 2 \nab_{u}] \nab_{u} + 2 \tfrac{u}{\brk{u}} \nab_{u} - \tfrac{u}{\brk{u}} f \nab_{4} \\
	&= - (\brk{u} \nab_{u} f) \nab_{4} -  \brk{u} r^{-1}  f [\nab_{u}, r \nab_{4}] - \tfrac{u}{\brk{u}} f \nab_{4},
\end{align*}
where we notice that the terms involving $\tfrac{u}{\brk{u}} \nab_{u}$ cancel. In fact, the purpose of adding $\mathcal{T}_{l} \phi$ to the commutator was exactly to enforce this cancellation.

Recalling \eqref{eq:comm-naburnab4}, we have, schematically, 
\begin{align*}
	([\brk{u} \nab_u, \nab_3] + \nab_{3}) (r\nab_4 \phi + 3 \phi ) 
	& \sim (\brk{u} \nab_u f) \nab_4 (r\nab_4 \phi + 3 \phi ) + \brk{u} f (\eta\cdot \nab)(r\nab_4 \phi + 3 \phi )  \\
	& \qquad + \brk{u} f \sigma \cdot (r\nab_4 \phi + 3 \phi ) + \tfrac{u}{\brk{u}} f \nab_{4} (r \nab_{4} \phi + 3 \phi).
\end{align*}
	The first term may be dealt with in exactly the same fashion as the first term of \eqref{eq:nabAcommPr2} in Lemma \ref{lem:nabAcomm}, noting that $r^{-1} \brk{u} \nab_u f$ satisfies the same bounds as $\xib$, at one lower order of differentiability. The fourth term is similar but easier, since $r^{-1} f$ satisfies the same bounds as $\xib$ at a higher order of differentiability. For the second and third terms, we have
	\begin{align*}
	 &\nrm{\brk{u}^{\frac{1}{2}^{+\prime}} (\brk{u} f (\eta\cdot \nab)(r\nab_4 \phi + 3 \phi ) + \brk{u} f \sigma \cdot (r\nab_4 \phi + 3 \phi ))}_{s-1, p_{0}-\frac{1}{2}, p_{\infty}}	\\
	 &\qquad \aleq \epsilon (\left \| r\nab \phi \right\|_{s,p_0 +\frac{1}{2}, p_\infty-\frac{1}{2}}+\left \| \phi \right\|_{s, p_0 + \frac{1}{2}, p_\infty-\frac{1}{2}}) \lesssim \epsilon \mathcal{N}_{s}[\phi].
	\end{align*}
	
	We now turn to the second term on the right hand side of \eqref{eq:lem:nabucommid}. By \eqref{eq:comm-naburnab4}, we have, schematically
\begin{align*}
    \nab_3([\brk{u} \nab_u, r\nab_4]\phi) &\sim \brk{u} (\nab_3\eta) \cdot r\nab\phi + \eta \cdot r\nab\phi + \brk{u} \eta\cdot  r\nab (\nab_3\phi)+\brk{u} \eta\cdot [\nab_3, r\nab]\phi \\
    &\qquad + \brk{u} \nab_3(r\sigma) \phi + r\sigma \phi + \brk{u} r \sigma \nab_3 \phi
\end{align*}
    so that, by the product estimates in Lemma~\ref{lem:products}, we may estimate
    \begin{align*}
    &\| \brk{u}^{\frac{1}{2}^{+\prime}} \nab_3([\brk{u} \nab_u, r\nab_4]\phi) \|_{s-1, p_0-\frac{1}{2}, p_\i} \\
    &\qquad \lesssim \epsilon\Big(  \| \brk{u}^{\frac{1}{2}^{+\prime}} (r \nab_{4} \phi , r \nab \phi, \phi) \|_{s,p_0^{-}-\frac{1}{2}, p_{\infty}^{-}-1} + \| \brk{u}^{\frac{3}{2}^{+\prime}} \nab_3\phi \|_{s, p_0^{-}-\frac{1}{2}, p_\i^{-}-1}+|\brk{u}^{\frac{1}{2}^{+\prime}} \phi|_{s, p_0^{-}, p_\infty^- - \frac{1}{2}}\Big) \lesssim \epsilon \mathcal{N}_{s}[\phi].
    \end{align*}
    We remark in particular that controlling $\nab_3(r\sigma)$ requires using the top order integrated norm, which explains the need for using $|\brk{u}^{\frac{1}{2}^{+\prime }} \phi|_{s, p_0^{-}, p_\infty^- - \frac{1}{2}}$. Since $p_{\infty}-(\delta-\delta') \leqslant p_{\infty}''$ in our range of $p_{\infty}$, this term can be bounded by $|\phi|_{s, p_0, p_\infty''}$.

	The last term in \eqref{eq:lem:nabucommid} can be bounded in $\nrm{\brk{u}^{\frac{1}{2}^{+\prime}}(\cdot)}_{s-1, p_{0}-\frac{1}{2}, p_{\infty}}$ using \eqref{eq:comm-naburnab4} as follows:
\begin{align*}
	\nrm{\brk{u}^{\frac{1}{2}^{+\prime}} \brk{u} r^{-1} [\nab_{u}, r \nab_{4}] \phi}_{s-1, p_{0}-\frac{1}{2}, p_{\infty}} & \aleq \nrm{\brk{u}^{\frac{1}{2}^{+\prime}}[\nab_{u}, r \nab_{4}] \phi}_{s-1, p_{0}^{+}+\frac{1}{2}, p_{\infty}} \\
	&\aleq \nrm{\brk{u}^{\frac{1}{2}^{+\prime}} \eta \cdot r \nab \phi}_{s-1, p_{0}^{+}+\frac{1}{2}, p_{\infty}}
	+ \nrm{\brk{u}^{\frac{1}{2}^{+\prime}} r \sigma \phi}_{s-1, p_{0}^{+}+\frac{1}{2}, p_{\infty}} \\
	&\aleq \epsilon (\left \| r\nab \phi \right\|_{s-1,p_0 +\frac{1}{2}, p_\infty-\frac{1}{2}}+\left \| \phi \right\|_{s-1, p_0 + \frac{1}{2}, p_\infty-\frac{1}{2}}) \lesssim \epsilon \mathcal{N}_{s}[\phi].
\end{align*}

	Returning to \eqref{eq:lem:nabucommid-main}, the preceding estimate also handles $[\brk{u} \nab_{u}, \nab_{4}] \phi = \brk{u} r^{-1} [\nab_{u}, r \nab_{4}] \phi$. For $[\brk{u} \nab_{u}, r \slashed{\Delta}] \phi$, we have
    \[
    [\brk{u} \nab_u, r \slashed \Delta]\phi = \frac{\brk{u}}{r} [\nab_u, r\nab^A]r\nab_A \phi + \frac{\brk{u}}{r} r\nab_A [\nab_u, r\nab^A] \phi.
    \]
    We can estimate using \eqref{uAcommestST} as follows:
    \begin{align*}
    \left\|[\brk{u} \nab_u, r \slashed \Delta]\phi \right\|_{s-1, p_0-\frac{1}{2}, p_\infty +\frac{1}{2}} &\lesssim \epsilon \Big (  \left \| \brk{u}^{0^{-}} r\nab_4 \phi \right\|_{s,  p_0+\frac{1}{2}, p_\infty -\frac{1}{2}} + \left \| \brk{u}^{0^{-}} r\nab \phi \right\|_{s, p_0+\frac{1}{2}, p_\infty-\frac{1}{2}}\\&\qquad  +\left \| \brk{u}^{0^{-}}\phi \right\|_{s,p_0+\frac{1}{2}, p_\i-\frac{1}{2}} +  \left | \brk{u}^{\frac{1}{2}-\frac{3}{2}\delta} \phi \right |_{s-2, p_0^{-}+\frac{1}{2}, p_\infty-\frac{1}{2}+\frac{1}{2}\delta} \Big)\lesssim \epsilon \mathcal{N}_{s}[\phi].
    \end{align*}
    Finally, we bound the last three terms in \eqref{eq:lem:nabucommid-main} using \eqref{eq:forcingest1} as follows:
    \begin{equation*}
	\| (\nab_{4} \phi, r^{-1} \phi, r \slashed{\Delta} \phi) \|_{s-1, p_{0}-\frac{1}{2}, p_{\infty}+\frac{1}{2}}
	\aleq \| (r \nab \phi, \phi) \|_{s, p_{0}+\frac{1}{2}, p_{\infty}-\frac{1}{2}} \aleq \mathcal{N}_{s}[\phi]. 
	\qedhere
\end{equation*}
\end{proof}

\begin{theorem}\label{thm:main.Teukolsky}
    Fix $0<\delta < 1/20$ and let $p_0 = 1+\delta, p_\infty \in[0, 3+\frac{1}{2}\delta] \setminus \{ 3 \}$.  Suppose the weak bootstrap assumptions hold. Assume that $\phi$ is a sufficiently regular $S$-tangent tensor satisfying the regularity condition at the axis:
    \[
    \lim_{r \to 0}\sup_{1-r<u<T} r^{-p_0 - \frac{3}{2}} \left(|\phi|_{s+1}+ |r\nab_3 \phi|_s\right) (u, r) = 0.
    \]
    Then for $\epsilon_0$ sufficiently small, $s\leqslant s_0$ and any $l\geqslant 0$ we have the estimate 
    \begin{align*}
        &\sup_{u_T<u<T}\left[\left \|r \nab_4 \phi + 3 \phi \right\|_{s, p_0, p_\infty}(H_u)+\|\brk{u}^{p_{\infty}-p_{\infty}'} r\nab \phi\|_{s, p_0, p_\infty'}(H_u) +\|\brk{u}^{p_{\infty}-p_{\infty}'}\phi\|_{s, p_0, p_\infty'}(H_u) \right]\\&\qquad   +  \|(r\nab_4 \phi, r \nab \phi, \phi)\|_{s, p_0+\frac{1}{2}, p_\infty- \frac{1}{2}}+  \|\brk{u}^{p_{\infty}-p_{\infty}'}\nab_3\phi\|_{s, p_0+\frac{1}{2}, p_\infty'+ \frac{1}{2}} +\abs{\phi}_{s, p_0, p_\infty''}\\
        &\qquad\qquad \lesssim \left \| \mathcal{T}_l \phi\right\|_{s,p_0, p_\infty*} + \| \brk{u}^{\frac{1}{2}} \mathcal{T}_l \phi \|_{s, p_0-\frac{1}{2}, p_\infty} \\
        &\qquad \qquad \qquad  + \|\phi\|_{s+1, p_0, p_\infty}(\tilde{\Sigma}_1)+ \|\nab_3 \phi\|_{s, p_0-1, p_\infty+1}(\tilde{\Sigma}_1)  +\abs{\phi}_{s, p_0, p_\infty''}(\tilde{\Sigma}_1).
    \end{align*}
\end{theorem}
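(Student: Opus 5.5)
The plan is to argue by induction on $s$, commuting the Teukolsky operator with the operators $r\nab$, $r\nab_4$ and $\brk{u}\nab_u$ and applying Corollary~\ref{cor:BasicEst} to each commuted quantity. The left-hand side of the theorem is (up to the harmless difference between the $p_0+\frac12$ and $p_0-\frac12$ weights on $\nab_3\phi$, which follows from the $\nab_3$ bulk term of Lemma~\ref{lem:nab3est} together with the relation between $\tilde w$ and $w$) equivalent to $\mathcal{N}_s[\phi]$. The base case $s=0$ is exactly Corollary~\ref{cor:BasicEst}. The data terms on the right are bounded by $\|\phi\|_{s+1,p_0,p_\infty}(\tilde\Sigma_1)+\|\nab_3\phi\|_{s,p_0-1,p_\infty+1}(\tilde\Sigma_1)+|\phi|_{s,p_0,p_\infty''}(\tilde\Sigma_1)$. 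The regularity hypothesis at the axis ensures that every commuted field satisfies the axis condition required by Lemmas~\ref{lem:basicest} and~\ref{lem:nab3est}.

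For the induction step, assume the estimate holds at level $s$ for all $l\geqslant 0$, and bound $\mathcal{N}_{s+1}[\phi]$. By the recursive definition of the norms, and Lemma~\ref{Hs equivalence} to pass to geometric derivatives, $\mathcal{N}_{s+1}[\phi]\lesssim \mathcal{N}_s[r\nab\phi]+\mathcal{N}_s[r\nab_4\phi]+\mathcal{N}_s[\phi]+\mathcal{N}_{s-1}[\brk{u}\nab_u\phi]$, modulo commutator errors controlled by Theorem~\ref{thm:comest}. The four terms are treated as follows.
\begin{enumerate}[(a)]
\item For $r\nab\phi$, apply the level-$s$ estimate with the same $l$. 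The forcing $\mathcal{T}_l(r\nab\phi)$ equals $r\nab\mathcal{T}_l\phi$ plus an error bounded by Lemma~\ref{lem:nabAcomm} by $\epsilon\mathcal{N}_{s+1}[\phi]+\mathcal{N}_s[\phi]$.
\item For $r\nab_4\phi$, use $\mathcal{T}_{l+1}$; this is why the family $\mathcal{T}_l$ is needed. Lemma~\ref{lem:nab4comm} bounds the error by $\epsilon\mathcal{N}_{s+1}[\phi]+\mathcal{N}_s[r\nab\phi]+\mathcal{N}_s[\phi]$.
\item For $\brk{u}\nab_u\phi$, apply the level-$(s-1)$ estimate with the error from Lemma~\ref{lem:nabucomm}. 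The extra term $\frac{u}{\brk{u}}\mathcal{T}_l\phi$ is simply absorbed into the forcing norm.
\item The remaining term $\mathcal{N}_s[\phi]$ is controlled by the induction hypothesis.
\end{enumerate}

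Summing, the right-hand side is $\lesssim$ (data) $+\ \|\mathcal{T}_l\phi\|_{s+1,*}+\|\brk{u}^{1/2}\mathcal{T}_l\phi\|_{s+1,p_0-\frac12,p_\infty}+\epsilon\mathcal{N}_{s+1}[\phi]+\mathcal{N}_s[r\nab\phi]+\mathcal{N}_s[\phi]$. The order of closing matters.
\begin{enumerate}[(i)]
\item First close the $r\nab\phi$ estimate. Its right side involves only $\mathcal{N}_s[\phi]$ (handled by induction) and $\epsilon\mathcal{N}_{s+1}[\phi]$.
\item Then insert the resulting bound on $\mathcal{N}_s[r\nab\phi]$ into the $r\nab_4\phi$ estimate.
\item Handle the $\brk{u}\nab_u\phi$ term last.
\end{enumerate}
The accumulated $\epsilon\mathcal{N}_{s+1}[\phi]$ terms carry implicit constants depending only on $s_0,\delta,l$, so they are absorbed into the left once $\epsilon_0$ is small. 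Finiteness of $\mathcal{N}_{s+1}[\phi]$, needed to justify the absorption, is ensured within $S_T$ by Corollary~\ref{cor:loc.exist.bounds}, possibly after a cutoff or approximation argument.

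The hypothesis $p_\infty\neq3$ enters through $p_\infty''$ and the commutator lemmas. The main obstacle is the top order $s=s_0$. There the pointwise product and commutator estimates of Theorem~\ref{thm:comest} are only valid for $s\leqslant s_0-1$, and the commutator lemmas are stated only for $s\leqslant s_0-1$. My plan is to restrict the induction to $s+1\leqslant s_0$, so that the lemmas are always applied with index $s\leqslant s_0-1$. If the final step to $s_0$ requires more, I would redo it with the integrated equivalence \eqref{eq:equivalence-integrated} and the elliptic estimate \eqref{eq:elliptic-integrated}. The extra pointwise terms these produce, such as $\epsilon_0|\phi|_{s_0-2,\dots}$, are dominated by the component $|\phi|_{s,p_0,p_\infty''}$ already contained in $\mathcal{N}$. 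The evenness of $s_0$ is what removes the need for a top-order $\brk{u}\nab_u$ commutation.
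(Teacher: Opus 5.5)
Your proposal is correct and follows the paper's proof essentially line for line. The common steps are: induction on $s$ with Corollary~\ref{cor:BasicEst} as base case; commutation with $r\nab$ (Lemma~\ref{lem:nabAcomm}), with $r\nab_4$ via $\mathcal{T}_{l+1}$ (Lemma~\ref{lem:nab4comm}), and with $\brk{u}\nab_u$ (Lemma~\ref{lem:nabucomm}); closing the $r\nab\phi$ bound first (the paper adds a large multiple of it to the $r\nab_4\phi$ bound); absorbing $\epsilon\mathcal{N}_{s+1}[\phi]$; and running the induction only up to $s+1\leqslant s_0$.

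The one claim to drop is that the $p_0+\frac12$ weight on $\nab_3\phi$ in the displayed statement follows harmlessly from the $\nab_3$ bulk term of Lemma~\ref{lem:nab3est}. That lemma gives exactly the $p_0-\frac12$ norm appearing in $\mathcal{N}_s[\phi]$. The $p_0+\frac12$ weight is larger by the factor $(\brk{u}+r)^2/r^2$ near the axis, and \eqref{eq:tildew-w} does not affect the small-$r$ behaviour. In fact, for $\phi=r^2\alpha$ with nonvanishing curvature at the centre one has $|\nab_3\phi|_0\sim r^2$ there, so that norm is infinite. What both your argument and the paper's establish is the bound on $\mathcal{N}_s[\phi]$, and the $p_0+\frac12$ should be read as a misprint.
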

\begin{proof}
    Let
    \[
    \mathcal{D}_s[\phi] =  \|\phi\|_{s+1, p_0, p_\infty}(\tilde{\Sigma}_1)+ \|\nab_3 \phi\|_{s, p_0-1, p_\infty'+1}(\tilde{\Sigma}_1)  +\abs{\phi}_{s, p_0, p_\infty''}(\tilde{\Sigma}_1).
    \]
    We wish to show that
    \[
    \mathcal{N}_s[\phi] \lesssim \left \| \mathcal{T}_l \phi\right\|_{s,p_0, p_\infty*}+ \| \brk{u}^{\frac{1}{2}} \mathcal{T}_l \phi \|_{s, p_0-\frac{1}{2}, p_\infty} +\mathcal{D}_s[\phi].
    \]
    We work by induction on $s$. Corollary \ref{cor:BasicEst} gives the base case. Suppose the result has been established for some $s\leqslant s_0-1$. We apply the induction hypothesis and Lemma \ref{lem:nabAcomm} to deduce that
    \begin{align*}
    \mathcal{N}_s[r\nab\phi] &\lesssim \left \| \mathcal{T}_l r\nab\phi\right\|_{s,p_0, p_\infty, *}+ \|\brk{u}^{\frac{1}{2}} \mathcal{T}_l r\nab\phi \|_{s, p_0-\frac{1}{2}, p_\infty} +\mathcal{D}_s[r\nab\phi] \\
    &\lesssim \left \| \mathcal{T}_l \phi\right\|_{s+1,p_0, p_\infty, *} \|\brk{u}^{\frac{1}{2}} \mathcal{T}_l \phi \|_{s+1, p_0-\frac{1}{2}, p_\infty} +\mathcal{D}_{s+1}[\phi]\\&\qquad + \left\|{\mathcal{T}_{l}(r\nab \phi) - r\nab \mathcal{T}_l\phi}\right\|_{s,p_0, p_\infty, *}+ \|\brk{u}^{\frac{1}{2}} (\mathcal{T}_{l}(r\nab \phi) - r\nab \mathcal{T}_l\phi) \|_{s, p_0-\frac{1}{2}, p_\infty} \\
    &\lesssim \epsilon \mathcal{N}_{s+1}[\phi]  + \left \| \mathcal{T}_l \phi\right\|_{s+1,p_0, p_\infty, *} +\|\brk{u}^{\frac{1}{2}} \mathcal{T}_l \phi \|_{s+1, p_0-\frac{1}{2}, p_\infty} +\mathcal{D}_{s+1}[\phi]
    \end{align*}
    where on the final line we have used the induction hypothesis to control the $\mathcal{N}_{s}[\phi]$ arising from the commutator. An identical argument using Lemma \ref{lem:nab4comm} gives
    \[
    \mathcal{N}_s[r\nab_4\phi] \lesssim \epsilon \mathcal{N}_{s+1}[\phi] + \mathcal{N}_s[r\nab \phi] + \left \| \mathcal{T}_l \phi\right\|_{s+1,p_0, p_\infty, *} +\|\brk{u}^{\frac{1}{2}} \mathcal{T}_l \phi \|_{s+1, p_0-\frac{1}{2}, p_\infty} +\mathcal{D}_{s+1}[\phi],
    \]
    and (for $s\geqslant 1$) from Lemma \ref{lem:nabucomm}:
    \[
    \mathcal{N}_{s-1}[\brk{u} \nab_3\phi] \lesssim \epsilon \mathcal{N}_{s+1}[\phi] + \left \| \mathcal{T}_l \phi\right\|_{s+1,p_0, p_\infty, *} +\|\brk{u}^{\frac{1}{2}} \mathcal{T}_l \phi \|_{s+1, p_0-\frac{1}{2}, p_\infty} +\mathcal{D}_{s+1}[\phi].
    \]
    Adding a sufficiently large multiple of the first inequality to the second we can absorb the error terms on the right hand side, and adding the third we have established that
    \[
    \mathcal{N}_s[r\nab \phi]+\mathcal{N}_s[r\nab_4 \phi]+\mathcal{N}_{s-1}[\brk{u} \nab_u \phi] \lesssim  \left \| \mathcal{T}_l \phi\right\|_{s+1,p_0, p_\infty, *} +\|\brk{u}^{\frac{1}{2}} \mathcal{T}_l \phi \|_{s+1, p_0-\frac{1}{2}, p_\infty} +\mathcal{D}_{s+1}[\phi].\qedhere
    \]
\end{proof}

Applying Theorem~\ref{thm:main.Teukolsky} with $\phi = r^2 \alp$, $p_0 = 1^+$, $p_\i = \begin{cases}
    0^+ & \hbox{under weak bootstrap} \\
    3^{-} & \hbox{under strong bootstrap} 
\end{cases}$, and using the initial data assumptions in \eqref{eq:data.alp.weak}--\eqref{eq:data.alp.strong}, the estimates in Proposition~\ref{prop:Teukolsky.nonlinear.control} to control the inhomogeneous term, the estimates in Corollary \ref{cor:loc.exist.bounds} to establish the regularity condition at the axis holds, and the estimate \eqref{eq:equivalence-integrated} for equivalence of norms (observe that the pointwise norm in \eqref{eq:equivalence-integrated} can be absorbed by taking $\epsilon_{0}$ small), we obtain the following estimates.

\begin{corollary}\label{cor:alpha.final}
    There exists $\ep_0, \ep'_0 >0$ sufficiently small depending only on $s_0$ and $\de$ such that the following holds for some implicit constants depending only on $s_0$ and $\de$. 
    
    Suppose that for some $\ep \leqslant \ep_0$, the weak bootstrap assumptions hold, together with the weak bounds on the initial conditions. Then
    \begin{align*}
        &\sup_{u_T<u<T}\left[\left \|r \nab_4 \alpha + 5 \alpha \right\|_{s_0, (-1)^+, 2^{+}}(H_u)+\|\jb{u}^{\frac{1}{2}} \alpha\|_{s_0+1, (-1)^+, \frac{3}{2}^{+}}(H_u) \right]\\&\qquad   +   \|\alpha\|_{s_0+1, (-\frac{1}{2})^+, \frac{3}{2}^+}+  \|\jb{u}^{\frac{1}{2}} \nab_3(r^2\alpha)\|_{s_0, \frac{3}{2}^+, 0^+} +\abs{\alpha}_{s_0, (-1)^+, 2^+} \lesssim \epsilon^2.
    \end{align*}
    If for some $\ep \leqslant \ep'_0$, the strong bootstrap assumptions hold together with the strong bounds on the initial conditions, then
    \begin{align*}
        &\sup_{u_T<u<T}\left[\left \|r \nab_4 \alpha + 5 \alpha \right\|_{s_0, (-1)^+, 5^{-}}(H_u)+\|\brk{u}^{\frac{1}{2}} \alpha\|_{s_0+1, (-1)^+, \frac{9}{2}^-}(H_u) \right]\\&\qquad   +   \|\alpha\|_{s_0+1, (-\frac{1}{2})^+, \frac{9}{2}^{-}}+  \|\brk{u}^{\frac{1}{2}} \nab_3(r^2\alpha)\|_{s_0, \frac{3}{2}^+, 3^-} +\abs{\alpha}_{s_0, (-1)^{+}, 5} \lesssim \epsilon^2.
    \end{align*}
\end{corollary}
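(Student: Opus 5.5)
We apply Theorem~\ref{thm:main.Teukolsky} to $\phi = r^2\alpha$, with $s = s_0$, $l = 0$ and $p_0 = 1+\delta$. We take $p_\infty = \delta$ under the weak assumptions and $p_\infty = 3-\delta$ under the strong ones. By \eqref{eq:norm.r.shift}, a norm of $r^2\alpha$ with indices $(p_0,p_\infty)$ equals the corresponding norm of $\alpha$ with indices $(p_0-2,p_\infty+2)$. This gives exactly the indices $((-1)^+,2^+)$ and $((-\frac12)^+,\frac32^+)$, respectively $((-1)^+,5^-)$ and $((-\frac12)^+,\frac92^-)$, appearing in the statement.

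The identity $r\nab_4(r^2\alpha)+3r^2\alpha = r^2(r\nab_4\alpha+5\alpha)$ converts the first flux. In the weak case $p_\infty'=p_\infty-\frac12$, so the weight $\brk{u}^{p_\infty-p_\infty'}=\brk{u}^{\frac12}$, which produces the $\brk{u}^{\frac12}$-weighted flux and $\nab_3$ terms. In the strong case $p_\infty = 3-\delta \geqslant 3-\frac{\delta}{2}$ fails, so again $p_\infty'=p_\infty-\frac12$. The pointwise term has $p_\infty''=p_\infty$.

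Next we verify the axis regularity condition. Corollary~\ref{cor:loc.exist.bounds} gives $|\alpha|_{s+1}\lesssim r$ near $r=0$ (for each fixed bounded $u$-range), hence $|r^2\alpha|_{s+1}\lesssim r^3$ and likewise $|r\nab_3(r^2\alpha)|_s\lesssim r^3$. Since $p_0+\frac32=\frac52+\delta<3$, the limit vanishes, after writing $\nab_3$ in coordinates via Proposition~\ref{diff.formula} with the bounds on $f,b,\chib$ from the same corollary.

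The initial data terms $\mathcal{D}_{s_0}$ are controlled by \eqref{eq:data.alp.weak} or \eqref{eq:data.alp.strong}, after the same index shift. For $\nab_3(r^2\alpha)$ we use $r^{-2}\nab_3(r^2\alpha)$ with the shift by $2$. We pass between $|\cdot|_s$ and the geometric norms via Lemma~\ref{Hs equivalence}.

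The forcing is $\mathcal{T}(r^2\alpha)=\mathcal{E}$. In the weak case we use \eqref{eq:forcingest0} with $\delta'$ small, which bounds both forcing norms by $\|\brk{u}^{\frac12+\delta'}\mathcal{E}\|_{s_0,p_0-\frac12,p_\infty}$. This is at most $\|\brk{u}^{\frac12^+}\mathcal{E}\|_{s_0,\frac12^+,0^+}\lesssim\epsilon^2$ by Proposition~\ref{prop:Teukolsky.nonlinear.control}, provided the $0^+$ and $\frac12^+$ exponents there dominate $\delta$ and $\frac12+\delta$. This bookkeeping of the $\pm\delta$ conventions is the main thing to check. In the strong case we use \eqref{eq:forcingest1}: the relevant norm is $\|\mathcal{E}\|_{s_0,\frac12+\delta,\frac72-\delta}$, which matches the strong bound $\|\mathcal{E}\|_{s_0,\frac12^+,\frac72^-}\lesssim\epsilon^2$.

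Finally, the top-order norm $\|\alpha\|_{s_0+1,\cdot,\cdot}$ on the left requires converting geometric control of $(r\nab_4\phi,r\nab\phi,\phi)$ in $s_0$ norms into the background $s_0+1$ norm. This is \eqref{eq:equivalence-integrated}. The error term there is $\epsilon_0|\phi|_{s_0-2,p_0^--\frac32,p_\infty^-+\frac12}$, which is dominated by $\epsilon_0|\phi|_{s_0,p_0,p_\infty''}$ and is absorbed on the left for $\epsilon_0$ small. The same absorption handles the $H_u$ flux norm of order $s_0+1$.

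The pointwise bound $|\alpha|_{s_0,(-1)^+,5}$ in the strong case would require $p_\infty''=3$. We instead obtain $5^-$, unless we take $p_\infty\in(3,3+\frac{\delta}{2}]$, in which case $p_\infty''=3$ and one must also check that the data and forcing permit this choice; this step is the delicate one. Collecting all these bounds yields the corollary.
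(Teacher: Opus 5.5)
Your proposal is the paper's proof. The paper obtains Corollary~\ref{cor:alpha.final} in one step by applying Theorem~\ref{thm:main.Teukolsky} to $\phi=r^2\alpha$ with $p_0=1^+$ and $p_\infty=0^+$ (weak) or $3^-$ (strong). It then uses \eqref{eq:data.alp.weak}--\eqref{eq:data.alp.strong} for the data, Proposition~\ref{prop:Teukolsky.nonlinear.control} for the forcing, Corollary~\ref{cor:loc.exist.bounds} at the axis, and \eqref{eq:equivalence-integrated} for the top-order norm.

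Your index shifts are correct, as are the identity $r\nab_4(r^2\alpha)+3r^2\alpha=r^2(r\nab_4\alpha+5\alpha)$, the values of $p_\infty'$ and $p_\infty''$, the axis check, and the choice of \eqref{eq:forcingest0} (weak) versus \eqref{eq:forcingest1} (strong). The $\pm$ bookkeeping you flag is immediate: with $p^\pm=p\pm\delta$, taking $\delta'=\delta$ in \eqref{eq:forcingest0} gives exactly $\|\langle u\rangle^{\frac12+\delta}\mathcal{E}\|_{s_0,\frac12+\delta,\delta}$, the norm bounded in Proposition~\ref{prop:Teukolsky.nonlinear.control}. One precision: apply \eqref{eq:equivalence-integrated} to the norm with indices $(p_0+\frac12,p_\infty-\frac12)$. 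Its pointwise error is then $\epsilon_0|\phi|_{s_0-2,\,p_0-1-\delta,\,p_\infty-\delta}$, which is dominated by $\epsilon_0|\phi|_{s_0,p_0,p_\infty''}$. If you insert the Teukolsky indices literally, as your formula reads, the $p_\infty$-index exceeds $p_\infty''$.

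The step you leave open does not close the way you suggest. Taking $p_\infty\in(3,3+\frac{\delta}{2}]$ is incompatible with the strong hypotheses, for two reasons.
\begin{itemize}
\item Data: Theorem~\ref{thm:main.Teukolsky} would need $\|\phi\|_{s_0+1,p_0,p_\infty}(\tilde\Sigma_1)$, $\|\nab_3\phi\|_{s_0,p_0-1,p_\infty+1}(\tilde\Sigma_1)$ and $|\phi|_{s_0,p_0,3}(\tilde\Sigma_1)$. These are $\alpha$-data with $p_\infty$-exponents $5^{+}$, $6^{+}$ and $5$, whereas \eqref{eq:data.alp.strong} only provides $5^-$, $6^-$, $5^-$.
\item Forcing: you would need $\|\mathcal{E}\|_{s_0,\frac12^+,\frac72+\frac{\delta}{2}}$, but Proposition~\ref{prop:Teukolsky.nonlinear.control} only gives $\frac72^-$.
\end{itemize}
This is structural rather than bookkeeping. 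As Remark~\ref{rem:strong-peeling-spnull} explains, for Cauchy data a bound with $p_\infty>3$ forces $r^5\alpha\to0$ along $\tilde\Sigma_1$. The paper uses $p_\infty>3$ only in the spacelike--characteristic setting of Remark~\ref{rem:strong-peeling-rp}.

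With $p_\infty=3-\delta$ one has $p_\infty''=3-\delta$, hence $|\alpha|_{s_0,(-1)^+,5^-}$. The paper's argument, being identical to yours, yields nothing more, so the exponent $5$ in the statement should be read as $5^-$. That is all that is used later: it is the bound in \eqref{eq:BA.strong.pointwise}, the hypothesis \eqref{eq:fixed.sphere.alp.input.2} of Proposition~\ref{prop:fixed.sphere}, and the rate $5-\delta$ in \eqref{eq:main-1:imp}. You should simply conclude with $5^-$.

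There are two further places where the statement is slightly stronger than what the argument delivers. The paper is equally brief on both, and neither affects later sections.
\begin{itemize}
\item The $H_u$-flux at order $s_0+1$. \eqref{eq:equivalence-integrated} is a spacetime estimate whose proof uses the top-order spacetime bound on $\gamma-\mathring\gamma$. It does not by itself upgrade the $H_u$-fluxes to order $s_0+1$ in background norms. That would need $\sup_u$ of an order-$(s_0+1)$ $H_u$-norm of $\gamma-\mathring\gamma$, which the bootstrap assumptions do not contain. So your remark about ``the same absorption'' for the $H_u$ flux needs more than the cited lemma.
\item The $\nab_3$ index. If you track the $\nab_3$ bulk term through Lemma~\ref{lem:nab3est}, its $p_0$-index is $p_0-\frac12$, as in Corollary~\ref{cor:BasicEst} and in $\mathcal{N}_s$. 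It is not the $p_0+\frac12$ displayed in Theorem~\ref{thm:main.Teukolsky}. What is actually proved is $\|\langle u\rangle^{\frac12}\nab_3(r^2\alpha)\|_{s_0,\frac12^+,0^+}$ (respectively $\frac12^+,3^-$). This still suffices where it is used, in Table~\ref{table:bt.top.angular.weak}.
\end{itemize}
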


\begin{remark}[Propagation of Bondi--Sachs peeling, I] \label{rem:strong-peeling-rp}
The purpose of this remark is to give more details of the propagation of Bondi--Sachs peeling in the context of spacelike-characteristic initial value problem alluded to in Remark~\ref{rem:strong-peeling-spnull}. Note that we are also allowed to choose a value of $p$ that exceeds $3$, e.g., $p = 3 + \frac{1}{2} \delta$. Using the notation in Appendix~\ref{sec:appendix}, if we adapt the argument to the future $M$ of $\Sigma \cup H_{0}$, we may obtain
\begin{equation}\label{eq:rp.verystrong}
    \begin{split}
        &\sup_{u}\left[\left \|r \nab_4 \alpha + 5 \alpha \right\|_{s_0, (-1)^+, 5+\frac{1}{2}\delta}(H_u)+\|\brk{u}^{\frac{1}{2}+\frac{3}{2}\delta} \alpha\|_{s_0+1, (-1)^+, \frac{9}{2}^-}(H_u) \right]\\&\qquad   +   \|\alpha\|_{s_0+1, (-\frac{1}{2})^+, \frac{9}{2}+\frac{1}{2}\delta}+  \|\brk{u}^{\frac{1}{2}+\frac{3}{2}\delta} \nab_3(r^2\alpha)\|_{s_0, \frac{3}{2}^+, 3^-} +\abs{\alpha}_{s_0, 1^+, 5} \lesssim \mathfrak n_{\Sigma} + \mathfrak n_{H} + \| \mathcal{E} \|_{s_{0}, \frac{1}{2}^{+}, \frac{7}{2}+\frac{1}{2} \delta},
    \end{split}
    \end{equation}
    where $\mathfrak n_{H}$ is defined with $\nu = 3+\frac{1}{2} \delta$ and all norms are defined on $M$ unless otherwise specified. The last term can be bounded under the following modification of \eqref{eq:BA.strong.integrated}:
    \begin{equation}\label{eq:BA.verystrong.integrated}
    \begin{split}
        &\|\alpha\|_{s_0+1, (-\frac{1}{2})^+, \frac{9}{2}+\frac{1}{2}\delta}
        + \|\beta\|_{s_0+1, (-\frac{1}{2})^+, \frac{7}{2}+\frac{1}{2}\delta}
        + \|\chih,\trch - 2r^{-1}\|_{s_0+1, \frac{1}{2}^+, \frac{3}{2}+\frac{1}{2}\delta} \\
        &\qquad + \|\eta\|_{s_0+1, \frac{1}{2}^+, \frac{3}{2}+\frac{1}{2}\delta} 
        + \|\rho, \sigma\|_{s_0, (-\frac{1}{2})^+, \frac{5}{2}+\frac{1}{2}\delta} 
        + \|K - r^{-2}\|_{s_{0}, (-\frac{1}{2})^{+}, \frac{3}{2}^{+}} \\
    &\qquad 
    + \|\chibh, \trchb + 2r^{-1}\|_{s_0, \frac{1}{2}^+, \frac{1}{2}^{+}} + \|f-2\omegab r - 1\|_{s_0,\frac{3}{2}^+ , (-\frac{1}{2})^{+}}  <\epsilon.
    \end{split}
    \end{equation}
    To complete the sketch of the propagation argument, it only remains to close \eqref{eq:BA.verystrong.integrated}. We defer this discussion until Remark~\ref{rem:strong-peeling-geom}.
\end{remark}

\section{Estimates for the other geometric quantities} \label{sec:other-geom}

In this section, we start with the estimates for $\alp$ in Corollary~\ref{cor:alpha.final} and prove estimates for all the other geometric quantities. In particular, we improve the bootstrap assumptions \eqref{eq:BA.weak.pointwise}--\eqref{eq:BA.strong.integrated}.

We first consider the bounds on the $2$-sphere.
\begin{proposition}\label{prop:fixed.sphere}
    \begin{enumerate}
        \item Assume that the weak bootstrap assumptions hold. Assume in addition that $\alp$ obeys the following bounds:
    \begin{equation}\label{eq:fixed.sphere.alp.input.1}
        |\alp|_{s_0,(-1)^+,2^{+}} \ls \ep^2.
    \end{equation}
    Then, for $\ep_0>0$ sufficiently small, in fact the bounds in the bootstrap assumptions \eqref{eq:BA.weak.pointwise} hold with $\ep$ replaced by $C\ep^2$.
        \item Suppose, in addition that the strong bootstrap assumptions hold and that
    \begin{equation}\label{eq:fixed.sphere.alp.input.2}
        |\alp|_{s_0,(-1)^+,5^{-}} \ls \ep^2.
    \end{equation}
    Then, for $\ep_0>0$ sufficiently small, in fact the bounds in the strong bootstrap assumptions \eqref{eq:BA.strong.pointwise} hold with $\ep$ replaced by $C\ep^2$.
    \end{enumerate}
    In both instances, $C>0$ depends only on $s_0$, $\de$ and the implicit constants in \eqref{eq:fixed.sphere.alp.input.1}, \eqref{eq:fixed.sphere.alp.input.2}.
\end{proposition}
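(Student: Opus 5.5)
We integrate the hierarchy of $e_4$-transport equations from \S\ref{sec:equations} outward from the axis, using Theorem~\ref{thm:transest}\,i) (estimate \eqref{eq:main.sphere.transest}) at each step. The input is \eqref{eq:fixed.sphere.alp.input.1} for $\alpha$ together with the initial data bounds \eqref{eq:data.weak} on $\tilde\Sigma_1$. For each quantity we read off $\lambda_0$ from the coefficient of $\trch$ in its transport equation. We put all nonlinear terms into $F$ and bound them by the product estimate \eqref{eq:product.for.pointwise} and the bootstrap assumptions, which gives $O(\epsilon^2)$. The axis vanishing hypotheses of Theorem~\ref{thm:transest} follow from Corollary~\ref{cor:loc.exist.bounds}. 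Where $\tilde q_\infty$ exceeds $2\lambda_0+1$, the excess decay is converted into powers of $\brk{u}$, which is how the weights $\brk{u}^{0^+}$ and $\brk{u}^{1^+}$ in \eqref{eq:BA.weak.pointwise} arise.

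The order of integration is as follows.
\begin{enumerate}[(1)]
\item From \eqref{eq:4chih} ($\lambda_0=1$) and \eqref{eq:4trchi} (rewritten for $\trch-2r^{-1}$), obtain $\chih$ and $\trch-2r^{-1}$ at level $s_0$. Then use $2\chi=\partial_r\gamma$ for $\gamma-\mathring\gamma$.
\item From \eqref{eq:4beta} ($\lambda_0=2$), obtain $\beta$. The $\div\alpha$ term loses one derivative, hence the level $s_0-1$.
\item From \eqref{eq:4zeta}, obtain $\eta$.
\item From \eqref{eq:4rho} ($\lambda_0=3/2$), obtain $\rho,\sigma$ at level $s_0-2$.
\item The constraint \eqref{eq:constrho} gives $K-r^{-2}$ algebraically once $\chib$ is known.
\item From \eqref{eq:4chibh} and \eqref{eq:4trchib}, obtain $\chibh$ and $\trchb+2r^{-1}$.
\item From \eqref{eq:4betab}, obtain $\betab$.
\item From \eqref{eq:omegab} and $\nab_4 f=2\omegab$, obtain $\omegab$ and $f-1$. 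For $f-2\omegab r-1$, use $\partial_r(f-2\omegab r)=-2r\partial_r\omegab$.
\item Use $\xib=\nab f$ for $\xib$, and $\partial_r b=\etab-\eta=-2\eta$ for $b$.
\end{enumerate}
In the strong case we repeat the same chain with \eqref{eq:fixed.sphere.alp.input.2}. This time the $\brk{u_+}$-weights are tracked through the parameter $p_{u_+}$ in \eqref{eq:main.sphere.transest}, and the data come from \eqref{eq:data.strong}.

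For the low-regularity quantities at the ends of the chain (e.g.\ $\omegab,f,\xib,b$), we scale by powers of $r$ and use the trivial values of $f$ and $b$ at $r=0$, together with $\lambda_0=0$.

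The main obstacle is the linear source terms, for example $\rho$ in \eqref{eq:4trchib} or $\nab\hot\eta$ in \eqref{eq:4chibh}. These are not small in $\epsilon$ beyond $\epsilon^2$, and they carry weights that barely fit. One must check case by case that the exponent $\tilde q_\infty$ avoids the borderline $2\lambda_0+1$. Near that borderline, the $\delta$ shifts in the bootstrap exponents (the $p^\pm$ notation) must be exploited, trading $(\brk u+r)$ decay for $\brk u$ decay. A related difficulty is checking that the derivative losses in the chain stop exactly at the regularity levels $s_0,s_0-1,s_0-2,s_0-3$ recorded in \eqref{eq:BA.weak.pointwise}.
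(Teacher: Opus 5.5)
Your plan is essentially the paper's proof. You integrate the same $e_4$-hierarchy outward with \eqref{eq:main.sphere.transest} from the data \eqref{eq:data.weak} (and \eqref{eq:data.strong}), bound nonlinear terms via Lemma~\ref{lem:products} and the bootstrap, and handle linear terms by the ordering. Your two deviations are harmless:
\begin{itemize}
\item The paper gets $K-r^{-2}$ from $\gamma-\mathring{\gamma}$ through \eqref{Gauss.def}. Your use of \eqref{eq:constrho} closes equally well, provided it comes after step (6). The linear term $\frac{1}{2r}(\trchb+2r^{-1})$ exactly saturates the target weight.
\item The paper reads $\betab$ off the constraint \eqref{eq:constchibh}. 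Your use of \eqref{eq:4betab} with $\lambda_0=1$ also works: the source $\nab\rho$ has $q_\infty=3^+$, just above $2\lambda_0+1=3$, which yields precisely $|\brk{u}^{0^+}\betab|_{s_0-3,(-1)^+,2}$.
\end{itemize}

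One statement needs correcting: the weight $\brk{u}^{1^+}$ on $f-1$ does not arise from excess decay. The only available bound on the source, $|\brk{u}^{1^+}\omegab|_{s_0-2,0^+,0}$, has no spare $(\brk{u}+r)$-decay. So $\nab_4 f=2\omegab$ with $\lambda_0=0$ gives only $|f-1|_{s_0-2,1^+,-1}$, with no $u$-decay, and rescaling by powers of $r$ does not change this. As the paper does, apply \eqref{eq:main.sphere.transest} to $\brk{u}^{1^+}(f-1)$; this is legitimate because $\partial_r\brk{u}=0$. Alternatively, write $f-1=(f-2\omegab r-1)+2r\omegab$ and use the bounds you already have for both pieces. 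This $u$-decay of $f-1$, inherited by $\xib=\nab f$, is exactly what the Teukolsky estimate and the construction of $\tau$ use, so it has to be tracked explicitly.
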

\begin{proof}
    We prove the desired bounds using the transport estimate \eqref{eq:main.sphere.transest} in Theorem~\ref{thm:transest} and the initial data bound \eqref{eq:data.weak}, plus \eqref{eq:data.strong} for the strong bootstrap. There are four things to check: (1) total number of derivatives, (2) weights near $r=0$, (3) weights near $r = \infty$, and (4) the smallness constant. 
    
    Let us first discuss the smallness constant. We control the terms in an order such that all the error terms are either quadratic (so that the term is of size $\ep^2$ by the bootstrap assumption) or are linear terms that have been controlled in the previous step (so that the term is again of size $\ep^2$ because this has been proven).

    The sequence in which we choose to estimate the geometric quantities is as follows:
    \begin{equation}\label{eq:Ricci.sequence}
        \alp \to (\chih, \trch, \bt) \to (\gamma, \eta, \rho,\sigma, K) \to (b,  \chibh,\trchb,\omb,\betab) \to (f,\xib),
    \end{equation}
    meaning that when estimating $\chih$, $\trch$, or $\bt$, the only linear terms are given in terms of $\alp$; when estimating $\gamma$, $\eta$, $\rho$, or $\sigma$, the linear terms only involve $\alp$, $\chih$, $\trch$, or $\bt$, etc. This structure can be readily observed in the equations.

    We now consider (1)--(3) for each of the quantities involved. 

    Here is the procedure that we will go through. We use the transport estimate \eqref{eq:main.sphere.transest} with parameters $(s,\lambda_0,q_0,p_\infty)$ satisfying the following admissibility conditions which are required in Theorem~\ref{thm:transest}:
    \begin{equation}\label{eq:admissibility}
        2\lambda_0 +q_0+1>0,\quad p_\i \leqslant 2 \lambda_0.
    \end{equation}
    The parameters will also be chosen so that the initial data term arising in \eqref{eq:main.sphere.transest} will be bounded by \eqref{eq:data.weak} and \eqref{eq:data.strong}.
    
    We start with $\trch$ and $\chih$. For $\trch$, we begin with \eqref{eq:4trchi} and rewrite it as 
    \begin{equation}\label{eq:4trchi.2}
        \nab_4(\trch - \f 2r) + \trch (\trch - \f 2r) = \f 12 (\trch - \f 2r)^2 - |\chih|^2. 
    \end{equation}
    We apply \eqref{eq:main.sphere.transest} to $\trch- \f 2r$ and $\chih$ using \eqref{eq:4trchi.2} and \eqref{eq:4chih}, respectively. In other to get the desired estimates, we need to check that every term on the right-hand side of the equation can be controlled in a suitable norm. This is checked in Table~\ref{table:chi}.

    \begin{table}[h!]
    \centering
    \captionsetup[subtable]{justification=centering}
    \caption{Table for $\chih$ and $\trch-\f 2r$}\label{table:chi}
    
    \begin{subtable}{0.46\textwidth}
        \centering
        \caption{Weak bootstrap:\\ $(\lambda_0,s,q_0,p_\i) = (1,s_0,(-1)^+,1^+)$}
        \begin{tabular}{l|ccc|r}
            \toprule
             & $(\trch - \f 2r)^2$ & $|\chih|^2$ & $\alp$ & Goal \\
            \midrule
            $s$ & $s_0 \wedge s_0$ & $s_0 \wedge s_0$ & $s_0$ & $\geqslant s_0$ \\
            $q_0$ & $0^++0^+$ & $0^+ + 0^+$ & $(-1)^+$ & $\geqslant (-1)^+$\\
            $q_\infty$ & $1^+ + 1^+$  & $1^+ + 1^+$ & $2^+$ & $\geqslant 2^+$ \\
            \bottomrule
        \end{tabular}
    \end{subtable}
    \hfill 
    \begin{subtable}{0.48\textwidth}
        \centering
        \caption{Strong bootstrap:\\ 
        $(\lambda_0,s,q_0,p_\i,p_{u_{+}}) = (1,s_0,(-1)^+,2, 2^{-})$}
        \begin{tabular}{l|ccc|r}
            \toprule
             & $(\trch - \f 2r)^2$ & $|\chih|^2$ & $\alp$ & Goal \\
            \midrule
            $s$ & $s_0 \wedge s_0$ & $s_0 \wedge s_0$ & $s_0$ &  $\geqslant s_0$ \\
            $q_0$ & $0^++0^+$ & $0^+ + 0^+$ & $(-1)^+$ & $\geqslant (-1)^+$\\
            $q_\infty$ & $2+2$  & $2+2$ & $5^{-}$ & $> 3$ \\
            {\tiny $q_{u_{+}}+q_\infty$} & $4^{-}+4^{-}$  & $4^{-}+4^{-}$ & $5^{-}$ & $\geqslant 5^{-}$ \\
            \bottomrule
        \end{tabular}
    \end{subtable}
\end{table}
    The subtable on the left is used for the weak bootstrap (resp.~the subtable on the right is to be used under for the strong bootstrap). The parameters $(\lambda_0,s,q_0,p_\i)$ (resp. $(\lambda_0,s,q_0,p_\i,p_{u_+})$) for which \eqref{eq:main.sphere.transest} is applied are stated in the header of the subtable. For each term that appears on the right-hand side of the equation, we check the values of $(s,q_0,q_\infty)$ for which the term can be put in the $|\cdot|_{s,q_0,q_\infty}$ norm (resp.~$(s, q_{0}, q_{\infty}, q_{u})$ for which the term can be put in the $|\jb{u_{+}}^{q_{u_{+}}} (\cdot)|_{s,q_0,q_\infty}$ norm for the strong bootstrap), and tabulate them. The goal we need to achieve in order to have the desired estimate is listed in the far-right column. When dealing with nonlinear terms, we implicitly use the product estimate in Lemma~\ref{lem:products}. We also freely use \eqref{eq:norm.r.shift} and \eqref{eq:norm.u.shift} to trade in $r$ and $\jb{u}$ weights to shift the exponents in the norms.
    
    We use similar tables to check the bounds for $\bt$ using \eqref{eq:4beta}; see Table~\ref{table:bt}.

    \begin{table}[h!]
    \centering
    \captionsetup[subtable]{justification=centering}
    \caption{Table for $\beta$}
    \label{table:bt}
    \begin{subtable}{0.48\textwidth}
        \centering
        \caption{Weak bootstrap\\ $(\lambda_0,s,q_0,p_\i) = (2,s_0-1,(-2)^+,2^+)$}
    \begin{tabular}{l|cc|r} 
        \toprule 
         & $\eta\cdot \alp$ & $\div\alp$ & Goal \\
        \midrule 
        $s$ & $(s_0-1)\wedge s_0$ & $s_0-1$ & $\geqslant s_0-1$ \\
        $q_0$ & $0^++(-1)^+$ & $(-2)^+$ & $\geqslant (-2)^+$ \\
        $q_\infty$ & $1^+ + 2^+$  & $1+2^+$ & $\geqslant 3^+$\\
        \bottomrule 
    \end{tabular}
    \end{subtable}
    \hfill 
    \begin{subtable}{0.48\textwidth}
        \centering
        \caption{Strong bootstrap\\ $(\lambda_0,s,q_0,p_\i,p_{u_{+}}) = (2,s_0-1,(-2)^+,4,1^{-})$}
        \begin{tabular}{l|cc|r} 
        \toprule 
         & $\eta\cdot \alp$ & $\div\alp$ & Goal \\
        \midrule 
        $s$ & $(s_0-1)\wedge s_0$ & $s_0-1$ & $\geqslant s_0-1$\\
        $q_0$ & $0^++(-1)^+$ & $(-2)^+$ & $\geqslant (-2)^+ $\\
        $q_\infty$ & $2 + 5^{-}$  & $1+5^{-}$ & $>5$ \\
        {\tiny $q_{u_{+}}+q_\infty$} & $4^{-}+5^{-}$  & $1+5^{-}$ & $\geqslant 6^{-}$ \\
        \bottomrule 
    \end{tabular}
    \end{subtable}
\end{table}

    We now turn to the next group in \eqref{eq:Ricci.sequence}, i.e., $(\gamma,\eta,\rho,\sigma)$. For $\gamma$, we start with $\f{\rd}{\rd r} \gamma = 2\chi$ and $\f{\rd}{\rd r} \mathring{\gamma} = \f 2r \mathring{\gamma}$ to obtain
    $$\f{\rd}{\rd r} (\gamma - \mathring{\gamma})_{AB} = 2\chih_{AB} + 2 (\trch - \f 2r )\gamma_{AB} + \f 2 r (\gamma - \mathring{\gamma})_{AB},$$
    which then translates to 
    \begin{equation}\label{eq:gamma.diff}
        \nab_4 (\gamma - \mathring{\gamma})  = 2\chih - \chih \hat{\odot} (\gamma - \mathring{\gamma}) - \chih \cdot (\gamma - \mathring{\gamma}) + 2 (\trch - \f 2r )\gamma - (\trch - \f 2 r) (\gamma - \mathring{\gamma}).
    \end{equation}
     For $\eta$ and $(\rho,\sigma)$, we use the equations \eqref{eq:4zeta} and \eqref{eq:4rho}. The relevant parameters for $(\gamma, \eta, \rho,\sigma)$ are then checked in Table~\ref{table:gamma.eta.rho.sigma}. For \eqref{eq:gamma.diff}, in view of the fact that $\chih$ and $\trch - \f 2r$ satisfy the same estimates, the linear terms  $\chih$, $(\trch - \f 2r )\gamma$ can be treated similarly, while the nonlinear terms $\chih \hat{\odot} (\gamma - \mathring{\gamma})$, $\chih \cdot (\gamma - \mathring{\gamma})$, $ (\trch - \f 2 r) (\gamma - \mathring{\gamma})$ are also similar. We will therefore only take one term from each group. Moreover, observe that for $\gamma - \mathring{\gamma}$, we need to obtain extra $\brk{u}^{-0^+}$ decay. This would come from the $\brk{u}^{q_\i-p_\i-1}$ factor in \eqref{eq:main.sphere.transest} (which we have not used so far); our goal is stated to take into account the needed $\brk{u}$-decay. Note also that there is no strong bootstrap for $\gamma - \mathring{\gamma}$. Finally, the desired weak bootstrap for $K-\f 1{r^2}$ then follows from that for $\gamma- \mathring{\gamma}$, since the Gauss curvature can be computed by
        \begin{equation}\label{Gauss.def}
\gamma_{BC} K=\f{\rd}{\rd\th^A}\slashed{\Gamma}^{A}_{BC}-\f{\rd}{\rd\th^C}\slashed{\Gamma}^A_{BA}+\slashed{\Gamma}^A_{AD}\slashed{\Gamma}^D_{BC}-\slashed{\Gamma}^A_{CD}\slashed{\Gamma}^D_{BA},
\end{equation}
        where $\slashed{\Gamma}$ is as in \eqref{Gamma.def}, and the Gauss curvature for the metric $\mathring{\gamma}$ is exactly $r^{-2}$.

    \begin{table}[h!]
    \centering
    \captionsetup[subtable]{justification=centering}
    \caption{Table for $\gamma, \eta, \rho, \sigma$}
    \label{table:gamma.eta.rho.sigma}

    \begin{subtable}{0.48\textwidth}
        \centering
        \caption{For $\gamma$:\\ $(\lambda_0,s,q_0,p_\i) = (0,s_0,0^+,0)$}
        \label{table:gamma}
        \begin{tabular}{l|cc|r}
            \toprule 
             & $(\trch - \f 2r)(\gamma-\mathring{\gamma})$ & $(\trch - \f 2r)\gamma$ & Goal \\
            \midrule 
            $s$ & $s_0\wedge s_0$ & $s_0$ & $\geqslant s_0$ \\
            $q_0$ & $0^++1^+$ & $0^+$ & $\geqslant 0^+$\\
            $q_\infty$ & $1^+ + 0$ & $1^+$ & $\geqslant 1^{+}$ \\
            \bottomrule 
        \end{tabular}
    \end{subtable}
    \bigskip 

    \begin{subtable}{0.48\textwidth}
        \centering
        \caption{For $\eta$, weak bootstrap:\\ $(\lambda_0,s,q_0,p_\i) = (1,s_0-1,(-1)^+,1^{+})$}
        \begin{tabular}{l|cc|r}
            \toprule 
             & $\chih\cdot \eta$ & $\bt$ & Goal \\
            \midrule 
            $s$ & $s_0 \wedge (s_0-1)$ & $s_0-1$ & $\geqslant s_0-1$ \\
            $q_0$ & $0^++0^+$ & $(-1)^+$ & $\geqslant (-1)^+$\\
            $q_\infty$ & $1^+ + 1^+$ & $2^+$ & $\geqslant 2^+$ \\
            \bottomrule 
        \end{tabular}
    \end{subtable}
    \hfill
    \begin{subtable}{0.48\textwidth}
        \centering
        \caption{For $\eta$, strong bootstrap:\\ $(\lambda_0,s,q_0,p_\i,p_{u_{+}}) = (1,s_0-1,(-1)^+,2,2^{-})$}
        \begin{tabular}{l|cc|r}
            \toprule 
             & $\chih\cdot \eta$ & $\bt$ & Goal  \\
            \midrule 
            $s$ & $s_0 \wedge (s_0-1)$ & $s_0-1$ & $\geqslant s_0-1$ \\
            $q_0$ & $0^++0^+$ & $(-1)^+$ & $\geqslant (-1)^+$\\
            $q_\infty$ & $2+2$ & $4$ & $>3$ \\
            {\tiny $q_{u_{+}}+q_\infty$} & $4^{-}+4^{-}$ & $5^{-}$ & $\geqslant 5^{-}$ \\
            \bottomrule 
        \end{tabular}
    \end{subtable}

    \bigskip 
    
    \begin{subtable}{0.48\textwidth}
        \centering
        \caption{For $(\rho,\sigma)$, weak bootstrap:\\ $(\lambda_0,s,q_0,p_\i) = (\f 32,s_0-2,(-2)^+,2^{+})$}
        \begin{tabular}{l|ccc|r}
            \toprule 
            & $\eta \cdot \bt$ & $\chih\cdot \alp$ & $\nab \bt$ & Goal  \\
            \midrule 
            $s$ & \tiny{$(s_0-1) \wedge (s_0-1)$} & \tiny{$(s_0-1)\wedge s_0$} & $s_0-2$ & {\tiny $\geqslant s_0-2$} \\
            $q_0$ & \tiny{$0^++(-1)^+$} & \tiny{$0^++(-1)^+$} & $(-2)^+$  & {\tiny $\geqslant (-2)^+$}\\
            $q_\infty$ & $1^+ + 2^+$ & $1^++2^+$ & $1+2^+$ & $\geqslant 3^+$ \\
            \bottomrule 
        \end{tabular}
    \end{subtable}
    \hfill
    \begin{subtable}{0.48\textwidth}
        \centering
        \caption{For $(\rho,\sigma)$, strong bootstrap:\\ $(\lambda_0,s,q_0,p_\i,p_{u_{+}}) = (\f 32,s_0-2,(-2)^+,3,2^{-})$}
        \begin{tabular}{l|ccc|r}
            \toprule 
             & $\eta \cdot \bt$ & $\chih\cdot \alp$ & $\nab \bt$ & Goal  \\
            \midrule 
            $s$ & \tiny{$(s_0-1) \wedge (s_0-1)$} & \tiny{$(s_0-1)\wedge s_0$} & $s_0-2$ & {\tiny $\geqslant s_0-2$} \\
            $q_0$ & \tiny{$0^++(-1)^+$} & \tiny{$0^++(-1)^+$} & $(-2)^+$  & {\tiny $\geqslant (-2)^+$}\\
            $q_\infty$ & $2+4$ & $2+5^{-}$ & $1+4$ & $>4$ \\
            {\tiny $q_{u_{+}}+q_\infty$} & $4^{-}+5^{-}$ & $4^-+5^{-}$ & $1+5^{-}$ & $\geqslant 6^{-}$ \\
            \bottomrule 
        \end{tabular}
    \end{subtable}
    
\end{table}

    We finally move to the last two groups in \eqref{eq:Ricci.sequence}, namely, $(b, f-1, \xib,\chibh,\trchb+\f 2r,\omb,\betab)$. There are more quantities to consider in these two groups, but there are no strong bootstrap assumptions involved.

    The connection coefficients $\trchb + \f 2r$, $\chibh$ and $\omb$ can be treated together. For $\omb$, we directly use the transport equation \eqref{eq:omegab}. For $\trchb + \f 2r$, we rewrite \eqref{eq:4trchib} into the following form:
    \begin{equation}
        \begin{split}
            \nab_4 (\tr \chib + \f 2r)  +\frac{1}{2}\tr \chi (\tr \chib +\f 2r) = \f 1r (\trch-\f 2r) -2\div \eta - \chih\cdot \chibh +2 |\eta|^2 + 2 \rho. 
        \end{split}
    \end{equation}
    For $\chibh$, we use \eqref{eq:4chibh}, except for rewriting $\trchb \chih = (\trchb + \f 2r) \chih - \f 2r \chih$. For $\omb$, we again need to obtain extra $\brk{u}^{-(1^+)}$ decay, which also comes from the $\brk{u}^{q_\i-p_\i-1}$ factor in \eqref{eq:main.sphere.transest} after choosing the parameters appropriately.

    \begin{table}[h!]
    \centering
    \label{table:connection.the.rest}

        \centering
        \caption{For $\chibh$, $\trchb + \f 2r$: $(\lambda_0,s,q_0,p_\i) = (1,s_0-2,(-1)^+,1)$;\\ for $\omb$: $(\lambda_0,s,q_0,p_\i) = (0,s_0-2,(-1)^+,0)$}
        \label{table:chibh}
        \begin{tabular}{l|ccccc|r}
            \toprule 
             & $\eta \cdot\eta,\eta\hot \eta$ & $\chibh \cdot\chih, (\trchb + \f 2r) \chih$ & $\f 1r \chih, \f 1r (\trch - \f 2r)$ & $\nab \eta$ & $\rho$ & Goal \\
            \midrule 
            $s$ & $(s_0-1)\wedge (s_0-1)$ & $(s_0-2)\wedge s_0$ & $s_0$ & $s_0-2$ & $s_0-2$ & $\geqslant s_0-2$ \\
            $q_0$ & $0^++0^+$ & $0^++0^+$ & $0^+-1$ & $0^+-1$ & $(-1)^+$ & $\geqslant (-1)^+$\\
            $q_\infty$ & $1^+ + 1^+$ & $1+ 1^+$ & $1+ 1^+$ & $1+ 1^+$ & $2^+$ & $\geqslant 2^{+}$ \\
            \bottomrule 
        \end{tabular}

\end{table}
    Using \eqref{eq:constchibh}, we now write $\betab$ in terms of geometric quantities that we have already estimated to obtain the desired bound for $\betab$.

    For the metric component $b$, we combine \eqref{gauge.con.metric}, \eqref{gauge.con} and \eqref{nab4.def} to obtain
    \begin{equation}\label{eq:b.eqn}
        \nab_4 b - \f 12 \trch \cdot b = \chih \cdot b+ 2 \eta.
    \end{equation}
    Using \eqref{eq:b.eqn}, we check the indices in Table~\ref{table:b}.
    
    \begin{table}[h!]
    \centering
    \caption{For $\jb{u} b$: $(\lambda_0,s,q_0,p_\i) = (-\f 12,s_0-1,0^+,-1)$}
    \label{table:b}
        \begin{tabular}{l|cc|r}
            \toprule 
             & $\chih\cdot \brk{u} b$ & $\brk{u}\eta$ & Goal  \\
            \midrule 
            $s$ & $s_0 \wedge (s_0-1)$ & $s_0-1$ & $\geqslant s_0-1$ \\
            $q_0$ & $0^++1^+$ & $0^+$ & $\geqslant 0^+$\\
            $q_\infty$ & $1^++(-1)$ & $-1+1^+$ & $\geqslant 0^{+}$ \\
            \bottomrule 
        \end{tabular}
    \end{table}
    
    For $f$, we have the equation $\nab_4 f = 2 \omb$ in \eqref{gauge.con.metric}, together with
    \begin{equation}\label{eq:f-2ombr-1}
        \nab_4 (f - 2\omb r - 1) = -2r \nab_4 \omb = -6r|\eta|^2 - 2r\rho.    
    \end{equation}
    Notice that we again need to obtain $\brk{u}$-decay. For $f -2\omb r -1$, this is achieved by using the factor $\brk{u}^{q_\i-p_\i-1}$ in \eqref{eq:main.sphere.transest} after choosing the parameters appropriately. For $b$ and $f-1$, we apply \eqref{eq:main.sphere.transest} to $\brk{u} b$ and $\brk{u}^{1^{+}} (f-1)$, in which case the desired $u$-decay is achieved again using the factor $\brk{u}^{q_\i-p_\i-1}$. In the tables, we freely trade non-negative powers of $\brk{u}$ with $r+\brk{u}$ using \eqref{eq:norm.u.shift}.
     \begin{table}[h!]
    \centering
    \caption{Tables for $f-1$, $f -2\omb r -1$}
    \label{table:metric.the.rest}

    \begin{subtable}{0.48\textwidth}
        \centering
        \caption{For $f - 2\omb r -1$: $(\lambda_0,s,q_0,p_\i) = (0,s_0-2,0^+,0)$}
        \begin{tabular}{l|cc|r}
            \toprule 
             & $r|\eta|^2$ & $r\rho$ & Goal \\
            \midrule 
            $s$ & $(s_0-1) \wedge (s_0-1)$ & $s_0-2$ & $\geqslant s_0-2$ \\
            $q_0$ & $1+0^++0^+$ & $1+(-1)^+$ & $\geqslant 0^+$\\
            $q_\infty$ & $-1+1^+ +1^+$ & $-1+2^+$ & $\geqslant 1^+$ \\
            \bottomrule 
        \end{tabular}
    \end{subtable}
    \hfill
    \begin{subtable}{0.48\textwidth}
        \centering
        \caption{{\tiny For $\jb{u}^{1^{+}}(f-1)$: $(\lambda_0,s,q_0,p_\i) = (0,s_0-2,0^+,-1)$}}
        \begin{tabular}{l|c|r}
            \toprule 
             & $\jb{u} \omb$ & Goal  \\
            \midrule 
            $s$ & $s_0-2$ & $\geqslant s_0-2$ \\
            $q_0$ & $0^+$ & $\geqslant 0^+$\\
            $q_\infty$ & $0$ & $\geqslant 0$ \\
            \bottomrule 
        \end{tabular}
    \end{subtable}
\end{table}

    Finally, for $\xib$ we use $\xib = \nab f$ (by \eqref{gauge.con}) to obtain the desired bounds from those that have already been established. \qedhere
\end{proof}

    Next, we prove the integrated decay for the connection coefficients.
    \begin{proposition}\label{prop:integrated}
    There exist $\ep_0$ sufficiently small depending on $s_0$ and $\de$ such that the following hold.
        \begin{enumerate}
            \item Under the assumptions of Proposition~\ref{prop:fixed.sphere}.(1), assume in addition that 
            \begin{equation}\label{eq:integrated.alp.assumption.weak}
                \|\alp\|_{s_0+1,-\f 12{}^+,\frac{3}{2}^+} + \| \jb{u}^{\frac{1}{2}}\nab_3 (r^2 \alp) \|_{s_0,\f 32{}^+,0^+} \ls \ep^2.
            \end{equation}
            Then the bootstrap assumptions \eqref{eq:BA.weak.integrated} hold with $\ep$ replaced by $C\ep^2$.
            \item Under the assumptions of Proposition~\ref{prop:fixed.sphere}.(2), assume in addition that 
            \begin{equation}\label{eq:integrated.alp.assumption.strong}
                \|\alp\|_{s_0+1,-\f 12{}^+,\f 92+\frac{1}{2}\delta} + \| \brk{u}^{\frac{1}{2}+\frac{3}{2}\delta} \nab_3 (r^2 \alp) \|_{s_0,\f 32{}^+,3^{-}} \ls \ep^2.
            \end{equation}
            Then the bootstrap assumptions \eqref{eq:BA.strong.integrated} hold with $\ep$ replaced by $C\ep^2$.
        \end{enumerate}
        In both instances, $C>0$ depends only on $s_0$, $\de$ and the implicit constants in \eqref{eq:integrated.alp.assumption.weak} and \eqref{eq:integrated.alp.assumption.strong}.
    \end{proposition}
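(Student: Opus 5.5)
We follow the same hierarchy \eqref{eq:Ricci.sequence} as in the proof of Proposition~\ref{prop:fixed.sphere}. The difference is that we now use the integrated transport estimates of Theorem~\ref{thm:transest}.ii), namely \eqref{eq:main.int.transest} for $s\leqslant s_0$ and \eqref{eq:main.int.transest.top} at top order $s_0+1$. We also use \eqref{eq:22.sphere.transest}--\eqref{eq:22.sphere.transest.2} for the $\int(\cdots)(H_u)\,\ud u$ norms of $\omb$ and $f-1$. Linear sources always come from quantities already improved to $\epsilon^2$; at the first step this is $\alp$ via \eqref{eq:integrated.alp.assumption.weak} or \eqref{eq:integrated.alp.assumption.strong}. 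Nonlinear sources are handled with the product estimate \eqref{eq:product.for.integrated}: one factor goes in an integrated norm, the other in a pointwise norm, and the pointwise factor is already bounded by $C\epsilon^2$ by Proposition~\ref{prop:fixed.sphere}. So every nonlinear source contributes $O(\epsilon^2)$. The index bookkeeping ($s$, $q_0$, $q_\infty$, and the extra $\brk{u}$-weights, obtained from the factor $\brk{u}^{q_\infty-\tilde p_\infty-1}$) is done with tables exactly as before. The initial data terms are supplied by \eqref{eq:data.weak.int} and \eqref{eq:data.strong.int}.

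\textbf{Order of estimates.} We first treat $\trch-\f2r$ and $\chih$ using \eqref{eq:4trchi.2} and \eqref{eq:4chih} at order $s_0+1$; the $\alp$ source is controlled by $\|\alp\|_{s_0+1}$. We then treat $\gamma-\mathring\gamma$ via \eqref{eq:gamma.diff}, and $K-r^{-2}$ via \eqref{Gauss.def}, or via \eqref{eq:constrho} once $\rho$ and $\chibh$ are known. Next come $\eta$, $(\rho,\sigma)$, $\chibh$, $\trchb+\f2r$, $\omb$, $\betab$ (from \eqref{eq:constchibh}), $f-2\omb r-1$ from \eqref{eq:f-2ombr-1}, $f-1$, and $\xib=\nab f$. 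Each has the regularity level listed in \eqref{eq:BA.weak.integrated}. The $\brk{u}^{1/2}$-weighted norms are obtained by applying the estimate to $\brk{u}^{1/2}\phi$, which is legitimate because $\brk{u}$ commutes with $\nab_4$. At top order, the pointwise remainder $\epsilon_0|\cdot|_{s_0-2,\dots}$ in \eqref{eq:main.int.transest.top} is again bounded by Proposition~\ref{prop:fixed.sphere}. The strong case repeats the same steps with the shifted $p_\infty$ exponents and \eqref{eq:integrated.alp.assumption.strong}.

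\textbf{Main obstacle: top-order derivative loss.} The naive transport equation \eqref{eq:4beta} gives only $s_0$ derivatives of $\beta$, because the source $\div\alp$ costs a derivative. To reach $\|\brk{u}^{1/2}\beta\|_{s_0+1}$, we instead use \eqref{eq:3alpha}, rewritten as \eqref{eq:3alpha v2}: it expresses $r\nab\hot(r^2\beta)$ in terms of $\nab_3(r^2\alp)$, $r\alp$ and $\mathcal E_1$. The $\nab_3(r^2\alp)$ term is controlled with $s_0$ derivatives in \eqref{eq:integrated.alp.assumption.weak}, and this is precisely where the $\brk{u}^{1/2}$-weight enters. \eqref{eq:4beta v2} controls $r\nab_4\beta$ at order $s_0$. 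Combining these with the integrated elliptic estimate \eqref{eq:elliptic-integrated} (or \eqref{eq:elliptic-integrated-strong}) yields $\|\beta\|_{s_0+1}$. Here $\mathcal E_1$ contains $\rho\chih$ and $\eta\hot\beta$, so $\rho$ appears only at top order multiplied by small factors, and the argument is closed by absorption for $\epsilon_0$ small.

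\textbf{Consequences of the $\beta$ estimate.} With $\beta$ at order $s_0+1$, \eqref{eq:4zeta} gives $\eta$ at order $s_0+1$. Likewise \eqref{eq:4rho} gives $(\rho,\sigma)$ at order $s_0$, since $\div\beta$ and $\curl\beta$ need one derivative of $\beta$. \eqref{eq:4trchib} and \eqref{eq:4chibh} give $\chibh$ and $\trchb$ at order $s_0$, with sources $\nab\eta$ and $\rho$. These are exactly the regularity levels in \eqref{eq:BA.weak.integrated}.

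\textbf{Secondary difficulty: weights for $\omb$ and $f-1$.} These must attain the awkward weights $\brk{u}^{\frac12+\frac32\delta}$ together with $p_\infty=-\frac12\delta$. We handle this with the choice $\tilde p_\infty=2\lambda_0-\frac12\delta$ in \eqref{eq:main.int.transest} in the case $q_\infty>2\lambda_0+1$, where $\lambda_0=0$ for $\omb$ and for $f$, trading powers of $r+\brk{u}$ for powers of $\brk{u}$. The $\int(\cdots)\,\ud u$ norms of $\omb$ and $f-1$ follow from \eqref{eq:22.sphere.transest.2}.
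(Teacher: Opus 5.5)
Your proposal is correct and follows the paper's proof essentially step by step. The top-order bound for $\brk{u}^{\frac12}\beta$ comes from the integrated elliptic estimate \eqref{eq:elliptic-integrated}, fed by \eqref{eq:3alpha v2}, \eqref{eq:4beta} and the naive $s_0$-order transport bound. After that come the integrated transport estimates (top-order \eqref{eq:main.int.transest.top} for $\chih$, $\trch$, $\gamma$, $\eta$), the constraints for $K-r^{-2}$ and $\betab$, and \eqref{eq:22.sphere.transest}--\eqref{eq:22.sphere.transest.2} for the mixed norms.

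There are two small slips. First, $K-r^{-2}$ must be obtained from \eqref{eq:constrho}; the \eqref{Gauss.def} route needs $s_0+2$ derivatives of $\gamma$, one more than is available. Second, for $f-1$ the weight $\brk{u}^{\frac12+\frac32\delta}$ and the mixed $\int(\cdot)(H_u)\,\ud u$ bound are simply inherited from $\omb$. This uses the case $q_\infty<2\lambda_0+1$ of \eqref{eq:main.int.transest}, and \eqref{eq:22.sphere.transest} fed by the mixed norm of $\omb$. It does not come from the choice $\tilde p_\infty=2\lambda_0-\frac12\delta$ or from \eqref{eq:22.sphere.transest.2}.
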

    \begin{proof} 
    In this proof, we use the shorthand $p^{\pm\prime} := p\pm\frac{1}{2}\delta$.
    
        We will derive our estimates in the order
        \begin{equation}\label{eq:Ricci.sequence.2}
        \alp \to \bt \to (\chih, \trch) \to (\gamma, \eta, \rho,\sigma) \to (K, \chibh,\trchb,\omb,\betab) \to (f, \xib)
    \end{equation}
        (cf.~\eqref{eq:Ricci.sequence}). In particular, the same observation as in Proposition~\ref{prop:fixed.sphere} takes care of the powers of $\epsilon$. One difference between \eqref{eq:Ricci.sequence} and \eqref{eq:Ricci.sequence.2} is that now $\bt$ is treated first and separately. This is to deal with the top order, where we need a combination of transport and elliptic estimates in order to avoid a derivative loss. This will be first treated in Step~1 before we return to the transport estimates for the other quantities in Step~2.

        \pfstep{Step~1: Elliptic estimates and control for $\bt$}  Applying Lemma~\ref{lem:elliptic} with $p_{0}=(-\frac{1}{2})^{+}$ to $\brk{u}^{p_{u}} \beta$, we have 
        \begin{equation}\label{eq:bt.top.main.elliptic}
        \begin{aligned}
            \|\brk{u}^{p_{u}} \bt\|_{s_0+1,(-\f 12{})^+,p_\i} &\lesssim \|\brk{u}^{p_{u}} (r\nab \hot \bt, r \nab_{4} \bt, \bt)\|_{s_0,(-\f 12{})^+,p_\i} + \epsilon_{0} \mathcal{Q}[\beta],
        \end{aligned}
        \end{equation}
        where 
        \begin{equation*}
            \mathcal{Q}[\beta] = \begin{cases}
                |\brk{u}^{p_{u}-\frac{1}{2}} \beta|_{s_{0}-3, -1, p_{\infty}^{-}+1} & \hbox{for weak bootstrap}, \\
                |\brk{u}^{p_{u}} \beta|_{s_{0}-3, -1, p_{\infty}^{+}+\frac{1}{2}} & \hbox{for strong bootstrap}.
            \end{cases}
        \end{equation*}
        We control each term on the right-hand side of \eqref{eq:bt.top.main.elliptic}, with $(p_\i, p_{u}) = (1^+, \frac{1}{2})$ in the case of weak bootstrap, and $(p_\i, p_{u}) = (\frac{7}{2}^{-}, 0)$ in the case of strong bootstrap.

        First, we control $\epsilon_{0} \mathcal{Q}[\beta]$. For both the weak and strong bootstrap cases, note that $\mathcal{Q}[\beta]$ can be controlled by the pointwise bootstrap assumptions; since we have closed them in Proposition~\ref{prop:fixed.sphere}, we control the whole term by $C \epsilon_{0} \epsilon^{2}$. 
        
        Next, we control $\|\brk{u}^{p_{u}} \bt\|_{s_0,(-\frac{1}{2})^{+},p_\i}$. For this, we apply the transport estimate in \eqref{eq:main.int.transest} to $\brk{u}^{p_{u}} \beta$. We check the indices in Table~\ref{table:integrated.bt}. Since we are now estimating the integrated norms $\|\cdot\|$, for the product of two factors, e.g.~$\eta\cdot \alp$, we now use \eqref{eq:product.for.integrated} so that one of the factors is put in a $\|\cdot\|$ norm, while the other is put in a $|\cdot|$ norm (with an appropriate distribution of the $\brk{u}$ weights, which will be clear in the tables). In order to check the exponents, we need to consider the minimum of two sums, corresponding to putting a different factor in the $\|\cdot\|$ norm. See, e.g., the $(0^+ + (-\f 12){}^+)\wedge (\f 12{}^++ (-1)^+)$ entry in the table. As before, we freely use \eqref{eq:norm.u.shift} to bound non-negative powers of $\brk{u}$ by decreasing $q_{\infty}$. 

        \begin{table}[h!]
    \centering
    \captionsetup[subtable]{justification=centering}
    \caption{Table for $\beta$}
    \label{table:integrated.bt}
    \begin{subtable}{0.48\textwidth}
        \centering
        \caption{Weak bootstrap for $\brk{u}^{\frac{1}{2}} \beta$:\\ $(\lambda_0,s,q_0,\tilde{p}_{\infty}) = (2,s_0,(-\f 32)^+,1^+)$}
    \begin{tabular}{l|cc|r} 
        \toprule 
         & $\jb{u}^{\frac{1}{2}} \eta\cdot \alp$ & $\jb{u}^{\frac{1}{2}} \div\alp$ & Goal \\
        \midrule 
        $s$ & $(s_0+1)\wedge(s_0+1)$ & $s_0$ & $\geqslant s_0$ \\
        $q_0$ & \tiny{$(0^+ + (-\f 12){}^+)\wedge (\f 12{}^++ (-1)^+)$} & $(-\f 32)^+$ & \tiny{$\geqslant (-\f 32)^+$}\\
        $q_\infty$ & \tiny{$(-\frac{1}{2}+1^+ + \frac{3}{2}^+)\wedge (0^+ + 2^+)$}  & $-\frac{1}{2}+\f 52{}^+$ & $\geqslant 2^+$\\
        \bottomrule 
    \end{tabular}
    \end{subtable}
    \hfill 
    \begin{subtable}{0.48\textwidth}
        \centering
        \caption{Strong bootstrap for $\beta$:\\ $(\lambda_0,s,q_0,\tilde{p}_{\i}) = (2,s_0,(-\f 32)^+, \frac{7}{2}^{-})$}
        \begin{tabular}{l|cc|r} 
        \toprule 
         & $\eta\cdot \alp$ & $\div\alp$ & Goal \\
        \midrule 
        $s$ & $(s_0+1)\wedge(s_0+1)$ & $s_0$ & $\geqslant s_0$ \\
        $q_0$ & \tiny{$(0^+ + (-\f 12){}^+)\wedge (\f 12{}^++ (-1)^+)$} & $(-\f 32)^+$ & \tiny{$\geqslant (-\f 32)^+$} \\
        $q_\infty$ & \tiny{$(2+\frac{9}{2}^{-})\wedge (\frac{3}{2}^- + 5^{-})$}  & $1+\frac{9}{2}^{-}$ & $\geqslant \frac{9}{2}^-$ \\
        \bottomrule 
    \end{tabular}
    \end{subtable}
\end{table}

        We now turn to the other two terms in \eqref{eq:bt.top.main.elliptic}. To bound $\|\jb{u}^{\frac{1}{2}} r\nab \hot \bt\|_{s_0,(-\frac{1}{2})^{+},1^+}$, we use \eqref{eq:3alpha v2} to write $\nab \hat{\otimes} \bt = r^{-2} \nab \hat{\otimes} (r^2 \bt)$ in terms of other quantities. In Table~\ref{table:bt.top.angular.weak}, we check all the other terms in this equation to then obtain $\|\jb{u}^{\frac{1}{2}} r\nab \hot \bt\|_{s_0,(-\frac{1}{2})^{+},1^+} \ls \ep^2$. Notice that all the terms taking the form of a connection coefficient contracted with a curvature component would give the same indices for $q_0$ as in Table~\ref{table:integrated.bt} and so we skip that row for those terms.
        \begin{table}[h!]
    \centering
        \caption{Weak bootstrap for $\jb{u}^{\frac{1}{2}} \nab \widehat{\otimes} \bt$}
        \label{table:bt.top.angular.weak}
        \begin{subtable}{1.0\textwidth}
        \centering
        \begin{tabular}{l|ccc|r}
            \toprule
             & $\jb{u}^{\frac{1}{2}} r^{-2}\nab_3(r^2\alp)$ & $\jb{u}^{\frac{1}{2}} r^{-1} \alp$ & $\jb{u}^{\frac{1}{2}} r^{-1}(f-1-2\omb r) \alp$  & Goal \\
            \midrule
            $s$ & $s_0$ & $s_0+1$ & $(s_0+1) \wedge (s_0+1)$ & $\geqslant s_0$ \\
            $q_0$ & $(-\f 12){}^+$ & $(-\f 32){}^+$ & \tiny{$(-1+1^+ + (-\f 12)^+) \wedge (-1+ \f 32{}^++(-1)^+)$ }  & $\geqslant (-\f 32){}^+$\\
            $q_\infty$ & $2^+$  & \tiny{$-\frac{1}{2}+1+\frac{3}{2}^+$} & \tiny{$((-\frac{1}{2})^{+} + 1 + 0 + \frac{3}{2}^+)\wedge (1 + (-1)^+ + 2^+)$}  & $\geqslant 2^+$ \\
            \bottomrule
        \end{tabular}
        \end{subtable}

        \bigskip
        \begin{subtable}{1.0\textwidth}
        \centering
        \begin{tabular}{l|ccc|r}
            \toprule
             & $\jb{u}^{\frac{1}{2}} (\trchb+\f 2r) \alp$ & $\jb{u}^{\frac{1}{2}} \eta \widehat{\otimes} \bt$ & $\jb{u}^{\frac{1}{2}} (\chih\rho,\, {}^*\chih\sigma)$ & Goal \\
            \midrule
            $s$ &  $s_0 \wedge (s_0+1)$ & $(s_0+1) \wedge (s_0+1)$ & $(s_0+1) \wedge s_0$ & $\geqslant s_0$ \\
            $q_\infty$ & \tiny{$((-\frac{1}{2})^{+} + 1 + \frac{3}{2}^+)\wedge (0^{+} + 2^+)$}  & \tiny{$(1^{+} + 1^{+})\wedge (0^+ + 2^+)$} & \tiny{$(1^{+} + 1^{+}) \wedge (-\frac{1}{2} + \frac{1}{2}^+ + 2^+)$} & $\geqslant 2^+$ \\
            \bottomrule
        \end{tabular}
        \end{subtable}
\end{table}

        We also check the indices for the strong bootstrap, in which case we obtain $\|r\nab \hot \bt\|_{s_0,(-\frac{1}{2})^{+},\f 72^{-}} \ls \ep^2$. The indices are checked in Table~\ref{table:bt.top.2}. Notice that the $s$ and $q_0$ row would have been identical to the weak bootstrap case, and thus we only check the index $q_\i$.
        \begin{table}[h!]
    \centering
        \caption{Strong bootstrap for $\nab \widehat{\otimes} \bt$}
        \label{table:bt.top.2}
        \begin{subtable}{1.0\textwidth}
        \centering
        \begin{tabular}{l|cccc|r}
            \toprule
             & $r^{-2}\nab_3(r^2\alp)$ & $r^{-1} \alp$ & $r^{-1}(f-1-2\omb r) \alp$ & $(\trchb+\f 2r) \alp$ & Goal \\
            \midrule
            $q_\infty$ & $5^-$  & $1+\frac{9}{2}^{-}$ & \tiny{$(1 + 0 + \frac{9}{2}^{-})\wedge (1 + (-\frac{1}{2})^{+} + 5^{-})$} & \tiny{$(1 + \frac{9}{2}^{-})\wedge (\frac{1}{2}^{+} + 5^{-})$}  & $\geqslant \frac{9}{2}^-$ \\
            \bottomrule
        \end{tabular}
        \end{subtable}

        \bigskip
        \begin{subtable}{1.0\textwidth}
        \centering
        \begin{tabular}{l|cc|r}
            \toprule
             & $\eta \widehat{\otimes} \bt$ & $\chih\rho,\, {}^*\chih\sigma$ & Goal \\
            \midrule
            $q_\infty$ & \tiny{$(2 + \frac{7}{2}^{-})\wedge (\frac{3}{2}^{-} + 4)$} & \tiny{$(2 + \frac{5}{2}^{-}) \wedge (\frac{3}{2}^{-} + 3)$} & $\geqslant \frac{9}{2}^-$ \\
            \bottomrule
        \end{tabular}
        \end{subtable}
\end{table}

        We finally turn to the term $\|\jb{u}^{p_{u}} r\nab_4\bt \|_{s_0,(-\frac{1}{2})^{+},p_\i}$ in \eqref{eq:bt.top.main.elliptic}. For this we use \eqref{eq:4beta} to write $\nab_4 \bt$ as terms bounded by \eqref{eq:integrated.alp.assumption.weak}, \eqref{eq:integrated.alp.assumption.strong} and the bootstrap assumptions. We check the necessary exponents in Table~\ref{table:nab4bt.top}. We skip the row for $q_0$, as they are similar to those in Tables~\ref{table:integrated.bt} and \ref{table:bt.top.angular.weak} depending on the type of the term. Remark also that for the linear $r^{-1}\bt$ term, we use the bound for $\|\bt \|_{s_0,(-\frac{1}{2})^{+},p_\i} \ls \ep^2$ that we just established above (instead of the bootstrap assumption) so that we obtain a $C\ep^2$ (instead of $C\ep$) upper bound.

        \begin{table}[h!]
            \centering
        \caption{Weak bootstrap for $\jb{u}^{\frac{1}{2}} \nab_4 \bt$}
        \label{table:nab4bt.top}
        \begin{tabular}{l|cccc|r}
            \toprule
             & $\jb{u}^{\frac{1}{2}} r^{-1} \bt$ & $\jb{u}^{\frac{1}{2}} (\trch - \f 2r) \bt$ & $\jb{u}^{\frac{1}{2}} \div \alp$ & $\jb{u}^{\frac{1}{2}}\eta\cdot \alp$  & Goal \\
            \midrule
            $s$ & $s_0$ & $(s_0+1)\wedge s_0$ & $s_0$ &  $s_0 \wedge (s_0+1)$ & $\geqslant s_0$ \\
            $q_\infty$ & $1+1^+$ & \tiny{$(1^++ 1^+)\wedge(-\frac{1}{2}+ \frac{1}{2}^+ + 2^+)$} & $-\frac{1}{2}+1+\frac{3}{2}^+$ & \tiny{$(-\frac{1}{2}+1^+ + \frac{3}{2}^+) \wedge (0^+ + 2^+)$} & $\geqslant 2^+$ \\
            \bottomrule
        \end{tabular}
\end{table}

        We then consider strong bootstrap for $\nab_4\bt$. The only difference is the $q_\i$ exponent and we only check that row. This is carried out in Table~\ref{table:nab4bt.top.strong}.

        \begin{table}[h!]
            \centering
        \caption{Strong bootstrap for $\nab_4 \bt$}
        \label{table:nab4bt.top.strong}
        \begin{tabular}{l|cccc|r}
            \toprule
             & $r^{-1} \bt$ & $(\trch - \f 2r) \bt$ & $\div \alp$ & $\eta\cdot \alp$  & Goal \\
            \midrule
            $q_\infty$ & $1+\frac{7}{2}^-$ & $(2 + \frac{7}{2}^{-})\wedge(\frac{3}{2}^{-} + 4)$ & $1+\frac{9}{2}^{-}$ & $(2 + \frac{9}{2}^{-}) \wedge (\f 32^- + 5^{-})$ & $\geqslant \frac{9}{2}^-$ \\
            \bottomrule
        \end{tabular}
\end{table}

        Combining all the above, we obtain that the right-hand side of \eqref{eq:bt.top.main.elliptic} is $\ls \ep^2$.

        \pfstep{Step~2: Transport estimates for the weak bootstrap} We now bound all the other quantities in \eqref{eq:BA.weak.integrated}. For most of the estimates, we use the transport estimate \eqref{eq:main.int.transest} and the initial data bound \eqref{eq:data.weak.int}.

        We follow \eqref{eq:Ricci.sequence.2} and begin with $\chih$, $\trch+\f 2r$. In this case, $s = s_{0}+1$, so we need to use the top-order transport estimate \eqref{eq:main.int.transest.top}. We immediately observe that the last (extra pointwise) term in \eqref{eq:main.int.transest.top} can be bounded by $C \epsilon_{0} \epsilon^{2}$ using Proposition~\ref{prop:fixed.sphere} in each case; hence, it only remains to check the term $\| F \|_{s, q_{0}, q_{\infty}}$ as usual. This is checked in Table~\ref{table:int.chi}.
        \begin{table}[h!]
    \centering
    \caption{Table for $\chih$, $\trch+\f 2r$: $(\lambda_0,s,q_0,\tilde{p}_\i) = (1,s_0+1,(-\f 12){}^+,\frac{1}{2}^+)$}
    \label{table:int.chi}
        
        \begin{tabular}{l|ccc|r}
            \toprule
             & $(\trch - \f 2r)^2$ & $|\chih|^2$ & $\alp$ & Goal \\
            \midrule
            $s$ & $(s_0+1) \wedge (s_0+1)$ & $(s_0+1) \wedge (s_0+1)$ & $s_0+1$ & $\geqslant s_0+1$ \\
            $q_0$ & $0^++\f 12{}^+$ & $0^++\f 12{}^+$ & $(-\f 12){}^+$ & $\geqslant (-\f 12){}^+$\\
            $q_\infty$ & $1^+ + \frac{1}{2}^+ $  & $1^+ + \frac{1}{2}^+ $ & $\frac{3}{2}^+$ & $\geqslant \frac{3}{2}^{+}$ \\
            \bottomrule
        \end{tabular}

\end{table}
    
We next consider $\gamma, \eta$ in Table~\ref{table:integrated.gamma.eta}. Here, we again need to use the top-order transport estimate \eqref{eq:main.int.transest.top}. As before, the last (extra pointwise) term in \eqref{eq:main.int.transest.top} can be bounded by $C \epsilon_{0} \epsilon^{2}$ using Proposition~\ref{prop:fixed.sphere} in each case, so again it only remains to check the term $\| F \|_{s, q_{0}, q_{\infty}}$. Since the numerology for the $q_0$ row is the same for all connection coefficients, it will be omitted in the table for $\eta$ (and also the table for $\chibh$, $\trchb+\f 2r$ and $\omb$ later.)
\begin{table}[h!]
    \centering
    \caption{Table for $\gamma, \eta$}
    \label{table:integrated.gamma.eta}
    \begin{subtable}{0.48\textwidth}
    \centering
        \caption{For $\gamma$: $(\lambda_0,s,q_0, \tilde{p}_\i) = (0,s_0+1,\frac{1}{2}^+,(-\frac{1}{2})^{+})$}
        \begin{tabular}{l|cc|r}
            \toprule 
             & $(\trch - \f 2r)(\gamma-\mathring{\gamma})$ & $(\trch - \f 2r)\gamma$ & Goal \\
            \midrule 
            $s$ & $(s_0+1)\wedge (s_0+1)$ & $s_0+1$ & $\geqslant s_0+1$ \\
            $q_0$ & $0^++\frac{1}{2}^+$ & $\frac{1}{2}^+$ & $\geqslant \frac{1}{2}^+$\\
            $q_\infty$ & \tiny{$(1^+ + (-\frac{1}{2})^{+})\wedge (\f 12^+ + 0)$} & $\f 12^+$ & $\geqslant \frac{1}{2}^{+}$ \\
            \bottomrule 
        \end{tabular}
    \end{subtable}
    \hfill
    \begin{subtable}{0.48\textwidth}
        \centering
        \caption{For $\jb{u}^{\frac{1}{2}} \eta$: $(\lambda_0,s,q_0,\tilde{p}_\i) = (1,s_0+1,(-\f 12)^+,0^{+})$}
        \begin{tabular}{l|cc|r}
            \toprule 
             & $\jb{u}^{\frac{1}{2}}\chih\cdot \eta$ & $\jb{u}^{\frac{1}{2}}\bt$ & Goal \\
            \midrule 
            $s$ & $(s_0+1) \wedge (s_0+1)$ & $s_0+1$ & \tiny{$\geqslant s_0+1$} \\
            $q_\infty$ & \tiny{$(1^+ + 0^+) \wedge (-\frac{1}{2}+\frac{1}{2}^+ + 1^+)$} & $1^+$ & $\geqslant 1^+$ \\
            \bottomrule 
        \end{tabular}
    \end{subtable}
\end{table}

Next, we consider $\rho,\sigma$ in Table~\ref{table:integrated.rho.sigma}. From now on, we only need to apply \eqref{eq:main.int.transest} since $s \leqslant s_{0}$.
    
   \begin{table}[h!]
    \centering
    \caption{Table for $\jb{u}^{\frac{1}{2}}(\rho, \sigma)$: $(\lambda_0,s,q_0,\tilde{p}_\i) = (\f 32,s_0,(-\f 32)^+,1^{+})$}
    \label{table:integrated.rho.sigma} 
        \begin{tabular}{l|ccc|r}
            \toprule 
            & $\jb{u}^{\frac{1}{2}} \eta \cdot \bt$ & $\jb{u}^{\frac{1}{2}}\chih\cdot \alp$ & $\jb{u}^{\frac{1}{2}} \nab \bt$ & Goal  \\
            \midrule 
            $s$ & $(s_0+1) \wedge (s_0+1)$ & $(s_0+1)\wedge (s_0+1)$ & $s_0$ & $\geqslant s_0$ \\
            $q_0$ & $(0^++(-\f 12)^+)\wedge (\f 12{}^++(-1)^+)$ & $(0^++(-\f 12)^+)\wedge (\f 12{}^++(-1)^+)$ & $(-\f 32)^+$  & $\geqslant (-\f 32)^+$\\
            $q_\infty$ & $(1^+ + 1^+)\wedge (0^+ + 2^+)$ & $(-\frac{1}{2}+1^++\frac{3}{2}^+)\wedge(-\frac{1}{2}+\frac{1}{2}^+ + 2^+)$ & $2^+$ & $\geqslant 2^{+}$ \\
            \bottomrule 
        \end{tabular}
    \end{table}

    We now turn to the connection coefficients $\chibh$, $\trchb + \f 2r$ and $\omb$ in Table~\ref{table:integrated.connection.the.rest}. From these estimates, we also obtain the bounds for $\betab$ using \eqref{eq:constchibh}. 
    
     \begin{table}[h!]

        \centering
        \caption{For $\jb{u}^{\frac{1}{2}}(\chibh, \trchb + \f 2r)$: $(\lambda_0,s,\tilde{p}_\i) = (\f 12,s_0,(-\f 12)^+,0^+)$; \\ for $\jb{u}^{\frac{1}{2}} \omb$: $(\lambda_0,s,q_0,\tilde{p}_\i) = (0,s_0,(-\f 12)^+, 0^{-\prime})$}
        \label{table:integrated.connection.the.rest}
        \begin{tabular}{l|ccccc|r}
            \toprule 
             & $\jb{u}^{\frac{1}{2}}(\eta \cdot\eta,\eta\hot \eta)$ & $\jb{u}^{\frac{1}{2}}(\chibh \cdot\chih, (\trchb + \f 2r) \chih)$ & $\jb{u}^{\frac{1}{2}}(\f 1r \chih, \f 1r (\trch - \f 2r))$ & $\jb{u}^{\frac{1}{2}}\nab \eta$ & $\jb{u}^{\frac{1}{2}}\rho$ & Goal \\
            \midrule 
            $s$ & $s_0\wedge s_0$ & $s_0\wedge (s_0+1)$ & $s_0+1$ & $s_0$ & $s_0$ & $\geqslant s_0$ \\
            $q_\infty$ & $1^+ + 0^+$ & \tiny{$((-\frac{1}{2})^{+} + 1 + \frac{1}{2}^{+})\wedge (0^+ + 1^+)$} & $-\frac{1}{2} + 1 + \frac{1}{2}^{+}$ & $1+ 0^+$ & $1^+$ & $\geqslant 1^+$ \\
            \bottomrule 
        \end{tabular}

\end{table}

    To estimate $K - r^{-2}$, we do \emph{not} use transport equations, but rather use the following equation (which is a restatement of \eqref{eq:constrho}):
    \begin{equation} \label{eq:constK}
    K - r^{-2} = -\rho + \f 12 \chih \cdot \chibh - \f 14 (\trch-\f 2r) (\trchb +\f 2r) + \f 1{2r} (\trch- \f 2r) - \f 1{2r} (\trchb+\f 2r).
    \end{equation}
    At this point, by \eqref{eq:constK} and our estimates so far, the desired estimate for $K-r^{-2}$ follows, specifically, $\| \jb{u}^{\frac{1}{2}} (K - r^{-2})\|_{s_0,(-\f 12)^+,1^+} \ls \ep^2$.

    We finally consider the bounds for $f-1$, $f-2\omb r - 1$, $\xib$. For $f-1$, $f-2\omb r - 1$, we use \eqref{gauge.con.metric} and \eqref{eq:f-2ombr-1}, respectively. The numerology is checked in Table~\ref{table:top.metric.the.rest}. Recalling \eqref{gauge.con}, we obtain the desired estimates for $\xib$ from those for $f-1$.

    \begin{table}[h!]
    \centering
    \caption{Tables for $f-1$, $f -2\omb r -1$}
    \label{table:top.metric.the.rest}

    \begin{subtable}{0.48\textwidth}
        \centering
        \caption{For $\jb{u}^{\frac{1}{2}} (f - 2\omb r -1)$: \\ $(\lambda_0,s,q_0,\tilde{p}_\i) = (0,s_0,\f 12{}^+,(-1)^+)$}
        \begin{tabular}{l|cc|r}
            \toprule 
             & $\jb{u}^{\frac{1}{2}} r|\eta|^2$ & $\jb{u}^{\frac{1}{2}} r\rho$ & Goal \\
            \midrule 
            $s$ & $(s_0+1) \wedge (s_0+1)$ & $s_0$ & $\geqslant s_0$ \\
            $q_0$ & $1+0^++\f 12{}^+$ & $1+(-\f 12)^+$ & $\geqslant \f 12{}^+$\\
            $q_\infty$ & $-1+1^{+}+0^{+}$ & $-1+1^+$ & $\geqslant 0^+$ \\
            \bottomrule 
        \end{tabular}
    \end{subtable}
    \hfill
    \begin{subtable}{0.48\textwidth}
        \centering
        \caption{For $\jb{u}^{\frac{1}{2}+\frac{3}{2}\delta} (f-1)$: \\ $(\lambda_0,s,q_0,\tilde{p}_\i) = (0,s_0,\f 12{}^+,(-1)^{-\prime})$}
        \begin{tabular}{l|c|r}
            \toprule 
             & $\jb{u}^{\frac{1}{2}+\frac{3}{2}\delta} \omb$ & Goal  \\
            \midrule 
            $s$ & $s_0$ & $\geqslant s_0$ \\
            $q_0$ & $\f 12{}^+$ & $\geqslant \f 12{}^+$\\
            $q_\infty$ & $0^{-\prime}$ & $\geqslant 0^{-\prime}$ \\
            \bottomrule 
        \end{tabular}
    \end{subtable}
\end{table}
    
    We use \eqref{eq:22.sphere.transest} and \eqref{eq:22.sphere.transest.2} to prove the mixed norms estimates for $\omb$ and $f-1$, i.e., we bound the last terms in \eqref{eq:BA.weak.integrated}. The estimate for $\omb$ is carried out using \eqref{eq:22.sphere.transest.2}, and the numerology is checked in  Table~\ref{table:top.L2Li}. Finally, using $\nab_4 (f-1) = 2\omb$ (see \eqref{gauge.con.metric}), \eqref{eq:22.sphere.transest} and the mixed norm bounds for $\omb$ that we just established, we obtain the desired mixed norm bounds for $f-1$.

    \begin{table}[h!]
    \centering
    \caption{Mixed norms estimates for $\jb{u}^{\frac{1}{2}} \omb$ using \eqref{eq:22.sphere.transest.2}: $(\lambda_0,s,q_0,p_\i) = (0,s_0,(-\f 12)^+,0)$}
    \label{table:top.L2Li}

        \begin{tabular}{l|cc|r}
            \toprule 
             & $\jb{u}^{\frac{1}{2}} |\eta|^2$ & $\jb{u}^{\frac{1}{2}} \rho$ & Goal \\
            \midrule 
            $s$ & $s_0$ & $s_0$ & $\geqslant s_0$ \\
            $q_0$ & $0^++\f 12{}^+$ & $(-\f 12)^+$ & $\geqslant (-\f 12)^+$\\
            $q_\infty$ & $1^+ + 0^+$ & $1^+$ & $\geqslant 1^{+}$ \\
            \bottomrule 
        \end{tabular}
\end{table}

    \pfstep{Step~3: Transport estimates for the strong bootstrap} Finally, we bound the remaining quantities in \eqref{eq:BA.strong.integrated}. Our primary tool is again the transport estimate \eqref{eq:main.int.transest}, but we now use the initial data bound \eqref{eq:data.strong.int} in addition to \eqref{eq:data.weak.int} (the latter is used only for $\chib$, $f - 2 \omegab r - 1$; see Table~\ref{table:chib.f}). Since the argument is similar to the previous ones, we omit the repetitive details and only sketch the main differences. 
    
    First, we observe that the $s$ and $q_0$ rows would be exactly the same as for the weak bootstrap in Step~2; hence, it suffices to check only the $q_\i$ row. For the $q_\i$ row, we begin with two observations:
     \begin{enumerate}[i)]
            \item The last index in the $\| \cdot \|$ norm for each geometric quantity in \eqref{eq:BA.strong.integrated} is at most $\frac{1}{2}^{+}$ less than the value of the last index in the $| \cdot |$ norm of the same quantity in \eqref{eq:BA.strong.pointwise} and \eqref{eq:BA.weak.pointwise} (for $\chibh, \trchb + 2 r^{-1}, f - 2 \omegab r - 1$).
            \item Except for $\alpha$ (which we already closed) and $K-r^{-2}$, $\chibh, \trchb + 2 r^{-1}, f - 2 \omegab r - 1$ (which will be treated later), the difference is exactly $\frac{1}{2}^{+}$. 
    \end{enumerate}
    
    With these observations, we claim that the desired estimates for all quantities in \eqref{eq:BA.strong.integrated} except for $K - r^{-2}, \chibh, \trchb + 2 r^{-1}, f - 2 \omegab r - 1$ follow from revisiting the $q_\i$ rows in Tables~\ref{table:chi}.(B), \ref{table:bt}.(B), \ref{table:gamma.eta.rho.sigma}.(C), \ref{table:gamma.eta.rho.sigma}.(E) in the proof of Proposition~\ref{prop:fixed.sphere}. Indeed, in view of ii), the goal $q_{\i}$ for each of these quantities is shifted down by $\frac{1}{2}^{+}$. Moreover, when we apply \eqref{eq:product.for.integrated} in lieu of \eqref{eq:product.for.pointwise} to estimate products, the $p_{\i}$ exponent of the quantity estimated in $\| \cdot \|$ (which is only one at a time) is shifted down by at most $\frac{1}{2}^{+}$ in view of i), which meets the goal. 

    To conclude the proof, it remains to estimate $K - r^{-2}, \chibh, \trchb + 2 r^{-1}, f - 2 \omegab r - 1$. The estimate for $K - r^{-2}$ follows from those proved so far from \eqref{eq:constK}; we note that $(2r)^{-1} (\trchb + 2 r^{-1})$ contributes the main term. Finally, we check the $q_\i$ exponent for the remaining quantities in Table~\ref{table:chib.f} below. We remark that the goal $q_\i$ exponents are determined by the initial data assumption \eqref{eq:BA.weak.integrated}, and the nonlinear estimates (tracked by the table) exceed the goals by a large margin.
    
    \begin{table}[h!]
    \centering
    \caption{Tables for $\chibh$, $\trchb + 2r^{-1}$, $f -2\omb r -1$}
    \label{table:chib.f}
     \begin{subtable}{0.96\textwidth}
        \centering
        \caption{Strong bootstrap for $\chibh, \trchb + 2r^{-1}$}
        \begin{tabular}{l|ccccc|r}
            \toprule 
             & $(\eta \cdot\eta,\eta\hot \eta)$ & $(\chibh \cdot\chih, (\trchb + \f 2r) \chih)$ & $(\f 1r \chih, \f 1r (\trch - \f 2r))$ & $\nab \eta$ & $\rho$ & Goal \\
            \midrule 
            $q_\infty$ & $2 + \frac{3}{2}^{-}$ & $(1+\frac{3}{2}^{-}) \wedge (\frac{1}{2}^{+} + 2)$ & $1 + \frac{3}{2}^{-}$ & $1 + \frac{3}{2}^{-}$ & $\frac{5}{2}^{-}$ & $\geqslant \frac{3}{2}^{+}$ \\
            \bottomrule 
        \end{tabular}
    \end{subtable}
    \bigskip 

    \begin{subtable}{0.48\textwidth}
        \centering
        \caption{Strong bootstrap for $f - 2\omb r -1$}
        \begin{tabular}{l|cc|r}
            \toprule 
             & $r|\eta|^2$ & $r\rho$ & Goal \\
            \midrule 
            $q_\infty$ & $(-1)+2+\frac{3}{2}^{-}$ & $(-1)+\frac{5}{2}^{-}$ & $\geqslant \frac{1}{2}^{+}$ \\
            \bottomrule 
        \end{tabular}
    \end{subtable}
\end{table}
    \end{proof}

\begin{remark}[Propagation of Bondi--Sachs peeling, II] \label{rem:strong-peeling-geom}
Together with Remark~\ref{rem:strong-peeling-rp}, the purpose of this remark is to provide more details of the propagation of Bondi--Sachs peeling in the context of the spacelike-characteristic initial value problem alluded to in Remark~\ref{rem:strong-peeling-spnull}. Observe that the modified bootstrap assumption \eqref{eq:BA.verystrong.integrated} follows from the same argument as in Step~3 of the preceding proposition. Here, the important difference is that the initial data for the transport estimates are only needed on $\Sigma = B_{1}(0)$, which allows us to use $q_{\infty}$ not permitted by \eqref{eq:data.strong.int} in the Cauchy data case.
\end{remark}

\section{Construction of the time function}\label{sec:tau}

Fix $\tilde{\delta}>0$ and $u_0< 1$. Let
\[
F(u,r) = \int_0^{u-u_0} \frac{dv}{\jb{v}^{1+\tilde{\delta}}} + \log(r + (u-u_0)/2).
\]
Recalling that $u+r\geqslant 1$, this is well defined in $S$. We compute
\[
\partial_r F = \frac{1}{r + (u-u_0)/2}, \qquad \partial_u F = \frac{1}{\jb{u-u_0}^{1+\tilde{\delta}}} + \frac{1}{2(r + (u-u_0)/2)}.
\]
Clearly in $u+r\geqslant 1$ we have $\partial_r F >0$ and
\[
f\partial_r F - 2\partial_u F = - \frac{2}{\jb{u-u_0}^{1+\tilde{\delta}}}\left( 1 - (f-1)\frac{\jb{u-u_0}^{1+\tilde{\delta}}}{r + (u-u_0)/2} \right).
\]
Observing that in $u+r\geqslant 1$
\[
\frac{\jb{u-u_0}^{1+\tilde{\delta}}}{r + (u-u_0)/2} \lesssim \frac{\jb{u}^{1+\tilde{\delta}}}{r}
\]
we see that $f\partial_r F - 2\partial_u F<0$ if $\jb{u}^{1+\tilde{\delta}}|f-1|_{2, 1^+, -1} < \epsilon$ for $\epsilon$ sufficiently small (depending on $\tilde{\delta}, u_0$). Thus the level sets of $F$ are spacelike in $u+r \geqslant 1$ under this assumption on $f$.

We wish to modify $F$ near $r=0$ so that the level sets agree with those of $u+r$ in this region. To do this, we observe that the level sets of $F$ are tangent to the level sets of $u+r$ when $\partial_r F = \partial_u F$, which occurs on the curve $\mathcal{C}$ given by $2r = \jb{u-u_0}^{1+\tilde{\delta}}-(u-u_0)$. We will choose $u_0$ so that $\mathcal{C}$ is tangent to $u+r=1$, our initial data surface. A brief calculation shows that if $u_1$ is the unique solution to 
\[
u_1\jb{u_1}^{\tilde{\delta}-1} = -(1+\tilde{\delta})^{-1}
\]
then taking
\[
u_0 = 1-\frac{1}{2}(u_1 + \jb{u_1}^{1+\tilde{\delta}}) <1,
\]
$\mathcal{C}$ meets $u+r=1$ tangentially at $u=u_0+u_1$, $r = 1-u_0-u_1 = (\jb{u_1}^{1+\tilde{\delta}}-u_1)/2>0$.

In the region defined by
\[
R_1 = \{(u, r)\in S:  u\geqslant u_0+u_1, \quad \jb{u-u_0}^{1+\tilde{\delta}}-(u-u_0) \geqslant 2r  \}
\]
we define $\tau = u+r$. Since in this region $r\lesssim \jb{u}^{1+\tilde{\delta}}$ we can bound $\abs{f-1} \lesssim \epsilon$ under the same assumption as above and so $\tau$ has spacelike level sets for $\epsilon$ sufficiently small.

We next wish to continue $\tau$ continuously to the region
\[
R_2 = S \setminus R_1
\]
such that the level sets of $\tau$ agree with those of $F$ in $R_2$, and $\tau$ is continuous across the boundary. Doing so will necessarily result in a $C^{1,1}_{loc.}$ function, smooth on $R_1\cup R_2 \setminus \mathcal{C}$. The line $u+r = \tau$ meets $\mathcal{C}$ when 
\[
2\tau = \jb{u-u_0}^{1+\tilde{\delta}}+u+u_0
\]
We can verify that  $\frac{d\tau}{du}>0$ for $u>u_0+u_1$, so we may invert this relation to give a map $U : [1, \infty) \to [u_0 +u_1, \infty)$ which is smooth on $(1, \infty)$ such that
\[
2\tau = \jb{U(\tau)-u_0}^{1+\tilde{\delta}}+U(\tau)+u_0
\]
We let
\[
G(\tau) = F(U(\tau), \tau -U(\tau)).
\]
Recalling that on $\mathcal{C}$ we have $\partial_r F = \partial_u F$, we can verify that
\[
G'(\tau) = \frac{\partial F}{\partial r}(U(\tau), \tau -U(\tau)) >0
\]
so that $G:[1,\infty) \to [F(u_0+u_1, 1-u_0-u_1), \infty)$ is a bijection, and $G^{-1}$ is smooth on $(F(u_0+u_1, 1-u_0-u_1), \infty)$. We also observe that since $\partial_rF-\partial_uF$ vanishes only on $\mathcal{C}$, and considering the limit as $r\to \infty$, we must have
\[
\frac{\ud}{\ud r}F(1-r, r) >0
\]
for $r> 1-u_0-u_1$. Together with the fact that $\partial_r F>0$ this implies that $F(u,r) \geqslant F(u_0+u_1, 1-u_0-u_1)$ for $(u,r) \in R_2$, with equality only when $(u,r)=(u_0+u_1, 1-u_0-u_1)$. Accordingly, in $R_2$ we can define $\tau$ by $G(\tau) = F(u,r)$, and by construction $\tau$ and its first derivative will be continuous across $\mathcal{C}$. We also note that $F(1-r, r) \to \infty$ as $r\to \infty$, so that each level set of $\tau = \tau_0>1$ meets the initial data surface at exactly one value of $r>0$.

\begin{figure}
    \centering
    \includegraphics[width=12cm]{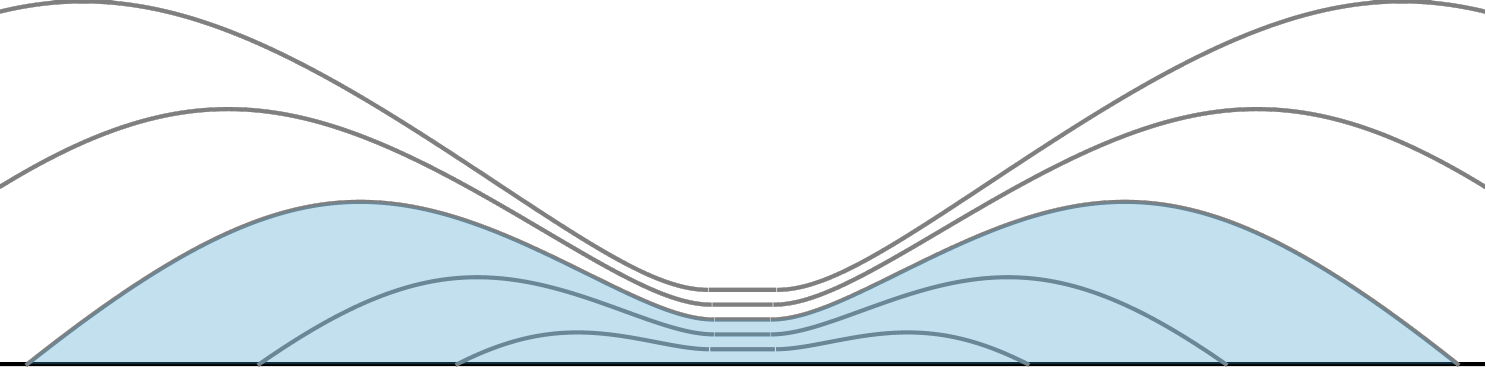}
    \caption{Level sets of $\tau$ and the region $S_T$ for $\tilde{\delta}=1$}
    \label{fig:placeholder} 
\end{figure}

\begin{lemma}
    For $\tilde{\delta}>0$, the function $\tau$ as constructed above is continuous on $\{u+r\geqslant 1\}$, $C^{1,1}_{loc.}$ on $\{u+r> 1\}$ and smooth on $\{u+r> 1\}\setminus \mathcal{C}$. For $T>1$, the region $S_T = \{(u,r) : u+r\geqslant 1, \tau \leqslant T\}$ is a simply connected lenticular domain with spacelike boundaries if $\jb{u}^{1+\tilde{\delta}}|f-1|_{2, 1^+, 1} < \epsilon$ for $\epsilon$ sufficiently small, depending on $\tilde{\delta}$.
\end{lemma}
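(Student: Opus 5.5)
The plan is to verify the regularity claims region by region, and then to establish the shape of $S_T$ and the causal character of its boundary pieces.

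\textbf{Regularity.} In $R_1$ we have $\tau=u+r$, and in the interior of $R_2$ we have $\tau=G^{-1}\circ F$. The function $F$ is smooth on $\{u+r\geqslant 1\}$, since there $r+(u-u_0)/2\geqslant \frac12(1-u_0)+\frac r2>0$. Moreover $G^{-1}$ is smooth on $(F(u_0+u_1,1-u_0-u_1),\infty)$, and $F$ takes its minimum value only at the single tangency point. Hence $\tau$ is smooth on $\{u+r>1\}\setminus\mathcal C$.

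Continuity across $\mathcal C$ holds by construction. Indeed, at a point $(U(\tau_0),\tau_0-U(\tau_0))\in\mathcal C$ both definitions give $\tau_0$, since $G(\tau_0)=F$ there.

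For $C^1$ matching we compare gradients on $\mathcal C$. On the $R_1$ side the gradient is $(1,1)$. On the $R_2$ side it is $\nabla F/G'(\tau)$. Since $G'(\tau)=\partial_rF$ and $\partial_uF=\partial_rF$ on $\mathcal C$, this is also $(1,1)$.

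For $C^{1,1}_{loc}$ we use the smoothness of each piece up to $\mathcal C$: $G^{-1}\circ F$ extends smoothly to a neighbourhood of $\mathcal C$ away from the tangency point. A function that is piecewise smooth up to a smooth interface and $C^1$ across it has a Lipschitz gradient. Near the tangency point we are on the boundary $u+r=1$, which is excluded from the $C^{1,1}$ claim.

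\textbf{Spacelike level sets.} A level set of $\tau$ is spacelike exactly when $f\partial_r\tau-2\partial_u\tau<0$ and $\partial_r\tau>0$. With $g^{-1}$ as in \eqref{eq:g.inverse},
\[
g^{-1}(\ud\tau,\ud\tau)=\partial_r\tau\,(f\partial_r\tau-2\partial_u\tau)
\]
up to the $\gamma^{-1}$ part. The $\gamma^{-1}$ part vanishes since $\tau=\tau(u,r)$, and we also drop the $b$-terms, which act only in angular directions.

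In $R_2$, $\ud\tau$ is a positive multiple of $\ud F$. The computation preceding the lemma gives $f\partial_rF-2\partial_uF<0$ once the bound on $f-1$ is converted to pointwise smallness of $\jb{u}^{1+\tilde\delta}|f-1|/r$ via Lemma~\ref{lem:products}. The hypothesis of the lemma as stated has weight index $1$ rather than $-1$; one checks that the stated norm controls what is needed, and I would expect the intended index to be $-1$.

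In $R_1$, $f-2<0$ follows from $|f-1|\lesssim\epsilon$, which in turn follows from $r\lesssim\jb{u}^{1+\tilde\delta}$ in $R_1$. On $\mathcal C$ the one-sided conditions agree by the $C^1$ matching. The initial boundary $\tilde\Sigma_1$ is a level set of $u+r$, hence spacelike for the same reason.

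\textbf{Shape of $S_T$.} For each fixed $u$, $\tau$ is strictly increasing in $r$, because $\partial_r\tau>0$ in both regions. Each level set $\{\tau=\tau_0\}$, $\tau_0>1$, meets $\{u+r=1\}$ exactly once; this is the monotonicity $\frac{\ud}{\ud r}F(1-r,r)>0$ established before the lemma. Consequently
\[
S_T=\{(u,r):\ u_T\leqslant u,\ \max(0,1-u)\leqslant r\leqslant r_T(u)\}
\]
for a continuous function $r_T$ defined on an interval $[u_T,u^\ast]$, where $r_T(u^\ast)=0$. For large $u$ the level set lies in $R_1$, and there it is $u+r=T$. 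This exhibits $S_T$ as a region lying between two graphs over a $u$-interval, meeting at the corner points. Such a region is a simply connected lenticular domain bounded by the two spacelike hypersurfaces $\tilde\Sigma_1$ and $\Sigma_T$ (together with the axis $r=0$).

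\textbf{Main obstacle.} The hardest step is to ensure that the spacelike condition in $R_2$ holds uniformly for arbitrarily large $r$. This is handled by the factor $\jb{u-u_0}^{1+\tilde\delta}/(r+(u-u_0)/2)$, which is bounded by $\jb u^{1+\tilde\delta}/r$ with constants depending on $u_0$, and hence on $\tilde\delta$.
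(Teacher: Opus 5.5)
Your proposal is correct and follows the paper's own argument, which is the construction of $\tau$ carried out in Section~\ref{sec:tau} just before the lemma. The ingredients are the same: $\tau=u+r$ on $R_1$ and $\tau=G^{-1}\circ F$ on $R_2$; $C^1$ matching across $\mathcal{C}$ via $G'(\tau)=\partial_r F$ and $\partial_uF=\partial_rF$; and spacelikeness from the sign of $f\partial_rF-2\partial_uF$, using $|f-1|\lesssim\epsilon\, r\jb{u}^{-1-\tilde{\delta}}$ together with $r\lesssim\jb{u}^{1+\tilde{\delta}}$ in $R_1$. You supply details the paper leaves implicit, namely the explicit gradient matching, the Lipschitz-gradient argument for $C^{1,1}$, and the description of $S_T$ as the region between two graphs. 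Your remark that the stated index $p_\infty=1$ gives a stronger norm than the $p_\infty=-1$ actually needed is also correct.
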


\section{Control of geodesics} \label{sec:geodesics}
\subsection{Geodesic surfaces emanating from the axis} \label{sec:centre}

Fix $u_0 \in (1, T)$. Let $\dot{\Gamma}(u_0)^\perp = \{v \in T_{\Gamma(u_0)}M : g(v, \dot{\Gamma}(u_0)) = 0\}$ be the linear subspace of directions orthogonal to the axis at $\Gamma(u_0)$. Taking $U \subset \dot{\Gamma}(u_0)^\perp$ a sufficiently small open neighbourhood of the origin, it follows from elementary properties of the exponential map that $\exp_{\Gamma(u_0)}:U \to M$ is a smooth embedding, and that $N=\exp_{\Gamma(u_0)}(U)\subset M$ is a smooth spacelike hypersurface. In this section we prove:
\begin{theorem}\label{thm:geomcont1}
    Assume the bootstrap assumptions hold for $\epsilon$ sufficiently small and let $U = \{v \in \dot{\Gamma}(u_0)^\perp : g(v,v)<1/2\}$. Then  $N=N(u_0) = \exp_{\Gamma(u_0)}(U)$ is a smoothly embedded submanifold for as long as it remains in the image of $S_T$. Moreover, if $h, k$ are the first and second fundamental forms of the embedding then
    \[
    \|h-\delta\|_{H^{\frac{5}{2}^+}} + \|k\|_{H^{\frac{3}{2}^+}} \lesssim \epsilon^{\frac{1}{2}},
    \]
    with the implicit constant depending on $T$.
\end{theorem}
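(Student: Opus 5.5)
We plan to work entirely in the Newman--Unti chart, treating the spacelike geodesics normal to $\Gamma$ at $\Gamma(u_0)$ as perturbations of the Minkowskian geodesics $\{t = u_0\}$. Concretely, for each unit $v \in \dot\Gamma(u_0)^\perp$ we write the geodesic $s \mapsto \exp_{\Gamma(u_0)}(sv)$ in the coordinates $(u, r, \theta)$ as $(u(s), r(s), \theta(s))$. Near $s=0$ we use the expansions of Theorem~\ref{thm:coords}.ii) and Corollary~\ref{cor:loc.exist.bounds}, which give $f = 1+O(r^2)$, $b = O(r)$, $\gamma = \mathring\gamma + O(r^4)$, to see that $u(s) = u_0 - s + o(s)$, $r(s) = s + o(s)$ and $\theta(s) \to \theta_v$. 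This identifies the initial direction $v$ with a point of $S^2$, and thereby fixes the reference normal coordinates $x = s\,\hat x(\theta_v)$ on $N$.

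\textbf{Step 1 (ODE control).} We will write the geodesic equation using the Christoffel symbols of \eqref{formofthemetric}. These are built from $f-1$, $b$, $\gamma-\mathring\gamma$ and their first derivatives, and hence from $\omegab$, $\xib$, $\chi$, $\chib$, $\eta$ through Proposition~\ref{prop:gauge.con}. We then compare with the Minkowski solution $u=u_0-s$, $r=s$, $\theta=\theta_v$.

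The pointwise bootstrap bounds \eqref{eq:BA.weak.pointwise} carry the centre weights $(r/(\jb{u}+r))^{p_0}$ with $p_0 = 1^+$, so the deviations are $O(\epsilon\, r^{1+\delta})$ relative to the Minkowskian values, uniformly for $u \le T$. A Gr\"onwall argument for $s < 1/\sqrt2$ then gives $|u(s)-u_0+s| + |r(s)-s| \lesssim_T \epsilon s^{2+\delta}$, and similarly $|\theta(s)-\theta_v| \lesssim \epsilon s^{1+\delta}$.

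Differentiating the geodesic flow with respect to $v$ gives Jacobi fields, which are controlled in the same way. This shows that $\exp_{\Gamma(u_0)}$ restricted to $U$ is an embedding as long as it stays in the image of $S_T$; smoothness of the resulting $N$ is automatic from smoothness of $g$.

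\textbf{Step 2 (induced data).} Next we express $h$ and $k$ in the normal coordinates $x$. The metric $h$ is $g$ pulled back by the map from Step 1. For $k$ we use the unit normal, which is a combination of $e_3$ and $e_4$ with coefficients close to $\tfrac12$; $k$ is then expressed through $\chi$, $\chib$, $\eta$, $\omegab$, $\xib$ projected onto $TN$.

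Higher derivatives of $h-\delta$ and $k$ are estimated by commuting with $r\nab$, $r\nab_4$ and $\jb{u}\nab_u$. For this we use the norms $|\cdot|_s$, Lemma~\ref{Hs equivalence} and Lemma~\ref{lem:products}, together with the fact that vector fields tangent to $N$ are combinations of $\partial_r$, $\partial_u$ and $r^{-1}\partial_\theta$ with bounded coefficients. The aim is the pointwise bound
\[
|x|^{i}|\partial^i(h-\delta)| + |x|^{i+1}|\partial^i k| \lesssim \epsilon\, |x|^{1+\delta}, \qquad i \le s_0-4,
\]
as anticipated in \S\ref{subsubsec:centre}, where the $\jb{u}$ weights are absorbed into $T$-dependent constants.

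\textbf{Step 3 (Sobolev norms).} We then integrate the bound of Step 2 over $|x|<1/\sqrt2$. It gives $|x|^{\delta+1-i}$ growth of the $i$-th derivatives, which is square integrable in three dimensions exactly when $i < \tfrac52+\delta$. Hence $h-\delta \in H^{2}$ with an $O(\epsilon)$ bound, and also in weighted spaces reaching order $\tfrac52 + \delta$. To reach the fractional norm $H^{\frac52^+}$ we interpolate between this low-regularity but sharp-weight bound and a high-regularity bound with worse weight, $\|h-\delta\|_{H^{3}} \lesssim \epsilon\,\|\,|x|^{\delta-2}\|_{L^2}$, which diverges. The interpolation is done on dyadic annuli $|x| \sim 2^{-j}$: rescaling to unit size gives $H^{s}$ norms of size $\epsilon\, 2^{-j(1+\delta-s+3/2)}$, and these are summable for $s = \tfrac52 + \delta/2$. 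The square root $\epsilon^{1/2}$ reflects a crude interpolation and leaves room for error.

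The main obstacle is Step 2, namely relating the $r$-weighted, $r\nab$- and $r\nab_4$-based regularity of the Newman--Unti quantities, which are singular at $r=0$, to genuine Cartesian derivatives on $N$. The Jacobian of the map from Step 1 must be shown to preserve the $|x|^{1+\delta}$ gain at each order. To handle this, we will first prove the estimates for the map $x \mapsto (u,r,\theta)$ itself, namely that the map deviates from the Minkowskian one by $O(\epsilon |x|^{2+\delta})$ with derivatives losing one power of $|x|$ each. Composition then preserves the gain.
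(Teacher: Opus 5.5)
Your proposal is correct and follows essentially the paper's route: you control the geodesics normal to $\Gamma$ at $\Gamma(u_0)$ as $O(\epsilon\lambda^{1+\delta})$ perturbations of the Minkowskian ones, together with their variations in the initial direction. You then compute $h$ and $k$ (with normal $\approx\tfrac12(e_3+e_4)$) under scale-invariant $\lambda$-weighted derivative bounds, and sum over dyadic annuli to reach $H^{\frac52^+}\times H^{\frac32^+}$. The paper writes the geodesic and variational equations in frame components $(\slashed{v},v^3,v^4)$ and $(\slashed{w},w^3,w^4)$ driven by Ricci coefficients, which avoids the $1/r$ coordinate singularity your Christoffel-symbol ODE must absorb by an integrating factor. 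Also, each derivative transverse to $H_u$ costs two orders in $|\cdot|_s$, so you obtain roughly $s_0/2$ rather than $s_0-4$ derivatives on $N$; this is still far more than the three needed.
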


\subsubsection{$S^2-$adapted calculus on embedded surfaces} Suppose that $j: [0, \lambda_0)_\lambda \times S^2_{\varphi} \to S_T$ is a smooth embedding with coordinate maps $u = u(\lambda, \varphi^a), r = r(\lambda, \varphi^a), \theta^A = \theta^A(\lambda, \varphi^a)$, and let $N = j([0, \lambda_0)_\lambda \times S^2_{\varphi})$ be the image. We will use lower-case letters to denote indices associated to the $S^2_{\varphi}$ spheres in the domain and capital letters to denote 
indices associated to the $S^2_{\theta}$ spheres in the image. We write
\begin{align*}
j_*\left(\frac{\partial}{\partial \lambda} \right) &= \slashed{v}^Ae_A + v^3e_3 +v^4 e_4 \\
j_*\left(\frac{\partial}{\partial \varphi^a} \right) &= \slashed{w}^A_ae_A + w^3_ae_3 +w^4_a e_4
\end{align*}
so that
\begin{align}
    \slashed{v}^A &= \frac{\partial \theta^A}{\partial \lambda} -b^A \frac{\partial u}{\partial \lambda}  &v^3&=\frac{1}{2} \frac{\partial u}{\partial \lambda} & v^4&=\frac{\partial r}{\partial \lambda} +\frac{f}{2} \frac{\partial u}{\partial \lambda} \label{eq:Nembedd1}\\ \label{eq:Nembedd2}
    \slashed{w}^A_a &= \frac{\partial \theta^A}{\partial \varphi^a} -b^A \frac{\partial u}{\partial \varphi^a}  &w^3_a&=\frac{1}{2} \frac{\partial u}{\partial \varphi^a} & w^4_a&=\frac{\partial r}{\partial \varphi^a} +\frac{f}{2} \frac{\partial u}{\partial \varphi^a} 
\end{align}
For an $S$-tangent tensor field $\phi$ defined on $N\subset S_T$ we let
\begin{align}
    \frac{\slashed{D}}{\partial\lambda} \phi &= \slashed{v}^A \nab_A \phi + v^3 \nab_3 \phi + v^4 \nab_4 \phi\nonumber \\
    \frac{\slashed{D}}{\partial\varphi^a}  \phi &= \slashed{w}^A_a \nab_A \phi + w^3_a \nab_3 \phi + w^4_a \nab_4 \phi \label{eq:sDdef}
\end{align}
where the derivatives can be computed on any smooth extension of $\phi$ away from $N$ and will not depend on this extension. We can compute from \eqref{eq:Nembedd1}, \eqref{eq:Nembedd2} that\footnote{Strictly we should write $\eta^A\circ j$ instead of $\eta^A$, etc.\ here and below, but for notational clarity we will assume the embedding map can be inserted from context.}
\begin{align}
    \frac{\slashed{D}}{\partial\varphi^a} \slashed{v}^A - \frac{\slashed{D}}{\partial\lambda}  \slashed{w}^A_a &= 4(v^3 w^4_a - v^4 w^3_a) \eta^A + (w^3_a \slashed{v}^B - v^3 \slashed{w}_a^B)\chib_B{}^A+ (w^4_a \slashed{v}^B - v^4 \slashed{w}_a^B)\chi_B{}^A \nonumber\\
    \frac{\slashed{D}}{\partial\varphi^a} v^3 - \frac{\slashed{D}}{\partial\lambda}  w^3_a &=0 \label{eq:vwcom}\\
    \frac{\slashed{D}}{\partial\varphi^a} v^4 - \frac{\slashed{D}}{\partial\lambda}  w^4_a &= 2 (v^3 w^4_a - v^4 w^3_a)\omegab - (w^3_a \slashed{v}^B - v^3 \slashed{w}_a^B)\xib_B \nonumber
\end{align}
Making use of the identity
\begin{align*}
    \left(\frac{\slashed{D}}{\partial\varphi^a}\frac{\slashed{D}}{\partial\lambda} -\frac{\slashed{D}}{\partial\lambda} \frac{\slashed{D}}{\partial\varphi^a} \right) \phi &= \left(\frac{\slashed{D}}{\partial\varphi^a} \slashed{v}^A - \frac{\slashed{D}}{\partial\lambda}  \slashed{w}^A_a \right) \nab_A \phi + \left( \frac{\slashed{D}}{\partial\varphi^a} v^3 - \frac{\slashed{D}}{\partial\lambda}  w^3_a\right) \nab_3 \phi + \left( \frac{\slashed{D}}{\partial\varphi^a} v^4 - \frac{\slashed{D}}{\partial\lambda}  w^4_a\right) \nab_4 \phi \\
    &\qquad + (w^3_a v^4 - w_a^4 v^3) [\nab_3, \nab_4]\phi+ (w^3_a \slashed{v}^A - \slashed{w}_a^A v^3) [\nab_3, \nab_A]\phi \\
    &\qquad +  (w^4_a \slashed{v}^A - \slashed{w}_a^A v^4) [\nab_4, \nab_A]\phi+  (\slashed{w}^A_a \slashed{v}^B - \slashed{w}_a^B \slashed{v}^A) [\nab_A, \nab_B]\phi
\end{align*}
which follows from \eqref{eq:sDdef} we deduce
\begin{align}
    \left(\frac{\slashed{D}}{\partial\varphi^a}\frac{\slashed{D}}{\partial\lambda} -\frac{\slashed{D}}{\partial\lambda} \frac{\slashed{D}}{\partial\varphi^a} \right) \phi &= (w^3_a v^4 - w_a^4 v^3) \left([\nab_3, \nab_4]\phi - 2 \omegab \nab_4 \phi - 4 \eta^A \nab_A \phi \right) \nonumber \\
    &\qquad + (w^3_a \slashed{v}^A - \slashed{w}_a^A v^3) \left( [\nab_3, \nab_A]\phi - \xib_A \nab_4\phi +\chib_A{}^B\nab_B\phi \right)\label{eq:sDcomm1} \\
    &\qquad + (w^4_a \slashed{v}^A - \slashed{w}_a^A v^4)\left( [\nab_4, \nab_A]\phi +\chi_A{}^B\nab_B\phi\right)+  (\slashed{w}^A_a \slashed{v}^B - \slashed{w}_a^B \slashed{v}^A) [\nab_A, \nab_B]\phi. \nonumber
\end{align}
We note that all terms involving derivatives of $\phi$ in fact cancel on the right-hand side by Lemma \ref{lem:CK.733}, so that the commutator depends only on the $0-$jet of $\phi$.

In order to control the embedding map it will be convenient to extend our covariant derivatives to act on objects carrying $S^2_{\varphi}$-tangent indices as well. Suppose that $\phi^{a_1\ldots a_k}{}_{b_1 \ldots b_l}{}^{A_1\ldots A_K}{}_{B_1\ldots B_L}$ is a tensor carrying both $S^2_{\varphi}$ and $S^2_{\theta}$ tangent indices. Let $\mathring{\gamma}(\lambda) = \mathring{\gamma}_{ab}(\lambda) \ud \varphi^a \ud \varphi^b$ be the metric on the round sphere of radius $\lambda$, and $\mathring{\Gamma}_a{}^{b}{}_c$ the associated Christoffel symbols. We define (suppressing the $S^2_{\theta}$ tangent indices):
\begin{align}
    (\slashed{D}_\lambda \phi)^{a_1\ldots a_k}{}_{b_1 \ldots b_l} &= \frac{\slashed D}{\partial\lambda}\phi ^{a_1\ldots a_k}{}_{b_1 \ldots b_l} + \frac{k-l}{\lambda} \phi^{a_1\ldots a_k}{}_{b_1 \ldots b_l} \\
    (\slashed{D}_a \phi)^{a_1\ldots a_k}{}_{b_1 \ldots b_l} &= \frac{\slashed D}{\partial\varphi^a}\phi ^{a_1\ldots a_k}{}_{b_1 \ldots b_l} + \mathring{\Gamma}_a{}^{a_1}{}_c\phi ^{ca_2\ldots a_k}{}_{b_1 \ldots b_l}+\ldots+   \mathring{\Gamma}_a{}^{a_k}{}_c\phi ^{a_1\ldots a_{k-1}c}{}_{b_1 \ldots b_l}\nonumber \\
    &\qquad - \mathring{\Gamma}_a{}^{c}{}_{b_1} \phi^{a_1\ldots a_k}{}_{cb_2 \ldots b_l}-\ldots - \mathring{\Gamma}_a{}^{c}{}_{b_l} \phi^{a_1\ldots a_k}{}_{b_1 \ldots b_{l-1}c}
\end{align}
By construction these derivatives commute with contraction of $S^2_{\varphi}$-tangent indices among themselves and contraction of $S^2_{\theta}$-tangent indices among themselves, and furthermore annihilate $\mathring{\gamma}_{ab}(\lambda)$ and $\gamma_{AB}$. We also note that for any $\phi$ with both $S^2_{\varphi}$ and $S^2_{\theta}$ tangent indices
\begin{equation}
\slashed{D}_\lambda(\lambda \slashed{D}_a \phi) - \lambda \slashed{D}_a(\slashed{D}_\lambda\phi) = \lambda\left(\frac{\slashed{D}}{\partial\varphi^a}\frac{\slashed{D}}{\partial\lambda} -\frac{\slashed{D}}{\partial\lambda} \frac{\slashed{D}}{\partial\varphi^a} \right) \phi \label{eq:sDcomm2}
\end{equation}
and we have already determined the right hand side above.

We introduce the pointwise norm
\begin{align*}
|\phi|^N_0(\lambda, \varphi) &= \Bigg(\phi^{a_1\ldots a_k}{}_{b_1 \ldots b_l}{}^{A_1\ldots A_K}{}_{B_1\ldots B_L}\times \phi^{a'_1\ldots a'_k}{}_{b'_1 \ldots b'_l}{}^{A'_1\ldots A'_K}{}_{B'_1\ldots B'_L} \times \mathring{\gamma}_{a_1a'_1} \times\cdots \times \mathring{\gamma}_{a_ka'_k}\times\\
&\qquad \times \mathring{\gamma}^{b_1b'_1} \times\cdots \times \mathring{\gamma}^{b_lb'_l}\times \mathring{\gamma}_{A_1A'_1} \times\cdots \times \mathring{\gamma}_{A_KA'_K}\times \mathring{\gamma}^{B_1B'_1} \times\cdots \times \mathring{\gamma}^{B_LB'_L}\Bigg)^{\frac{1}{2}},
\end{align*}
where all contractions are with respect to the metric $\mathring{\gamma}_{ab}$. For $s\geqslant 1$ and $\phi$ sufficiently regular we define
\[
|\phi|^N_s(\lambda, \varphi) = \left(|\lambda \slashed{D}_\lambda \phi|^N_{s-1}+|\lambda \slashed{D} \phi|^N_{s-1} + | \phi|^N_{s-1}\right)^{\frac{1}{2}}.
\]
Finally, for $s\geqslant 0$ and $p_0 \in \mathbb{R}$ we define
\[
    |\phi|^N_{s,p_0}(\lambda_0) = \sup_{(0, \lambda_0)\times S^2} \frac{|\phi|^N_s(\lambda, \varphi) }{\lambda^{p_0}}.
\]

\begin{remark} We can formalise the discussion above in the language of vector bundles. $TS$ has a smooth subbundle consisting of vectors tangent to the surfaces $S_{u,r}$, call this bundle $\mathbb{S}S$. $S^2_\theta-$tangent vector fields are sections of this bundle. Forming tensor products of $\mathbb{S}S$ and its dual $\mathbb{S}^*S$ we have the bundle $\mathbb{S}^K{}_LS$ of rank $(K, L)$ $S^2_\theta-$tangent tensors. We define a connection $\underline{\nab}$
on $\mathbb{S}^K{}_LS$ by setting
\[
\underline{\nab}_X \phi = X^A\nab_A \phi + X^3\nab_3 \phi + X^4\nab_4 \phi
\]
where $X = X^Ae_A + X^3e_3+X^4e_4 \in T_pS$ and $\phi\in \Gamma\mathbb{S}^K{}_LS$. Similarly, if $R = \mathbb{R}_\lambda \times S^2_{\varphi}$ then we can define the bundle $\mathbb{S}^k{}_lR$ of rank $(k, l)$ $S^2_{\varphi}-$tangent tensors over $R$. We equip $\mathbb{S}^k{}_lR$ with the connection obtained by projecting Levi-Civita connection associated to the Euclidean metric $d\lambda^2 + \mathring{\gamma}(\lambda)$ onto the $S^2_{\varphi}$ spheres. Given a (local) immersion $j:R\to S$, we can construct the pullback bundle $j^*\mathbb{S}^K{}_LS$, which carries a natural connection, the pullback of $\underline{\nab}$. Objects on $R$ with mixed $S^2_{\varphi}$ and $S^2_{\theta}$ tangent indices should be thought of as local sections of $\mathbb{S}^k{}_lR \times j^*\mathbb{S}^K{}_LS$, which carries the product connection\footnote{Given this, the observation immediately following \eqref{eq:sDcomm1} should not be surprising.} $\underline{\slashed{D}}$, which is related to our previous definitions by
\[
\underline{\slashed{D}}_Y\phi = Y^\lambda \slashed{D}_\lambda \phi + Y^a\slashed{D}_a \phi
\]
where $Y = Y^\lambda \partial_\lambda +Y^a \partial_{\varphi^a}\in T_qR$ and $\phi \in \Gamma(\mathbb{S}^k{}_lR \times j^*\mathbb{S}^K{}_LS)$. Finally, we endow $\mathbb{S}^k{}_lR \times j^*\mathbb{S}^K{}_LS$ with the bundle metric induced by $\mathring{\gamma}(\lambda)$ in each fibre. Note that with this definition $\underline{\slashed{D}}$ is not a metric connection.
\end{remark}

\subsubsection{Geodesic equations}

 We wish to control the embedding functions of $N$ in our $(u, r, \theta)$ coordinate system.

Let $\hat{\gamma}_{u_0, \varphi}:[0, \lambda_0)\to TM$ be the integral curve of the geodesic vector field with $\hat{\gamma}_{u_0, \varphi}(0) = (\Gamma(u_0), V(u_0, \varphi) - \dot{\Gamma}(u_0)) \in TM$. The map taking $(\lambda, \varphi) \in (0, \lambda_0)\times S^2$ to $\gamma_{u_0, \varphi}(\lambda) = \Pi\circ \hat{\gamma}_{u_0, \varphi}(\lambda)$ for $\lambda_0$ sufficiently small is a smooth parameterisation of a punctured neighbourhood of $\Gamma(u_0)$ in $N$. For $(\lambda, \varphi) \in (0, \lambda_0)\times S^2$ we let $j(\lambda, \varphi)= \psi^{-1}\circ\Pi\circ \hat{\gamma}_{u_0, \varphi}(\lambda) \in S_T$. This smoothly extends to a map $j: [0, \lambda_0)\times S^2 \to S_T$ by setting $j(0, \varphi) = (u_0, 0, \varphi)$.

The condition that $\lambda \mapsto j(\lambda, \varphi)$ is geodesic for fixed $\varphi$ implies the equations
\begin{align}
    \slashed{D}_\lambda \slashed{v} &= -  (\chib \cdot \slashed{v}) v^3-  (\chi \cdot \slashed{v}) v^4 - 2 \xib v^3 v^3, \label{eq:geodsvA} \\
    \slashed{D}_\lambda v^3  &=- \frac{1}{2}\slashed{v}\cdot\chi\cdot \slashed{v} - 2 \slashed{v}\cdot\eta v^3  + 2\omegab v^3 v^3, \label{eq:geodsv3} \\
    \slashed{D}_\lambda v^4 &= - \frac{1}{2}\slashed{v}\cdot \chib \cdot\slashed{v} + 2 \slashed{v}\cdot \eta v^4- 2\omegab v^3 v^4 - \slashed{v}\cdot \xib v^3,\label{eq:geodsv4} 
\end{align}
which form a closed system of ODEs together with \eqref{eq:Nembedd1}. The initial conditions for the geodesics imply
\begin{align*}
\theta^A(0, \varphi) &= \varphi^A \quad u(0, \varphi) = u_0, \quad r(0, \varphi) = 0,  \\
\slashed{v}(0, \varphi) &= 0 \quad v^3(0, \varphi) = -\frac{1}{2} \quad v^4(0, \varphi) = \frac{1}{2}
\end{align*}
and we are guaranteed a solution with these initial conditions exists and depends smoothly on $\lambda, \varphi$ provided the curve does not exit $S_T$.

\begin{theorem}\label{thm:geomcont}
    Assume the bootstrap assumptions hold. Let 
    \[\lambda_{0} = \sup \{0<\lambda'\leqslant 1/2: j(\lambda, \varphi)\in S_T, \, \forall\, \lambda <\lambda', \varphi \in S^2\}
    \]
    and assume $s_1 \leqslant s_0/2-2$. Then for $\epsilon<\epsilon_0$ sufficiently small the estimates
    \begin{align}
        \abs{\frac{r}{\lambda}-1}^N_{s_1, 1^+}(\lambda_0)&\leqslant \epsilon^\frac{1}{2} & \abs{\frac{u-u_0}{\lambda}+1}^N_{s_1, 1^+}(\lambda_0)&\leqslant \epsilon^\frac{1}{2}  & \sup_{(0,\lambda_0)\times S^2} \lambda^{-2-\delta} d_{S^2, \mathring{\gamma}}(\theta, \varphi) &\leqslant \epsilon^\frac{1}{2} \nonumber \\
        |\slashed{v}|^N_{s_1, 1^+}(\lambda_0) &\leqslant \epsilon^\frac{1}{2} & \abs{v^3 +\frac{1}{2}}^N_{s_1, 1^+}(\lambda_0) &\leqslant \epsilon^\frac{1}{2} & \abs{v^4 -\frac{1}{2}}^N_{s_1, 1^+}(\lambda_0) &\leqslant \epsilon^\frac{1}{2}\label{eq:geodests1}\\
        |\slashed{w} - \delta|^N_{s_1-1, 1^+}(\lambda_0) &\leqslant \epsilon^\frac{1}{2} & \abs{w^3}^N_{s_1-1, 1^+}(\lambda_0) &\leqslant \epsilon^\frac{1}{2} &\abs{w^4}^N_{s_1-1, 1^+}(\lambda_0) &\leqslant \epsilon^\frac{1}{2}\nonumber
    \end{align}
    hold. Hence for $u_0<T-\epsilon^{\frac{1}{2}}/2$, we have $\lambda_0 = 1/2$.
\end{theorem}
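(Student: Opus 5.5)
We argue by a continuity (bootstrap) argument in $\lambda$, run on the ODE system \eqref{eq:Nembedd1}, \eqref{eq:geodsvA}--\eqref{eq:geodsv4} together with the variational equations for $(\slashed{w},w^3,w^4)$. Let $\lambda_1\leqslant\lambda_0$ be the supremum of those $\lambda'$ on which the estimates \eqref{eq:geodests1} hold with $\epsilon^{1/2}$ replaced by $2\epsilon^{1/2}$. Near $\lambda=0$ the initial conditions give $r=\lambda+O(\lambda^2)$, $u-u_0=-\lambda+O(\lambda^2)$, $\slashed v=O(\lambda)$ and so on, with everything smooth. Hence $\lambda_1>0$, provided the $1^+$ weight is checked against the smooth expansions. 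The smooth expansions come from Corollary~\ref{cor:loc.exist.bounds} and the expansions of Theorem~\ref{thm:coords}. We then improve each bound to $\epsilon^{1/2}$ on $(0,\lambda_1)$.

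\textbf{Step 1: the $\lambda$-transport equations.} Under the bootstrap, $u\sim u_0$ is bounded (depending on $T$) and $r\sim\lambda$. The right-hand sides of \eqref{eq:geodsvA}--\eqref{eq:geodsv4} then split into background and perturbation parts. Writing $\chi=\frac1r\gamma+(\chi-\frac1r\gamma)$ and $\chib=-\frac1r\gamma+(\chib+\frac1r\gamma)$, the background part is exactly the Minkowski part, for which $v^3\equiv-\frac12$, $v^4\equiv\frac12$, $\slashed v\equiv0$ solve the system. The perturbation part involves $\chih$, $\trch-\frac2r$, $\chibh$, $\trchb+\frac2r$, $\eta$, $\omegab$ and $\xib$. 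By \eqref{eq:BA.weak.pointwise} and Lemma~\ref{lem:products}, these are all pointwise $\lesssim_T \epsilon\, r^{0^+}$; the one exception is $\omegab$, which is of size $\epsilon r^{0^+}$ with no $r^{-1}$ loss.

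The $\slashed v$ equation carries a term $\frac1r\slashed v(v^4-v^3)\approx \slashed v/\lambda$, which is damping when combined with $\slashed D_\lambda$. We therefore treat it with an integrating factor, as in Lemma~\ref{Lem:bgtransport}, with $\lambda_0=\frac12$ in the notation there. The forcing for $\slashed v$, $v^3+\frac12$ and $v^4-\frac12$ is then $O(\epsilon\lambda^{0^+})$. Integrating from $0$ gives a bound $C_T\epsilon\lambda^{1^+}$, which improves $2\epsilon^{1/2}$ once $\epsilon$ is small.

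The embedding functions follow by integrating \eqref{eq:Nembedd1}. Since $\partial_\lambda r=v^4-\frac f2 v^3$ and $|f-1|\lesssim \epsilon r^{1+}$, we obtain $r/\lambda-1=O(\epsilon^{1/2}\lambda^{1^+})$. Similarly $\partial_\lambda u = 2v^3$ gives the bound for $u-u_0$. Finally $\partial_\lambda\theta=\slashed v+b\,\partial_\lambda u$, and $|b|\lesssim\epsilon r^{1+}$ yields the angular estimate $d(\theta,\varphi)\lesssim \epsilon\lambda^{2+\delta}$.

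\textbf{Step 2: derivatives.} Higher derivatives are obtained by commuting with $\lambda\slashed D_\lambda$ and $\lambda\slashed D_a$. The commutator \eqref{eq:sDcomm2} is given by \eqref{eq:sDcomm1}, which depends only on the $0$-jet of $\phi$ and involves curvature and connection coefficients. By Lemma~\ref{lem:CK.733}, these are the terms $\beta$, $\sigma$, $\betab$ and $K$, each multiplied by $\lambda\times(\text{bounded})$.

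The $w$ quantities satisfy the analogous linear transport system obtained from \eqref{eq:vwcom}. Here $\slashed D_\lambda w$ is expressed through $\slashed D_a v$ plus terms that are bilinear in $(v,w)$ times Ricci coefficients. The initial data are $\slashed w=\delta$ and $w^3=w^4=0$ at $\lambda=0$; this is why one derivative is lost, giving index $s_1-1$.

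Applying $(\lambda\slashed D)^k$ to the coefficients requires converting $\slashed D$ derivatives into $(r\nab_4,r\nab,\langle u\rangle\nab_u)$ derivatives via \eqref{eq:sDdef}. A $\lambda\slashed D_a$ derivative produces $\lambda w^3\nab_3$, hence a full $\nab_u$ at order $1$ rather than weighted by $\langle u\rangle$; each such factor costs two orders of the $|\cdot|_s$ norm. This is the source of the restriction $s_1\leqslant s_0/2-2$. Nonlinear products of the embedding quantities are handled with Sobolev on $S^2_\varphi$.

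\textbf{Main obstacle.} The delicate point is the behaviour at $\lambda\to0$ of the coefficients composed with $j$. The bootstrap norms degenerate at $r=0$ with weights $r^{p_0}$, so one must check that the weights $(-1)^+$ for curvature and $0^+$ for Ricci coefficients give forcing integrable at $\lambda=0$ with room for the $\lambda^{1^+}$ gain. In particular, the curvature terms in \eqref{eq:sDcomm1} appear multiplied by $\lambda$ and are of size $\epsilon r^{-1^+}\cdot\lambda$, so this margin is just barely available.

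\textbf{Conclusion.} Once the estimates close, $\tau(u,r)\leqslant u+r+O(\epsilon^{1/2})\leqslant u_0+\epsilon^{1/2}/2$ for $\lambda\leqslant\frac12$, using $u+r=u_0+O(\epsilon^{1/2}\lambda^{2^+})$ and the definition of $\tau$ near the axis. Since $u_0<T-\epsilon^{1/2}/2$, the surface stays inside $S_T$, so $\lambda_0=\frac12$.
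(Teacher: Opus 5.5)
Your proposal is correct and follows essentially the same route as the paper. Both run a continuity argument in $\lambda$ that integrates \eqref{eq:geodsvA}--\eqref{eq:geodsv4} and then \eqref{eq:Nembedd1} from the axis, and commute with $\lambda\slashed{D}_\lambda,\lambda\slashed{D}_a$ using \eqref{eq:sDcomm1}--\eqref{eq:sDcomm2}, whose commutator only sees the $0$-jet. Both lose two orders per derivative when converting $r\nab_3$ into $\brk{u}\nab_u$, which is where $s_1\leqslant s_0/2-2$ comes from, and both recover $(\slashed{w},w^3,w^4)$ from \eqref{eq:vwcom} one order lower, because the forcing contains $\slashed{D}_a v$.

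Two small corrections. First, \eqref{eq:Nembedd1} gives $\partial_\lambda r=v^4-fv^3$, not $v^4-\frac f2 v^3$, which would force $r/\lambda\to\frac34$. Second, the two-order loss already comes from $\lambda\slashed{D}_\lambda$, where $v^3\approx-\frac12$ multiplies $r\nab_3$ at full strength, and not only from the small $w^3$ in $\lambda\slashed{D}_a$.
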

Proof is by a further bootstrap argument and requires some intermediate lemmata. Let
\[
\Lambda = \sup\{0<\lambda'\leqslant \lambda_0: \eqref{eq:geodests1} \textrm{ hold with $\lambda_0$ replaced by $\lambda'$} \}.
\]
Examining the right-hand side of the geodesic equations, by smoothness we see that $\Lambda >0$.

\begin{lemma}
    Suppose $\phi = \phi^{A_1\ldots A_K}{}_{B_1\ldots B_L}$ is a sufficiently regular $S^2_{\theta}$-tangent tensor on $S_T$ and $s\leqslant s_1$, then
    \[
    \abs{\phi}^N_{s, p_0}(\Lambda) \lesssim \abs{\phi}_{2s+2, p_0, p_\infty}
    \]
    for any $p_\infty$, where the implicit constant depends on $T, s, p_\infty, \epsilon_0$.
\end{lemma}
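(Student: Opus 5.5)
We argue by induction on $s$, working with the pullback $\phi\circ j$ and comparing the $N$-adapted derivatives $\lambda\slashed{D}_\lambda$, $\lambda\slashed{D}_a$ with the spacetime weighted derivatives $r\nab_4$, $r\nab$, $\brk{u}\nab_u$ appearing in $|\cdot|_s$.

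\emph{Base case $s=0$.} Lemma~\ref{lem:products} gives the pointwise bound
\[
|\phi|_{\mathring\gamma}(u,r,\theta)\lesssim r^{-1}|\phi|_2(u,r)\leqslant \frac{r^{p_0}}{(\jb{u}+r)^{p_0+p_\infty}}|\phi|_{2,p_0,p_\infty}.
\]
Since $N\subset S_T$ and $T$ is fixed, $\jb{u}+r$ is bounded above and below by constants depending on $T$ (as $\lambda\leqslant 1/2$ and $|u-u_0|\lesssim\lambda$ by \eqref{eq:geodests1} on $(0,\Lambda)$), so the factor $(\jb{u}+r)^{-p_0-p_\infty}$ is harmless for any $p_\infty$. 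Moreover $r/\lambda\in[1/2,2]$ by \eqref{eq:geodests1}, so $r^{p_0}\sim\lambda^{p_0}$. The norm $|\cdot|^N_0$ uses $\mathring\gamma$ on $S^2_\theta$ indices at the point $j(\lambda,\varphi)$, which matches. This gives $|\phi|^N_{0,p_0}(\Lambda)\lesssim|\phi|_{2,p_0,p_\infty}$.

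\emph{Induction step.} By \eqref{eq:sDdef},
\[
\lambda\slashed{D}_\lambda\phi=\tfrac{\lambda}{r}\bigl(\slashed v^A r\nab_A\phi+v^4 r\nab_4\phi\bigr)+\lambda v^3\bigl(2\nab_u\phi-f\nab_4\phi\bigr),
\]
and similarly $\lambda\slashed{D}_a\phi$ with $(\slashed w_a,w^3_a,w^4_a)$; the $\varphi$-index terms from $\mathring\Gamma$ and $(k-l)/\lambda$ only act on the coefficients. So $\lambda\slashed{D}\phi$ is a sum of products of coefficients $c\in\{\lambda/r\,\slashed v,\ \lambda/r\,v^4,\ \lambda v^3,\ \lambda v^3 f,\dots\}$ with the fields $r\nab\phi$, $r\nab_4\phi$, $\nab_u\phi$. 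Since $\jb{u}\sim_T 1$, $\nab_u\phi$ has the same weighting as $\brk{u}\nab_u\phi$ and costs two units of $s$; this is why the index doubles. The coefficient bounds come from \eqref{eq:geodests1} at order $s_1$ (or $s_1-1$ for $w$). For the factors $f$ and $\lambda/r$ we use the bootstrap bound on $f-1$ composed with $j$, plus the induction hypothesis. We then apply a Leibniz rule in $|\cdot|^N_s$ and control the pullback fields inductively: $|r\nab\phi|^N_{s-1,p_0}\lesssim|r\nab\phi|_{2s,p_0,p_\infty}\lesssim|\phi|_{2s+2,p_0,p_\infty}$, and likewise $|\nab_u\phi|^N_{s-1,p_0}\lesssim|\phi|_{2s+2,p_0,p_\infty}$ using \eqref{norm1}. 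For $r\nab_4\phi$ the geometric and background norms agree by Lemma~\ref{Hs equivalence}, since $2s+2\leqslant s_0$ when $s\leqslant s_1\leqslant s_0/2-2$.

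\emph{Main obstacle.} The delicate point is the product structure. We need $|c\,\psi|^N_{s}\lesssim |c|^N_{s,0}\,\sup|\psi|+\dots$ with only sup-type (not $L^2$) norms, because $|\cdot|^N_{s,p_0}$ is pointwise. Every factor must therefore be estimated pointwise after all derivatives are distributed, and each pullback field must carry the full $\lambda^{p_0}$ weight. This works because the coefficients are $O(1)$ in $|\cdot|^N_{s_1,0}$ by \eqref{eq:geodests1}. The factor $\lambda/r$ is handled by writing $r/\lambda=1+O(\epsilon^{1/2}\lambda^{1^+})$ and inverting. Inside the induction, derivatives of pulled-back spacetime quantities (like $f$) fall to lower $s$, so the recursion closes with constants depending on $T,s,p_\infty,\epsilon_0$.
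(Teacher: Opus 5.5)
Your proposal is correct and follows essentially the same route as the paper: induction on $s$, with the base case from the pointwise Sobolev bound of Lemma~\ref{lem:products} together with $r\sim\lambda$ and $\jb{u}+r\sim_T 1$, and the induction step obtained by expanding $\lambda\slashed{D}_\lambda\phi$ and $\lambda\slashed{D}_a\phi$ in terms of $r\nab\phi$, $r\nab_4\phi$, $r\nab_3\phi$, with the geodesic coefficients $(\slashed{v},v^3,v^4)$ and $(\slashed{w}_a,w^3_a,w^4_a)$ as multipliers. The differences are only cosmetic: you split $\nab_3 = 2\nab_u - f\nab_4$ to make explicit why each $N$-derivative can cost two units of $s$, which the paper leaves implicit in the step $|r\nab_3\phi|_{2s+2,p_0,p_\infty}\lesssim|\phi|_{2s+4,p_0,p_\infty}$, and you spell out the pointwise Leibniz bounds on the coefficients that the paper takes for granted.
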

\begin{proof}
        Work by induction on $s$. For $s=0$ we note that
        \[
        \frac{\abs{\phi}^N_{0}(\lambda, \varphi)}{\lambda^{p_0}} = \frac{1}{\lambda^{p_0}}\left(\frac{r}{\lambda} \right)^{K-L} \abs{\phi}_{\mathring{\gamma}}(r, u, \theta) \lesssim \left(\frac{r}{\lambda} \right)^{K-L+p_0}\frac{1}{(\jb{u_0}+r)^{p_0+p_\infty}} \abs{\phi}_{2, p_0, p_\infty}\lesssim \abs{\phi}_{2, p_0, p_\infty}.
        \]
        Taking the supremum over $\lambda, \varphi$ gives the $s=0$ case. For the induction step we note that
        \begin{align*}
            \lambda \slashed{D}_\lambda \phi &= \frac{\lambda}{r} \left(\slashed{v}^A r\nab_A \phi + v^3 r\nab_3 \phi + v^4 r\nab_4 \phi \right), \\
            \lambda \slashed{D}_a \phi &= \frac{\lambda}{r} \left(\slashed{w}^A_a r\nab_A \phi + w^3_a r\nab_3 \phi + w^4_a r\nab_4 \phi \right).
        \end{align*}
        Suppose for induction the result holds for some $s<s_1$. We have
        \[
        \abs{\lambda \slashed{D}_\lambda \phi}^N_{s, p_0}(\Lambda)+\abs{\lambda \slashed{D} \phi}^N_{s, p_0}(\Lambda) \lesssim \abs{r\nab_3 \phi}_{2s+2, p_0, p_\infty}+\abs{r\nab_4 \phi}_{2s+2, p_0, p_\infty}+\abs{r\nab \phi}_{2s+2, p_0, p_\infty}\lesssim \abs{ \phi}_{2s+4, p_0, p_\infty}
        \]
        which recovers the induction hypothesis for $s+1$.
\end{proof}
\begin{corollary}\label{cor:sdComm}
    Suppose $\phi = \phi^{a_1\ldots a_k}{}_{b_1 \ldots b_l}{}^{A_1\ldots A_K}{}_{B_1\ldots B_L}$ is a sufficiently regular tensor with $S^2_{\varphi}$ and $S^2_{\theta}-$tangent indices, then for $s\leqslant s_1-1$
    \[
    \abs{[\slashed{D}_\lambda, \lambda \slashed{D}] \phi}^N_{s, p_0}(\Lambda) \lesssim \epsilon^{\frac{1}{2}} \abs{\phi}^N_{s, p_0}(\Lambda).
    \]
\end{corollary}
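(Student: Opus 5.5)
The commutator $[\slashed{D}_\lambda, \lambda \slashed{D}_a]$ is, by \eqref{eq:sDcomm2}, equal to $\lambda$ times the commutator $\frac{\slashed{D}}{\partial\varphi^a}\frac{\slashed{D}}{\partial\lambda} -\frac{\slashed{D}}{\partial\lambda} \frac{\slashed{D}}{\partial\varphi^a}$. The $S^2_{\varphi}$-indices need separate handling, since \eqref{eq:sDcomm1} was derived for $S^2_\theta$-tangent tensors. They only see the flat connection of $d\lambda^2+\mathring{\gamma}(\lambda)$, whose curvature vanishes. Hence the extra $\mathring{\Gamma}$ and $\frac{k-l}{\lambda}$ terms contribute nothing, and the commutator reduces to the expression on the right of \eqref{eq:sDcomm1}.

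By the remark after \eqref{eq:sDcomm1}, this expression is zeroth order in $\phi$. Inserting Lemma~\ref{lem:CK.733}, it is a sum of terms of the form
\[
(\text{quadratic in } v,w)\times(\text{curvature components: } \beta,\betab,\sigma,K)\times\phi,
\]
together with quadratic Ricci-coefficient terms such as $\chi\eta$, $\chib\eta$, $\chi\xib$, $\chib\xib$, $\omegab\eta$, each contracted with the coefficient combinations $w^3_a v^4-w^4_a v^3$, $w^3_a\slashed{v}^A-\slashed{w}^A_a v^3$, $w^4_a\slashed{v}^A-\slashed{w}^A_a v^4$ and $\slashed{w}^A_a\slashed{v}^B-\slashed{w}^B_a\slashed{v}^A$. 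The main point is to show that $\lambda$ times each coefficient, in the $|\cdot|^N_{s,0}$ norm, is $O(\epsilon^{1/2})$, uniformly on $(0,\Lambda)$.

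For the connection and curvature factors I will use the preceding Lemma. It gives $|\Psi|^N_{s,p_0}(\Lambda)\lesssim|\Psi|_{2s+2,p_0,p_\infty}\lesssim \epsilon$ with the bootstrap exponents, valid since $2s+2\le s_0-2$ when $s\le s_1-1$. Because $\lambda\le1/2$ and $u_0<T$ is bounded, the weights $(\jb{u}+r)$ are harmless, and the implicit constant may depend on $T$.

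The pieces that need care are the near-axis terms: $\trch\sim 2/r$, $\trchb\sim -2/r$, and $K\sim r^{-2}$. Their Minkowski parts are not small and must cancel. I would split each into its Minkowski part plus an error.
\begin{itemize}
\item The errors $\trch-2r^{-1}$, $K-r^{-2}$, $\chih$, etc.\ are bounded by $\epsilon$ times $r^{0^+}$ (Ricci coefficients) or $r^{-1^+}$ (curvature). The prefactor $\lambda$ and the bound $r\sim\lambda$ from \eqref{eq:geodests1} make these $O(\epsilon)$.
\item For the Minkowski parts, the singular terms $\frac1r\gamma$ and $r^{-2}$ appear in the combination
\[
\chi_A{}^B\nabla_B\phi+[\nab_4,\nab_A]\phi,
\]
and similarly for $\chib$. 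Lemma~\ref{lem:CK.733} shows the $\chi\nabla\phi$ pieces cancel exactly.
\item What remains, e.g. $\chi_{AB}\eta^C$ or $r^{-2}(\gamma\slashed{w}\slashed{v}-\dots)$, carries a small factor. Either $\eta,\xib$ is $O(\epsilon r^{0^+})$, or there is an antisymmetric combination $\slashed{w}\wedge\slashed{v}$ with $|\slashed{v}|\lesssim\epsilon^{1/2}\lambda^{1^+}$. Together with $|\slashed{w}|\lesssim 1$, this gives $\lambda\cdot\lambda^{-2}\cdot\epsilon^{1/2}\lambda^{1^+}\cdot 1\lesssim\epsilon^{1/2}$.
\end{itemize}

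Finally, the products are distributed over $s$ derivatives by the Leibniz rule for $\lambda\slashed{D}_\lambda$ and $\lambda\slashed{D}$ together with the algebra property of the $|\cdot|^N_s$ norms. Here the coefficients $v$, $w$ are controlled at order $s_1-1$ by \eqref{eq:geodests1}, and $\phi$ is only needed at order $s$. The main obstacle is verifying the last step: the Minkowski-level $r^{-2}$ contributions from $K$ and $\trch\trchb$ must only ever multiply antisymmetric or $\slashed{v}$-containing coefficients, or else cancel exactly against the $\chi\nabla$ and $\chib\nabla$ terms. I would check this term by term, using Minkowski values as a test case where the commutator must vanish identically, since $\slashed{D}$ is then the pullback of a flat connection.
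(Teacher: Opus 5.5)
Your proposal is correct and follows the paper's own argument. You reduce via \eqref{eq:sDcomm2} to $\lambda$ times \eqref{eq:sDcomm1}, use Lemma~\ref{lem:CK.733} to cancel all derivative terms, and bound the zeroth-order remainder with the bootstrap assumptions and \eqref{eq:geodests1}; the paper's displayed example is precisely your $K\sim r^{-2}$ term, controlled by the antisymmetric $\slashed{w}\wedge\slashed{v}$ factor.

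The obstacle you flag resolves as soon as you expand. After the cancellations the only surviving terms are:
\begin{itemize}
\item $\sigma$, against $w^3_av^4-w^4_av^3$;
\item $\chib\eta$, $\chi\xib$, $\betab$, $\chi\eta$ and $\beta$, against $O(1)$ coefficients, with the $\lambda$ prefactor absorbing the $r^{-1}$ from $\chi,\chib$ since $\lambda/r\sim 1$;
\item $K$, against $\slashed{w}\wedge\slashed{v}$.
\end{itemize}
So no $\trch\trchb$ term ever appears. One small correction to the regularity count: $s\leqslant s_1-1$ gives $2s+2\leqslant s_0-4$, not merely $s_0-2$, and you need the former because $\betab$ and $\xib$ are only bootstrapped at order $s_0-3$.
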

\begin{proof}
    From \eqref{eq:sDcomm2} it suffices to estimate $\lambda \times $\eqref{eq:sDcomm1}. Doing so with the results of \ref{lem:CK.733} we see that with the bootstrap assumptions we are able to bound all of the terms. For example
    \begin{align*}
    \abs{(\slashed{w}^A_a \slashed{v}^B - \slashed{w}_a^B \slashed{v}^A) [\nab_A, \nab_B]\phi}^N_{s}(\lambda, \varphi) &\lesssim |\slashed{w}|^N_s \times \frac{1}{\lambda}|\slashed{v}|^N_s \times |r^2 K|_{s, 0, 0} \times \left(\frac{\lambda}{r}\right)^2 \times \abs{\phi}^N_{s}(\lambda, \varphi) \\
    & \lesssim \epsilon^{\frac{1}{2}} \abs{\phi}^N_{s}(\lambda, \varphi).
    \end{align*}
    We can control the other terms in \eqref{eq:sDcomm1} similarly, whence the result follows.
\end{proof}
With this result we can now establish
\begin{lemma}\label{lem:lamint}
    Suppose $\phi = \phi^{a_1\ldots a_k}{}_{b_1 \ldots b_l}{}^{A_1\ldots A_K}{}_{B_1\ldots B_L}$ is a sufficiently regular tensor with $S^2_{\varphi}$ and $S^2_{\theta}-$tangent indices, and that $|\phi|_{0}^N(\lambda, \varphi)\to0$ as $\lambda \to 0$. Then for $s\leqslant s_1$ and $p_0>-1$ we have
    \[
    \abs{\phi}^N_{s, p_0}(\Lambda) \lesssim \abs{\slashed{D}_\lambda \phi}^N_{s, p_0-1}(\Lambda).
    \]
\end{lemma}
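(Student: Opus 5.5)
We argue by induction on $s$, mirroring the proof of Lemma~\ref{Lem:bgtransport} but with $\lambda$ as radial parameter and the product connection $\slashed{D}$ in place of $\mathring{\nab}_4$. For $s=0$ we integrate along each curve $\lambda\mapsto j(\lambda,\varphi)$ with $\varphi$ fixed. The pointwise norm $|\cdot|^N_0$ is built from $\mathring{\gamma}(\lambda)$ on the $S^2_\varphi$ indices and $\mathring{\gamma}$ on the $S^2_\theta$ indices. The $\varphi$-part is annihilated by $\slashed{D}_\lambda$, precisely thanks to the $\frac{k-l}{\lambda}$ correction in its definition. The $\theta$-part is not, since $\slashed{D}$ annihilates $\gamma$ rather than $\mathring{\gamma}$.

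To handle this, we first compare with the norm built from $\gamma_{AB}$ on the $\theta$ indices. By Lemma~\ref{lem:gammacont} (pointwise version) the two norms are equivalent, so it suffices to work with the $\gamma$-norm, for which metric compatibility gives
\[
\partial_\lambda |\phi|^2 = 2\langle \phi, \slashed{D}_\lambda\phi\rangle .
\]
Hence $\partial_\lambda|\phi|\le |\slashed{D}_\lambda\phi|$, after the usual regularisation $(|\phi|^2+\varepsilon)^{1/2}$. Integrating from $0$ and using $|\phi|^N_0\to0$ gives
\[
|\phi|(\lambda,\varphi)\le \int_0^\lambda |\slashed{D}_\lambda\phi|\,d\lambda' \le |\slashed{D}_\lambda\phi|^N_{0,p_0-1}\int_0^\lambda \lambda'^{p_0-1}d\lambda' = \tfrac{1}{p_0}\lambda^{p_0}\,|\slashed{D}_\lambda\phi|^N_{0,p_0-1}.
\]
This requires $p_0>0$ for convergence. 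For $-1<p_0\le 0$ we would instead integrate $\lambda^{-p_0}\cdot$ weights more carefully. Since the statement allows $p_0>-1$, I would expect the intended use to involve tensors whose norm gains a factor of $\lambda$. One can then rerun the argument for $\lambda\phi$, or observe that the $\frac{k-l}{\lambda}$ terms shift the effective exponent. Concretely, I would check whether the condition should read $p_0>0$; if not, use $\partial_\lambda(\lambda^{-p_0}|\phi|)$ and discard a sign-definite term, as in Lemma~\ref{Lem:bgtransport}. Taking the supremum over $\lambda<\Lambda$ and $\varphi$ closes the base case.

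For the induction step, assume the claim at level $s<s_1$ and apply it to $\lambda\slashed{D}_\lambda\phi$ and $\lambda\slashed{D}\phi$. Both still vanish as $\lambda\to0$ by smoothness of $j$ and $\phi$ up to $\lambda=0$ (Theorem~\ref{thm:coords}). We need to bound $\slashed{D}_\lambda$ of each in $|\cdot|^N_{s,p_0-1}$ by $|\slashed{D}_\lambda\phi|^N_{s+1,p_0-1}$.

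For the first, $\slashed{D}_\lambda(\lambda\slashed{D}_\lambda\phi)=\slashed{D}_\lambda\phi+\lambda\slashed{D}_\lambda(\slashed{D}_\lambda\phi)$. This is directly controlled by $|\slashed{D}_\lambda\phi|^N_{s+1}$.

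For the second,
\[
\slashed{D}_\lambda(\lambda\slashed{D}\phi)=\lambda\slashed{D}(\slashed{D}_\lambda\phi)+[\slashed{D}_\lambda,\lambda\slashed{D}]\phi .
\]
The first term is bounded by $|\slashed{D}_\lambda\phi|^N_{s+1,p_0-1}$. For the commutator, Corollary~\ref{cor:sdComm} gives the bound $\epsilon^{1/2}|\phi|^N_{s,p_0}(\Lambda)$. One must check that the weight is correct here: the right-hand side of \eqref{eq:sDcomm1} multiplied by $\lambda$ carries an extra $\lambda^{-1}$ relative to $|\phi|^N_s$ through the curvature and Ricci factors. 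I would verify that the estimate actually yields $\epsilon^{1/2}\lambda^{-1}|\phi|$, which gives weight $p_0-1$. This then reduces, by the induction hypothesis, to $\epsilon^{1/2}|\slashed{D}_\lambda\phi|^N_{s,p_0-1}$.

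Summing, and using $|\phi|^N_{s+1}\sim|\lambda\slashed{D}_\lambda\phi|^N_s+|\lambda\slashed{D}\phi|^N_s+|\phi|^N_s$, completes the induction.

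The main obstacle is the base-case integration. One issue is the non-metricity of $\slashed{D}$ with respect to $\mathring{\gamma}$ on $\theta$ indices; this is handled via the equivalence with $\gamma$. The more delicate issue is the range $-1<p_0\le0$, where plain integration of the weight fails. There I would try to exploit the $\frac{k-l}{\lambda}$ structure or a weighted differential inequality.
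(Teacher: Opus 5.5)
Your proof follows the paper's. The base case integrates along $\lambda\mapsto j(\lambda,\varphi)$ using the inner product built from $\mathring{\gamma}(\lambda)$ on the $S^2_\varphi$-indices and $\gamma\circ j$ on the $S^2_\theta$-indices; this inner product is compatible with $\slashed{D}_\lambda$ and comparable to $\abs{\phi}^N_0$. The induction step applies the hypothesis to $\lambda\slashed{D}_\lambda\phi$ and $\lambda\slashed{D}\phi$ and controls $[\slashed{D}_\lambda,\lambda\slashed{D}]\phi$ by Corollary~\ref{cor:sdComm}.

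Your worry about the weight in the commutator term is unfounded. That corollary is pointwise in $\lambda$ with no loss of powers of $\lambda$, so $\abs{[\slashed{D}_\lambda,\lambda\slashed{D}]\phi}^N_{s,p_0-1}\lesssim\epsilon^{1/2}\abs{\phi}^N_{s,p_0-1}\leqslant\epsilon^{1/2}\Lambda\abs{\phi}^N_{s,p_0}$, and the induction hypothesis closes the step. Even the lossy bound $\epsilon^{1/2}\lambda^{-1}\abs{\phi}^N_s$ you contemplated would suffice.

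What you should fix is your hedging on the range of $p_0$. The restriction $p_0>0$ you found is genuine, and no weighted differential inequality can remove it, because in the indexing of the statement the inequality is false for $-1<p_0\leqslant 0$. Take the smooth scalar $\phi_\eta=\log(1+\lambda/\eta)$ with $\eta>0$, which vanishes at $\lambda=0$. Then $\slashed{D}_\lambda\phi_\eta=(\lambda+\eta)^{-1}$, so $\abs{\slashed{D}_\lambda\phi_\eta}^N_{0,p_0-1}(\Lambda)\leqslant\Lambda^{-p_0}$. On the other hand $\abs{\phi_\eta}^N_{0,p_0}(\Lambda)=\Lambda^{-p_0}\log(1+\Lambda/\eta)$. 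The ratio is unbounded as $\eta\to 0$.

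The hypothesis $p_0>-1$ in the statement is an off-by-one. The paper's own base case bounds $\abs{\slashed{D}_\lambda\phi}$ by $\lambda^{p_0}\abs{\slashed{D}_\lambda\phi}^N_{0,p_0}$ and integrates from $0$. That needs $p_0>-1$ and yields $\abs{\phi}^N_{0,p_0+1}\lesssim\abs{\slashed{D}_\lambda\phi}^N_{0,p_0}$, which is exactly your version once $p_0+1>0$ is renamed $p_0$. Every application in the proof of Theorem~\ref{thm:geomcont} has $p_0>0$. So state and prove the lemma for $p_0>0$ and drop the fallback.
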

\begin{proof}
    Let
    \begin{align*}
    \langle \phi, \psi\rangle_{\mathring{\gamma}, \gamma}(\lambda, \varphi) &= \phi^{a_1\ldots a_k}{}_{b_1 \ldots b_l}{}^{A_1\ldots A_K}{}_{B_1\ldots B_L}\times \psi^{a'_1\ldots a'_k}{}_{b'_1 \ldots b'_l}{}^{A'_1\ldots A'_K}{}_{B'_1\ldots B'_L} \times \mathring{\gamma}_{a_1a'_1} \times\cdots \times \mathring{\gamma}_{a_ka'_k}\times\\&\qquad
 \times \mathring{\gamma}^{b_1b'_1} \times\cdots \times \mathring{\gamma}^{b_lb'_l}\times {\gamma}_{A_1A'_1} \times\cdots \times {\gamma}_{A_KA'_K}\times {\gamma}^{B_1B'_1} \times\cdots \times {\gamma}^{B_LB'_L},
    \end{align*}
    i.e., the inner product with $S^2_{\varphi}-$tangent indices contracted with $\mathring{\gamma}_{ab}(\lambda)$ and $S^2_{\theta}-$tangent indices contracted with $\gamma_{AB}\circ j$. Then it follows that
    \[
    \frac{\partial}{\partial \lambda}  \langle \phi, \psi\rangle_{\mathring{\gamma}, \gamma} = \langle \slashed{D}_\lambda \phi, \psi\rangle_{\mathring{\gamma}, \gamma}+\langle \phi, \slashed{D}_\lambda  \psi\rangle_{\mathring{\gamma}, \gamma}
    \]
    and
    \[
    \langle \phi, \phi\rangle_{\mathring{\gamma}, \gamma}^{\frac{1}{2}} \sim |\phi|_0^N
    \]
    We compute
    \[
    \frac{\partial}{\partial \lambda} \left( \langle \phi, \phi\rangle_{\mathring{\gamma}, \gamma} +\varepsilon\right)^\frac{1}{2} = \frac{\langle \phi, \slashed{D}_\lambda\phi\rangle_{\mathring{\gamma}, \gamma}}{\left( \langle \phi, \phi\rangle_{\mathring{\gamma}, \gamma} +\varepsilon\right)^\frac{1}{2}}\lesssim \lambda^{p_0}\abs{\slashed{D}_\lambda\phi}_{0, p_0}^N(\Lambda)
    \]
    Integrating from $0$ to $\lambda'<\Lambda$ and sending $\epsilon \to 0$ we deduce
    \[
    |\phi|_{0, p_0}^N(\Lambda) \lesssim \abs{\slashed{D}_\lambda\phi}_{0, p_0}^N(\Lambda)
    \]
    whence the result with $s=0$ follows. Now assume for induction that the result has been established for some $s<s_1$. Clearly
    \[
    |\lambda \slashed{D}_{\lambda} \phi|_{s, p_0}^N(\Lambda) \lesssim \abs{\slashed{D}_\lambda(\lambda \slashed{D}_{\lambda} \phi)}_{s, p_0-1}^N(\Lambda)\lesssim \abs{\slashed{D}_\lambda \phi}_{s+1, p_0-1}^N(\Lambda).
    \]
    Furthermore, 
    \begin{align*}
        |\lambda \slashed{D} \phi|_{s, p_0}^N(\Lambda) &\lesssim  \abs{\slashed{D}_\lambda(\lambda \slashed{D} \phi)}_{s, p_0-1}^N(\Lambda) \lesssim  \abs{\slashed{D}_\lambda \phi}_{s+1, p_0-1}^N(\Lambda) + \abs{[\slashed{D}_\lambda, \lambda \slashed{D}] \phi)}_{s, p_0-1}^N(\Lambda) \\
        & \lesssim \abs{\slashed{D}_\lambda \phi}_{s+1, p_0-1}^N(\Lambda) + \varepsilon^{\frac{1}{2}}   \abs{ \phi}_{s, p_0-1}^N(\Lambda)  \lesssim  \abs{\slashed{D}_\lambda \phi}_{s+1, p_0-1}^N(\Lambda)
    \end{align*}
    where we have used Corollary \ref{cor:sdComm} and the induction hypothesis in the last line. This recovers the induction assumption for $s+1$.
\end{proof}
With these results in hand we can now establish 
\begin{proof}[Proof of Theorem \ref{thm:geomcont}]
    From \eqref{eq:geodsvA}-\eqref{eq:geodsv4}, the boostrap assumptions and the lemmas above, we can immediately estimate
    \begin{align*}
        |\slashed{v}|^N_{s_1, 1^+}(\Lambda) &\lesssim \epsilon & \abs{v^3 +\frac{1}{2}}^N_{s_1, 1^+}(\Lambda) &\lesssim \epsilon & \abs{v^4 -\frac{1}{2}}^N_{s_1, 1^+}(\Lambda) &\lesssim \epsilon
    \end{align*}
    Rewriting \eqref{eq:Nembedd1} as
    \[
    \frac{\partial \theta^A}{\partial \lambda} = \slashed{v}^A + 2b^A v^3, \qquad \frac{\partial u}{\partial \lambda} = 2 v^3, \qquad \frac{\partial r}{\partial \lambda} = v^4-f v^3
    \]
    we can then estimate
    \begin{align*}
        \abs{\frac{r}{\lambda}-1}^N_{s_1, 1^+}(\Lambda)&\lesssim \epsilon & \abs{\frac{u-u_0}{\lambda}+1}^N_{s_1, 1^+}(\Lambda)&\lesssim \epsilon  & \sup_{(0,\Lambda)\times S^2} \lambda^{-2-\delta} d_{S^2, \mathring{\gamma}}(\theta, \varphi) &\lesssim \epsilon
    \end{align*}
    where for the final estimate we note that we may think of $\lambda \mapsto \theta(\lambda, \varphi)$ as describing a curve on the round \emph{unit} sphere with speed bounded by $\epsilon \lambda^\delta$, hence the intrinsic distance between initial and final points is bounded by $\epsilon \lambda^{1+\delta}$ which gives the result after rescaling back to the sphere of radius $\lambda$. To recover the bounds on $w$, we first note
    \[
    \frac{\slashed{D}}{d\lambda} \delta_a{}^A = \chib^A{}_a v^3 + \chi^A{}_a v^4 - 2 \frac{\partial b^A}{\partial \theta^a} v^3 + \Gamma_B{}^A{}_a \slashed{v}^B
    \]
    which implies
    \[
    \abs{\slashed{D}_{\lambda} \left(\frac{\lambda}{r} \delta_a{}^A\right)}^N_{s_1, 0^+} (\Lambda)\lesssim \epsilon.
    \]
    Combining this with \eqref{eq:vwcom} we can estimate
    \begin{align*}
        \abs{\slashed{D}_\lambda\left(\frac{\lambda}{r} [\slashed{w}-\delta]\right)}^N_{s_1-1, 0^+}(\Lambda) &\lesssim \epsilon&\abs{\slashed{D}_\lambda  (\lambda w^3)}^N_{s_1-1, 1^+}(\Lambda)&\lesssim \epsilon&\abs{\slashed{D}_\lambda  (\lambda w^4)}^N_{s_1-1, 1^+}(\Lambda) &\lesssim \epsilon
    \end{align*}
 Applying Lemma \ref{lem:lamint} we deduce 
\begin{align*}
    |\slashed{w} - \delta|^N_{s_1-1, 1^+}(\lambda_0) &\lesssim \epsilon & \abs{w^3}^N_{s_1-1, 1^+}(\lambda_0) &\lesssim \epsilon &\abs{w^4}^N_{s_1-1, 1^+}(\lambda_0) &\lesssim \epsilon.
\end{align*}
    
    The implicit constants in all the estimates above depend on $s_1, T$ but not $\epsilon$. It follows that for $\epsilon$ sufficiently small we contradict the maximality of $\Lambda$ unless $\Lambda = \lambda_0$ and we are done.
\end{proof}

\begin{theorem}\label{thm:indmetest}
    Let $h = j^*g$ be the induced metric on $N$. Write
    \[
    h = \mathfrak{f} \ud \lambda^2+2 \mathfrak{b}_a \ud \varphi^a \ud \lambda + \mathfrak{g}_{ab}\ud \varphi^a \ud \varphi^b.
    \]
    Then we can estimate
    \[
    \abs{\mathfrak{f}-1}^N_{s_1-1, 1^+}+\abs{\mathfrak{b}}^N_{s_1-1, 1^+}+\abs{\mathfrak{g} - \mathring{\gamma}(\lambda)}^N_{s_1-1, 1^+} \lesssim \epsilon^{\frac{1}{2}}.
    \]
\end{theorem}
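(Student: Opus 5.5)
The plan is to compute the components of $h=j^*g$ directly from the frame decomposition of $j_*(\partial_\lambda)$ and $j_*(\partial_{\varphi^a})$ and then apply Theorem~\ref{thm:geomcont}. Using $g(e_3,e_4)=-2$, $g(e_3,e_3)=g(e_4,e_4)=g(e_3,e_A)=g(e_4,e_A)=0$ and $g(e_A,e_B)=\gamma_{AB}$, we obtain the exact algebraic identities
\begin{align*}
\mathfrak{f} &= \gamma_{AB}\slashed{v}^A\slashed{v}^B - 4 v^3 v^4, \\
\mathfrak{b}_a &= \gamma_{AB}\slashed{v}^A\slashed{w}^B_a - 2(v^3 w^4_a + v^4 w^3_a), \\
\mathfrak{g}_{ab} &= \gamma_{AB}\slashed{w}^A_a\slashed{w}^B_b - 2(w^3_a w^4_b + w^4_a w^3_b).
\end{align*}
Since $\gamma_{AB}$ is a mixed tensor in the sense of the previous subsection, each of these is a product of quantities already controlled in the $|\cdot|^N$ norms.

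For $\mathfrak{f}-1$, write $-4v^3v^4-1 = -4(v^3+\tfrac12)v^4 + 2(v^4-\tfrac12)$. Each factor with a bracket is $O(\epsilon^{1/2})$ in $|\cdot|^N_{s_1,1^+}$, and $v^4$ is bounded. The term $\gamma\slashed{v}\slashed{v}$ is quadratic. We will need a product estimate for the $|\cdot|^N_s$ norms. This follows from the Leibniz rule for $\lambda\slashed{D}_\lambda$ and $\lambda\slashed{D}$, which annihilate $\mathring\gamma(\lambda)$ and $\gamma_{AB}$, together with $L^\infty$ bounds on lower derivatives, since these norms are pointwise rather than $L^2$. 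The weights combine additively, and $|\cdot|^N_{s,p}$ is monotone in $p$ for $\lambda\le 1/2$. For $\gamma_{AB}\circ j$, note $|\gamma|^N_0\sim (r/\lambda)^{2}\cdot$const, which is bounded by Theorem~\ref{thm:geomcont}. The previous lemma comparing $|\cdot|^N$ with $|\cdot|_{2s+2,p_0,p_\infty}$ controls its derivatives, using $|r^{-1}\gamma|_s\lesssim 1$ and the bootstrap bound on $\gamma-\mathring\gamma$.

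For $\mathfrak{b}$, each term contains one of $\slashed{v}$, $v^3+\tfrac12$, $v^4-\tfrac12$, $w^3$, $w^4$, since $v^3w^4_a+v^4w^3_a$ contains $w^3$ or $w^4$. Hence it is $O(\epsilon^{1/2})$ at order $s_1-1$, limited by the $w$ bounds.

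For $\mathfrak{g}$, the $w^3w^4$ terms are quadratic, and we write
\[
\gamma_{AB}\slashed{w}^A_a\slashed{w}^B_b = \gamma_{AB}\delta^A_a\delta^B_b + (\text{terms with } \slashed{w}-\delta).
\]
The main obstacle is interpreting $\gamma_{AB}\delta_a^A\delta_b^B-\mathring\gamma_{ab}(\lambda)$. Here $\gamma_{AB}$ is evaluated at $(u,r,\theta(\lambda,\varphi))$ on the sphere of radius $r$, while $\mathring\gamma_{ab}(\lambda)$ lives at $\varphi$ on radius $\lambda$. We split
\[
\gamma-\mathring\gamma_{r}(\theta) \;+\; \bigl(\mathring\gamma_{r}(\theta)-\mathring\gamma_{r}(\varphi)\bigr) \;+\; \bigl((r/\lambda)^2-1\bigr)\mathring\gamma_\lambda(\varphi).
\]
The first piece is bounded by the bootstrap on $\gamma-\mathring\gamma$ via the comparison lemma, gaining the weight $r^{1^+}$. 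The last piece uses $|r/\lambda-1|^N_{s_1,1^+}\le\epsilon^{1/2}$. The middle piece is the delicate one. I would avoid measuring it directly: interpreting $\delta_a^A$ as the identity in the coordinate charts, the mismatch between $\theta$ and $\varphi$ is already absorbed into the definition of $\slashed{D}$ acting on mixed tensors. Its size is controlled by $d(\theta,\varphi)\lesssim\epsilon^{1/2}\lambda^{2+\delta}$ from Theorem~\ref{thm:geomcont}. Its derivatives are handled by noting $\slashed{D}_\lambda\delta_a^A$ was estimated in the proof of that theorem, so Lemma~\ref{lem:lamint} gives the needed bound at order $s_1-1$. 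Collecting terms yields the claimed $\epsilon^{1/2}$ bound with weight $1^+$.
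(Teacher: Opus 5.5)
Your proposal is correct and is essentially the paper's proof. Like the paper, you expand $\mathfrak{f}$, $\mathfrak{b}_a$, $\mathfrak{g}_{ab}$ algebraically in terms of $\slashed{v}$, $v^3$, $v^4$, $\slashed{w}$, $w^3$, $w^4$ and $\gamma$, isolate in each term a factor that is $O(\epsilon^{\frac{1}{2}})$ by Theorem~\ref{thm:geomcont} (or a factor $\gamma-\mathring{\gamma}$ controlled by the bootstrap assumptions), and estimate products in the pointwise $|\cdot|^N$ norms. Your one addition is the mismatch $\mathring{\gamma}_r(\theta)-\mathring{\gamma}_r(\varphi)$, which the paper's displayed expansion tacitly drops by identifying $\delta_a^A\delta_b^B\mathring{\gamma}_{AB}(r)$ with $(r/\lambda)^2\mathring{\gamma}_{ab}(\lambda)$. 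Your way of filling this in is legitimate: zeroth order from the bound on $d(\theta,\varphi)$, and higher order via Lemma~\ref{lem:lamint} together with the estimate for $\slashed{D}_\lambda(\tfrac{\lambda}{r}\delta)$ from the proof of Theorem~\ref{thm:geomcont}.
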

\begin{proof}
    We have
    \begin{align*}
    \mathfrak{f} = g(j_*\partial_\lambda, j_*\partial_\lambda) &= \slashed{v}^A\slashed{v}^B \gamma_{AB} - 4 v^3 v^4 \\
    &=1 +\slashed{v}^A\slashed{v}^B \gamma_{AB} - (2v^3+1)+(2v^4-1) - (2v^3+1)(2v^4-1)
    \end{align*}
    which we can estimate directly. Next
    \begin{align*}
    \mathfrak{g}_{ab} = g(j_*\partial_{\varphi^a}, j_*\partial_{\varphi^b}) &= \slashed{w}^A_a\slashed{w}^B_b \gamma_{AB} - 2 w_a^3 w_b^4- 2 w_b^3 w_a^4 \\
    &=\mathring{\gamma}_{ab}(\lambda) +(\slashed{w}_a^A-\delta_a^A)(\slashed{w}_b^B-\delta_b^B)\gamma_{AB} +\delta_a^A(\slashed{w}_b^B-\delta_b^B)\gamma_{AB}+\delta_b^B(\slashed{w}_a^A-\delta_a^A)\gamma_{AB}\\
    &\qquad + \delta_a^A\delta_b^B(\gamma_{AB} - \mathring{\gamma}_{AB}(r)) + \left(\frac{r^2}{\lambda^2}-1 \right)\mathring{\gamma}_{ab}(\lambda)- 2 w_a^3 w_b^4- 2 w_b^3 w_a^4
    \end{align*}
    which can again be directly estimated. Finally,
    \begin{align*}
    \mathfrak{b}_{a} = g(j_*\partial_{\varphi^a}, j_*\partial_{\lambda}) = w_a^Av^B\gamma_{AB} - 2 w_a^3v^4-2w_a^4v^3
    \end{align*}
    which again can be estimated directly.
\end{proof}
\subsubsection{Second fundamental form} In order to bound the second fundamental form of $N$, we first need to determine the normal to the surface. This will be given by
\[
n = -\kappa [\Omega(j_*\partial_{\varphi^1}, j_*\partial_{\varphi^2},j_*\partial_\lambda, \cdot)]^\sharp
\]
where $\Omega$ is the spacetime volume-form which we can take to be given by
\[
\Omega = 2 \in_{AB} e^A \wedge e^B \wedge e^3 \wedge e^4
\]
and $\kappa$ is a normalisation constant. Observing that we have
\begin{align*}
j_*\partial_{\varphi^a} &= \delta_a{}^A e_A + (\slashed{w}_a^A -\delta_a{}^A)e_A + w_a^3e_3 + w_a^4e_4  \\
j_*\partial_\lambda &= \frac{1}{2}(e_4-e_3) + \slashed{v}^Ae_A + \left(v^3+\frac{1}{2} \right)e_3 + \left(v^4-\frac{1}{2} \right)e_4
\end{align*}
we deduce (after dividing through by a near-unity factor) that
\[
n = \frac{1}{2}(e_3+e_4) + \slashed{\delta n}^Ae_A + \delta n^3 e_3 + \delta n^4 e_4
\]
where we can estimate
\[
|\slashed{\delta n}|^N_{s_1-1, 1^+}+ |{\delta n}^3|^N_{s_1-1, 1^+}+ |{\delta n}^4|^N_{s_1-1, 1^+} \lesssim \epsilon^\frac{1}{2}.
\]

The second fundamental form of $N$ is the symmetric two-tensor given by
\[
k(V, W) = g(n, D_V W)
\]
where $V, W$ are vectors tangent to the surface.

\begin{theorem}\label{thm:sffest}
    Writing
    \[
    k = F \ud \lambda^2 + 2 G_a \ud \lambda \ud \varphi^a + H_{ab} \ud \varphi^a \ud \varphi^b
    \]
    we can bound
    \[
    |F|^N_{s_1-2, 0^+} + |G|^N_{s_1-2, 0^+} + |H|^N_{s_1-2, 0^+} \lesssim \epsilon^{\frac{1}{2}}. 
    \]
\end{theorem}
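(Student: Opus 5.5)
We will compute the components of $k$ directly from the definition $k(V,W)=g(n,D_VW)$, using the frame expansions of $j_*\partial_\lambda$, $j_*\partial_{\varphi^a}$ and $n$ obtained above. The plan is to express every component as a sum of products of embedding quantities with Ricci coefficients, and then to estimate these products with the bounds of Theorem~\ref{thm:geomcont} and the bootstrap assumptions.

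\textbf{The component $F$.} Since $\lambda\mapsto j(\lambda,\varphi)$ is a geodesic, $D_{j_*\partial_\lambda}j_*\partial_\lambda=0$, and hence $F=k(\partial_\lambda,\partial_\lambda)=0$.

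\textbf{The components $G_a$ and $H_{ab}$.} For these we expand
\[
D_{j_*\partial_{\varphi^a}}\big(\slashed{w}_b^Be_B+w_b^3e_3+w_b^4e_4\big)
\]
and the analogous expression with $j_*\partial_\lambda$. The Leibniz rule gives two kinds of terms.
\begin{enumerate}[i)]
\item Derivatives of the coefficients. These are exactly $\frac{\slashed D}{\partial\varphi^a}\slashed{w}_b$, $\frac{\slashed D}{\partial\varphi^a}w^3_b$ and so on. Alternatively, for $G$ we use torsion-freeness, $D_{\partial_\lambda}\partial_{\varphi^a}=D_{\partial_{\varphi^a}}\partial_\lambda$, to write $G$ through $\frac{\slashed D}{\partial\varphi^a}v$.
\item Terms in which the frame is differentiated. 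By Lemma~\ref{lem.Dtonab} these produce $\chi,\chib,\eta,\omegab,\xib$ contracted with $\slashed{w},w^3,w^4$ or $v$.
\end{enumerate}
Pairing with $n=\frac12(e_3+e_4)+\delta n$ and using $g(e_3,e_4)=-2$ gives the explicit formulas.

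The leading Minkowskian parts must cancel. In $H_{ab}$ the dominant contributions are $\frac12(\chi_{ab}+\chib_{ab})$ and $-\frac{1}{\lambda}\mathring{\gamma}_{ab}$-type terms; the latter come from $\frac{\slashed D}{\partial\varphi^a}w^{3,4}_b$ with $\slashed{D}_a$ carrying $\mathring{\Gamma}$, together with the $\lambda$-weight conventions. We will write $\chi=\frac1r\gamma+(\chi-\frac1r\gamma)$ and $\chib=-\frac1r\gamma+(\chib+\frac1r\gamma)$, so that $\frac12(\chi+\chib)=\frac12\big(\chih+\chibh+\frac12(\trch-\frac2r+\trchb+\frac2r)\gamma\big)$. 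The leading terms then cancel exactly, which is the statement that $t=$ const hypersurfaces through the centre are totally geodesic in Minkowski. What remains are genuinely small quantities:
\begin{itemize}
\item $\chih$, $\trch-\frac2r$, $\chibh$, $\trchb+\frac2r$, $\eta$, $\xib$ and $\omegab$, each bounded in $|\cdot|_{s,0^+,\cdot}$-type norms by $\epsilon$;
\item products of $\frac1\lambda$ with the deviations $\slashed{w}-\delta$, $w^3$, $w^4$, $\slashed v$, $v^3+\frac12$, $v^4-\frac12$, $\frac r\lambda-1$ and $\delta n$, each of order $\epsilon^{1/2}\lambda^{1^+}$.
\end{itemize}
The Ricci coefficients will be transferred to $N$-norms by the lemma relating $|\cdot|^N_{s,p_0}$ to $|\cdot|_{2s+2,p_0,p_\infty}$; this is where the index $s_1\le s_0/2-2$ enters. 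The $u$-weights are irrelevant because $u\sim u_0$ and the constants may depend on $T$. Products will be handled by the Sobolev product rule, since all factors are bounded in $|\cdot|^N_{s_1-1}$. Each derivative $\frac{\slashed D}{\partial\varphi^a}$ applied to a $w$-quantity costs one order of $|\cdot|^N$ and a factor $\lambda^{-1}$; this accounts for the loss from $s_1-1$ to $s_1-2$ and from $1^+$ to $0^+$.

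The main obstacle is bookkeeping the exact cancellation of the $O(1/\lambda)$ Minkowskian terms in $H$ and $G$. We will handle it first by checking that the formula evaluates to zero when all perturbations vanish ($f=1$, $b=0$, $\gamma=\mathring{\gamma}$, $v^3=-\frac12$, $v^4=\frac12$, $\slashed w=\delta$, $w^{3,4}=0$, $\frac{r}{\lambda}=1$). Every remaining term then contains at least one perturbation factor, and its size follows from Theorem~\ref{thm:geomcont}. Near $\lambda=0$ the weight $0^+$ is justified because every such factor carries $\lambda^{1^+}$ against at most one $\lambda^{-1}$.
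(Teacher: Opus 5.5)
Your proposal is correct and is essentially the paper's argument: expand $D_VW$ in the null frame using Lemma~\ref{lem.Dtonab}, obtain $F=0$ from the geodesic equation, and bound $G$ and $H$ once the Minkowskian leading term is removed. The paper phrases the cancellation in $H$ as the orthogonality of the leading part $\frac{\mathring{\gamma}_{ab}}{2\lambda}(e_3-e_4)$ of $D_{W_a}W_b$ to $n\approx\frac12(e_3+e_4)$, which is the same as your observation that pairing with $n$ produces $-\frac12(\chi+\chib)$.

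One small correction: that cancellation is entirely between $\chi$ and $\chib$. The terms $\frac{\slashed D}{\partial\varphi^a}w^{3,4}_b$ vanish in Minkowski, where $w^{3,4}=0$, and supply no $-\frac1\lambda\mathring{\gamma}_{ab}$ contribution. Your check that the formula vanishes for unperturbed data makes this misattribution harmless.
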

\begin{proof}
    We write $V = \slashed{V}^Ae_a + V^3e_3 +V^4e_4$ and similarly for $W$ and let $\underline{\nab}_V = \slashed{V}^A\nab_A + V^3 \nab_3 + V^4 \nab_4$. Then
    \begin{align*}
        D_VW &= \left[\underline{\nab}_V \slashed{W} + \slashed{V}^A\chib_A{}^BW^3 + V^A \chi_A{}^BW^4 + 2 \xib^BV^3W^3 \right]e_B \\
        &\qquad + \left[\underline{\nab}_V W^3 +\frac{1}{2} \chi_{AB}\slashed{V}^A\slashed{W}^A + (V^3 \slashed{W}^A + V^A \slashed{W}^3)\eta_A - 2\omegab V^3 W^3 \right]e_3\\
        &\qquad+  \left[\underline{\nab}_V W^4 +\frac{1}{2} \chib_{AB}\slashed{V}^A\slashed{W}^A - (V^4 \slashed{W}^A + V^A \slashed{W}^4)\eta_A + 2\omegab V^3 W^4 \right]e_4
    \end{align*}
    Setting $V=W=j_*\partial_\lambda$ this expression vanishes by the geodesic equation, so in fact $F=0$. To find $G_a$ we set $V= j_*\partial_\lambda$, $W_a= j_*\partial_{\varphi^a}$ and observe that the $\abs{\cdot}^N_{s_1-2, 0^+}$ norm of all of the components of $D_VW_a$ can be bounded by a multiple of $\epsilon^{1/2}$. Contracting with $n$ we have the required bound on $G_a$.
    
    To determine $H_{ab}$ we consider $w_a = j_*\partial_{\varphi^a}$, $W_b = j_*\partial_{\varphi^b}$ and observe that the $\abs{\cdot}^N_{s_1-2, 0^+}$ norm of all of the components of
    \[
    D_{W_a}W_b - \frac{\mathring{\gamma}_{ab}}{2\lambda} (e_3 + e_4)
    \]
    can again be bounded by a multiple of $\epsilon^{1/2}$. Contracting $D_{W_a}W_b$ with $n$ we see a cancellation at top order due to the fact that $(e_3-e_4)$ and $(e_3+e_4)$ are orthogonal, and the subleading terms can be estimated to establish the required bound on $H_{ab}$.
\end{proof}

\subsubsection{Embedding in $H^s$}

Let $\mathbb{R}^3$ be equipped with the standard Euclidean metric in Cartesian coordinates. For $T$ a smooth tensor field and $U$ an open subset of $\mathbb{R}^3$ we define a weighted $C^k$ norm by
\[
|T|_{C^{k,p}(U)} =  \sup_{x\in U}\sum_{i\leqslant k} |x|^{i-p} \abs{\nabla^i T(x)}
\]
We define the $H^{s}(U)$ (fractional Sobolev) norm by extension:
\begin{equation*}
	\nrm{T}_{H^{s}(U)} := \inf_{\tilde{T} \in H^{s}(\mathbb{R}^{3}) : \tilde{T} |_{U} = T} \nrm{\tilde{T}}_{H^{s}(\mathbb{R}^{3})},
\end{equation*}
where we use the standard Cartesian basis to express $\tilde{T}$ and $T$ on both sides.

\begin{lemma}
    Suppose $f\in C^\infty_c(\mathbb{R}^{3})$, that $0\leqslant s<k$, and $p>-\frac{3}{2}+s$. Then we can estimate
    \[
    \left\|{f}\right\|_{H^s(B)} \lesssim \abs{f}_{C^{k,p}(B)}
    \]
    where the implicit constant depends on $n$, $p$, $s$, and $B = \{|x|< 1\}$.
\end{lemma}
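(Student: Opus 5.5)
The plan is a dyadic decomposition of the ball combined with scaling. The weighted norm $\abs{f}_{C^{k,p}(B)}$ controls $\abs{\nabla^i f(x)} \leqslant |x|^{p-i}\abs{f}_{C^{k,p}(B)}$ for $i\leqslant k$. So on the annulus $A_j=\{2^{-j-1}<|x|<2^{-j}\}$ the function has $C^k$ size comparable to $2^{-jp}$ at scale $2^{-j}$. The condition $p>-\frac{3}{2}+s$ is exactly what makes the resulting $H^s$ contributions summable in $j$.

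First I fix a smooth partition of unity $\{\chi_j\}_{j\geqslant 0}$ with $\chi_j(x)=\chi(2^jx)$ supported in $\{2^{-j-2}<|x|<2^{-j+1}\}$ and $\sum_j\chi_j=1$ on $B\setminus\{0\}$. Since the $H^s(B)$ norm is defined by extension, I first extend $f|_B$ to $\tilde f$ on $\mathbb{R}^3$. This can be done by a Stein-type (or reflection across $|x|=1$) extension operator that is bounded in $C^k$ near the boundary sphere; there the weight $|x|$ is comparable to $1$, so its norm is controlled by $\abs{f}_{C^{k,p}(B)}$. For $|x|\geqslant 1$ I cut off with a fixed bump. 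I then write $\tilde f=\sum_j \chi_j\tilde f+(\text{outer piece})$.

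For each $j$, I rescale $g_j(y)=(\chi_j \tilde f)(2^{-j}y)$. This function is supported in a fixed annulus and satisfies $\nrm{g_j}_{C^k}\lesssim 2^{-jp}\abs{f}_{C^{k,p}(B)}$, by the Leibniz rule together with $\abs{\nabla^i\chi_j}\lesssim 2^{ji}$ and $\abs{\nabla^i f}\lesssim 2^{-j(p-i)}\abs{f}_{C^{k,p}}$. Since $s<k$ and the support is compact, $\nrm{g_j}_{H^s}\lesssim\nrm{g_j}_{C^k}$. Undoing the scaling, the homogeneous part transforms as $\nrm{h(2^j\cdot)}_{\dot H^\sigma}=2^{j(\sigma-\frac32)}\nrm{h}_{\dot H^\sigma}$ for $0\leqslant\sigma\leqslant s$. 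Since $p+\frac32-s>0$ (and hence also $p+\frac32>0$), this gives
\[
\nrm{\chi_j\tilde f}_{H^s}\lesssim 2^{-j(p+\frac32-s)}\abs{f}_{C^{k,p}(B)}.
\]
Summing over $j$ by the triangle inequality yields a convergent geometric series, and the outer piece is trivially bounded. This also shows $\sum_j\chi_j\tilde f$ converges in $H^s$ to an extension of $f|_B$, which proves the estimate.

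The main subtlety is the scaling step for fractional $s$ when $p$ is close to the threshold. I will use the homogeneous norm $\dot H^s$ for the scaling argument and the $L^2$ part separately. The $L^2$ part scales better, like $2^{-j(p+\frac32)}$, so the only binding constraint comes from the top order $\sigma=s$. Since $f$ is smooth, no issue arises at the origin itself: the dyadic pieces cover $B\setminus\{0\}$, and a point has measure zero.
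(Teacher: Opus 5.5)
Your proposal is correct and follows essentially the paper's argument: extend $f|_B$ to a compactly supported function with comparable weighted $C^k$ bounds, decompose it dyadically into annular pieces, use rescaling to get $\|\chi_j\tilde f\|_{H^s}\lesssim 2^{-j(p+\frac{3}{2}-s)}|f|_{C^{k,p}(B)}$, and sum the resulting geometric series. Splitting the norm into its homogeneous $\dot H^s$ part and its $L^2$ part is just a more explicit form of the paper's rescaling inequality $\|f\|_{H^s(A_R)}\lesssim R^{-s+\frac{3}{2}}\|f(R\,\cdot)\|_{H^s(A_1)}$ for $0<R\leqslant 1$.
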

\begin{proof}

We start with a simple but key observation. On the annulus $A_{1} := \{\frac{1}{2} < \abs{x} < 2 \}$, we clearly have
\begin{equation*}
	\nrm{f}_{H^{s}(A_{1})} \aleq \nrm{f}_{H^{k}(A_{1})} \aleq \abs{f}_{C^{k, p}(A_{1})},
\end{equation*}
By rescaling, on any smaller annuli $A_{R} := \{\frac{R}{2} < \abs{x} < 2 R \}$ ($0 < R \leqslant 1$), we have
\begin{equation*}
	\nrm{f}_{H^{s}(A_{R})} \aleq R^{p-s+\frac{3}{2}} \abs{f}_{C^{k, p}(A_{R})},
\end{equation*}
where we used the scaling invariance of $R^{p} \abs{f}_{C^{k, p}(A_{R})}$ (for any $R > 0$) and the inequality
$\nrm{f}_{H^{s}(A_{R})} \aleq R^{-s+\frac{3}{2}} \nrm{f(R(\cdot))}_{H^{s}(A_{1})}$ (true only for $0 < R \leqslant1$, as it is the inhomogeneous Sobolev norm). 

Now we start the proof in earnest. We claim that it suffices to prove the bound on the whole space, and for $f$ that is supported in $2B = \{ \abs{x} < 2 \}$. Indeed, by a standard extension procedure applied to $f$ restricted to $B$, we may find another function $\tilde{f}$ supported in $2B$ that (i)~agrees with $f$ on $B$ (so that $\nrm{f}_{H^{s}(B)} \leqslant \nrm{\tilde{f}}_{H^{s}(\mathbb{R}^{3})}$), and (ii)~satisfies $|\tilde{f}|_{C^{k, p}(\mathbb{R}^{3})} \leqslant2 \abs{f}_{C^{k, p}(B)}$. Clearly, the bound on the whole space for $\tilde{f}$ would imply the desired bound for $f$. In what follows, we abuse the notation and write $f$ for $\tilde{f}$.

Next, we perform a dyadic decomposition $f = \sum_{j = 0}^{\infty} f_{j}$, where each $f_{j}$ is $f$ multiplied by a smooth cutoff supported in $A_{2^{-j}}$ for each $j \geqslant 0$. Then we may bound
\begin{equation*}
	\nrm{f}_{H^{s}} \leqslant \sum_{j = 0}^{\infty} \nrm{f_{j}}_{H^{s}(A_{2^{-j}})} \aleq \sum_{j = 0}^{\infty} 2^{-(p - s +\frac{3}{2}) j} \abs{f_{j}}_{C^{k, p}} \aleq \abs{f}_{C^{k, p}},
\end{equation*}
where we used $p - s +\frac{3}{2} > 0$ and the obvious bound $\abs{f_{j}}_{C^{k, p}} \aleq \abs{f}_{C^{k, p}}$.
This completes the proof. \qedhere

\end{proof}
This result immediately generalises to tensors expressed with respect to the standard Cartesian basis. Now we wish to restate this result in terms of polar coordinates. For this, we write the Euclidean metric in polar coordinates:
\[
\mathring{h} = d\lambda^2 + \mathring{\gamma}_{ab}(\lambda) d\varphi^a d\varphi^b.
\]
Then if $X = X^\lambda \partial_\lambda + \slashed{X}^a \partial_{\varphi^a}$ we can express the components of $\nabla X$ in terms of $\slashed{D}_\lambda, \slashed{D}_a$ by:
\begin{align*}
\nabla_\lambda X^\lambda &= \slashed{D}_\lambda X^\lambda, &\nabla_a X^\lambda &= \slashed{D}_a X^\lambda - \frac{1}{\lambda} \gamma_{ab} \slashed{X}^b, \\
\nabla_\lambda \slashed{X}^a, &= \slashed{D}_\lambda \slashed{X}^a & \nabla_a \slashed{X}^b &=\slashed{D}_a \slashed{X}^b +\frac{\delta_a{}^b}{\lambda} X^\lambda.
\end{align*}
We observe that
\[
|x||\nabla X| + |X| \sim |\lambda \slashed{D}_\lambda X^\lambda| + |\lambda \slashed{D} X^\lambda|_{\mathring{\gamma}}+ |X^\lambda|+|\lambda \slashed{D}_\lambda \slashed{X}|_{\mathring{\gamma}} + |\lambda \slashed{D} \slashed{X}|_{\mathring{\gamma}}+\abs{\slashed{X}}_{\mathring{\gamma}}.
\]
Recalling that (after raising all indices with the metric) we can write any tensor as a sum of products of vectors we deduce:
    \begin{theorem}
    Suppose $\phi$ is a sufficiently regular $S^2_{\varphi}-$tangent tensor on $N$, that $0\leqslant s<k$, and $p>-\frac{3}{2}+s$. Then
    \[
    \left\|\phi\right\|_{H^s} \lesssim \abs{\phi}^N_{k, p}.
    \]
\end{theorem}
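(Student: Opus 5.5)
The plan is to reduce the statement to the preceding Cartesian lemma, applied componentwise, by passing from the polar derivatives $\slashed{D}_\lambda,\slashed{D}_a$ to the flat covariant derivative $\nabla$ of $\mathring{h}=d\lambda^2+\mathring{\gamma}(\lambda)$. Given $\phi$ on $N$, parametrised by $(\lambda,\varphi)$, we regard it via $j$ as a tensor on $B\setminus\{0\}$ (with $\lambda=|x|$); any $S^2_\theta$-tangent indices are handled by the same argument, since $\slashed{D}$ acts on them through the pulled-back connection, which appears only as a lower-order term. We then express its Cartesian components and prove
\[
\abs{\phi}_{C^{k,p}(B)}\lesssim \abs{\phi}^N_{k,p},
\]
after which the lemma gives $\|\phi\|_{H^s(B)}\lesssim\abs{\phi}_{C^{k,p}(B)}$ for $p>s-\tfrac32$, $s<k$.

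The core step is an inductive comparison of norms. For a vector $X=X^\lambda\partial_\lambda+\slashed{X}^a\partial_{\varphi^a}$ we use the displayed formulas for $\nabla X$ in terms of $\slashed{D}$. Their correction terms are $\pm\lambda^{-1}\times$(components of $X$), so $\lambda|\nabla X|$ is controlled by $\lambda\slashed{D}X$ and $X$, with $|X|$ measured in $\mathring{h}$, equivalently through $\mathring{\gamma}(\lambda)$. Writing a general tensor as a sum of products of vectors, or directly by the Leibniz rule with one correction term per index, gives
\[
|x||\nabla T|+|T|\sim |\lambda\slashed{D}_\lambda T|+|\lambda\slashed{D}T|+|T|.
\]
Iterating, we show by induction on $i$ that $|x|^i|\nabla^i T|\lesssim \sum_{i'\le i}\big|(\lambda\slashed{D}_\lambda,\lambda\slashed{D})^{i'}T\big|$. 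At each step we apply the one-step equivalence to $\nabla^{i-1}T$, which is again a tensor, and commute the factors of $\lambda$ past $\slashed{D}$: we have $\slashed{D}_\lambda\lambda=1$ and $\slashed{D}_a\lambda=0$. Among the $\slashed{D}$'s themselves we keep them in the order arising in the definition of $|\cdot|^N_k$, which is recursive and so contains every ordering. Dividing by $\lambda^p$ and taking the supremum over $(0,1)\times S^2$ then yields $\abs{\phi}_{C^{k,p}(B)}\lesssim\abs{\phi}^N_{k,p}$.

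The point needing care is that the lemma is stated for $C^\infty_c$ functions, whereas $\phi$ is only smooth away from the origin, with the weighted bound near $\lambda=0$. We handle this by approximation. Multiply $\phi$ by a cutoff $1-\chi(x/\varepsilon)$; the cutoff's derivatives cost $\varepsilon^{-i}\sim|x|^{-i}$ on its support, so they are absorbed by the weights and give uniform $C^{k,p}$ bounds. We then let $\varepsilon\to0$, using weak lower semicontinuity of the $H^s$ norm and the fact that the truncations converge to $\phi$ in $L^2$, since $p>-\tfrac32$ ensures $|x|^p\in L^2_{\mathrm{loc}}$. Apart from this, the only bookkeeping is to check that all pointwise norms are taken with respect to $\mathring{\gamma}(\lambda)$, so that they agree with the Euclidean norms on components up to constants.
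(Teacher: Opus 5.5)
Your proposal is correct and is essentially the paper's argument: use the explicit formulas for $\nabla X$ in terms of $\slashed{D}_\lambda,\slashed{D}_a$, extended to general tensors via products of vectors or the Leibniz rule, to get $\abs{\phi}_{C^{k,p}(B)}\lesssim\abs{\phi}^N_{k,p}$, and then apply the preceding weighted-$C^k$-to-$H^s$ lemma. The explicit induction over derivative order and the cutoff-and-limit step (which is legitimate, since $p>-\frac32$ gives $L^2$ convergence of the truncations) are details the paper leaves implicit, and your aside about $S^2_\theta$ indices is unnecessary because the statement concerns only $S^2_\varphi$-tangent tensors.
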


Together with Theorems \ref{thm:indmetest}, \ref{thm:sffest} this completes the proof of Theorem \ref{thm:geomcont1}, provided $s_1\geqslant 4$.

\subsection{Future completeness}

\begin{lemma}\label{lem:complete}
    Suppose $(M, g)$ is a smooth solution of Einstein's equations as above, such that after constructing Newman--Unti coordinates the weak bootstrap assumptions hold in $S$. Then $(M, g)$ is future geodesically complete.
\end{lemma}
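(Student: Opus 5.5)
We prove geodesic completeness by showing that every future-directed causal geodesic is defined for all values of its affine parameter. The curve is split into the part near the axis $\Gamma$ and the part away from it. Away from the axis we work in Newman--Unti coordinates and use the pointwise weak bootstrap bounds \eqref{eq:BA.weak.pointwise} on $f-1$, $b$, $\gamma-\mathring\gamma$ and the Ricci coefficients. Near the axis we use the smooth geodesic hypersurfaces $N(u_0)$ of Theorem~\ref{thm:geomcont1}. By the local well-posedness and persistence of regularity argument, these show the metric is smooth and uniformly controlled in a tubular neighbourhood of $\Gamma$. Since the bootstrap assumptions hold on all of $S$, these bounds are global and uniform in $T$.

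\textbf{Setup.} Let $\gamma(s)$ be a future-directed causal geodesic with tangent $\dot\gamma = \slashed{v}^Ae_A + v^3 e_3 + v^4 e_4$. The geodesic equations take exactly the form \eqref{eq:geodsvA}--\eqref{eq:geodsv4}, with $\slashed{D}_\lambda$ replaced by the derivative along $\gamma$. Since $e_4=-(Du)^\sharp$, we have $\frac{\ud u}{\ud s} = 2v^3 = -g(\dot\gamma,e_4)\geq 0$. Causality gives $|\slashed{v}|^2_\gamma \leq 4v^3v^4$. The strategy is a Gr\"onwall argument on the energy $E=v^3+v^4$, in the spirit of the usual proof for perturbations of Minkowski.

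\textbf{Case $v^3\equiv 0$.} Then $\dot\gamma$ is proportional to $e_4$, so $\gamma$ is an outgoing null generator $r\mapsto(u,r,\theta)$. This curve is affinely parametrised by $r$ and is defined for all $r$ in $S$, so it is complete.

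\textbf{Case $v^3>0$.} Here $u$ increases along $\gamma$, and we use $u$ as the parameter. From \eqref{eq:geodsv3}, \eqref{eq:geodsv4} and causality we get
\[
\Bigl|\tfrac{\ud}{\ud u}\log v^3\Bigr| \lesssim |\omegab| + |\eta|\,|\slashed v|/v^3 + |\chi|\,v^4 .
\]
For the affine parameter we need
\[
s = \int \frac{\ud u}{2v^3} = \infty \quad\text{as } u\to\infty,
\]
which requires $v^3$ to grow at most linearly in $u$ (or at least not integrably). The bootstrap bounds give $|\omegab|\lesssim \epsilon\jb{u}^{-1-}$, $|\eta|\lesssim \epsilon (\jb u + r)^{-1-}$ and $\trch \approx 2/r$. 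The term $\trch\, v^4$ is handled by rewriting in terms of the rescaled quantities $r\slashed v$ and $v^4$, as in the Minkowski computation: along a straight line one has $v^4 r^2$-type conservation laws. We then conclude that $\log v^3$ changes by a bounded amount, so $v^3 \sim$ its initial value. Hence $s\to\infty$ as $u\to\infty$.

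\textbf{Exhausting the range of $u$.} It remains to show that $u\to\infty$ along the geodesic and that $r$ stays finite for finite $u$. For the latter, $\frac{\ud r}{\ud s} = v^4 - f v^3$, with $|f|\lesssim 1+\epsilon r\jb u^{-1-}$. The resulting linear growth of $r$ is integrable in $u$ via Gr\"onwall, so $r$ remains finite in finite $u$. Hence the geodesic cannot leave every compact set of $S$ in finite parameter, except by approaching the axis $r=0$. Near $\Gamma$ we switch to the normal coordinates on $N(u_0)$ given by Theorems~\ref{thm:geomcont1}, \ref{thm:indmetest} and \ref{thm:sffest}. There the metric is $H^{5/2+}$-close to Minkowski, so geodesics pass through in finite parameter with controlled velocity, and we re-enter the Newman--Unti chart with comparable $v^3,v^4$.

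\textbf{Main obstacle.} I expect the hardest step to be the uniform control of $v^4$ (and of $r\slashed v$) when $r\gg\jb u$. There $f-1$ and $b$ grow like $r$, and the terms $\chib\, v^3$ and $\xib\, v^3$ couple the equations. The plan is to exploit that every such growing coefficient comes with a factor $\epsilon\jb u^{-1-}$, which is integrable in $u$, and to close a bootstrap on the quantities $v^3$, $v^4 (r/(\jb u+r))^{-1}$ and $r|\slashed v|$.
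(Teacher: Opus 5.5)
Your proposal has a genuine gap in the main case $v^3>0$. You parametrise by $u$ and try to show three things: that $v^3$ stays comparable to its initial value, that $u\to\infty$, and that $r$ stays finite for finite $u$. This cannot work for null geodesics, because all three claims already fail in exact Minkowski space. Take the null line $t=t_0+s$, $x(s)=(d,s,0)$ with $d>0$. Along it $u=t_0+s-\sqrt{d^2+s^2}\to t_0$, $v^3=\frac12\frac{\ud u}{\ud s}\sim d^2/(4s^2)\to 0$, and $r\to\infty$. So the affine parameter diverges while $u$ stays bounded, and ``exhausting the range of $u$'' is neither true nor what completeness requires.

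The cause is visible in your own equation. The correct identity from \eqref{eq:geodsv3} is
\[
\frac{\ud}{\ud u}\log v^3=\omegab-\frac{\slashed{v}\cdot\eta}{v^3}-\frac{\slashed{v}\cdot\chi\cdot\slashed{v}}{4(v^3)^2}.
\]
With $\chi\approx r^{-1}\gamma$ and $|\slashed{v}|^2\leq 4v^3v^4$, the last term has size up to $v^4/(rv^3)$, not $|\chi|v^4$ as you wrote. It is neither small nor integrable, it is exactly what drives $v^3\to 0$ above, and no Minkowskian conservation law removes it. Your final paragraph then defers precisely the estimate that is needed (control of $v^4$ and $r|\slashed{v}|$) to a bootstrap that is never carried out.

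The missing idea is to estimate, in the affine parameter $\lambda$, a quantity symmetric in $v^3$ and $v^4$. The paper uses $E=\frac14|\slashed{v}|^2+\frac12(v^3)^2+\frac12(v^4)^2$; your own $v^3+v^4=-g(\dot\gamma,\frac12(e_3+e_4))$ behaves the same way.

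\begin{itemize}
\item In $\frac{\ud E}{\ud\lambda}$, the $\chi$ and $\chib$ terms enter only through $\chi+\chib=\chih+\chibh+\frac12[(\trch-\frac2r)+(\trchb+\frac2r)]\gamma$, so the $r^{-1}$ parts cancel.
\item The non-decaying $\omegab$ and $\xib$ terms carry a factor $v^3=\frac12\frac{\ud u}{\ud\lambda}$.
\item Using causality and \eqref{eq:BA.weak.pointwise}, one gets $\frac{\ud E}{\ud\lambda}\lesssim\epsilon\bigl(\frac{v^4}{(\jb{u}+r)^{1^+}}+\frac{v^3}{\jb{u}^{1^+}}\bigr)E$.
\item Write $v^4=\frac{\ud r}{\ud\lambda}+fv^3$ and use $|f-1|\lesssim\epsilon r\jb{u}^{-1^+}$. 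Both weights then integrate along $\gamma$ with a bound independent of $\lambda$, so Gr\"onwall gives $E\lesssim E(0)$.
\item Then $|\frac{\ud r}{\ud\lambda}|\lesssim(1+r)E$ and $0\leq\frac{\ud u}{\ud\lambda}\lesssim E^{1/2}$. On a finite parameter interval the curve therefore stays in a bounded $(u,r)$-region with bounded velocity, and so it extends. This is the correct form of your ``cannot leave every compact set in finite parameter''.
\end{itemize}

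The axis also needs no switch to $N(u_0)$. At a crossing $\slashed{v}=0$ and $(v^3,v^4)\mapsto(v^4,v^3)$, so $E$ is continuous. This swap also breaks your dichotomy between $v^3\equiv 0$ and $v^3>0$, and your claim of re-entering the chart with comparable $v^3$. For example, an incoming radial null geodesic ($v^4=0$) becomes an outgoing generator ($v^3=0$) after crossing.
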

\begin{proof}
    We need to show that if $\gamma:[0, \lambda_0) \to M$ is a future-directed causal geodesic with $\lambda_0<\infty$, then it can be extended. In $S$ we write $\gamma(\lambda) = (u(\lambda), r(\lambda), \theta^A(\lambda))$ and
    \[
    \dot{\gamma}(\lambda) = \slashed{v}^Ae_a + v^3e_3+v^4e_4,
    \] 
    where $u, r, \theta^A$ are piecewise smooth, with isolated discontinuities when the geodesic passes through the axis\footnote{the case $\gamma=\Gamma$ is trivial.}, at which points $u, \dot{u}, r$ are continuous and $\dot{r}, \theta^A$ have finite discontinuities. The geodesic equations are \eqref{eq:Nembedd1}, \eqref{eq:geodsvA}, \eqref{eq:geodsv3}, \eqref{eq:geodsv4}. The condition that $\gamma$ is future directed and causal implies $v^3, v^4 \geqslant0$.
    In view of the fact that $g$ is smooth, the only way that $\gamma$ can fail to be extendable is if either $r(\lambda)$ or $u(\lambda)$ tends to infinity as $\lambda \nearrow \lambda_0$, so it suffices to show that $\dot{u}, \dot{r}$ are bounded. 
    
    Let $E= \frac{1}{4}|\slashed{v}|^2+\frac{1}{2}(v^3)^2 + \frac{1}{2}(v^4)^2$.  Since at an axis crossing $|\slashed{v}|=0$ and $(v^3,  v^4) \mapsto (v^4, v^3)$, $E$ is continuous. A brief computation shows that away from axis crossings
    \[
    \frac{\ud E}{\ud \lambda} = -\frac{1}{2}\slashed{v}\cdot(\chi + \chib)\cdot\slashed{v}(v^3+v^4) + 2\slashed{v}\cdot\eta ((v^4)^2 - (v^3)^2)+ \left\{(2\omegab [(v^3)^2 - (v^4)^2] - \slashed{v}\cdot\xib(v^4+v^3)\right\}v^3.
    \]
    Observing that $\chi + \chib = \chih + \chibh +\f 12 [(\trch -2r^{-1})+ (\trchb +2r^{-1})]\gamma$, and also that the terms involving $\omegab, \xib$ which do not decay in $r$ always come with a factor of $v^3$, we see that $\frac{\ud E}{\ud \lambda}$ is continuous and we can estimate using the bootstrap assumptions
    \[
    \frac{\ud E}{\ud \lambda} \lesssim \epsilon \left( \frac{v^4}{(\jb{u}+r)^{1^+}} + \frac{v^3}{\jb{u}^{1^+}} \right) E.
    \]
    Making use of \eqref{eq:Nembedd1}, and using $|f-1|\lesssim \epsilon r/\jb{u}^{1^+}$ we deduce
    \[
    \frac{\ud E}{\ud \lambda} \lesssim \epsilon \left( \frac{1}{\jb{r}^{1^+}}\frac{\ud  r}{\ud \lambda} + \frac{1}{\jb{u}^{1^+}}\frac{\ud  u}{\ud \lambda} \right) E,
    \]
    and thus by Gr\"onwall's inequality
    \[
    E(\lambda) \leqslant E(0) \exp\left[ C\epsilon\left(1+\frac{1}{\jb{r(\lambda)}^{0^+}}+ \frac{1}{\jb{u(\lambda)}^{0^+}}\right) \right] \lesssim E(0).
    \]
    for some constant $C>0$ independent of $\epsilon$. Next we note that \eqref{eq:Nembedd1} implies
    \[
    \abs{\frac{\ud r}{\ud \lambda}} \lesssim (1+r) E \quad \implies \quad r(\lambda) \lesssim \exp(C' E(0) \lambda)
    \]
    so that $r$ remains bounded and hence, together with the boundedness of $E$, we deduce that all components of $\dot{\gamma}$ remain bounded so that $\gamma(\lambda)$ can be extended smoothly past $\lambda_0$.
\end{proof}

\section{Proof of the main theorem}\label{sec:proof}
We are now ready to complete our proof.
\begin{proof}[Proof of Theorem \ref{thm:main}]
Fix $\epsilon$ with $0<\epsilon<\epsilon_0$. Suppose we have $(\Sigma, h, k)$ such that after constructing coordinates the weak initial conditions \eqref{eq:data.alp.weak}, \eqref{eq:data.weak} hold. Let 
\[
T_{max} = \sup\{T>0: \Psi^*g\text{ exists and is smooth on }S_T\text{ and }\eqref{eq:BA.weak.pointwise}, \eqref{eq:BA.weak.integrated}\text{ hold}\}.
\]
By Corollary \ref{cor:loc.exist.bounds} and continuity $T_{max}$ is well defined and strictly positive. The result will follow if $T_{max} = \infty$. Suppose for contradiction that $T_{max}<\infty$. Set $u_0 = T_{max}-\epsilon^{\frac{1}{2}}$ and let $N(u_0)$ be the surface constructed in Theorem \ref{thm:geomcont1}, which is the image of a disc of radius $1/2$ under the exponential map. The first and second fundamental form $h', k'$ induced on $N(u_0)$ are necessarily smooth and by Theorem \ref{thm:geomcont1} we control $\|(h', k')\|_{H^{\frac{5}{2}^+}\times H^{\frac{3}{2}^+}} \lesssim \epsilon$. It follows from standard local existence theory for the Einstein equations (see for example \cite{Hughes1977Well-posedRelativity} Theorem V, together with the observation below Theorem III on propagation of regularity) that our solution can be extended smoothly to the future of $N(u_0)$ with metric components with respect to harmonic coordinates bounded in $C^0_tH_x^{\frac{5}{2}^+}\cap C^1_tH_x^{\frac{3}{2}^+}$ by a multiple of $\epsilon$. Noting that this implies a uniform bound on Christoffel symbols in harmonic coordinates, hence control of existence time for geodesics, we can further assume that within this extension $\Gamma$ can be continued from $N(u_0)$ to the future for a proper time at least $1/10$, provided $\epsilon_0$ is sufficiently small. We deduce then that our spacetime can be extended smoothly to a neighbourhood of $\Psi^{-1}(\Sigma_{T_{max}})\cap \Gamma$. 

Away from $\Gamma$ our coordinates are regular and we control at least three derivatives of the metric pointwise so it follows that the spacetime can be extended to a neighbourhood of $\Psi^{-1}(\Sigma_{T_{max}})$. By continuity the coordinate construction of \S\ref{sec:coords} can be extended to a (possibly smaller) neighbourhood of $\Psi^{-1}(\Sigma_{T_{max}})$. We conclude that there exists $T'>T_{max}$ such that $\Psi^*g\text{ exists and is smooth on }S_{T'}$.

Now, by Corollary \ref{cor:alpha.final} and Propositions \ref{prop:fixed.sphere}, \ref{prop:integrated} there exists a constant $C$ depending only on $s_0, \delta$ such that \eqref{eq:BA.weak.pointwise}, \eqref{eq:BA.weak.integrated} hold in $S_{T_{max}}$ with $\epsilon$ replaced by $C\epsilon^2$. Decreasing $\epsilon_0$ if needed so that $\epsilon_0 \leqslant 1/(2C)$, by continuity \eqref{eq:BA.weak.pointwise}, \eqref{eq:BA.weak.integrated} will hold in $S_{T''}$ for some $T_{max}<T''<T'$, which contradicts the maximality of $T_{max}$, hence $T_{max}=\infty$. Lemma \ref{lem:complete} establishes that the resulting spacetime is future complete.

If the metric additionally satisfies the strong initial data assumptions \eqref{eq:data.alp.strong}, \eqref{eq:data.strong} then we can repeat this argument (for a possibly smaller $\epsilon_0$) using the strong bootstrap assumptions \eqref{eq:BA.strong.pointwise}, \eqref{eq:BA.strong.integrated} in addition to \eqref{eq:BA.weak.pointwise}, \eqref{eq:BA.weak.integrated}.
\end{proof} 

\appendix

\section{Localising the main stability theorem}\label{sec:appendix}

In this appendix, we formulate the theorems regarding exterior stability of Minkowski and the stability for mixed Cauchy-characteristic data. Both results involve a simple localisation of our main theorem. As mentioned in the introduction, these results generalise those in \cite{Klainerman2003TheNicolo.} and \cite{Graf2020GlobalData}. We begin with the exterior stability result.


\begin{corollary}[Exterior nonlinear stability of Minkowski spacetime]
    Let $(\Sigma, h,k)$ be a smooth set of Cauchy data for the Einstein vacuum equations such that for a compact set $\mathcal K \subset \Sigma$, $\Sigma \setminus \mathcal K$ is diffeomorphic to $\mathbb R^3 \setminus B_R(0)$ for some $R>0$ and thus equipped with coordinates $x^1,x^2,x^3$. Suppose that for some $s \geqslant 30$ and $\nu>0$, $(\Sigma\setminus \mathcal K,h,k)$ satisfies
    \begin{equation}\label{eq:assumption.exterior}
        (h_{ij}, k_{ij}) \in (\delta_{ij} + H^{s, \nu - \frac{3}{2}}, H^{s-1, (\nu + 1) - \frac{3}{2}}).
    \end{equation}
    Then, for every $\de \in (0, \min\{\nu,\f 1{20} \})$, there exists $\widetilde{R} > R$ such that the following conclusions hold:
    \begin{enumerate}[i)]
        \item Let $\widetilde{\mathcal K} \supseteq \mathcal K$ be the compact set such that $\Sigma \setminus \widetilde{\mathcal K}$ corresponds to $ \mathbb R^3 \setminus B_{\widetilde{R}}(0)$ under the diffeomorphism above. Then the maximal future Cauchy development $(M,g)$ of $\Sigma\setminus \widetilde{\mathcal K}$ may be covered by a single Newman--Unti coordinate chart, and the outgoing null hypersurface emanating from $\rd \widetilde{K}$ (given as the constant $\{u=1-\widetilde{R}\}$ hypersurface in the Newman--Unti coordinates) is a future null boundary of the spacetime. 
        \item There exists $\ep>0$ such that the estimates \eqref{eq:BA.weak.pointwise}, \eqref{eq:BA.weak.integrated}  hold in $(M,g)$.
        \item If, in addition, \eqref{eq:assumption.exterior} holds with $\nu = 3-\de$, then after choosing $\widetilde{R}$ larger, the estimates \eqref{eq:BA.strong.pointwise}, \eqref{eq:BA.strong.integrated} hold for some $\ep>0$.
    \end{enumerate}
\end{corollary}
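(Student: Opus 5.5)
The plan is to repeat the proof of Theorem~\ref{thm:main}, localised to the exterior region. The centre geodesic $\Gamma$ is replaced by the null cone $H_{u_*}$, $u_* := 1-\widetilde{R}$, emanating outward from $\rd\widetilde{\mathcal K}$. This removes the need for \S\ref{sec:centre}. The price is that all $e_4$-transport equations must now be initialised on $H_{u_*}$ rather than at $r=0$.

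\textbf{Step 1: coordinates.} Choose $\widetilde{R}$ large. The weighted norm of $(h-\delta,k)$ in $H^{s,\nu-3/2}\times H^{s-1,\nu-1/2}$ restricted to $\{|x|>\widetilde{R}\}$ is a tail of a convergent integral, so it tends to $0$ as $\widetilde{R}\to\infty$. After a suitable extension, this gives the smallness $\mathfrak n<\mathfrak e$ on $\Sigma\setminus\widetilde{\mathcal K}$. The delicacy is that the definitions involve $\brk{u}$ with $u$ near $1-\widetilde{R}$. I would translate $u$ so that the exterior is $\{u\ge 1-\widetilde{R}\}$, with weights $\brk{u}+r\sim r$ on the data.

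Construct Newman--Unti coordinates as in \S\ref{sec:coords}, but only on $\{\overline r\ge \widetilde{R}\}$. Use the null vector field $W=\ell(N+\iota_*\overline N)$ of \eqref{eq:e4initdef} and affinely parametrised generators. Then $H_{u_*}$ is the cone from the sphere $\{\overline r=\widetilde{R}\}$, and it bounds the domain of dependence of $\Sigma\setminus\widetilde{\mathcal K}$. This gives i), modulo global existence.

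\textbf{Step 2: initial data on $\tilde\Sigma_1\cup H_{u_*}$.} Using Theorem~\ref{thm:data.lemma}, localised to $\{|x|>\widetilde{R}\}$, obtain \eqref{eq:data.alp.weak}--\eqref{eq:data.weak.int} on $\tilde\Sigma_1\cap\{u\ge u_*\}$. Then obtain bounds on $H_{u_*}$ by a local-in-$u$ argument. Near $H_{u_*}$ the Cauchy development is close to Minkowski: in the region $u\in[u_*,u_*+1]$ apply the same transport and Teukolsky estimates, where all weights are comparable and there is no $u$-decay issue. The Ricci coefficients on $H_{u_*}$ do not vanish at $r=\widetilde{R}$. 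However, they are of size $\epsilon^2\widetilde{R}^{-(\text{weight})}$, which is exactly what the norms require.

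\textbf{Step 3: bootstrap.} Run the bootstrap of \S\ref{sec:assumptions} on $S_T\cap\{u\ge u_*\}$. In Lemma~\ref{lem:basicest} and Lemma~\ref{lem:nab3est}, the flux $\mathcal F_1$ at $r=r_0\to 0$ is replaced by the flux through $H_{u_*}$, namely $\int_{H_{u_*}}|r\nab_4\phi+3\phi|^2w$. This enters as a data term with the correct sign on the right-hand side. The commuted estimates go through verbatim, since the commutators are tangential to $H_u$. In Theorem~\ref{thm:transest} and Corollary~\ref{cor:bgtrans}, integrate from $r_0(u)=\max\{1-u,\,r\text{ on the boundary}\}$, exactly as already done for $u<1$. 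For $u\ge u_*$, every generator starts on $\tilde\Sigma_1$, so $H_{u_*}$ is only needed for the Teukolsky flux and for the $\tau$-construction. The time function $\tau$ of \S\ref{sec:tau} simply restricts; its level sets remain spacelike because $r\gtrsim\widetilde{R}$. No axis regularity is needed, and the continuation argument only uses the interior smoothness away from $\Gamma$. Closing the bootstrap via Corollary~\ref{cor:alpha.final} and Propositions~\ref{prop:fixed.sphere} and~\ref{prop:integrated} then gives ii).

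\textbf{Step 4: strong case and main obstacle.} For iii), taking $\nu=3-\delta$ with no $M,J$ tails is allowed only because the constraint obstruction (positive mass) disappears on an exterior region. The strong data bounds \eqref{eq:data.alp.strong}--\eqref{eq:data.strong.int} then follow directly from decay, for $\widetilde{R}$ large. The rest of the strong argument runs as before.

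The main obstacle I expect is Step~2. One must show that the data induced on $H_{u_*}$ satisfy the weighted bounds with the right $r$-weights, including $s_0+1$ derivatives of $\beta$ and $\nab_3\alpha$. This requires a careful local existence argument in a slab, with the curvature-level estimates redone there. I would first try to derive these bounds by applying the whole scheme to the truncated slab $u_*\le u\le u_*+1$, where all weights are trivially comparable.
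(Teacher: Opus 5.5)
\textbf{Genuine gap: the orientation of the exterior region is reversed.} On $\tilde\Sigma_1$ one has $u=1-r$, so the exterior data $\{|x|>\widetilde R\}$ lie at $u<1-\widetilde R$. Their development is $\{u\leqslant 1-\widetilde R\}$, and $H_{1-\widetilde R}$ is its \emph{future} null boundary, as the statement itself says. It is not an inner or past boundary on which data must be posed.

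Your region $\{u\geqslant u_*\}$ is the wrong one. On $\tilde\Sigma_1$ it is $\{r\leqslant \widetilde R\}$, i.e.\ the data in $\widetilde{\mathcal K}$, and in spacetime it contains the axis and the development of $\widetilde{\mathcal K}$. This reversal creates your \emph{main obstacle}. Step~2 proposes to produce bounds on $H_{u_*}$ by running the scheme in the slab $u_*\leqslant u\leqslant u_*+1$. That slab lies outside the domain of dependence of $\Sigma\setminus\widetilde{\mathcal K}$ and is determined by data in $\widetilde{\mathcal K}$. Those data are neither small nor part of the hypotheses, so the solution there need not be close to Minkowski, and Step~2 cannot be carried out. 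For the same reason, your Step~3 bookkeeping is wrong: the $H_{u_*}$ flux does not enter as a data term on the right-hand side.

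\textbf{With the orientation corrected, the obstacle disappears and your outline becomes the paper's proof.} In Lemmas~\ref{lem:basicest} and~\ref{lem:nab3est}, take $u_0\leqslant 1-\widetilde R$. The region $\{u<u_0\}$ satisfies $r\geqslant 1-u>\widetilde R$, so it never reaches $r=0$ and there is no $\mathcal F_1$. The flux through $H_{u_0}$ is the good-sign term on the left-hand side. Likewise, for $u\leqslant 1-\widetilde R$ every $e_4$-transport estimate is initialised on $\tilde\Sigma_1$ at $r=1-u\geqslant\widetilde R$, exactly as in the $u<1$ case of Corollary~\ref{cor:bgtrans}. Hence the whole bootstrap on $S_T\cap\{u\leqslant 1-\widetilde R\}$ sees only data on $\tilde\Sigma_1\cap\{u\leqslant 1-\widetilde R\}$. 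No data on any null hypersurface are needed, \S\ref{sec:centre} is irrelevant, and no translation of $u$ is required, since $\brk{u}+r\sim r$ already holds on the data.

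The only input is smallness of the $\delta$-weighted norm of $(h-\delta,k)$ on $\{|x|>\widetilde R\}$. You correctly obtain this from $\nu>\delta$ by taking $\widetilde R$ large.

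The piece of i) you merely assert is maximality. The paper argues by contradiction: if the development extended past $H_{1-\widetilde R}$, there would be a timelike curve $\gamma$ with $u(\gamma(s))\to 1-\widetilde R$. Then $J^+(\Sigma)\cap J^-(\gamma(0))$ would be non-compact, contradicting global hyperbolicity.
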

\begin{proof}
    Let $\de \in (0, \min\{\nu,\f 1{20} \})$. In particular, since $\nu - \de>0$, the assumption \eqref{eq:assumption.exterior} implies that for every $\mathfrak e>0$, there exists $\widetilde{R} > R$ sufficiently large such that
    $$\mathfrak n_e:=\| h_{ij} - \delta_{ij} \|_{H^{s, \delta-\frac{3}{2}}(\mathbb R^3 \setminus B_{\widetilde{R}}(0))} + \| k_{ij} \|_{H^{s-1, \delta-\frac{1}{2}}(\mathbb R^3 \setminus B_{\widetilde{R}}(0))} < \mathfrak e.$$
    Thus, the estimates in Theorem~\ref{thm:main-1} can be applied to construct a solution in the Newman--Unti gauge for $u \leqslant 1-\widetilde{R}$. A similar argument applies to show the stronger estimates when \eqref{eq:assumption.exterior} holds with $\nu = 3-\de$. 
    
   It then suffices to show that the region constructed is the maximal future Cauchy development. If not, then there is a smooth globally hyperbolic extension $(\widetilde{M},\widetilde{g})$ and a timelike curve $\gamma: (-\eta,\eta)\to \widetilde{M}$ such that $\gamma|_{(-\eta,0)} \subset M$ and $\lim_{s\to 0^-} u(\gamma(s)) = 1-\widetilde{R}$. However, it follows that $J^+(\Sigma) \cap J^-(\gamma(0))$ is non-compact, contradicting global hyperbolicity (see \cite[Theorem~8.3.12]{rmW1984}).
\end{proof}

We next formulate the result for stability of Minkowski spacetime for the spacelike-characteristic initial value problem. The data are given on a spacelike hypersurface $\Sigma= B_1(0)$ and a characteristic hypersurface $H_0$ which share a boundary diffeomorphic to $\mathbb S^2$. On $\Sigma$ we prescribe the usual Cauchy data, but observe that we only need to prescribe characteristic data for $\gamma$ on $H_0$. The other components, i.e., $b$ and $f$, as well as the other geometric quantities depending on the transversal derivatives of the metric components (e.g., $\chib$ and some curvature components), are determined by the corresponding transport equation and the initial data computed from $(\Sigma,h,k)$.

\begin{corollary}[Stability for the spacelike-characteristic initial value problem]\label{cor:graf}
    Let $s \in \mathbb Z_{\geqslant 30}$. Consider the spacelike-characteristic initial value problem satisfying the following conditions:
    \begin{itemize}
        \item Suppose smooth (up to the boundary) Cauchy data $(\Sigma= B_1(0),h,k)$ are given. Denote
        $$\mathfrak n_\Sigma:=\|h_{ij}- \de_{ij} \|_{H^s} + \|k_{ij}\|_{H^s}.$$
        \item Suppose smooth characteristic initial data for $\gamma$ are given on $H_0 = [1,\infty) \times \mathbb S^2$ associated with a geodesic foliation. Define $\chi$ and $\alp$ by (the third equation in) \eqref{gauge.con} and \eqref{eq:4chih}, respectively, and assume that the constraint equation \eqref{eq:4trchi} holds. For $s_0\in \mathbb Z_{\geq 20}$, $\nu>0$, denote by $\mathfrak n_H$ the following norm\footnote{Observe that the middle index corresponding to $p_0$ in the norms here are irrelevant since the data are posed in a region where $r \geq 1$.} on $\alp$:
        \begin{equation}\label{eq:epH.def}
            \mathfrak n_H:= \| r \nab_4 \alp + 5 \alp \|_{s_0, -1, 2+\nu}(H_0) + \|\alp\|_{s_0+1,-1,\f 32 +\nu}(H_0).
        \end{equation}
        \item Identify the boundary $\rd\Sigma = \rd B_1(0)$ with the sphere $\{1\}\times \mathbb S^2\subset H_0$. On $\Sigma$, define $\gamma$ to be the restriction of $h$ to the spheres $\rd B_r(0)$ for $r \in (0,1)$. We then require that $\gamma$ agrees at the intersection $\rd B_1(0) = \{1\}\times \mathbb S^2$ up to all orders. 
    \end{itemize}
    There exists $\mathfrak e_C = \mathfrak e_C(s,s_0,\de)>0$ such that if $\mathfrak n_\Sigma + \mathfrak n_H < \mathfrak e_C$, then the following conclusions hold for some $C = C(s,s_0,\de)>0$.
    \begin{enumerate}[i)]
    \item The maximal future Cauchy development $(M,g)$ is globally smooth and future complete and may be covered by a single Newman--Unti chart, which is regular away from the centre geodesic.
    \item The estimates \eqref{eq:BA.weak.pointwise}, \eqref{eq:BA.weak.integrated}  hold in $(M,g)$ with $\ep$ replaced by $C(\mathfrak n_\Sigma + \mathfrak n_H)$.
        \item There exists $\mathfrak e'_C = \mathfrak e'_C(s,s_0,\de)>0$ such that if $\nu = 3-\de$ in \eqref{eq:epH.def} and $\mathfrak n_\Sigma + \mathfrak n_H < \mathfrak e'_C$, then the estimates \eqref{eq:BA.strong.pointwise}, \eqref{eq:BA.strong.integrated} hold with $\ep$ replaced by $C(\mathfrak n_\Sigma + \mathfrak n_H)$.
    \end{enumerate}
    
\end{corollary}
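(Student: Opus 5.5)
The plan is to run the bootstrap argument of Theorem~\ref{thm:main} essentially unchanged, but on the future domain of dependence of $\Sigma\cup H_0$ rather than on the development of a full Cauchy surface. The only new ingredient is the construction of initial values for every quantity in \eqref{eq:data.weak}--\eqref{eq:data.weak.int} (and \eqref{eq:data.strong}--\eqref{eq:data.strong.int}) on the combined initial surface $\tilde\Sigma_1\cup H_0$.

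\textbf{Coordinates and data near the centre.} On the bounded piece $\Sigma=B_1(0)$ I would build the centre-normalised Newman--Unti coordinates exactly as in \S\ref{sec:coords} (Theorem~\ref{thm:coords}). The cutoff is arranged so that the outgoing cone from $\partial B_1(0)$ is $H_0=\{u=u_*\}$ for a fixed $u_*$. On $\tilde\Sigma_1\cap\{u\geqslant u_*\}$ the bounds on all geometric quantities then follow from $\mathfrak n_\Sigma$ by the argument of Theorem~\ref{thm:data.lemma}. This is a compact-region computation with no weights at infinity, using the loss $s>s_0+6$.

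\textbf{Data on $H_0$.} I would derive the remaining data by integrating the transport hierarchy \eqref{eq:Ricci.sequence} along $H_0$ from the sphere $\{1\}\times\mathbb S^2$. The steps are:
\begin{itemize}
\item $\chi$ and $\alpha$ are given directly by $\gamma$. The condition \eqref{eq:4trchi} is the Raychaudhuri constraint.
\item The equations \eqref{eq:4beta}, \eqref{eq:4zeta}, \eqref{eq:4rho}, \eqref{eq:4chibh}, \eqref{eq:4trchib}, \eqref{eq:omegab}, \eqref{eq:b.eqn}, $\nab_4 f=2\omegab$ and $\xib=\nab f$ then determine $\beta,\eta,\rho,\sigma,\chib,\omegab,b,f,\xib$ on $H_0$. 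Their boundary values come from $(h,k)$ at $\partial B_1$.
\item For $\nab_3\alpha$, needed in \eqref{eq:data.alp.weak}, I use \eqref{eq:3alpha}.
\end{itemize}
Each of these steps is an application of Theorem~\ref{thm:transest} restricted to the single hypersurface $H_0$, with the initial sphere at $r=1$ replacing the axis condition. Since $r\geqslant 1$ there, the $p_0$ indices are irrelevant, and the $p_\infty$ outputs match those of the Cauchy case whenever $\nu>\delta$. The matching of $\gamma$ to all orders at $\partial B_1$ guarantees that the resulting fields are smooth across the corner, so the integrations over the bootstrap region in Lemma~\ref{lem:basicest} and Theorem~\ref{thm:main.Teukolsky} are legitimate. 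In those integrations the flux term $\mathcal F_4$ on $\tilde\Sigma_1$ is now supplemented by a flux on $H_0$. That extra flux is controlled by $\mathfrak n_H$, which is the reason $\mathfrak n_H$ includes $r\nab_4\alpha+5\alpha$ and the $s_0+1$ norm of $\alpha$.

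\textbf{Bootstrap and conclusions.} With these data, the bootstrap proceeds exactly as in the proof of Theorem~\ref{thm:main}. The time function $\tau$ of \S\ref{sec:tau} is simply cut off to $u\geqslant u_*$. Regularity at the centre follows from Theorem~\ref{thm:geomcont1}, future completeness from Lemma~\ref{lem:complete}, and maximality from the argument used in the exterior corollary above. For part iii), take $\nu=3-\delta$. The Kerr-cancellation discussion of Remark~\ref{rem:AF.data.cancel} is then unnecessary, since the decay of $\alpha$ is assumed directly on $H_0$ and $\beta$ and $\rho$ inherit their rates through the transport estimates, as in Remark~\ref{rem:strong-peeling-geom}.

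\textbf{Main obstacle.} I expect the hardest step to be closing the top-order data for $\beta$, $\eta$ and $\chi$ on $H_0$ at $s_0+1$ derivatives in the integrated norms \eqref{eq:data.weak.int}. Transport along $H_0$ alone loses a derivative. To recover it I would repeat Step~1 of Proposition~\ref{prop:integrated}: the elliptic estimate Lemma~\ref{lem:elliptic} on each sphere of $H_0$, combined with \eqref{eq:3alpha} for $\nab\hot\beta$, using $\nab_3\alpha$ obtained from the wave-type identity restricted to $H_0$.
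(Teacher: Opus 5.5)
Your overall architecture matches the paper: the same bootstrap, with the $H_0$-flux in the $r^p$ and $\nab_3$ identities controlled by $\mathfrak n_H$. However, the proposal never produces a solution to run the bootstrap on. The argument of Theorem~\ref{thm:main} starts from an already existing smooth development, namely the maximal Cauchy development of $(\Sigma,h,k)$, on which Corollary~\ref{cor:loc.exist.bounds} gives $T_{\max}>0$. Its continuation step then invokes local well-posedness. For data on $B_1(0)\cup H_0$ there is no Cauchy development to begin with.

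Supplying this is essentially the whole content of the paper's proof:
\begin{enumerate}
\item solve the Cauchy problem for $(B_1(0),h,k)$ \cite{yCBrG1969};
\item observe, by domain of dependence, that this development extends smoothly up to the incoming null cone $\underline{H}$ from $\partial B_1(0)$;
\item use Rendall's reduction \cite{adR1990} to solve the characteristic problem with data on $\underline{H}\cup H_0$.
\end{enumerate}
The all-orders agreement of $\gamma$ at $\partial B_1(0)$ is the corner compatibility condition for this construction, not merely a device for integrating the energy identities. A characteristic local theory of this kind is also what lets you extend past $\Sigma_{T_{\max}}$ near its intersection with $H_0$, where the Cauchy-problem continuation argument of \S\ref{sec:proof} does not apply.

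Separately, your ``data on $H_0$'' step and the obstacle you single out are not needed, and the fix you propose would not work.
\begin{itemize}
\item Since $H_0$ is a level set of $u$, its boundary terms in Lemmas~\ref{lem:basicest} and~\ref{lem:nab3est} are $\int_{H_0}|r\nab_4\phi+3\phi|^2w$ and $\int_{H_0}(|r\nab\phi|^2+4|\phi|^2)\tilde{w}$. For $\phi=r^2\alp$ these are exactly the two pieces of $\mathfrak n_H$. No $\nab_3\alp$ appears; that term enters \eqref{eq:data.alp.weak} only because $\tilde\Sigma_1$ is spacelike.
\item For $u\geqslant 0$, every $e_4$-transport equation is initialised either at the axis or on the compact piece $\tilde\Sigma_1\cap\{r\leqslant 1\}$. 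The integrated norms are spacetime norms whose data in \eqref{eq:main.int.transest} live only there.
\item Consequently $H_0$ is just one of the cones $H_u$, handled by Propositions~\ref{prop:fixed.sphere} and~\ref{prop:integrated} like the others, and no top-order integrated data for $\beta,\eta,\chi$ on $H_0$ are ever required.
\end{itemize}
This matters because on a single null hypersurface you have no independent control of $\nab_3\alp$. Obtaining it from \eqref{eq:3alpha} requires $\nab\hot\beta$, which is what you are trying to recover. Obtaining it from the Teukolsky equation, read as a transport equation along $H_0$, costs two angular derivatives of $\alp$. So your elliptic recovery of top-order $\beta$ on $H_0$ is either circular or loses derivatives. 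In Step~1 of Proposition~\ref{prop:integrated} the same elliptic argument works only because the wave estimate controls $\nab_3(r^2\alp)$ in a spacetime norm.
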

\begin{proof}
    We will explain how to construct a local solution under the given assumptions. The estimates we need are then identical to those in the proof of Theorem~\ref{thm:main}. Indeed, for the wave estimates for $\alp$, we argue exactly as in \S\ref{sec:teukolsky} and \S\ref{TeukEsts}, except that we now have initial data on $H_0$ instead of $\tilde{\Sigma}_1$; observe that the data on $H_0$ exactly have to be controlled in the norm \eqref{eq:epH.def}. The bounds for the other geometric quantities can be derived in the same manner as in \S\ref{sec:other-geom}.

    To obtain a local solution, we first solve for the maximal globally hyperbolic future development of the data $(\Sigma,h,k)$ using \cite{yCBrG1969}. By standard domain of dependence argument (for instance in wave coordinates), locally the solution is smoothly extendible up to an incoming null hypersurface $\Hb$ emanating from $\rd B_1(0)$ in the initial data. The result of \cite{adR1990} can then be used to solve the characteristic initial value problem with data on $\Hb$ and $H_0$, constructing a local solution to the mixed spacelike-characteristic initial value problem. \qedhere

\end{proof}

\begin{remark} \label{rem:strong-peeling-spnull}
    In \eqref{eq:epH.def}, if we additionally require $\nu = 3+\f{\de}{2}$, then a very similar proof shows that the smallness of this norm propagates; see Remarks~\ref{rem:strong-peeling-rp} and \ref{rem:strong-peeling-geom} for more details. This will in particular allow us to show using the fundamental theorem of calculus and Cauchy--Schwarz that $\lim_{r\to \infty} (r^5 \alp)(u,r)$ exists and that Bondi--Sachs peeling holds (see \eqref{eq:Bondi.Sachs.def}).
    
    Observe that there are a couple of key differences between the case of Cauchy data in Theorem~\ref{thm:main-1} and the case of the spacelike-characteristic initial value problem in Corollary~\ref{cor:graf}. First, in Theorem~\ref{thm:main-1}, if one imposes a smallness condition for $\nu > 3$, then this forces $\lim_{r\to \infty} (r^5 \alp)(1-r,r) =0$ (see the norms on the right-hand side of Corollary~\ref{cor:BasicEst}). On the other hand, in the setting of Corollary~\ref{cor:graf}, imposing a smallness condition for $\nu > 3$ is consistent with $\lim_{r\to \infty} (r^5 \alp)(u,r)$ being well-defined but non-trivial (see the norms in \eqref{eq:epH.def}). Second, in Theorem~\ref{thm:main-1}, the decay rates of various initial data quantities are constrained by the slowly decaying terms in Definition~\ref{def:AF.data} (see also Remark~\ref{rem:AF.data.cancel}), which restrict the decay rates that can be propagated via the $e_{4}$-transport equations. In contrast, in the setting of Corollary~\ref{cor:graf}, the initialization of these transport equations occurs entirely on the compact region $\Sigma = B_{1}(0)$, and thus such constraints do not exist (see also Remark~\ref{rem:strong-peeling-geom}).
\end{remark}

\section{Proof of theorem \ref{thm:data.lemma}} \label{sec:appendix2}

In this section we show that bounds on the Cauchy data for the Einstein equations in suitable (weighted) Sobolev spaces imply our initial data assumptions at the level of geometric quantities tied to the Newman--Unti coordinates. We first need to establish some comparison results for these quantities for solutions with initial data which are close.

Suppose $\mathfrak{S}_j=(h_j, k_j), j=1, 2$ are two smooth initial data sets defined on $\mathbb{R}^3$, and set $\mathfrak{S}_0=(\delta, 0)$. For $0\leqslant r_0 <r_1$ we let
\[
\Delta^s_{r_0, r_1}(\mathfrak{S}_1, \mathfrak{S}_2) =  \|h_1-h_2\|_{H^{s+1}(\{r_0<|x|<r_1\})}+\|k_1-k_2\|_{H^{s}(\{r_0<|x|<r_1\})}.
\]
For each data set $\mathfrak{S}_j$ we construct the local extension in Newman--Unti coordinates, as in \S\ref{sec:coords}, recalling that the construction differs inside and out of a $\delta r$ neighbourhood of the axis. We let $(f_j, \gamma_j, b_j)$ be the respective metric coefficients. 

\begin{lemma}\label{thm:data.lemma1}
    Suppose $3\leqslant s \in \mathbb N$ and
    \[
    \Delta^{s}_{r_0, r_1}(\mathfrak{S}_1, \mathfrak{S}_0)+\Delta^{s+3}_{r_0, r_1}(\mathfrak{S}_2, \mathfrak{S}_0)=P.
    \]
    Then for any $\delta r<r_0<r_1$, $P<P_0$, $p_0, p_\infty$ we have
    \begin{align}
    &\|f_1-f_2\|_{s+1, p_0, p_\infty}(A_{r_0,r_1}) +\|\gamma_1-\gamma_2\|_{s+1, p_0, p_\infty}(A_{r_0,r_1}) +\|b_1-b_2\|_{s+1, p_0, p_\infty}(A_{r_0,r_1})\nonumber \\
     &+|f_1-f_2|_{s, p_0, p_\infty}(A_{r_0,r_1})+|\gamma_1-\gamma_2|_{s, p_0, p_\infty}(A_{r_0,r_1}) +|b_1-b_2|_{s, p_0, p_\infty}(A_{r_0,r_1})\label{eq:data.lemma1.est}\\&\qquad \leqslant c \Delta^{s}_{r_0, r_1}(\mathfrak{S}_1, \mathfrak{S}_2)\nonumber
    \end{align}
    where $c$ depends on $s, r_0, r_1, p_0, p_\infty$ and $P_0$ and $A_{r_0,r_1} = \tilde{\Sigma}_1\cap \{r_0<r<r_1\}$.
\end{lemma}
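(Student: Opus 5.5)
The region $A_{r_0,r_1}$ lies entirely outside the $\delta \overline{r}$-neighbourhood of the axis, since $r_0>\delta r$. Here the construction of \S\ref{sec:coords} is explicit: $(u,r,\theta)=(1-\overline{r},\overline{r},\overline{\theta})$ on $\iota(\Sigma)$, and by \eqref{eq:e4-ids} we have $e_4=\ell(N+\iota_*\overline{N})$ and $e_3=\ell^{-1}(N-\iota_*\overline{N})$. The plan is therefore to express every metric coefficient $(f,\gamma,b)$ and its derivatives in $r,\theta$, and (via the equations) in $u$, as smooth nonlinear functions of $(h,k)$ and their spatial derivatives. We then apply a Lipschitz/Moser-type estimate for such functions on the compact annulus. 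Weights $p_0,p_\infty$ are harmless, since on $A_{r_0,r_1}$ they are bounded above and below by constants depending on $r_0,r_1,p_0,p_\infty$. We may therefore replace all the weighted norms by unweighted Sobolev norms on the annulus.

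\textbf{Step 1 (zeroth order in $u$).} The tangential quantities follow directly from the data. The restriction of $\gamma$ to $\tilde{\Sigma}_1$ is the induced metric of $h$ on $\{|x|=\overline{r}\}$. Using $g(\partial_u,\cdot)$ at $\Sigma$, together with $\partial_u=\tfrac12(e_3+fe_4)-b^A\partial_{\theta^A}$ and the decomposition of $\partial_{\overline{r}}=\partial_r-\partial_u$ tangent to $\Sigma$, we obtain $f$ and $b$ algebraically in terms of $h_{ij}$, $h^{ij}$ and $\ell$. In particular $\gamma,f,b$ on $\tilde\Sigma_1$ are smooth functions of $h$ alone. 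Their $r$- and $\theta$-derivatives along $\tilde{\Sigma}_1$ are then controlled by differences in $H^{s+1}$ of $h$.

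\textbf{Step 2 (transverse derivatives).} The norms $|\cdot|_s$ involve $r\mathring\nab_4$, which is transverse to $\tilde\Sigma_1$ (it is a combination of $\partial_{\overline r}$ and $\partial_u$), and $\langle u\rangle\mathring\nab_u$. We recover these derivatives through the relations of Proposition~\ref{prop:gauge.con}. First, $\partial_r\gamma=2\chi$ and $\chi_{AB}=g(D_Ae_4,e_B)$, which is expressed via $k$ and the second fundamental form of the spheres in $(\Sigma,h)$. Next, $\partial_rf=2\omegab$ and $\partial_r b=\etab-\eta$, again expressible through $(h,\partial h,k)$ on $\Sigma$. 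Higher $r$- and $u$-derivatives are then obtained inductively from the transport and propagation equations of \S\ref{sec:equations}. Each application of $\nab_3$ or $\nab_4$ to a curvature or Ricci quantity is rewritten through the vacuum Einstein equations $\mathrm{Ric}=0$, i.e.\ through $(h,k)$ and their spatial derivatives, exactly as in the Cauchy--Kovalevskaya-type computation of the jet of a vacuum solution on $\Sigma$. Each transverse derivative costs at most one spatial derivative of the data. The count $s+1$ for the integrated norms against $s$ for the pointwise norms, together with $H^{s+1}\times H^s$ in $\Delta^s$, is consistent with this, the pointwise norms losing through Sobolev embedding on spheres.

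\textbf{Step 3 (difference estimate).} Every quantity $Q_j$ is thus $Q_j=\mathcal F(\partial^{\le m}h_j,\partial^{\le m-1}k_j)$ for a fixed smooth $\mathcal F$, with $\mathcal F$ depending on the inverse of $h$ and on $\ell$. Writing $\mathcal F(\cdot_1)-\mathcal F(\cdot_2)=\int_0^1 D\mathcal F(\cdot_2+t(\cdot_1-\cdot_2))\,dt\,(\cdot_1-\cdot_2)$, we obtain a product of a difference with a coefficient. The difference is controlled by $\Delta^s(\mathfrak S_1,\mathfrak S_2)$. The coefficient is bounded in $L^\infty$ and in $H^{s}$-type norms by $P<P_0$ via Sobolev embedding and Moser product estimates on the annulus. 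The asymmetric assumption, $s$ for $\mathfrak S_1$ but $s+3$ for $\mathfrak S_2$, is what allows the highest derivatives to fall on the difference while the coefficient involving $\mathfrak S_2$ is placed in $L^\infty$.

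\textbf{Main obstacle.} The main difficulty is the bookkeeping in Step 2, namely verifying that the $\langle u\rangle\mathring\nab_u$ derivatives at order $s$ cost no more than $s+1$ derivatives of $h$ and $s$ of $k$. I would organise this by first expressing $\nab_u$ in terms of the unit normal derivative, $\partial_u=\tfrac12\ell N+\cdots$, and then invoking the standard fact that the $m$-jet of $g$ in Gaussian normal coordinates is determined by the $(m{+}1,m)$-jets of $(h,k)$.
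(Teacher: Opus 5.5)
Your argument is correct in outline, but it takes a different route from the paper. The paper does not extract the jets of $(f,\gamma,b)$ from $(h,k)$ by hand. Instead it proceeds as follows:
\begin{itemize}
\item It quotes local well-posedness in harmonic coordinates together with a continuous-dependence estimate (Ringstr\"om, (9.33)). This bounds $\partial_0^i(g_1-g_2)$ in $H^{s+1-i}$ on slices of a spacetime neighbourhood of the annulus by $c\,\Delta^s_{r_0,r_1}(\mathfrak S_1,\mathfrak S_2)$.
\item It propagates this bound through the null field $W_j$ and the geodesic flow to the embedding maps $X_j^\mu(u,r,\theta)$, which gain one derivative.
\item It reads off $f,\gamma,b$ from $g(\partial X,\partial X)$.
\item The pointwise norms follow from a radial Sobolev embedding. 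This is an embedding in $\overline r$ along $\tilde\Sigma_1$, not on spheres, since $|\cdot|_s$ is already an $L^2(S_{u,r})$ norm; the same applies in your route.
\end{itemize}
The extra regularity $s+3$ on $\mathfrak S_2$ is needed there because difference estimates for quasilinear hyperbolic systems lose derivatives.

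You instead use the fact that the norms in \eqref{eq:data.lemma1.est} only see jets of $(f,\gamma,b)$ along $\tilde\Sigma_1$. You compute these jets from $(h,k)$, using that $\Sigma$ is non-characteristic, and close with Moser-type product and composition estimates on the annulus. This avoids the hyperbolic theory. Since $s+1>\frac32$, it would not even need the asymmetric hypothesis. The price is that all the work sits in the bookkeeping of Step~2, which you only sketch. One simplification: $\partial_{\overline r}=\partial_r-\partial_u$ is tangent to $\tilde\Sigma_1$, so every mixed $u$-derivative reduces to pure $\partial_r$-jets plus tangential derivatives. The $\langle u\rangle\mathring{\nab}_u$ terms are therefore not a separate obstacle.

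Two points in Step~2 need correcting.

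\textbf{Missing equation for $\nab_4\alpha$.} The null Bianchi equations of \S\ref{sec:Bianchi} contain no equation for $\nab_4\alpha$. Yet $\nab_4\alpha$ already enters $\partial_r^3\gamma$, through $\nab_4\chih=-\trch\,\chih-\alpha$, so it is needed at the order $s+1\geqslant 4$ you must reach. On $\Sigma$, \eqref{eq:e4-ids} gives $e_4=\ell^2e_3+2\ell\,\iota_*\overline N$. Use this to trade $\nab_4\alpha$ for $\nab_3\alpha$ (given by \eqref{eq:3alpha}) plus a tangential derivative, and iterate with the commutation formulae.

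\textbf{Jet count off by one.} In Gaussian normal coordinates the $m$-jet of $g$ is determined by the $(m,m-1)$-jets of $(h,k)$, since $\partial_t h\sim k$ and $\partial_t k\sim \mathrm{Ric}[h]+k\cdot k$. Moreover, $\Sigma$ is non-characteristic for the eikonal and geodesic equations, so the $(m+1)$-jet of the Newman--Unti coordinate functions on $\Sigma$ depends only on the $m$-jet of $g$. With the $(m+1,m)$ count you quote, the top-order term $\|f_1-f_2\|_{s+1,p_0,p_\infty}(A_{r_0,r_1})$ would require $H^{s+2}\times H^{s+1}$, which $\Delta^s$ does not control. With the correct count the argument closes; this is the count your sentence ``each transverse derivative costs at most one spatial derivative'' actually uses.
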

\begin{proof}
    Standard local existence theory for Einstein's equations in harmonic coordinates $(x^\mu)$ guarantees a solution $g_j$ in a neighbourhood of $\{x^0=1, r_0<|x|<r_1\}$ corresponding to initial data $(h_j, k_j)$, and moreover that we can bound 
    \begin{align*}
     &\sup_{-t_0<t-1<t_0}\left[\sum_{i=0}^{s+1 } \|\partial_{0}^{i}(g_1-g_2)\|_{H^{s+1-i}(\{x^0=t\}\cap U)} \right] \leqslant c \Delta^{s}_{r_0, r_1}(\mathfrak{S}_1, \mathfrak{S}_2) =:c\Delta
    \end{align*}
    for some $t_0>0$ and $U$ an open neighbourhood of $\{x^0=1, r_0<|x|<r_1\} \simeq \tilde{\Sigma}_1\cap \{r_0<r<r_1\}$. See for example \cite{Ringstrom2009TheRelativity}, using (9.33) for the continuity estimate. Here and below $c$ is as in the statement of the lemma, but may increase from line to line. 
    
    Next, for each metric we construct the vector field $W_j = W_j(r, u, \theta)$, (c.f.\ \S \ref{sec:coords}), order by order on $\tilde{\Sigma}_1$ using the geodesic equation starting from the initial condition \eqref{eq:e4initdef}. We deduce from the bound above that (writing $W_j^\mu$ for the components of $W_j$ in the $(x^\mu)$ coordinates):
    \begin{align*}
    &\sum_{i=0}^{s+1} \|\partial_r^{i}(W_1^\mu-W_2^\mu)\|_{H^{s+1-i}(A_{r_0,r_1})} \leqslant c \Delta.
    \end{align*}
    Note that this implies Sobolev bounds on all derivatives of $W_1^\mu - W_2^\mu$ up to order $s+1$ on the intial surface. We deduce from this and the fact that $X_1=X_2$ on $\tilde{\Sigma}_1$ that the embedding maps $X_j^\mu(r, u, \theta)$ satisfy the estimate
    
    \begin{align*}
    &\sum_{i=0}^{s+2} \|\partial_r^{i}(X_1^\mu-X_2^\mu)\|_{H^{s+2-i}(A_{r_0,r_1})} \leqslant c \Delta.
    \end{align*}
    Noting that
    \[
    f = \partial_u X^\mu \partial_u X^\nu g_{\mu \nu}, \quad \gamma_{AB} = \partial_{\theta^A} X^\mu \partial_{\theta^B} X^\nu g_{\mu \nu}, \qquad b_A = \partial_{\theta^A} X^\mu \partial_{u} X^\nu g_{\mu \nu}
    \]
    control of the integrated norms in \eqref{eq:data.lemma1.est} follows. Control of the remaining pointwise norms follows by a Sobolev embedding in the radial direction.
\end{proof}

Next we consider a neighbourhood of the centre

\begin{lemma}
    Suppose $h_1(0)=h_2(0)=\delta$, $5\leqslant s' \in \mathbb N$ and
    \[
    \Delta^{s'}_{0, \delta r}(\mathfrak{S}_1, \mathfrak{S}_0)+\Delta^{s'+3}_{0, \delta r}(\mathfrak{S}_2, \mathfrak{S}_0)=P.
    \]
    Then for any $p_\infty$, $P<P_0$ and integer $s<s'-4$ we have
    \begin{align}
    &\|f_1-f_2\|_{s+1, \frac{3}{2}^+, p_\infty}(A_{\delta r}) +\|\gamma_1-\gamma_2\|_{s+1, \frac{3}{2}^+, p_\infty}(A_{\delta r}) +\|b_1-b_2\|_{s+1, \frac{3}{2}^+, p_\infty}(A_{\delta r})\nonumber \\
     &+|f_1-f_2|_{s, 1^+, p_\infty}(A_{\delta r})+|\gamma_1-\gamma_2|_{s, 1^+, p_\infty}(A_{\delta r}) +|b_1-b_2|_{s, 1^+, p_\infty}(A_{\delta r})\label{eq:data.lemma2.est}\\&\qquad \leqslant c \Delta^{s}_{0,{\delta r}}(\mathfrak{S}_1, \mathfrak{S}_2)\nonumber
    \end{align}
     where $c$ depends on $s, s', p_\infty$ and $P_0$, and $A_{\delta r} = \tilde{\Sigma}_1\cap \{r<\delta r\}$.
\end{lemma}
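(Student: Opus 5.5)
The plan follows the proof of Lemma~\ref{thm:data.lemma1}, but near the centre the Newman--Unti coordinates are built from the construction of \S\ref{sec:coordscentre}: null geodesics emanate from the axis $\Gamma$ rather than from a null vector field on $\tilde{\Sigma}_1$. The weights $r^{-p_0}$ with $p_0 = 1^+$ (pointwise) and $\frac{3}{2}^+$ (integrated) therefore have to be generated by vanishing at $r=0$.

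First, local well-posedness in harmonic coordinates gives $g_1,g_2$ on a fixed neighbourhood $U$ of $\{x^0=1,|x|\leqslant \delta r\}$. Each metric satisfies $H^{s'+1}$-type bounds, and the continuity estimate controls $g_1-g_2$ in $H^{s+1}$ by $c\Delta^s_{0,\delta r}(\mathfrak{S}_1,\mathfrak{S}_2)$. This is the same step as in Lemma~\ref{thm:data.lemma1}. Next compare the centre geodesics $\Gamma_j$, the parallel frames $V_j(u,\theta)$ along them, and the maps $\Psi_j$, including the cut-off interpolation $u_1,r_1,\theta_1$ with $\Sigma$. These are defined by ODEs (geodesic and parallel transport equations, and Jacobi fields) whose coefficients are Christoffel symbols of $g_j$. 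Since $s'>s+4$, Sobolev embedding makes the Christoffel symbols of both metrics $C^{s+2}$, with differences bounded in $C^{s}$ by $c\Delta$; the losses from $H^{s}$ to $C^{s-2}$ are absorbed by the gap $s<s'-4$. ODE stability then bounds $X_1-X_2$ and its $(u,r,\theta)$-derivatives in $C^{s+1}$ on $[0,\delta r]$ by $c\Delta$.

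The main obstacle is extracting the weights near $r=0$. For this, use Theorem~\ref{thm:coords}.ii): for each $j$, $f_j-1=O(r^2)$, $b_j=O(r)$, $\gamma_j-\mathring{\gamma}=O(r^4)$. These expansions are determined by $g_j(\Gamma)=\eta$ and by the Jacobi field initial data, which agree for $j=1,2$ because $h_1(0)=h_2(0)=\delta$ and both frames are normalised. Hence the Taylor coefficients of order $0$ in $f_1-f_2$, of orders $0,1$ in $\gamma_1-\gamma_2$, and so on, vanish identically. The next coefficients are given by curvature of $g_j$ along $\Gamma_j$, so their differences are bounded by $c\Delta$. Writing each difference as $r^{k}\times$ (an integral remainder in $r$), with $k=2$ for $f$, $k=1$ for $b$ and $k\geqslant 2$ for $\gamma$, gives the pointwise vanishing needed. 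One has to check that $b_1-b_2=O(r)$ suffices for the weight $1^+$ in the $|\cdot|_{s,1^+,p_\infty}$ norm. I expect this to work because $|\cdot|_s$ is the $L^2(S_{u,r})$ norm with respect to $\mathring{\gamma}$ on the sphere of radius $r$, and the norm definition divides by $r^{p_0+1}$, so an $O(r^2)$ quantity is needed. Here $b$ is a vector field, and its $\mathring{\gamma}$-length carries an extra factor of $r$, which supplies the missing power. The same bookkeeping applies to $\gamma$, whose $\mathring{\gamma}$-norm carries a factor $r^{-2}$.

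The weighted derivatives $r\mathring{\nab}_4$, $r\mathring{\nab}$ and $\brk{u}\mathring{\nab}_u$ preserve these orders of vanishing and are bounded by smooth derivatives of the difference, which gives the pointwise part of \eqref{eq:data.lemma2.est}. The integrated part with $p_0=\frac{3}{2}^+$ follows from the pointwise part at order $s+1$, since the weight $r^{-3-2p_0}$ is integrable against the $O(r^{2(1^+ +1)})$ bound. Since $u$ and $r$ are bounded on $A_{\delta r}$, $p_\infty$ affects only the constant.
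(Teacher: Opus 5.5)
Your architecture matches the paper's: a continuity estimate in harmonic gauge, then comparison of the axis geodesics, parallel frames and embedding maps, then conversion of the vanishing at $r=0$ from \S\ref{sec:coords.form} into the centre weights. However, two steps fail as written.

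\emph{Derivative count.} From $\Delta^{s}_{0,\delta r}(\mathfrak{S}_1,\mathfrak{S}_2)$ you control $g_1-g_2$ only in $H^{s+1}$. The Christoffel differences therefore lie in $H^{s}\subset C^{s-2}$, not in $C^{s}$. Seeding the cones on the axis loses further derivatives: $\Gamma_j$ and $V_j$ need Christoffel symbols restricted to a curve, and the $(u,\theta)$-derivatives of the embedding are Jacobi fields driven by curvature. This is unlike Lemma~\ref{thm:data.lemma1}, where $e_4$ is fixed algebraically on $\tilde{\Sigma}_1$ by \eqref{eq:e4initdef}. The margin $s<s'-4$ bounds each solution separately and cannot absorb losses in an estimate linear in the difference; interpolation would only give a fractional power of $\Delta^{s}$. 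The paper instead runs the continuity estimate at level $s'$, with $\Delta:=\Delta^{s'}_{0,\delta r}(\mathfrak{S}_1,\mathfrak{S}_2)$. It gets $|\Gamma_1^\mu-\Gamma_2^\mu|_{C^{s+4}}\le c\Delta$ for the axis curves and ends with $c\Delta$. So what is actually proved has $\Delta^{s'}$ on the right; the $\Delta^{s}$ in the statement looks like a slip, harmless in Theorem~\ref{thm:data.lemma} where $\mathfrak{S}_2=\mathfrak{S}_0$. You need to measure the difference at level $s'$ too. Your count is also short in a second way. Extracting the factor $r^4$ from $\gamma_1-\gamma_2$ by a pointwise Taylor remainder and then applying $s+1$ weighted derivatives needs about $s+5$ pointwise derivatives. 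Your bound ($X_1-X_2$ in $C^{s+1}$) does not supply these; the paper stays in $L^2$ and uses Hardy's inequality instead.

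\emph{Weights at the centre.} You drop the factor $r$ that $\ud A_{\mathring{\gamma}}$ contributes to $|\cdot|_0$; the $+1$ in $r^{p_0+1}$ cancels exactly this. So the pointwise $1^+$ norm only asks for $\mathring{\gamma}$-size $O(r^{1+\delta})$. More seriously, the integrated bound does not follow from the pointwise one. With $|\phi|_{s+1}\lesssim r^{2+\delta}$, the integrand of $\|\phi\|_{s+1,\frac{3}{2}^{+},p_\infty}$ is $r^{4+2\delta}\,r^{-6-2\delta}=r^{-2}$, which is not integrable at $r=0$. That norm is the $L^2(\ud r/r)$ norm of $r^{-p_0-1}|\phi|_{s+1}$. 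In power-law terms it needs $\mathring{\gamma}$-size $O(r^{\frac{3}{2}+\delta+\kappa})$ for some $\kappa>0$.

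What saves the lemma is that all three differences are in fact $O(r^2)$ in $\mathring{\gamma}$-norm, i.e. $|\cdot|_{s+1}\lesssim c\Delta\,r^3$. For $\gamma$ this requires $\partial_r^k(\gamma_1-\gamma_2)_{AB}|_{r=0}=0$ for all $k\le 3$, which holds because both metrics are $\mathring{\gamma}+O(r^4)$. With only the orders $0,1$ you list, $|\gamma_1-\gamma_2|_{\mathring{\gamma}}$ would be $O(1)$ and even the pointwise bound would fail. The paper uses exactly these vanishing orders: through order $1$ for $f_1-f_2$, order $2$ for $(b_1-b_2)_A$, and order $3$ for $(\gamma_1-\gamma_2)_{AB}$. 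It converts them, by iterating Hardy's inequality in $L^2$, into the weights $r^{(-3)^-}$, $r^{(-4)^-}$, $r^{(-5)^-}$. Only then does it obtain the pointwise part by a radial Sobolev embedding.
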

\begin{proof}
    As in the proof of Lemma \ref{thm:data.lemma1} we can bound
    \begin{align*}
     &\sup_{-t_0<t-1<t_0}\left[\sum_{i=0}^{s+6} \|\partial_{0}^{i}(g_1-g_2)\|_{H^{s+4-i}(\{x^0=t\}\cap U)} \right] \leqslant c \Delta^{s'}_{0, \delta r}(\mathfrak{S}_1, \mathfrak{S}_2) =:c\Delta
    \end{align*}
    for some $t_0>0$ and $U$ an open neighbourhood of $\{x^0=1, |x|<\delta r\} \simeq \tilde{\Sigma}_1\cap \{r<\delta r\}$.
    Examining the construction of coordinates near the centre in \S\ref{sec:coordscentre}, we see that when constructing the centre geodesics $\Gamma_j$ it is necessary to perform a Sobolev embedding in order to restrict the Christoffel symbols to a curve. As a result, writing the coordinates of $\Gamma_j$ as $\Gamma_j^\mu(u)$ in the coordinate chart coming from the local existence result we have
    \[
    \left| \Gamma_1^\mu -\Gamma_2^\mu \right|_{C^{s+4}} \leqslant c \Delta.
    \]
    Constructing the vector fields $V_j(u, \theta)$, we observe that the equation for parallel propagation involves only functions along $\Gamma_j$ -- that is to say $\theta$ appears only in the initial conditions, and moreover in a smooth fashion. As a result, we can estimate on $\Gamma$
    \[
    \left|(\partial_u)^p (\partial_{\theta^1})^q (\partial_{\theta^2})^l (V_1^\mu -  V_2^\mu) \right|\leqslant c \Delta
    \]
    for $p\leqslant s+3$ and \emph{any} $q,l$. We next extend away from $\Gamma$ by solving the geodesic equation to construct the embedding map $X_j^\mu(r, u, \theta)$. In view of the fact that by continuity we can assume $\{u+r=\textrm{const.}\}$ remains spacelike near $\tilde{\Sigma}_1$ we can estimate for $\tilde{U}$ an open neighbourhood of $\tilde{\Sigma}_1\cap \{r<\delta r\}$
    \[
    \sup_t\left\|r^{-d}(\partial_u)^p (\partial_{\theta^1})^q (\partial_{\theta^2})^l(\partial_r)^m (X_1^\mu - X_2^\mu )\right\|_{L^2(\tilde{\Sigma}_t\cap \tilde{U})}\leqslant c \Delta
    \]
    where $p+q+l+m\leqslant s+5$, $p\leqslant s+3$ and $d=0$ if $p=q=0$ and $d=1$ otherwise. Constructing the metric functions as in the previous lemma, and recalling from \S\ref{sec:coords.form} that $0=\partial_r^a (f_1-f_2)|_{r=0} = \partial_r^b (b_1-b_2)_{A}|_{r=0} = \partial_r^c (\gamma_1-\gamma_2)_{AB}|_{r=0}$ for $0\leqslant a \leqslant 1$, $0\leqslant b \leqslant 2$, $0\leqslant c \leqslant 3$ we can iteratively apply Hardy's inequality to deduce
    \begin{align*}
    \left\|r^{(-3)^-}(\partial_u)^p (\partial_{\theta^1})^q (\partial_{\theta^2})^l(\partial_r)^m (f_1 - f_2 )\right\|_{L^2(\tilde{\Sigma}_1\cap \tilde{U})}&\leqslant c \Delta\\
    \left\|r^{(-4)^-}(\partial_u)^p (\partial_{\theta^1})^q (\partial_{\theta^2})^l(\partial_r)^m (b_1 - b_2 )_A\right\|_{L^2(\tilde{\Sigma}_1\cap \tilde{U})}&\leqslant c \Delta\\
    \left\|r^{(-5)^-}(\partial_u)^p (\partial_{\theta^1})^q (\partial_{\theta^2})^l(\partial_r)^m (\gamma_1 - \gamma_2 )_{AB}\right\|_{L^2(\tilde{\Sigma}_1\cap \tilde{U})}&\leqslant c \Delta
    \end{align*}
    for $p+q+l+m\leqslant s+1$. Re-writing in terms of our $S^2$-adapted norms gives the first result. The second part of the result again follows after a radial Sobolev embedding.
\end{proof}

\begin{corollary}\label{cor:idest}
    Suppose $h_1(0)=h_2(0)=\delta$,  $0\leqslant r_0<r_1$, $5\leqslant s'\in \mathbb N$  and
    \[
    \Delta^{s'}_{r_0,  r_1}(\mathfrak{S}_1, \mathfrak{S}_0)+\Delta^{s'+3}_{r_0,  r_1}(\mathfrak{S}_2, \mathfrak{S}_0)=P.
    \]
    Then for any $P<P_0$, and integer $s<s'-4$ 
    \begin{align}
    &\|\mathfrak{m}_1-\mathfrak{m}_2\|_{s+1, \frac{3}{2}^+, p_\infty}(A_{r_0,r_1}) +\|\mathfrak{s}_1-\mathfrak{s}_2\|_{s, \frac{1}{2}^+, p_\infty}(A_{r_0,r_1}) +\|b_1-b_2\|_{s-1, (-\frac{1}{2})^+, p_\infty}(A_{r_0,r_1})\nonumber \\
     &+|\mathfrak{m}_1-\mathfrak{m}_2|_{s, 1^+, p_\infty}(A_{r_0,r_1})+|\mathfrak{s}_1-\mathfrak{s}_2|_{s-1, 0^+, p_\infty}(A_{r_0,r_1}) +|\mathfrak{w}_1-\mathfrak{w}_2|_{s-2, (-1)^+, p_\infty}(A_{r_0,r_1})\label{eqn:leproof2}\\&\qquad \leqslant c \Delta^{s}_{r_1, r_2}(\mathfrak{S}_1, \mathfrak{S}_2)\nonumber
    \end{align}
    where $\mathfrak{m}$ is any metric component, $\mathfrak{s}$ is any Ricci coefficient, $\mathfrak{w}$ is any curvature component and $c$ depends on  $s, r_1, p_\infty, P_0$.
\end{corollary}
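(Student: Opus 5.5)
The plan is to reduce Corollary~\ref{cor:idest} to the two preceding lemmas by expressing every Ricci coefficient and curvature component in terms of the metric coefficients $(f,\gamma,b)$ and their derivatives. Each such expression is a polynomial in $(f,\gamma,\gamma^{-1},b)$ and in at most two derivatives of the metric. Differences $\mathfrak{s}_1-\mathfrak{s}_2$ and $\mathfrak{w}_1-\mathfrak{w}_2$ then become sums of products in which one factor is a difference of metric components (or their derivatives) and the remaining factors are bounded in terms of $P_0$.

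First, split $A_{r_0,r_1}$ into the part with $r<\delta r$ and the part with $r>\tfrac{1}{2}\delta r$. On the second part apply Lemma~\ref{thm:data.lemma1}; on the first apply the centre lemma. Together they give \eqref{eq:data.lemma1.est} and \eqref{eq:data.lemma2.est}, hence the $\mathfrak{m}$ terms of \eqref{eqn:leproof2} with weights $p_0=\tfrac32^+$ (integrated) and $1^+$ (pointwise). Both lemmas also supply uniform bounds on the individual metrics $(f_j,\gamma_j,b_j)$, with constants depending on $P_0$, because each $\mathfrak{S}_j$ is compared with $\mathfrak{S}_0$. Away from the centre the weights are harmless, so only the overlap region needs care.

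Next, treat the connection coefficients. By Proposition~\ref{prop:gauge.con}, $\chi=\tfrac12\partial_r\gamma$, $\omegab=\tfrac12\partial_r f$, $\xib=\nab f$ and $\eta=-\tfrac12\gamma\,\partial_r b+\dots$, while \eqref{chib.in.terms.of.gamma} gives $\chib$ in terms of $\partial_u\gamma$, $\partial_r\gamma$ and $\nab b$. Each of these costs one derivative and one power of $r$. That loss explains the shifts $s+1\to s$ and $\tfrac32^+\to\tfrac12^+$ (integrated), and $1^+\to0^+$ (pointwise). Products are handled with Lemma~\ref{lem:products}, putting the difference factor in the norm being estimated and the other factor in a pointwise norm controlled by $P_0$. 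The subtracted reference values $2r^{-1}$ and $-2r^{-1}$ cancel in the difference, so they need no separate treatment.

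For the curvature components, I would use Lemma~\ref{lem.Dtonab} together with the null structure equations: $\alpha$ from \eqref{eq:4chih}, $\beta$ from \eqref{eq:constchih}, $\rho$ from \eqref{eq:constrho} with $K$ given by \eqref{Gauss.def}, $\sigma$ from \eqref{eq:constzeta}, $\betab$ from \eqref{eq:constchibh}, and $\alphab$ from \eqref{eq:3chibh}. Each of these loses one further derivative and one more power of $r$, which yields the indices $s-2$ and $(-1)^+$. The third term of \eqref{eqn:leproof2} is handled the same way.

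The main obstacle is the behaviour near $r=0$. There the weights $r^{-p_0}$ must survive differentiation: each $\partial_r$ can only lose a single power of $r$, which requires that the differences vanish to the orders recorded in the centre lemma. In addition, $\partial_u$-derivatives enter $\chib$, and these are controlled only to order $s+3$ along $\Gamma$. I expect the margin $s<s'-4$ to be exactly what absorbs this.
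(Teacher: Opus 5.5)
Your proposal follows the argument implicit in the paper, which states this corollary without a separate proof. There it is a direct consequence of Lemma~\ref{thm:data.lemma1} (away from the centre) and the centre lemma, with Ricci coefficients and curvature components obtained as first and second derivatives of $(f,\gamma,b)$ via Proposition~\ref{prop:gauge.con} and the null structure equations. The one point to make precise is that $\jb{u}\mathring{\nab}_u$ counts as two derivatives in the norms of \eqref{eq:data.lemma1.est}--\eqref{eq:data.lemma2.est}. The $\partial_u$-derivatives entering $\chib$ and $\alphab$ should therefore be taken from the proofs of those lemmas rather than from their stated bounds, since the proofs control all coordinate derivatives of the metric differences (including $\partial_u$) up to order $s+1$.
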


Note that the loss of derivatives in the above results  may not be sharp --- as elsewhere in the paper we have not attempted to optimise for regularity. We are now ready to give the proof of Theorem \ref{thm:data.lemma}

\begin{proof}[Proof of Theorem \ref{thm:data.lemma}]
    We first establish the statement under the weaker decay assumptions. Restricted to $|x|<1$, the result already follows from Corollary \ref{cor:idest} with $\mathfrak{S}_2 = \mathfrak{S}_0$. Consider for fixed $R>1$ the coordinate transformation $\varphi_R:(u, r, \theta^A)\mapsto (Ru+(1-R), Rr, \theta^A)$. This map preserves $\tilde{\Sigma}_1$ and restricts as a homothety, rescaling $x\mapsto Rx$. We see that if $\tilde{f} = f\circ\varphi_R$, $\tilde{\gamma}_{AB} = R^{-2} {\gamma}_{AB}\circ\varphi_R$ and $\tilde{b}^A = R b^A\circ\varphi_R$, then $\varphi_{R}^* g = R^2 \tilde{g}$ with obvious notation. Clearly $\tilde{g}$ will be Einstein if $g$ is Einstein. 
    
    Let $A_R = \{x\in \mathbb{R}^3 : R<|x|<2R\}$ and consider establishing the result restricted to $A_R$ for $R\geqslant 1$. Denote $\mathfrak{e}_R=\|h_{ij} - \delta_{ij}\|_{H^{s'+1, \nu-\frac{3}{2}}(A_R)} + \|k_{ij}\|_{H^{s', \nu-\frac{1}{2}}(A_R)}$. The Cauchy data, $\tilde{\mathfrak{S}}=(\tilde{h}, \tilde{k})$, induced by $\tilde{g}$ on $A_1$ is related to that induced on $A_R$ by $g$ by
    \[
    \tilde{h}_{ij}(y) = h_{ij}(Ry), \qquad \tilde{k}_{ij}(y) = R k_{ij}(Ry)
    \]
    and we can verify 
    \[
    \Delta^{s'}_{1, 2}(\tilde{\mathfrak{S}}, \mathfrak{S}_0) \lesssim R^{-\nu}\mathfrak{e}_R.
    \]
    Applying the corollary we have already established with $\mathfrak{S}_2=\mathfrak{S}_0$, we control all quantities associated to $\tilde{g}$ by a constant multiple of $R^{-\nu}\mathfrak{e}_R$. 
    
    We can verify that $\|f-1\|_{s+1, p_0, p_\infty}({\tilde{\Sigma}_1\cap A_R})\sim R^{p_\infty}\|\tilde{f}-1\|_{s+1, p_0, p_\infty}({\tilde{\Sigma}_1\cap A_1}) $ and similarly for $\gamma-\mathring{\gamma}$ and $b$ and also for the pointwise norms. From Proposition \ref{prop:gauge.con} we see that $\|\omegab\|_{s, p_0, p_\infty}({\tilde{\Sigma}_1\cap A_R}) \sim R^{p_\infty-1}\|\tilde{\omegab}\|_{s, p_0, p_\infty}({\tilde{\Sigma}_1\cap A_1})$ and similarly for the other connection coefficients. Finally, by considering the transformation properties of the Weyl tensor under conformal transformations we have $\|{\alpha}\|_{s, p_0, p_\infty}({\tilde{\Sigma}_1\cap A_R}) \sim R^{p_\infty-2}\|\tilde\alpha\|_{s, p_0, p_\infty}({\tilde{\Sigma}_1\cap A_1})$ and similarly for other Weyl components. Since $\nu>\delta$ we see that on setting $R = 2^k$ and summing on $k$ we control all quantities required on the initial data at the regularity needed.

    Let us turn now to the strong decay assumption. We can deal with the region $|x|<R_0$ by the previous result. To handle the large $|x|$ region, consider the Kerr solution to the vacuum Einstein equations in Boyer-Lindquist coordinates $(t, r, \theta, \phi)$ with mass $M$ and spin $a$ (see for example \cite{mtw} (33.2) with $Q=0$ for the explicit form). Introducing Cartesian coordinates for $r$ sufficiently large by
    \begin{align*}
    \left(1+\frac{M}{2|x|} \right)^2 x^1 &= \sqrt{r^2+a^2} \sin \theta \cos \phi,\\
    \left(1+\frac{M}{2|x|} \right)^2 x^2 &= \sqrt{r^2+a^2} \sin \theta \sin \phi, \\
    \left(1+\frac{M}{2|x|} \right)^2 x^3 &= r \cos \theta,
    \end{align*}
    and expanding the metric functions, we find that in these coordinates the Kerr metric has the asymptotic form
    \begin{align*}
        g_{M,a} &= -\left( 1-\frac{2M}{|x|} + \frac{2M^2}{|x|^2}\right) \ud t^2 + \left( 1+\frac{2M}{|x|} + \frac{3M^2}{2|x|^2}\right) \delta_{ij}\ud x^i \ud x^j - \frac{4M a}{|x|^3} \ud t(x^1 \ud x^2 - x^2 \ud x^1) + O \left(\frac{1}{|x|^3} \right)
    \end{align*}
    We note that this agrees with (19.13) in \cite{mtw}, derived as the leading order far-field metric due to an arbitrary source. We can compute the induced metric and second fundamental form of the surface $\{t=0\}$ and we find that they satisfy
    \begin{align*}
    h^{a, M}_{ij} &= (1 + \tfrac{2M}{|x|} + \tfrac{3 M^{2}}{2 |x|^{2}}) \delta_{ij}
        + O(|x|^{-3}),  \\
    k^{a,M}_{ij} &= {3} (\in_{i k \ell} J^{\ell} \tfrac{x_{j} x^{k}}{|x|^{5}} + \in_{j k \ell} J^{\ell} \tfrac{x_{i} x^{k}}{|x|^{5}}) + O(|x|^{-4})
    \end{align*}
    where $J = (0, 0, Ma)$, and we can act on the error terms with $|x|\partial_i$ an arbitrary number of times without changing the decay rate. Constructing Newman--Unti coordinates starting from $\{t=0\}$ in the Kerr geometry we can compute the Ricci components and curvature components and we find that \eqref{eq:data.alp.strong}, \eqref{eq:data.strong} restricted to $\tilde{\Sigma}_1\cap\{|x|>R_0\}$ hold for the Kerr initial data in these coordinates, provided $M, a$ are sufficiently small. We can now repeat the dyadic argument above. On $A_R$, after pulling back to $A_1$ we now compare $(\tilde{h}, \tilde{k})$ to $(h^R_2, k^R_2) = (R^{-2} \varphi_R^*h^{a, M}, R^{-1} \varphi_R^*k^{a, M})$. We can verify that for any $s$
    \[
    \|h^R_2-\delta\|_{H^{s+1}(\{1<|x|<2\})}+\|k^R_2\|_{H^{s}(\{1<|x|<2\})} 
    \]
    is uniformly bounded for $R>R_0$ and so we can repeat our previous argument.
\end{proof}

\bibliographystyle{amsalpha}
\bibliography{refs}

\newcommand{\etalchar}[1]{$^{#1}$}
\providecommand{\bysame}{\leavevmode\hbox to3em{\hrulefill}\thinspace}
\providecommand{\MR}{\relax\ifhmode\unskip\space\fi MR }
\providecommand{\MRhref}[2]{%
  \href{http://www.ams.org/mathscinet-getitem?mr=#1}{#2}
}
\providecommand{\href}[2]{#2}
\begin{thebibliography}{BvdBM62}

\bibitem[ABWY23]{andersson2023global}
Lars Andersson, Pieter Blue, Zoe Wyatt, and Shing-Tung Yau, \emph{Global
  stability of spacetimes with supersymmetric compactifications}, Analysis \&
  PDE \textbf{16} (2023), no.~9, 2079--2107.

\bibitem[AC22]{sAntC2022}
Spyros Alexakis and Nathan~Thomas Carruth, \emph{{Squeezing a fixed amount of
  gravitational energy to arbitrarily small scales, in U(1) symmetry}}, arXiv
  \textbf{2205.05526} (2022).

\bibitem[BFJ{\etalchar{+}}21]{Bigorgne2021AsymptoticMatter}
Léo Bigorgne, David Fajman, Jérémie Joudioux, Jacques Smulevici, and
  Maximilian Thaller, \emph{{Asymptotic Stability of Minkowski Space-Time with
  Non-compactly Supported Massless Vlasov Matter}}, Archive for Rational
  Mechanics and Analysis \textbf{242} (2021), no.~1, 1--147.

\bibitem[Bie10]{Bieri2010AnRelativity}
Lydia Bieri, \emph{{An Extension of the Stability Theorem of the Minkowski
  Space in General Relativity}}, J. Diff. Geom. \textbf{86} (2010), no.~1,
  17--70.

\bibitem[Bon60]{hB1960}
H.~Bondi, \emph{{Gravitational {W}aves in {G}eneral {R}elativity}}, Nature
  \textbf{4724} (1960), no.~184, 535.

\bibitem[BvdBM62]{hBmgjvdBawkM1962}
H.~Bondi, M.~G.~J. van~der Burg, and A.~W.~K. Metzner, \emph{Gravitational
  waves in general relativity. {VII}. {W}aves from axi-symmetric isolated
  systems}, Proc. Roy. Soc. London Ser. A \textbf{269} (1962), 21--52.

\bibitem[CBG69]{yCBrG1969}
Yvonne Choquet-Bruhat and Robert Geroch, \emph{Global aspects of the {C}auchy
  problem in general relativity}, Comm. Math. Phys. \textbf{14} (1969),
  329--335.

\bibitem[Chr86]{dC1986}
Demetrios Christodoulou, \emph{The problem of a self-gravitating scalar field},
  Comm. Math. Phys. \textbf{105} (1986), no.~3, 337--361.

\bibitem[Chr93]{dC1993}
\bysame, \emph{Bounded variation solutions of the spherically symmetric
  {E}instein-scalar field equations}, Comm. Pure Appl. Math. \textbf{46}
  (1993), no.~8, 1131--1220.

\bibitem[Chr02]{dC2002}
Demetrios. Christodoulou, \emph{{The Global Initial Value Problem in General
  Relativity}}, The Ninth Marcel Grossmann Meeting, World Scientific Publishing
  Company, December 2002, pp.~44--54.

\bibitem[CK90]{ChrKl90}
D.~Christodoulou and S.~Klainerman, \emph{Asymptotic properties of linear field
  equations in minkowski space}, Communications on Pure and Applied Mathematics
  \textbf{43} (1990), no.~2, 137--199.

\bibitem[CK93]{CK94}
Demetrios Christodoulou and Sergiu Klainerman, \emph{{The Global Nonlinear
  Stability of the Minkowski Space (PMS-41)}}, Princeton University Press,
  1993.

\bibitem[CK25]{xtCsK2025}
Xuantao Chen and Sergiu Klainerman, \emph{{Solving the constraint equation for
  general free data}}, arXiv \textbf{2512.22704} (2025).

\bibitem[CK26]{xtCsK2026}
\bysame, \emph{{Formation of Trapped Surfaces in Geodesic Foliation}},
  Communications in Mathematical Physics \textbf{407} (2026), no.~98.

\bibitem[Daf06]{mD2006}
Mihalis Dafermos, \emph{A note on the collapse of small data self-gravitating
  massless collisionless matter}, J. Hyperbolic Differ. Equ. \textbf{3} (2006),
  no.~4, 589--598.

\bibitem[DHR19]{DHR}
Mihalis Dafermos, Gustav Holzegel, and Igor Rodnianski, \emph{The linear
  stability of the {S}chwarzschild solution to gravitational perturbations},
  Acta Math. \textbf{222} (2019), no.~1, 1--214. \MR{3941803}

\bibitem[DHRT21]{DHRT}
Mihalis Dafermos, Gustav Holzegel, Igor Rodnianski, and Martin Taylor,
  \emph{The non-linear stability of the {S}chwarzschild family of black holes},
  arXiv \textbf{2104.08222} (2021).

\bibitem[DR09]{Dafermos2009ASpacetimes}
Mihalis Dafermos and Igor Rodnianski, \emph{A new physical-space approach to
  decay for the wave equation with applications to black hole spacetimes},
  XVIth International Congress on Mathematical Physics (2009).

\bibitem[FJS21]{Fajman2021TheSystem}
David Fajman, Jérémie Joudioux, and Jacques Smulevici, \emph{{The stability
  of the Minkowski space for the Einstein–Vlasov system}}, Analysis {\&} PDE
  \textbf{14} (2021), 425--531.

\bibitem[Fri86]{Friedrich1986OnStructure}
Helmut Friedrich, \emph{{On the existence of n-geodesically complete or future
  complete solutions of Einstein's field equations with smooth asymptotic
  structure}}, Communications in Mathematical Physics \textbf{107} (1986),
  no.~4, 587 – 609.

\bibitem[Fri18]{hF2018}
\bysame, \emph{Peeling or not peeling---is that the question?}, Classical
  Quantum Gravity \textbf{35} (2018), no.~8, 083001, 18.

\bibitem[FST25]{ajFjSaT2025}
Allen~Juntao Fang, J\'er\'emie Szeftel, and Arthur Touati, \emph{Spacelike
  initial data for black hole stability}, Comm. Math. Phys. \textbf{406}
  (2025), no.~10, Paper No. 235, 40. \MR{4951473}

\bibitem[Gau26]{oG2026}
Onyx Gautam, \emph{{Late-time tails for linear waves on radially symmetric
  stationary spacetimes of two space dimensions}}, arXiv \textbf{2605.03220}
  (2026).

\bibitem[GKS24]{GKS}
Elena Giorgi, Sergiu Klainerman, and J\'er\'emie Szeftel, \emph{Wave equations
  estimates and the nonlinear stability of slowly rotating {K}err black holes},
  Pure Appl. Math. Q. \textbf{20} (2024), no.~7, 2865--3849.

\bibitem[Gra20]{Graf2020GlobalData}
Olivier Graf, \emph{{Global nonlinear stability of Minkowski space for
  spacelike-characteristic initial data}}, M{\'{e}}moires de la
  Soci{\'{e}}t{\'{e}} math{\'{e}}matique de France (2020).

\bibitem[Hin23]{Hintz.exterior}
Peter Hintz, \emph{Exterior stability of {M}inkowski space in generalized
  harmonic gauge}, Arch. Ration. Mech. Anal. \textbf{247} (2023), no.~5, Paper
  No. 99, 45.

\bibitem[HKM77]{Hughes1977Well-posedRelativity}
Thomas J~R Hughes, Tosio Kato, and Jerrold~E Marsden, \emph{{Well-posed
  quasi-linear second-order hyperbolic systems with applications to nonlinear
  elastodynamics and general relativity}}, Archive for Rational Mechanics and
  Analysis \textbf{63} (1977), no.~3, 273--294.

\bibitem[HSW23]{Huneau2023TheSpacetime}
Cécile Huneau, Annalaura Stingo, and Zoe Wyatt, \emph{{The global stability of
  the Kaluza-Klein spacetime}}, arXiv \textbf{2307.15267} (2023).

\bibitem[Hun18]{huneau2018stability}
C{\'e}cile Huneau, \emph{Stability of minkowski space-time with a translation
  space-like killing field}, Annals of PDE \textbf{4} (2018), no.~1, 12.

\bibitem[HV20]{Hintz2017StabilityMetric}
Peter Hintz and András Vasy, \emph{Stability of {M}inkowski space and
  polyhomogeneity of the metric}, Annals of PDE \textbf{6} (2020).

\bibitem[IP22]{Ionescu2022TheAMS-213}
Alexandru Ionescu and Benoit Pausader, \emph{{The Einstein-Klein-Gordon Coupled
  System: Global Stability of the Minkowski Solution: (AMS-213)}}, 4 2022.

\bibitem[Keh22]{lmaK2022}
Leonhard M.~A. Kehrberger, \emph{The case against smooth null infinity {I}:
  {H}euristics and counter-examples}, Ann. Henri Poincar\'e \textbf{23} (2022),
  no.~3, 829--921.

\bibitem[Kei18]{Keir}
Joseph Keir, \emph{The weak null condition and global existence using the
  $p$-weighted energy method}, arXiv \textbf{1808.09982} (2018).

\bibitem[KK25]{KaKe2025}
Istvan Kadar and Lionor Kehrberger, \emph{{Scattering, Polyhomogeneity and
  Asymptotics for Quasilinear Wave Equations From Past to Future Null
  Infinity}}, arXiv \textbf{2501.09814} (2025).

\bibitem[KL23]{Kauffman2023GlobalCoordinates}
Christopher Kauffman and Hans Lindblad, \emph{{Global Stability of Minkowski
  Space for the Einstein--Maxwell--Klein--Gordon System in Generalized Wave
  Coordinates}}, Annales Henri Poincare \textbf{24} (2023), no.~11, 3837--3919.

\bibitem[KN03a]{Klainerman2003TheNicolo.}
Sergiu Klainerman and Francesco Nicol\`o, \emph{The evolution problem in
  general relativity}, Progress in mathematical physics v. 25,
  Birkh{\"{a}}user, Boston, 2003 (eng).

\bibitem[KN03b]{Klainerman2003PeelingEquations}
Sergiu Klainerman and Francesco Nicol{\`{o}}, \emph{{Peeling properties of
  asymptotically flat solutions to the Einstein vacuum equations}}, Classical
  and Quantum Gravity \textbf{20} (2003), no.~14, 3215.

\bibitem[KS23]{KS.Kerr}
Sergiu Klainerman and J\'er\'emie Szeftel, \emph{Kerr stability for small
  angular momentum}, Pure Appl. Math. Q. \textbf{19} (2023), no.~3, 791--1678.

\bibitem[Lin17]{hL2017}
Hans Lindblad, \emph{On the asymptotic behavior of solutions to the {E}instein
  vacuum equations in wave coordinates}, Comm. Math. Phys. \textbf{353} (2017),
  no.~1, 135--184.

\bibitem[LM16]{LeFloch2016TheFields}
Philippe~G LeFloch and Yue Ma, \emph{{The Global Nonlinear Stability of
  Minkowski Space for Self-gravitating Massive Fields}}, Communications in
  Mathematical Physics \textbf{346} (2016), no.~2, 603--665.

\bibitem[LO22]{LO.dispersive}
Jonathan Luk and Sung-Jin Oh, \emph{Global nonlinear stability of large
  dispersive solutions to the {E}instein equations}, Ann. Henri Poincar\'e
  \textbf{23} (2022), no.~7, 2391--2521.

\bibitem[LO24]{LO}
\bysame, \emph{Late time tail of waves on dynamic asymptotically flat
  spacetimes of odd space dimensions}, arXiv \textbf{2404.02220} (2024).

\bibitem[Loi09]{Loizelet2009}
Julien Loizelet, \emph{Solutions globales des équations
  d’{E}instein-{M}axwell}, Annales de la faculté des sciences de Toulouse
  Mathématiques \textbf{18} (2009), no.~3, 495--540 (fre).

\bibitem[LOY18]{LOY1}
Jonathan Luk, Sung-Jin Oh, and Shiwu Yang, \emph{Solutions to the
  {E}instein-scalar-field system in spherical symmetry with large bounded
  variation norms}, Ann. PDE \textbf{4} (2018), no.~1, Paper No. 3, 59.

\bibitem[LR05]{Lindblad2005GlobalCoordinates}
Hans Lindblad and Igor Rodnianski, \emph{{Global existence for the Einstein
  vacuum equations in wave coordinates}}, Commun. Math. Phys. \textbf{256}
  (2005), 43--110.

\bibitem[LT20]{Lindblad2020GlobalGauge}
Hans Lindblad and Martin Taylor, \emph{{Global Stability of Minkowski Space for
  the Einstein–Vlasov System in the Harmonic Gauge}}, Archive for Rational
  Mechanics and Analysis \textbf{235} (2020), no.~1, 517--633.

\bibitem[Mao26]{ycM2026}
Yuchen Mao, \emph{{L}ate-time {T}ails of the {E}instein {V}acuum equation near
  {M}inkowski {S}pacetime in {W}ave {G}auges}, in preparation (2026).

\bibitem[Mos16]{gM2016}
Georgios Moschidis, \emph{The {$r^p$}-weighted energy method of {D}afermos and
  {R}odnianski in general asymptotically flat spacetimes and applications},
  Ann. PDE \textbf{2} (2016), no.~1, Art. 6, 194.

\bibitem[MOT26]{MOT2026}
Yuchen Mao, Sung-Jin Oh, and Zhongkai Tao, \emph{Flexibility of asymptotically
  flat general relativistic initial data sets via recovery on curves}, in
  preparation (2026).

\bibitem[MTW73]{mtw}
Ch.W. Misner, K.S. Thorne, and J.A. Wheeler, \emph{{Gravitation}}, W. Freeman,
  1973.

\bibitem[NU62]{Newman1962BehaviorSpaces}
Ezra~T Newman and Theodore W~J Unti, \emph{{Behavior of Asymptotically Flat
  Empty Spaces}}, J. Math. Phys. \textbf{3} (1962), no.~5, 891.

\bibitem[Nü25]{Nutzi}
Andrea Nützi, \emph{Perturbations of minkowski spacetime with regular
  conformal compactification}, arXiv \textbf{2510.01964} (2025).

\bibitem[Ren90]{adR1990}
A.~D. Rendall, \emph{Reduction of the characteristic initial value problem to
  the {C}auchy problem and its applications to the {E}instein equations}, Proc.
  Roy. Soc. London Ser. A \textbf{427} (1990), no.~1872, 221--239.

\bibitem[Rin09]{Ringstrom2009TheRelativity}
Hans Ringstr{\"{o}}m, \emph{{The Cauchy problem in general relativity}}, EMS
  Press, 2009.

\bibitem[RR92]{gRadR1992}
G.~Rein and A.~D. Rendall, \emph{Global existence of solutions of the
  spherically symmetric {V}lasov-{E}instein system with small initial data},
  Comm. Math. Phys. \textbf{150} (1992), no.~3, 561--583.

\bibitem[Sac62]{rkS1962}
R.~K. Sachs, \emph{Gravitational waves in general relativity. {VIII}. {W}aves
  in asymptotically flat space-time}, Proc. Roy. Soc. London Ser. A
  \textbf{270} (1962), 103--126.

\bibitem[Sch13]{vS2013}
Volker Schlue, \emph{Decay of linear waves on higher-dimensional
  {S}chwarzschild black holes}, Anal. PDE \textbf{6} (2013), no.~3, 515--600.

\bibitem[She23]{Shen2023GlobalDecay}
Dawei Shen, \emph{{Global stability of Minkowski spacetime with minimal
  decay}}, arXiv \textbf{2310.07483} (2023).

\bibitem[She24a]{Shen2024StabilityRegions}
\bysame, \emph{{Stability of Minkowski spacetime in exterior regions}}, Pure
  Appl. Math. Quart. \textbf{20} (2024), no.~2, 757--868.

\bibitem[She24b]{dwShen2024}
\bysame, \emph{Stability of {M}inkowski spacetime in exterior regions}, Pure
  Appl. Math. Q. \textbf{20} (2024), no.~2, 757--868.

\bibitem[She26]{dwShen2026}
\bysame, \emph{{The Stability of Minkowski Spacetime}}, arXiv
  \textbf{2605.26506} (2026).

\bibitem[Smu25]{jS2025}
Jacques Smulevici, \emph{The stability of {Minkowski} space and its influence
  on the mathematical analysis of {General} {Relativity}}, Comptes Rendus.
  M\'ecanique \textbf{353} (2025), 519--542 (en).

\bibitem[Spe10]{Speck2010TheCoordinates}
Jared Speck, \emph{{The Global Stability of the Minkowski Spacetime Solution to
  the Einstein-Nonlinear Electromagnetic System in Wave Coordinates}}, Analysis
  {\&} PDE \textbf{7} (2010).

\bibitem[SY81]{Schoen1981ProofII}
Richard Schoen and Shing-Tung Yau, \emph{{Proof of the positive mass theorem.
  II}}, Communications in Mathematical Physics \textbf{79} (1981), no.~2,
  231--260.

\bibitem[Tay17]{Taylor2017TheSystem}
Martin Taylor, \emph{{The Global Nonlinear Stability of Minkowski Space for the
  Massless Einstein–Vlasov System}}, Annals of PDE \textbf{3} (2017), no.~1,
  9.

\bibitem[Wal84]{rmW1984}
Robert~M. Wald, \emph{General relativity}, University of Chicago Press,
  Chicago, IL, 1984.

\bibitem[Wan20]{Wang2020AnEquations}
Qian Wang, \emph{{An intrinsic hyperboloid approach for Einstein Klein–Gordon
  equations}}, Journal of Differential Geometry \textbf{115} (2020), no.~1, 27
  – 109.

\bibitem[Wan22]{Wang2022GlobalSystem}
Xuecheng Wang, \emph{{Global stability of the Minkowski spacetime for the
  Einstein-Vlasov system}}, arXiv \textbf{2210.00512} (2022).

\bibitem[Wit81]{Witten1981ATheorem}
Edward Witten, \emph{{A new proof of the positive energy theorem}},
  Communications in Mathematical Physics \textbf{80} (1981), no.~3, 381--402.

\bibitem[ZY00]{Zipser2000TheEquations}
Nina Zipser and S~T Yau, \emph{{The global nonlinear stability of the trivial
  solution of the Einstein--Maxwell equations}}, Ph.D. thesis, United States --
  Massachusetts, 2000, p.~198.

\end{thebibliography}

\end{document}